\renewcommand{\fnum@figure}{Figure \thefigure}
\newcommand{\bea}{\textsc{\textbf{Bean}}}
\newcommand{\LangS}{$\Lambda_S$}
\newcommand{\inl}{\textbf{inl }}
\newcommand{\inr}{\textbf{inr }}
\newcommand{\op}{\textbf{op}}
\newcommand{\slet}[3]{\mathbf{let} \ {#1} = {#2} \ \mathbf{in} \ {#3}}
\newcommand{\dlet}[3]{\mathbf{dlet} \ {#1} = {#2} \ \mathbf{in} \ {#3}}
\newcommand{\add}[2]{\mathbf{add} \ {#1} \ {#2}}
\newcommand{\sub}[2]{\mathbf{sub} \ {#1} \ {#2}}
\newcommand{\mul}[2]{\mathbf{mul} \ {#1} \ {#2}}
\newcommand{\nmul}[1]{\mathbf{mul}{#1}}
\newcommand{\dmul}[2]{\mathbf{dmul} \ {#1} \ {#2}}
\newcommand{\fdiv}[2]{\mathbf{div} \ {#1} \ {#2}}
\newcommand{\stexttt}[1]{\texttt{{#1}}}
\newcommand{\R}{\mathbb{R}}
\newcommand{\NNR}{\mathbb{R}^{\geq 0}}
\newcommand{\F}{\mathbb{F}}
\newcommand{\tp}{$\tilde{\textbf{P}}$}
\newcommand{\p}{$\textbf{P}$}
\newcommand{\Bel}{\textbf{Bel}}
\newcommand{\id}{\text{id}}
\newcommand{\Met}{\mathbf{Met}}
\newcommand{\Set}{\textbf{Set}}
\newcommand{\eff}{\textcolor{black}}
\newcommand{\ceff}{\textcolor{black}}
\newcommand{\denot}[1]{\llbracket {#1} \rrbracket}
\newcommand{\pdenot}[1]{\llparenthesis {#1} \rrparenthesis}
\newcommand{\dom}{\text{dom}}
\newcommand{\pdenotid}[1]{{\llparenthesis {#1} \rrparenthesis}_{id}}
\newcommand{\pdenotap}[1]{{\llparenthesis {#1} \rrparenthesis}_{ap}}
\newcommand{\Del}{\textbf{Del}}
\newcommand{\stepid}{\Downarrow_{id}} 
\newcommand{\stepap}{\Downarrow_{ap}}
\newcommand{\Uid}{U_{id}}
\newcommand{\Uap}{U_{ap}}
\newcommand{\tensor}{\otimes}
\theoremstyle{definition}
\newtheorem{definition}{Definition}
\newtheorem{theorem}{Theorem}
\newtheorem{lemma}{Lemma}
\newtheorem{corollary}{Corollary}
\newtheorem*{remark}{Remark}
\newtheoremstyle{named}{}{}{\itshape}{}{\bfseries}{.}{.5em}{\thmnote{#3}#1}
\theoremstyle{named}
\newtheorem*{namedtheorem}{}
\newcommand{\oset}[1]{\xrightarrow{\raisebox{-0.45ex}[0ex][0ex]{
\text{\footnotesize{#1}}}}}
\lstdefinelanguage{fz}{
    mathescape=true,
    morekeywords=[2]{function,let,in,ret,rnd,fun,def,return},%
    morekeywords=[3]{num,f64},%
    columns=fullflexible,
    sensitive,
    basicstyle =\linespread{2},
    keepspaces=true,
    showstringspaces=false,
    breaklines=true,
    breakatwhitespace=false,
    basicstyle=\sffamily,
    keywordstyle=[2]{\sffamily\color{Mahogany}},
    keywordstyle=[3]{\sffamily\color{Plum}},
    keywordstyle=[1]{\sffamily\color{ForestGreen}},
    morecomment=[l][\color{Gray}]{\/}
   }[keywords,comments,strings]%
\DeclareTextFontCommand{\linec}{\sffamily\selectfont}
\newcommand{\rnd}{{\textbf{rnd}}}
\newcommand{\rndd}{\textbf{rnd64}}
\newcommand{\rnds}{\textbf{rnd32}}
\newcommand{\ret}{\textbf{ret}}
\newcommand{\tlet}{\textbf{let}}
\newcommand{\M}{{\color{black}{M}}}
\newcommand{\tin}{\textbf{in}}
\newcommand{\letmx}{{\linec{{\textbf{let}$_\linec{\textbf{M}}$}}}}
\newcommand{\mlet}[3]{\linec{{let}~{$#1$} = {$#2$} 
  {in} {$#3$}}}
\newcommand{\letm}[3]{\linec{{\textbf{let}$_\linec{\textbf{M}}$}~{$#1$} = {$#2$} 
  {\textbf{in}} {$#3$}}}
\newcommand{\letc}[3]{\linec{\textbf{let}~{[$#1$]} = {$#2$} 
  {\textbf{in}} {$#3$}}}
\newcommand{\lett}[3]{\linec{{\textbf{let}}~{$#1$} = {$#2$} 
  {\textbf{in}} {$#3$}}}
\newcommand{\case}[3]{\linec{{\textbf{case}}~{$#1$}~\textbf{of}~(\inl{$#2$} {}$\mid${} \inr{$#3$})}}
\newcommand{\boxx}[1]{[#1]}
\newcommand{\Lang}{\textbf{\textsc{NumFuzz}}}
\newcommand{\bean}{\textsc{\textbf{Bean}}}
\definecolor{verbgray}{gray}{0.9}
   {\par\noindent\adjustbox{margin=1ex,
    bgcolor=verbgray,margin=0ex \medskipamount}
    \bgroup\minipage\linewidth\verbatim}%
   {\endverbatim\endminipage\egroup}
\newcommand{\unit}{\text{unit}}
\newcommand{\num}{\text{num}}
\newcommand{\dnum}{\text{dnum}}
\newcommand{\tand}{\ \& \ }
\newcommand{\bang}[1]{{!_\linec{#1}}}
\newcommand{\monad}[1]{{{M}_{{#1}}}}
\newcommand{\rnderr}{\varepsilon}
\newcommand{\lin}{\multimap}
\newcommand{\err}{\textbf{err}}
\title{Type-Based Approaches to Rounding Error Analysis}
\author{Ariel Eileen Kellison}
\date{}
\begin{document}

\maketitle
\thispagestyle{empty}
\clearpage

\thispagestyle{empty}
\section*{Abstract}
{
This dissertation explores the design and implementation of programming
languages that represent rounding error analysis through typing.  A rounding
error analysis establishes an \emph{a priori} bound on the rounding errors
introduced by finite-precision arithmetic in a numerical program, providing a
measure of the quality of the program as an approximation to its ideal,
infinitely precise counterpart.  This can be achieved by measuring the distance
between the computed output and the ideal result, known as the forward error,
or by determining the smallest distance by which the input must be perturbed to
make the computed output match the ideal result, known as the backward error.
Due to the complex interaction between the rounding errors produced locally by
a program and those propagated from its inputs, performing a rounding error
analysis is known to be challenging.  Moreover, while most programs can be
shown to have a forward error bound, even if trivially large, many programs
cannot be shown to have a backward error bound. This makes the task of deriving
backward error bounds for complex programs even more challenging than the
corresponding task of deriving forward error bounds. 

One way to view programs that use finite-precision arithmetic is as
computations that \emph{produce} rounding error, while also \emph{consuming}
and \emph{amplifying} rounding errors passed to them as inputs from other
functions. This perspective aligns closely with the notions of \emph{effects}
(what programs produce aside from values) and \emph{coeffects} (how programs
use their inputs).  Both effectful and coeffectful program behaviors can be
analyzed using type-based approaches, where graded monadic types track effects,
and graded comonadic types track coeffects. While recent work has studied the
interaction of effects and coeffects in various domains, the work presented in
this dissertation is the first to investigate these behaviors in the context of
numerical analysis and the propagation of rounding errors in numerical
programs.

The first part of this dissertation demonstrates that it is possible to design
languages for forward error analysis, as illustrated with \Lang, a functional
programming language whose type system expresses quantitative bounds on
rounding error. This type system combines a sensitivity analysis, enforced
through a linear typing discipline, with a novel graded monad to track the
accumulation of rounding errors. We establish the soundness of the type system
by relating the denotational semantics of the language to both an exact and
floating-point operational semantics.  To illustrate the capabilities of our
system, we instantiate \Lang{} with error metrics from the numerical analysis
literature and incorporate rounding operations that accurately model key
aspects of the IEEE 754 floating-point standard. Furthermore, to demonstrate
the practical utility of \Lang{} as a tool for automated error analysis, we
have developed a prototype implementation capable of inferring error bounds.
This implementation produces bounds competitive with existing tools, while
often achieving significantly faster analysis times.

The second part of this dissertation explores a type-based approach to backward
error analysis with \bea{}, a first-order programming language with a linear
type system that can express quantitative bounds on backward error.  \bea{}'s
type system combines a graded coeffect system with strict linearity to soundly
track the flow of backward error through programs.  To illustrate \bea{}'s
potential as a practical tool for automated backward error analysis, we
implement a variety of standard algorithms from numerical linear algebra in
\bea, establishing fine-grained backward error bounds via typing in a
compositional style. We also develop a prototype implementation of \bea{} that
infers backward error bounds automatically. Our evaluation shows that these
inferred bounds match worst-case theoretical relative backward error bounds
from the literature, underscoring \bea{}'s utility in validating a key property
of numerical programs: \emph{numerical stability}.
}
\clearpage

\thispagestyle{empty}
\section*{Acknowledgments}
The work presented in this dissertation is the result of years of patient
guidance and constant encouragement from my academic mentors:  David Bindel,
Andrew Appel, and Justin Hsu. David  encouraged me to follow my curiosity in
using ideas from formal methods to solve problems in numerical analysis, while
Andrew and Justin patiently taught me the technical tools to explore and
develop those ideas further. I am truly grateful for the time each of them has
taken to teach me countless lessons about reading, writing, and computer
arithmetic.  

The core of this dissertation builds on papers co-authored with David Bindel,
Justin Hsu, and Laura Zielinski. Laura made significant contributions to the
implementation of \bea{} and is responsible for the work presented in
\Cref{app:algorithm}.

Beyond my primary mentors, many others have shaped the work presented in this
dissertation through their mentorship and collaboration. Bob Constable first
introduced me to formal methods, and his enthusiasm for type theory continues
to inspire me today. It is hard to say what my PhD research would have looked
like without the influence of Bob and the rest of the Nuprl research
group, including Mark Bickford, Richard Eaton, and Liron Cohen. 

At Sandia National Laboratories, members of the formal methods group provided
invaluable expertise. Geoffrey Hulette first directed me towards foundational
approaches to verifying floating-point programs, Sam Pollard provided friendly
expertise on the nuances of floating-point arithmetic, and Heidi Thornquist
grounded our speculative discussions with practical, applications-focused
insights. 

I am also grateful for collaborations with Jean-Baptiste Jeannin and Mohit
Tekriwal. Our work on formally verifying numerical algorithms in Coq led to
many rewarding QEDs and helped solidify my understanding of the IEEE Standard. 
\clearpage

{\hypersetup{linkcolor=black}
\tableofcontents* }
\thispagestyle{empty}
\clearpage

\aliaspagestyle{chapter}{empty}
\pagestyle{companion}
\setlength{\headwidth}{\textwidth}

\pagenumbering{arabic}

\chapter{Introduction}\label{topintro}
\begin{SingleSpace}
\epigraph{O dear Ophelia!\\ I am ill at these numbers:\\ I have not art to 
reckon my groans.}{Hamlet (Act II, Scene 2, Line 120)}
\end{SingleSpace}
Many fields of computing are concerned with designing algorithms for solving
problems in continuous mathematics. These algorithms are usually specified
using real numbers but are implemented in a finite-precision arithmetic, like
floating-point arithmetic. This approximation leads to rounding errors which
can degrade the accuracy of results. When accuracy guarantees are required, a
\emph{rounding error analysis} can provide an \emph{a priori} bound on the
rounding error in a floating-point result. There are many examples from
scientific and engineering domains where this type of analysis is essential:
\begin{itemize}
\item{\textbf{Solid modeling:}} 
In computer-aided design (CAD) and computer graphics, rounding error bounds are
used to guarantee the correctness of computer representations and manipulations
of geometric shapes  \citep{Sherman:2019:Solid,Hu:1996:Solid}. 
\item{\textbf{Secure multiparty computation:}} 
In encrypted signal processing, floating-point arithmetic is used to process
signals that might be real-valued; a rigorous rounding error analysis
guarantees that implementations will not leak sensitive information by
producing exceptional values
\citep{kamm:2015:secure,Franz:2011:secure,aliasgari:2012:secure}.
\item{\textbf{Numerical linear algebra:}}
In numerical linear algebra, specifications of basic operations like the dot
product and matrix multiplication are often accompanied by a rounding error
analysis describing their expected accuracy and stability
\citep{LAPACK,blackford:2002:blas,Li:2002:extended}. 
\item{\textbf{Zero-knowledge proofs for machine learning:}} In neural-network
inference, zero-knowledge proofs guarantee that sensitive data remain secret
\citep{Rosetta,weng:2021:mystique}; a rounding error analysis is central to the
development of efficient methods for constructing these proofs
\citep{Garg:22:secure}.
\end{itemize}

However, while accuracy guarantees are needed in many areas of computing,
there are very few tools available to help programmers construct them. In the
examples listed above, the rounding error analyses are done by hand. This
process requires specialized knowledge of how to reason about the subtle
details of floating-point arithmetic, and becomes increasingly impractical as
programs grow larger, rely on mixed precision computations, include
dependencies on multiple modules, or use foreign function interfaces. 

In this dissertation, we propose a novel approach to developing numerical
programs with accuracy guarantees: designing \emph{languages} that unify the
tasks of writing programs and reasoning about their accuracy. Our thesis is
that it is feasible to design and implement languages that represent rounding
error analysis through typing, and that these languages have the potential to 
make reasoning about numerical accuracy convenient for programmers.

\section{Rounding Error and Program Distances}
We begin by briefly introducing some of the fundamental concepts related to
designing languages for rounding error analysis. Suppose \p~is an algorithm
specified on real numbers and \tp~is an implementation of $\textbf{P}$ that
uses floating-point arithmetic, which might introduce rounding errors during
computation. We can think of \p~and \tp~as being non-equivalent but closely
related programs: every floating-point operation in $\tilde{\textbf{P}}$
approximates its real valued counterpart in $\textbf{P}$ with an accuracy that
depends on the number of bits in the floating-point format, as well as the
rounding strategy that is used.  The purpose of a rounding error analysis is to
quantitatively determine how close \p~and \tp~are by deriving an \emph{a
priori} bound on the effects of rounding errors. In numerical analysis, the
notion of the ``closeness'' of programs---or \emph{program distance}---can be
determined by performing either a \emph{forward} or \emph{backward} error
analysis \citep{Higham:2002:Accuracy}. 

\subsection*{Forward Error Analysis} 
Even if the programs \p~and \tp~are closely related, they might produce
different outputs when given the same input. This behavior leads to the
following natural question: for a given input $x$, how close is the output
$\tilde{\textbf{P}}(x)$ to the target output $\textbf{P}(x)$? The distance
between $\tilde{y} = \tilde{\textbf{P}}(x)$ and $y= \textbf{P}(x)$ is known as
the \emph{forward error}, and a bound on the forward error is obtained by
performing a \emph{forward error analysis}. The forward error is represented by
the quantity $\delta y$ in \Cref{fig:intro}.

\begin{figure}
\begin{center}
\begin{tikzpicture}[scale=1.2]   
    \node (A) at (0, 1.75) {Domain $X$};
    \node (A) at (4, 1.75) {Codomain $Y$};
    \node (x) at (0, 0) {$x \in X$};
    \node (y) at (4, 1) {$y \in Y$};
    \node (b) at (0, -2) {$\tilde{x} \in X$};
    \node (c) at (4, -3) {$\tilde{y} \in Y$};
    \draw[thick,->] (x) -- (y) node[midway, above] {\large ${P}$};
    \draw[dashed,thick] (x) -- (b) node[midway, left] {$\delta x$};
    \draw[dashed,thick] (y) -- (c) node[midway, left] {$\delta y$};
    \draw[thick, ->] (x) -- (c) node[midway, above] {\large$\tilde{P}$};
    \draw[thick, ->] (b) -- (c) node[midway, below] {\large${P}$};
\end{tikzpicture}
\end{center} 
\caption{Forward and backward error. \label{fig:intro}}
\end{figure}

Accurate programs are typically characterized by having a small forward error.
Unfortunately, automated tools that statically compute sound \emph{a priori}
bounds on forward errors suffer from significant limitations. These tools tend
to overestimate rounding errors, are mostly restricted to analyzing
straight-line programs (those without loops or conditional branches), and
struggle to scale to larger programs with more than a few hundred
floating-point operations.  
Another significant limitation of many of these tools is their inability 
to account for how programs amplify and propagate rounding errors from their 
inputs---a critical factor for analyzing and scaling to realistic software
\footnote{Recent work has begun to address this limitation
\citep{PRECISA:2024,Abbasi:2023:compose}.}. 

\subsection*{Backward Error Analysis}
We can also compare $\textbf{P}$ and $\tilde{\textbf{P}}$ by answering the
following question: do $\textbf{P}$ and $\tilde{\textbf{P}}$ behave like
equivalent programs when given non-equivalent but closely related inputs? More
specifically, given a point $x$ in the domain of \tp, is there an input
$\tilde{x}$ close to $x$ such that
$\textbf{P}(\tilde{x})=\tilde{\mathbf{P}}(x)$? This question is answered
through a \emph{backward error analysis}, and the distance between the input
$x$ and the input $\tilde{x}$ is known as the \emph{backward error}. In
\Cref{fig:intro}, the backward error is represented by the quantity $\delta x$.  
While an accurate program is characterized by a small forward error, a backward
error analysis gives more insight into the quality of an approximating program:
a large backward error suggests that the programs $\tilde{\textbf{P}}$ and
$\textbf{P}$ are not closely related. Moreover, a small backward error implies
a small forward error so long as $\textbf{P}$ is robust to small perturbations
in its inputs. 

However, there are several challenges presented by backward error analysis.
First, if $\textbf{P}$ is not surjective, then a solution produced by
$\tilde{\textbf{P}}(x)$ might be outside of the codomain of $\textbf{P}$. In
such cases, the value $\tilde{x} \in X$ in \Cref{fig:intro} does not exist.
Another complication with backward error analysis is that backward error
guarantees are generally not compositional. This means that a bound on the
backward error of a program cannot be reliably derived from bounds on the
backward error of its individual components. As a result, statically analyzing
backward error remains an open challenge.

\subsection*{Program Distance}
The concept of program distance has recently emerged as formalism that offers a
more refined analysis of the behavior of programs compared to the standard
notion of program equivalence
\citep{gavazzo:2019:phd,dal:2022:effectful,Amorim:2017:metric,Gavazzo:2018:applicative}.
This is particularly true for programs with effectful properties---programs
whose behavior have an effect on the environment in which they are
evaluated---and coeffectful properties---programs whose behavior depends on the
environment in which they are evaluated.  While various formalisms for
reasoning about program distance have been used to study probabilistic
languages \citep{Amorim2019,Crubille:2015:metric} and languages for
differential privacy \citep{BunchedFuzz,Fuzz,chatzikokolakis:2014:generalized}, 
the characterization above of forward and backward error suggests that these
ideas can also  be applied to study languages for forward and backward error
analysis.   Intuitively, forward rounding error can be modeled using both
effects and coeffects: the total forward rounding error resulting from the
evaluation of a finite-precision program depends on the rounding error produced
locally by the function, as well as the errors propagated from the inputs.  On
the other hand, backward error can be described as a coeffectful property: when
backward error bounds exist, they characterize the relationship between a
finite-precision computation and its ideal counterpart in terms of
perturbations to the input space. 

\section{Type-Based Approaches to Rounding Error Analysis}
One approach to developing correct numerical software is to write programs and
the specifications of their numerical behavior together, using a typed
programming language. In carefully designed languages that soundly represent
rounding error analysis through typing, types can capture properties like
accuracy requirements and acceptable error bounds.  By representing these
properties through typing, accuracy requirements can be enforced via type
checking, ensuring that programs adhere to their intended numerical behavior.
For many applications, this approach potentially eliminates the need for manual
reasoning about numerical accuracy or reliance on external error analysis
tools, as the language itself provides guaranteed rounding error bounds.

To illustrate this approach in practice, consider the following example.
Elementary functions like \linec{sin}, \linec{exp}, and \linec{log} are often
implemented in math libraries using polynomial approximations, and the accuracy
of an implementation is highly dependent on the method used for polynomial
evaluation.  For instance, the following example is an IEEE binary64 polynomial
approximation to the function $2^x$ for $0 < x \le 1/512$:
\begin{align*}
P(x) = 1 + \frac{6243314768165347}{2^{53}}\cdot x &+ 
\frac{8655072058047061}{2^{55}} \cdot x^2 \\ &+ 
\frac{3999491778607567}{2^{56}}\cdot x^3 + 
\frac{5548134412280811}{2^{59}}\cdot x^4
\end{align*}
According to the analysis given by \citet{muller:2016:elementary}, the relative
forward error due to rounding of an implementation 
\lstinline{exp2 : f64 $\rightarrow$ f64} of $P(x)$ should ideally be several 
orders of magnitude smaller than $1.11\times10^{-16}$, which is the tightest 
bound that can be derived if a naive implementation of polynomial evaluation is used.  
In a language with an expressive  type system that explicitly tracks rounding error 
bounds at the level of types, type inference serves as a mechanism for automatically
computing sound rounding error bounds for every program.  In
this setting, as a prelude to the syntax of the languages proposed in this
dissertation, the type of the function \lstinline{exp2} that returns a value of
type \lstinline{f64} and produces at most $1.11\times10^{-16}$ relative
forward error due to rounding is:
\begin{center}
\begin{tabular}{c}
\begin{lstlisting}
exp2 : f64 mapsto M[$\text{\color{ForestGreen}{1.11e-16}}$]f64. 
\end{lstlisting}
\end{tabular}
\end{center}

Functions that require highly accurate implementations of the polynomial
approximation $P(x)$ can specify this fact in the type declarations of their
arguments. For instance, consider the function \lstinline{foo}, defined below:
\begin{lstlisting}
fun foo (accurate_exp2 : f64 $\rightarrow$ M[$\text{\color{ForestGreen}{2.17e-19}}$]f64, y : f64) {
    let x = accurate_exp2(y); // call to an accurate exp2  
    return x + 3.0;
}
\end{lstlisting}
This function is a \emph{higher-order function} that takes as an argument a 
function with the following type: 
\begin{center}
\begin{tabular}{c}
\begin{lstlisting}
accurate_exp2 : f64 $\rightarrow$ M[$\text{\color{ForestGreen}{2.17e-19}}$]f64 
\end{lstlisting}
\end{tabular}
\end{center}
This type specifies a function that returns a value of type \lstinline{f64} with 
at most $2.17\times 10^{-19}$ relative error due to rounding. If a function that 
produces more than  $2.17\times 10^{-19}$ relative rounding error is provided as 
an argument 
to \lstinline{foo}, the program will fail to type check. Specifically,
the type checker for the language enforces \lstinline{foo}'s accuracy 
requirement, rejecting cases where the argument does not meet this constraint. 
For example, passing the function 
\lstinline{exp2 : f64 mapsto M[${\text{\color{ForestGreen}{1.11e-16}}}$]f64} 
which allows a relative error of $1.11 \times 10^{-16}$ as an argument to 
\lstinline{foo} will fail to type check: 
\begin{lstlisting}
let z = foo(exp2); // ERROR mismatched types: expected $\color{Gray} \text{1.11e-16} \le \text{2.17e-19}$
\end{lstlisting}

Thus, by embedding types that track rounding error directly into the type 
system of a language, we can specify and enforce rigorous accuracy requirements
without requiring external tools or performing a manual error analysis. The 
ability to track and verify accuracy constraints at the level of types ensures that programs 
adhere to their intended numerical behavior and provides strong guarantees 
about the reliability of numerical computations. 

Moreover, higher-order functions like \lstinline{foo} highlight how these 
type systems facilitate modular reasoning about accuracy.  
Functions can explicitly require arguments to meet strict error constraints, 
and the type checker for the language enforces these requirements, ensuring 
that only implementations with sufficiently small rounding error can be used.
This guarantees correctness by construction, reducing the risk of subtle 
numerical errors propagating through larger programs. 

To ensure that a program's type accurately represents its 
rounding error bounds, the type system must satisfy a soundness property,
which we call
\emph{error soundness}.  This property establishes that the rounding error 
bound expressed in a program's type, such as the value 
{\lstinline{$\text{\color{ForestGreen}{2.17e-19}}$}} in the type
{\lstinline{M[$\text{\color{ForestGreen}{2.17e-19}}$]f64}},
is not
merely an annotation but is \emph{meaningful}. Specifically, 
if the type system assigns this type to a program, then the program is 
guaranteed adhere to this error bound during execution, thereby
ensuring the correctness of its numerical behavior.
Moreover,
\emph{error soundness} guarantees that these bounds are sound
overapproximations of the true rounding error, providing a reliable foundation
for reasoning about numerical accuracy. 
To establish error soundness for our languages, 
we construct denotational semantic models that assign precise meaning to 
types, capturing the notion of program distance as characterized 
by backward and forward error. We 

\section{Goals and Contributions}
The goal of this dissertation is to demonstrate linguistic features and type
systems that unify the tasks of reasoning about numerical accuracy and writing
numerical programs. It presents novel programming languages designed for both
forward error analysis and backward error analysis and details the development
of formal categorical denotational semantics for these languages, along with
implementations of type inference and type checking algorithms.  The methods
and languages presented in this dissertation serve as a blueprint for future
languages that make reasoning about numerical accuracy more accessible and
convenient for programmers.

Our contributions can be categorized into three  main areas: denotational
semantics, language design, and implementation.  The remainder of this chapter
is devoted to summarizing each of these contributions.  

\subsection*{Contribution 1: {Denotational Semantics}}
A desirable property for tools that automatically bound floating-point rounding
errors is \emph{soundness}: the bounds produced by the tool should
overapproximate the true error. The first contribution of this dissertation is
the introduction of categorical structures that soundly model forward and
backward error analysis. These semantic structures give insight into the
foundational concepts underlying programming languages that can soundly express
rounding error bounds through typing. 

For forward error analysis, \Cref{sec:nfuzz:semantics} defines the
\emph{neighborhood monad}, a novel graded monad on the category of metric
spaces. This monad uses grade information to model bounds on the distance
between two closely related computations. Forward relative error bounds are
obtained by instantiating the neighborhood monad with distance functions
proposed in the numerical analysis literature that approximate relative error.

For backward error analysis, \Cref{sec:bean:semantics} introduces a semantic
structure called \emph{backward error lenses}, which describes computations
suitable for backward error analysis. A backward error lens consists of a
triple of related transformations that collectively satisfy a backward error
guarantee. This structure supports the representation of standard numerical
primitives and their associated backward error bounds. The \emph{category of
backward error lenses} (\Bel{}) serves as a semantic framework for reasoning
about backward error analysis.

\subsection*{Contribution 2: {Language Design}}
Categorical frameworks for forward and backward error analysis provide an
abstract domain for specifying programming language constructs---such as
function definitions, pairs, and conditionals---that preserve the semantics of
an error analysis. To this end, the second contribution of this dissertation is
the design of two languages: \Lang{} for forward error analysis, and \bean{}
for backward error analysis, each equipped with a type system that guarantees
bounded error. 

The type system for \Lang{} is presented in \Cref{sec:nfuzz:language} and is
based on \emph{Fuzz} \citep{Fuzz}, a linear call-by-value $\lambda$-calculus
designed for differential privacy. \emph{Fuzz} uses a linear type system and a
graded comonadic type to statically perform a sensitivity analysis.  \Lang{}
extends \emph{Fuzz}'s type system with a graded monadic type for tracking local
rounding errors.  Additionally, \Lang{} introduces typing rules combining
sensitivity and local rounding error, enabling a compositional approach to
analyzing the rounding error of larger programs.  \Lang{}'s type system
guarantees that programs have bounded forward error, a property
we refer to as \emph{forward error soundness}. We establish this guarantee in
\Cref{sec:nfuzz:sound} by modeling the graded monadic type using the
neighborhood monad described above, and by relating the categorical
denotational semantics to an ideal and floating-point operational semantics. 

\bean{} is a first-order \emph{bidirectional} programming language based on
numerical primitives.  In bidirectional programming languages
\citep{Bohannon:2008:bidirectional,foster:2009:bidirectional,foster:2012:three},
expressions usually denote a related pair of transformations: a \emph{forward}
transformation mapping inputs to outputs, as well as a \emph{backward}
transformation mapping an updated output together with an original input to a
corresponding updated version of the input
\citep{Bohannon:2008:bidirectional}.  To capture backward error analysis in
\bean{}, each expression instead represents a \emph{triple} of transformations:
two forward transformations---one for the \emph{ideal} program and one for the
\emph{approximate} program---and a backward transformation that relates these
forward transformations under the constraints of a \emph{backward error lens}.
\Cref{sec:bean:language} introduces \bean{}'s type system, which  uses strict
linearity to ensure that when subexpressions are composed, the resulting larger
expression also satisfies the constraints of a backward error lens. A graded
coeffect system then computes static bounds on the backward error, based on 
bounds for the numerical primitives of
the language. This type and effect system guarantees that \bean{} programs have
bounded backward error, a property we refer to as
\emph{backward error soundness}.  The proof of backward error soundness for
\bean{} is given in \Cref{sec:bean:sound}.

\subsection*{Contribution 3: {Implementation}}
Many analysis tools have been developed to automatically bound the forward
rounding error of floating-point expressions; however, none of these tools have
employed a type-based approach, nor do they address backward error analysis.
The third contribution of this dissertation, described in
\Cref{sec:nfuzz:implementation} and \Cref{sec:bean:implementation}, is the
implementation of type checkers and grade inference algorithms for both \Lang{}
and \bean{}. Both implementations build on the sensitivity
inference algorithm introduced by \citet{Gaboardi:2013:dfuzz}.

In this type-based approach,
error bounds can be inferred and checked 
for simple numerical
programs: forward error bound in the implementation of \Lang{}, and 
backward error bounds in the implementation of \bean{}. 
This approach is attractive because it automatically provides a
formal proof that a given program has a certain error bound.  

We evaluate our implementation of \Lang{} using a variety of benchmarks from the
literature and demonstrate that it infers error bounds that are competitive
with those produced by other tools. We also show that our implementation is
capable of handling the largest benchmarks in the literature, such as a 128 x
128 matrix multiplication involving over 4 million floating-point operations.
Although our prototype implementation currently only supports a limited set of
primitive floating-point operations (addition, multiplication, division, square
root), our empirical evaluation shows that the time complexity of the inference
algorithm is linear in the size of the \Lang{} program. We therefore expect
that any further extensions to \Lang{} to support additional primitive
operations will not affect the complexity of the type checker and inference
algorithm. This distinguishes our type-based approach from other tools, where
the complexity depends on the specific floating-point operation being
performed. 

For \bean{}, we translate several large benchmarks into the language,
demonstrating that its implementation effectively infers useful error bounds
and scales to handle large numerical programs.  Since \bean{} is the first tool
to statically derive \emph{sound} backward error bounds, a direct comparison
with existing tools is limited.  We therefore evaluate our implementation of
\bean{} using three complementary methods.  First, we compare the backward error
bounds inferred by \bean{} to those from a dynamic analysis tool by
\citet{Fu:2015:BEA}, which provides the only automated quantitative backward
error bounds available, as well as to theoretical worst-case bounds from the
literature.  Additionally, we use forward error as a proxy by deriving forward
error bounds from \bean{}'s backward error bounds, leveraging known values of
the condition number. These forward error bounds are then
compared those produced by other tools, including \Lang{}. 

The artifact for our implementation of \Lang{} is available at
\url{https://zenodo.org/records/10967298}, and the working branch is available
at \url{https://github.com/ak-2485/NumFuzz}. The working branch of \bean{}
is available at \url{https://github.com/ak-2485/NumFuzz/tree/bean}.

\section{Dissertation Outline}
This dissertation connects several topics that have not been previously linked,
including floating-point arithmetic, type systems, category theory, and
bidirectional programming languages. To support this integration, we present
essential mathematical notations and preliminaries in
\Cref{chapter:background}.  

In \Cref{chapter:nfuzz}, we describe \Lang{}, covering its type system
(\Cref{sec:nfuzz:language}), denotational semantics
(\Cref{sec:nfuzz:semantics}), operational semantics
(\Cref{sec:nfuzz:opsemantics}), and soundness guarantee
(\Cref{sec:nfuzz:sound}). We provide examples of how to soundly instantiate the
language parameters in \Cref{sec:nfuzz:examples}, and describe a prototype
implementation along with its evaluation in \Cref{sec:nfuzz:implementation}.
Related work on static analysis techniques for sensitivity and forward rounding
error analysis is discussed in \Cref{sec:nfuzz:related}, and future directions
for \Lang{} are summarized in \Cref{sec:nfuzz:conclusion}. Omitted lemmas and
proofs from this chapter are provided in \Cref{app:nfuzz}.

In \Cref{chapter:bean}, we describe \bean{}, introducing its type system
(\Cref{sec:bean:language}) and denotational semantics
(\Cref{sec:bean:semantics}). The soundness guarantee, along with the necessary
operational constructions, is presented in \Cref{sec:bean:sound}. Examples in
\Cref{sec:bean:examples} demonstrate how \bean{} can be used to establish sound
backward error bounds for various numerical problems through typing. The
implementation and evaluation are described in \Cref{sec:bean:implementation}.
Related work on static analysis techniques for backward error analysis is
covered in \Cref{sec:bean:related}, and \Cref{sec:bean:conclusion} concludes
the chapter and describes future work. Omitted lemmas and proofs from this
chapter  are provided in \Cref{app:bean}.

\chapter{Background}\label{chapter:background}
  The languages described in this dissertation link several topics: floating-point
  arithmetic, type systems, category theory, and bidirectional
  programming languages. While the role of type systems as tools for statically
  reasoning about the behavior of programs is well-established in many areas of
  computing, connecting this core idea from the theory of programming languages
  to numerical analysis is a novel contribution of the work described in this
  dissertation. To support the integration of these concepts, this part of the
  dissertation provides definitions, notation, and background results on
  floating-point arithmetic, as well as on type systems and their categorical 
  semantics. 
\section{Floating-Point Arithmetic}\label{background:floatingpoint}
\begin{table}
\centering
\caption{Floating-point format parameters according to the 
	revised IEEE 754-2008 standard.}
\label{tab:formats}
\begin{tabular}{ l r r r r }
 \hline
	\textbf{Parameter}&\textbf{binary16} & 
		\textbf{binary32} & 
	\textbf{binary64} &\textbf{binary128} \\
 \hline
	p   & 11 & 24 & 53 & 113     \\
	emax & 31 & 127  & 1023 & 16383 \\
 \hline
\end{tabular}
\end{table}
A \emph{finite} floating-point number $x$ in a floating-point format 
$\F \subseteq \R$ has the form
\begin{equation}
x =  (-1)^s \cdot m \cdot \beta^{e-p+1} \label{eq:fpdef}
\end{equation}
where 
$\beta \in\{ b\in \mathbb{N} \mid b \ge 2\}$ 
is the \emph{base}, 
$p \in \{prec \in \mathbb{N} \mid prec \ge 2\}$ 
is the \emph{precision}, 
$m \in \mathbb{N} \cap [0,\beta^p)$ 
is the \emph{significand}, 
$e \in \mathbb{Z} \cap
[\text{emin}, \text{emax}]$ 
is the \emph{exponent}, and 
$s \in \{0,1\}$ 
is the \emph{sign} of $x$.
A complete definition of a floating-point format also specifies binary
encodings, and describes how to handle \emph{non-finite} special values, such
as infinities and NaNs.
For finite floating-point numbers, the parameter values for formats defined in
the IEEE 754 standard for floating-point arithmetic \citep{IEEE2019} are given
in \Cref{tab:formats}; in all cases, $\text{emin}=1$-$\text{emax}$. Although
single (binary32) and double (binary64) precision floating-point formats have
historically been the most widely used, modern architectures are increasingly
supporting half precision (binary16). 

Finite floating-point numbers are commonly categorized into two types: normal
and subnormal. A number of the form given in \Cref{eq:fpdef} is considered normal
if its significand $m$ and exponent $e$ satisfy the bounds $e \ge
\text{emin}$ and $\beta^{p-1} \le m < \beta^p$; otherwise, it is said to be
subnormal. If the magnitude of the result of a floating-point operation is 
subnormal, then the operation is said to have \emph{underflowed}. On the other 
hand, if the magnitude of the result exceeds the largest representable normal 
number in the format, the result is said to have \emph{overflowed}. While 
overflows result in a total loss of precision, underflows in floating-point 
formats that support 
subnormal numbers result in a gradual, rather than total, loss of precision.

Even at higher precisions, most real numbers cannot be represented exactly by
a finite floating-point number. Additionally, the result of most elementary
operations on floating-point numbers cannot be represented exactly and must be
\emph{rounded} to the nearest representable value, following a specific
rounding strategy. This process can introduce some \emph{rounding error} in the
result.

\subsection{Rounding Functions} 
According to the IEEE 754 standard, rounding is viewed as an operation that
maps real numbers to elements of the extended real numbers $\R \cup
\{-\infty,+\infty\}$, which allows for representing the results of 
operations that overflow.  
Four rounding modes are specified in the standard: round towards
$+\infty$, round towards -$\infty$, round towards $0$, and round towards
nearest (with defined tie-breaking schemes). The properties of these modes
are given in \Cref{tab:rnd_modes}. 

In practice, we are often concerned with 
analyzing the rounding error without considering overflow. Thus,  
it is usually sufficient to define rounding into a
given format as a function into the corresponding format with unbounded
exponents \citep{harrison:1997:floating-point,boldo:2017:formalproofs,
boldo:2023:floatingpoint}. Indeed, most textbook presentations of rounding
error analysis also assume that the range of the floating-point format
being rounded into 
is extended to arbitrarily small values; i.e., it is also assumed that 
there is no underflow. To formalize these assumptions, given a
real number $x$ and a floating-point format $\F$ with an unbounded exponent
range, we define a \emph{rounding function} $\rho: \R \rightarrow \F$ as a
function that takes $x$ and returns a nearby floating-point number $\rho(x) \in
\F$.

In general, well-defined rounding functions are monotone and act as the
identity function on the set of floating-point numbers \citep{MullerBook}:
\begin{itemize}
  \item $\forall x,y \in \R. ~ x \le y \rightarrow \rho(x) \le \rho(y)$
  \item $\forall x \in \F. ~ \rho(x) = x$
\end{itemize}

The following result establishes a bound on the magnitude of the rounding error
produced by rounding a real number,
where $u$ indicates the \emph{unit roundoff} for the given rounding function
and format.
\begin{theorem}\label{thm:main_round}
  Given a real number $x \in \R$, and assuming no underflow or overflow occurs, 
  the following equality holds \cite[Theorem 2.2]{Higham:2002:Accuracy}: 
  \[
  \rho(x) = x (1+\delta), \quad \text{where } |\delta| \le u.
  \] 
\end{theorem}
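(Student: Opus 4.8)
The plan is to prove the relative error bound $\rho(x) = x(1+\delta)$ with $|\delta| \le u$ by reducing to a statement about how far a real number can be from its nearest floating-point neighbor, measured relatively. The key observation is that the unit roundoff $u$ is defined precisely so that every real number lies within relative distance $u$ of some representable float; for round-to-nearest in base $\beta$ with precision $p$, this is $u = \tfrac{1}{2}\beta^{1-p}$, and for directed roundings $u = \beta^{1-p}$.

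First I would dispose of the trivial case $x = 0$: since $0 \in \F$, the identity property of $\rho$ from the excerpt gives $\rho(0) = 0$, so $\delta = 0$ works. For $x \neq 0$, I would set $\delta := (\rho(x) - x)/x$, so the equation $\rho(x) = x(1+\delta)$ holds by construction, and the entire content of the theorem becomes the inequality $|\rho(x) - x| \le u\,|x|$. The main step is then to exhibit, for each $x \neq 0$ in the (unbounded-exponent) format's range, two consecutive floating-point numbers $x_- \le x \le x_+$ and to bound the gap between them. Writing $x$ in the normalized form $|x| = m \cdot \beta^{e-p+1}$ with $\beta^{p-1} \le m < \beta^p$ (the no-underflow assumption guarantees such a normalized representation exists; no-overflow guarantees $e \le \text{emax}$ is not binding since the exponent is taken unbounded), the spacing of floats near $x$ is exactly $\beta^{e-p+1}$, so $x_+ - x_- = \beta^{e-p+1}$ and hence $|x| \ge \beta^{p-1}\beta^{e-p+1} = \beta^e$.

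The decisive estimate: for round-to-nearest, $|\rho(x) - x| \le \tfrac{1}{2}(x_+ - x_-) = \tfrac{1}{2}\beta^{e-p+1}$, and therefore
\[
|\delta| = \frac{|\rho(x) - x|}{|x|} \le \frac{\tfrac{1}{2}\beta^{e-p+1}}{\beta^e} = \tfrac{1}{2}\beta^{1-p} = u.
\]
For the directed rounding modes one uses $|\rho(x) - x| \le x_+ - x_- = \beta^{e-p+1}$ instead, giving $|\delta| \le \beta^{1-p} = u$ for the correspondingly larger value of $u$. The monotonicity property of $\rho$ from the excerpt ensures $\rho(x)$ indeed lands in $\{x_-, x_+\}$ (it cannot skip past a neighbor), which is what licenses the bound $|\rho(x)-x| \le x_+ - x_-$ in all cases.

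The main obstacle is not any single calculation but the careful bookkeeping around the normalized representation near powers of $\beta$: when $m = \beta^{p-1}$ exactly (i.e. $|x|$ is close to $\beta^e$ from above), the float just below $x$ has exponent $e-1$ and the spacing there is $\beta^{e-p}$, which is actually \emph{smaller}, so the worst case genuinely occurs in the "middle" of a binade and the bound $|x| \ge \beta^e$ with spacing $\le \beta^{e-p+1}$ is the tight combination. I would state this cleanly by fixing the binade $[\beta^e, \beta^{e+1})$ that contains $|x|$, noting every float in it is a multiple of $\beta^{e-p+1}$, and deriving the two inequalities $|\rho(x) - x| \le \tfrac{1}{2}\beta^{e-p+1}$ (or $\beta^{e-p+1}$) and $|x| \ge \beta^e$ from that single setup. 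Since the excerpt cites this as Theorem 2.2 of Higham, I would likely present only this sketch and defer the routine binade arithmetic to the reference.
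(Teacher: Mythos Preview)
The paper does not prove this theorem: it is stated in the background chapter as a citation to Higham~\cite[Theorem 2.2]{Higham:2002:Accuracy} with no accompanying argument. Your sketch is the standard proof (essentially Higham's), and your closing remark that you would defer the details to the reference is exactly what the paper does.
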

\begin{table}
\caption{Rounding Operations (Modes).}
\label{tab:rnd_modes}
\begin{tabular}{ c c c c }
 \hline
\textbf{Rounding mode} & \textbf{Behavior} & \textbf{Notation} & \textbf{Unit Roundoff} \\
\hline
 Round towards $+\infty$   & $\min\{y \in \F \mid y \ge x \}$  & $\rho_{RU}(x)$ & $\beta^{1-p}$   \\
 Round towards $-\infty$   & $\max\{y \in \F \mid y \le x \}$  &   $\rho_{RD}(x)$ & $\beta^{1-p}$ \\
 Round towards $0$   &   $\rho_{RU}(x)$ if $x < 0 $, otherwise  $\rho_{RD}(x)$  & $\rho_{RZ}(x)$
        & $\beta^{1-p} $ \\
 Round towards nearest\tablefootnote{For round towards nearest, there are
	several possible tie-breaking choices.}   & $\{y \in \F \mid \forall z
	\in \F,  |x - y| \le |x-z| \}
         $   & $\rho_{RN(x)}$ & $\frac{1}{2}\beta^{1-p}  $    \\
 \hline
\end{tabular}
\end{table}

\subsection{Models of Floating-Point Arithmetic}
The purpose of a rounding error analysis is to derive an a priori bound on the
floating-point rounding errors that are produced during the execution of a
program. Performing a rounding error analysis requires that we first establish
a model describing the accuracy of the basic arithmetic operations. Following
the IEEE 754 standard, each basic arithmetic operation ($+,-,*,\div,\sqrt{}$)
should behave as if it first computed a correct, infinitely precise result, and
then rounded this result using one of the functions in \Cref{tab:rnd_modes}.
Given the result in \Cref{thm:main_round}, this assumption on the computational
behavior of each basic arithmetic operation leads to the following commonly
used \emph{standard rounding error model}: 

\begin{definition}[The Standard Rounding Error Model]  \label{def:std_model}
  For any floating-point numbers $x, y \in \F$, and for some rounding function
  $\rho: \R \rightarrow \F$, the standard rounding error model for basic
  floating-point operations is given as follows \citep{Higham:2002:Accuracy}:
\begin{align}
  \rho(x~op~y)  
  = (x~op~y)(1+\delta), \quad |\delta| \le u, \quad op\in\{+,-,*,\div\}.
\end{align}
\end{definition}

An alternative to the standard rounding error model was proposed by
Olver \citep{Olver:1978:rp}, and was later applied to an early error
analysis of  Gaussian
elimination \citep{Olver:1982:Gaussian,Olver:1982:Arithmetic2} and also of
matrix computations \citep{Pryce:1984:Vectors,Pryce:1985:Matrix}:

\begin{definition}[Alternative Rounding Error Model] \label{def:olver_model}
  For any floating-point numbers $x, y \in \F$, and for some rounding function
  $\rho: \R \rightarrow \F$, an alternative rounding error model for the
  basic floating-point operations is given as follows:
  \begin{align}
     \rho(x~op~y)  
	  = (x~op~y)e^\delta, \quad |\delta| \le \frac{u}{1-u}, 
	  	\quad op\in\{+,-,*,\div\}.
  \end{align}
\end{definition}   

If the rounding function  is round towards $+ \infty$, then we can guarantee a
slightly tighter bound than the one given in \Cref{def:olver_model}:

\begin{lemma}  \label{lem:olver_model_pos}
  For any floating-point numbers $x, y \in \F$ we have:
  \begin{align}
     \rho_{RU}(x~op~y)
	  = (x~op~y)e^\delta, \quad |\delta| \le u,
	  	\quad op\in\{+,-,*,\div\}.
  \end{align}
\end{lemma}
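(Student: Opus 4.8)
The plan is to reduce the claim to the two-sided multiplicative estimate $e^{-u} \le \rho_{RU}(z)/z \le e^{u}$, where $z = x~op~y$ is the exact real-valued result and $u = \beta^{1-p}$ is the unit roundoff of $\rho_{RU}$ recorded in \Cref{tab:rnd_modes}; once that estimate is available, taking $\delta = \ln(\rho_{RU}(z)/z)$ finishes the proof. If $z = 0$ then $\rho_{RU}(0) = 0$, since rounding is the identity on $\F$ and $0 \in \F$, so $\delta = 0$ works. For $z \neq 0$, the standing assumption that no underflow or overflow occurs guarantees that $\rho_{RU}(z)$ is a finite nonzero float of the same sign as $z$, so the ratio is positive and $\delta$ is well defined. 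The argument below uses only that $\rho_{RU}$ is the correctly rounded value of $x~op~y$ under round-toward-$+\infty$, so it is uniform in $op \in \{+,-,*,\div\}$.

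One side of the estimate is immediate from the defining property $\rho_{RU}(z) = \min\{y \in \F \mid y \ge z\} \ge z$: for $z > 0$ this gives $\rho_{RU}(z)/z \ge 1 \ge e^{-u}$, while for $z < 0$, dividing by the negative number $z$, it gives $\rho_{RU}(z)/z \le 1 \le e^{u}$. For the opposite side when $z > 0$: the float $\rho_{RU}(z)$ is the $\F$-successor of $z$ (or $z$ itself), so $\rho_{RU}(z) - z$ is strictly below the spacing of floats near $z$, which is at most $\beta^{e-p+1} = u\,\beta^{e} \le u\,z$ when $z$ lies in its binade $[\beta^{e},\beta^{e+1})$; hence $\rho_{RU}(z) < z(1+u) \le z\,e^{u}$, using $1+t \le e^{t}$. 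A short check, reducing to the inequality $1 \le \beta - \beta^{1-p}$ (valid for all $\beta \ge 2$, $p \ge 2$), covers the subcase where $z$ is close enough to the top of its binade to round up to $\beta^{e+1}$.

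The delicate case, and the one I expect to be the main obstacle, is $z < 0$. Here one must resist the temptation to simply invoke the standard model of \Cref{def:std_model}, which only gives $\rho_{RU}(z) = z(1+\delta_1)$ with $|\delta_1| \le u$: feeding this into $\delta = \ln(1+\delta_1)$ yields at best $|\delta| \le -\ln(1-u) \le u/(1-u)$, i.e.\ the generic Olver constant of \Cref{def:olver_model} rather than the sharper bound $u$. The fix is that for a rounding directed toward $+\infty$ applied to a negative number the magnitude only shrinks, and the error should be measured relative to the \emph{rounded} value: $\rho_{RU}(z)$ is the float immediately above $z$ (or $z$ itself), both negative, so $|\rho_{RU}(z)| \le |z|$ and $|z| - |\rho_{RU}(z)| = \rho_{RU}(z) - z$ is strictly below the spacing between $\rho_{RU}(z)$ and the next float toward $-\infty$; since $\rho_{RU}(z)$ is itself a float at or above the lower end $\beta^{e'}$ of its binade, that spacing is at most $u\,|\rho_{RU}(z)|$ (the binade-boundary subcase again reducing to $1 \le \beta - \beta^{1-p}$). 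Therefore $|z| < |\rho_{RU}(z)|(1+u)$, so $\rho_{RU}(z)/z = |\rho_{RU}(z)|/|z| \in \bigl(1/(1+u),\, 1\bigr]$, which gives $-\ln(1+u) < \delta \le 0$ and hence $|\delta| < \ln(1+u) \le u$. Apart from that single spacing estimate, everything is routine bookkeeping with $\ln(1+t) \le t$.
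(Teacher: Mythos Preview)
Your argument is correct and is actually more complete than what the paper provides. The paper gives no formal proof of \Cref{lem:olver_model_pos}; the only justification is the one-line remark following \Cref{def:rp} (around \Cref{eq:stdu}), where it is observed that under round-toward-$+\infty$ the standard-model perturbation is non-negative, whence $|\ln(1+\delta)| \le \delta \le u$. That is precisely the shortcut you flag in your final paragraph, and it is valid only for $z = x~op~y > 0$: when $z < 0$, rounding toward $+\infty$ decreases $|z|$, so the standard-model $\delta$ is non-positive and the inequality $|\ln(1+\delta)| \le \delta$ no longer applies. The paper never addresses this case; it gets away with the gap because the only instantiation in \Cref{sec:nfuzz:examples} takes $\denot{\num} = (\R_{>0}, RP)$, where everything fed to $\rho_{RU}$ is positive.

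Your fix for $z<0$---bounding $|z| - |\rho_{RU}(z)|$ by the float spacing \emph{at the rounded value} so that $\rho_{RU}(z)/z \in (1/(1+u),1]$ and hence $|\delta| < \ln(1+u) \le u$---is exactly the missing ingredient needed to establish the lemma as stated, for all signs of $z$. The positive case matches the paper's sketch; the negative case is your contribution.
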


\subsection{Measures of Accuracy}
While the most common measures of accuracy are \emph{relative error} and
\emph{absolute error}, the alternative model given in \Cref{def:olver_model}
naturally accommodates the notion of \emph{relative precision}. We define each
of these measures below. 

\paragraph{Relative and Absolute Error}
Absolute error $(er_{abs})$ and relative error $(er_{rel})$ are commonly used
to measure the error of approximating a value $x$ by a value $\tilde{x}$.
\begin{align}
  er_{abs}(x,\tilde{x}) &= |\tilde{x}-x| \\
        \quad 
  er_{rel}(x,\tilde{x}) &= |(\tilde{x}-x)/x| 
    \quad \text{if}~ x \neq 0 \label{def:rel_err}
\end{align}
According to the standard model for floating-point arithmetic
(\Cref{def:std_model}), the relative error of each of the basic floating-point
operations is bounded by the unit roundoff. The relative and absolute error do
not apply uniformly to all values: the absolute error is well-behaved for small
values, while the relative error is well-behaved for large values.

\paragraph{Relative Precision}
We use the following definition of relative precision, adapted from
\citet{Olver:1978:rp}:

\begin{definition}[Relative Precision (RP)]  \label{def:rp}
The \emph{relative precision} (RP) of $\tilde{x}$ as an approximation to $x$ is 
given by 
\\
\begin{equation}\label{eq:olver}
  RP(\tilde{x},x) = 
  \begin{cases}
    |\ln(\tilde{x}/x)| & \text{if $\text{sgn}({\tilde{x}}) = \text{sgn}(x)$ and $\tilde{x},x 
    \neq 0$} \\
    0 &  \text{if $\tilde{x} = x = 0$} \\
    \infty & \text{otherwise.}
  \end{cases}
\end{equation}
\end{definition}
According to the alternative model for floating-point arithmetic
(\Cref{def:olver_model}), the
{relative precision} of each of the basic floating-point operations is bounded
by slightly more than the unit roundoff. If the rounding function is fixed as
round towards $+ \infty$, then the bound on the relative precision can be shown
to be somewhat tighter using the standard model (\Cref{def:std_model}). In that
case, the error variable $\delta$ in \Cref{def:std_model} is non-negative, so
that $|\ln(1+\delta)| \le \delta$, and the relative precision of each of the
basic floating-point operations is bounded by the unit roundoff: 
\begin{align}{\label{eq:stdu}}
	\forall x,y \in \R. ~ RP(\rho(x~op~y),(x~op~y)) &\le 
		\left| \ln(1 + \delta)  \right|, \quad |\delta| \le u
\end{align}

The close relationship between relative precision and relative error can be
seen by rewriting \Cref{def:rp} and \Cref{def:rel_err} as follows:
\begin{align} 
  er_{rel}(x,\tilde{x}) &= |\gamma|; \quad \tilde{x} = (1+\gamma)x
  \label{eq:RPbnd1}\\ 
  RP(x,\tilde{x}) &= |\gamma|; \quad \tilde{x} = e^\gamma x
  \label{eq:RPbnd2} 
\end{align}
If we consider the Taylor expansion of the exponential, then from
\Cref{eq:RPbnd1} and \Cref{eq:RPbnd2} we can see that the relative precision is
a close approximation to the relative error so long as $\gamma \ll 1$.
Moreover, for some $\tilde{x}$ and $x$ with $RP(x,\tilde{x})=\alpha$ and
$\alpha < 1$, the following inequality holds:
\begin{equation} 
  er_{rel} = |e^\alpha -1| \le \alpha/(1-\alpha) \label{eq:conv} 
\end{equation}

A main advantage of relative precision in comparison to relative error is that
\Cref{def:rp} defines an extended pseudometric for all real numbers. 
Specifically, unlike relative error, relative precision satisfies the following
properties: 
\begin{enumerate}
  \item Reflexivity:  $\forall x \in \R.~RP(x,x) = 0$
  \item Symmetry: $\forall x,y \in \R.~RP(x,y) = RP(y,x)$
  \item Triangle Inequlity: $\forall x,y,z \in \R.~RP(x,z) \le RP(x,y) + RP(y,z)$
\end{enumerate}


\section{Type Systems}
A type system is a principled system for assigning types to
programs.  This assignment is carried out using a set of rules,
which inductively define a set of valid typing judgments. The typing judgment,
which asserts that an expression $e$ can be given type $\tau$ relative to a
typing environment $\Gamma$ for the free variables of $e$, has the form $\Gamma
\vdash e : \tau$; the typing environment $\Gamma$ can be viewed as a partial
map from variables to types.  The premise judgments in each rule are written
above a horizontal inference line, and a single conclusion judgment is written
below the line, with the name of the rule appearing to the left of the line.
Given a typing environment $\Gamma$ and expression $e$, if there is some $\tau$
such that $\Gamma \vdash e : \tau$ , we say that $e$ is \emph{well-typed under
context $\Gamma$}; if $\Gamma$ is the empty context ($\emptyset$), we say $e$
is \emph{well-typed}, and write the judgment as $ \vdash e : \tau$.

Type checking an expression $e$ amounts to showing that the term is well-typed
by constructing a derivation of the judgment $\vdash e : \tau$ for some type
$\tau$.  For example, in the simply-typed lambda calculus extended with a let
expressions, the program \linec{\mlet{x}{3}{x+1}} is well-typed, with type
$\mathbf{int}$. The relevant syntax for types and terms includes integer
literals $n$: 
\begin{alignat*}{3} &\text{Types } &\tau &::=~
  \mathbf{int}
         \mid \sigma \rightarrow \tau 
         \\
         &\text{Expressions } &e &::=~ 
         x
         \mid n \in \mathbb{N}
         \mid \lambda x : \tau.~e
         \mid e ~ f 
         \mid e + f
         \mid \mlet{x}{e}{f}
  \end{alignat*}
To type check this program, only a few typing rules are needed:
\vspace{8pt}
\begin{center}
\AXC{$n \in \mathbb{N}$}
\LeftLabel{(Const)}
\UIC{$\Gamma \vdash n: \textbf{int}$}
\bottomAlignProof
\DisplayProof
\hskip20pt
\AXC{$\Gamma(x) = \tau$}
\LeftLabel{(Var)}
\UIC{$\Gamma \vdash x: \tau$}
\bottomAlignProof
\DisplayProof
\vskip1pt
\end{center}
\vspace{5pt}
\begin{center}
\AXC{$\Gamma \vdash e:\mathbf{int}$}
\AXC{$\Gamma \vdash f:\mathbf{int}$}
\LeftLabel{(Add)}
\BIC{$\Gamma \vdash e + f: \mathbf{int}$}
\bottomAlignProof
\DisplayProof
\hskip20pt
\AXC{$\Gamma \vdash e:\sigma$}
\AXC{$\Gamma,x:\sigma \vdash f:\tau$}
\LeftLabel{(Let)}
\BIC{$\Gamma \vdash \mlet{x}{e}{f} : \tau$}
\bottomAlignProof
\DisplayProof
\vskip1pt
\end{center}
The type derivation for the program proceeds as follows: 
\vspace{8pt}
\begin{center}
\AXC{}
\LeftLabel{(Const)}
\UIC{$ \vdash 3:\mathbf{int}$}
\AXC{}
\LeftLabel{(Var)}
\UIC{$x:\sigma \vdash x:\mathbf{int}$}
\AXC{}
\LeftLabel{(Const)}
\UIC{$x:\sigma \vdash 1:\mathbf{int}$}
\LeftLabel{(Add)}
\BIC{$x:\sigma \vdash x + 1:\mathbf{int}$}
\LeftLabel{(Let)}
\BIC{$ \vdash\linec{\mlet{x}{3}{x+1}} : \mathbf{int}$}
\bottomAlignProof
\DisplayProof
\end{center}

Typically, types describe only the basic structure of the data that programs
operate on, but it is possible for types to express more detailed information
about programs.  For instance, \emph{graded monadic types} refine type
information to describe and track \emph{effects}---any observable behaviors
that might arise during evaluation beyond the production of values. Examples of
computations with effects include partial functions, raising errors or
exceptions, performing input/output (IO), and, as a novel contribution of our
work, \emph{rounding}.  On the other hand, \emph{graded comonadic types}
represent a different approach to refining types by focusing on describing and
tracking \emph{coeffects}---how programs depend on their environment, rather
than the effects they have on it.  Coeffects have been used to describe a wide
range of program behaviors, including resource requirements, program
sensitivity \citep{Fuzz}, and, as part of this dissertation, \emph{backward
error}. 

\subsection{Effects and Graded Monadic Types}\label{background:effects}
Graded monadic types unify two approaches to describing and tracking
computational effects: effect systems and monads.  Introduced by Gifford and
Lucassen \citep{Gifford:1986:effects, lucassen:1987:dissertation,
Lucassen:1988:effects}, and later developed by Talpin and Jouvelot
\citep{Talpin1992,Talpin1993,Talpin1994} and others \citep{Nielson:1999:effects},
effect systems are a class of static analysis techniques that extend the types
and typing rules of an underlying type system with annotations describing the
effects that the primitive operations in the language might produce. For
instance, in effect system, function types have the form $\sigma \oset{\eff{s}}
\tau$ where the effect annotation $\eff{s}$ is traditionally taken from a join
semilattice
$(\mathcal{S},\sqcup,\bot,\sqsubseteq)$ \citep{Mycroft:2015:effects}.  This type
describes a computation that returns a value of type $\tau$ and may have an
effect described by $\eff{s}$ when applied to a value of type $\sigma$. 

The typing rules in effect systems track individual effects, and provide a
fine-grained description of how effects accumulate and interact.  An example
typing rule from a simple effect system is the (Let-E) rule for let expressions
below, which says that the overall effect is the (maximum) of the effect of the
binding expression $e$ and the effect of the body expression $f$:
\vspace{8pt}
\begin{center}
\AXC{$\Gamma \vdash e:\sigma, \eff{r}$}
\AXC{$\Gamma,x:\sigma \vdash f:\tau, \eff{s}$}
\LeftLabel{(Let-E)}
\BIC{$\Gamma \vdash \mlet{x}{e}{f} : \tau, \eff{r} \sqcup \eff{s}$}
\bottomAlignProof
\DisplayProof
\end{center}
A key advantage to this approach is that effect annotations can often be
automatically inferred using natural extensions of existing type inference
algorithms \citep{Nielson:1999:ProgramAnalysis}. 

In a separate line of research, \citet{Moggi:1989:monads,moggi:1991:notions}
demonstrated that effectful computations can be modeled using \emph{monads}
from category theory. Syntactically, monads are incorporated into the language
via a monadic type constructor,  written as $\M$, where a type $M{\tau}$
represents computations that yield a value of type $\tau$ and may produce
effects.  At the syntactic level, this monadic type provides a coarser view of
effects than the annotations in effect systems. For example, consider the
following (Let-M) monadic typing rule for let expressions, which  provides a
syntax-directed way to sequence effectful computations:
\vspace{8pt}
\begin{center}
\AXC{$\Gamma \vdash e:\M\sigma $}
\AXC{$\Gamma,x:\sigma \vdash f:\M\tau$}
\LeftLabel{(Let-M)}
\BIC{$\Gamma \vdash \mlet{x}{e}{f} : \M \tau$}
\bottomAlignProof
\DisplayProof
\end{center}
Clearly, the (Let-M) lacks the fine-grained detail provided by the (Let-E) rule
for effect systems above: the (Let-M) rule indicates that the computation
resulting from a monadic let-binding may have \emph{some} effect, but it
doesn't specify the exact nature or extent of the effect. In contrast, the
(Let-E) rule precisely characterizes the maximum effect the result may have;
i.e., $\eff{r} \sqcup \eff{s}$. Consequently, the utility of inference
algorithms in this setting is less apparent when compared to effect systems.

\citet{wadler:2003:marriage} were the first to propose that the monad structure
introduced by Moggi could be generalized into a family of monads, and used an
\emph{effect-annotated monadic type} to syntactically integrate monads and
effect systems.  Later, \citet{Katsumata:2014:effects} introduced a
denotational semantic framework for the effect-annotated monadic type based on
\emph{graded monads}, which we will soon introduce in \Cref{background:cat}.
Consequently, \emph{graded monadic types} embed graded monads into the syntax
of a type system. The graded monadic type constructor $M_{r}$ refines the
monadic type constructor $M$ into a family of type constructors indexed by the
grade $r$, where $r$ is an element of a \emph{preordered monoid}:
\begin{definition} \label{def:pmonoid}
  A preordered monoid is a tuple $\mathcal{E} = (E,\le,1,\odot)$ such that
  $(E,\le)$ is a preordered set and the binary operator $\odot$ is monotone
  with respect to $\le$ in each argument. 
\end{definition}

One advantage of using elements of a preordered monoid rather than a join
semilattice is that, although using the join operator $(\sqcup)$ to compute the
effect of let expressions is sound, it is not always precise 
\citep{Katsumata:2014:effects}.  The corresponding (Let-G) typing rule for 
sequencing computations of graded monadic type is an intuitive combination of the 
(Let-E) rule and (Let-M) rule:
%
\vspace{8pt} 
\begin{center}
\AXC{$\Gamma \vdash e:M_{r}\sigma $}
\AXC{$\Gamma,x:\sigma \vdash f:M_{s}\tau$}
\LeftLabel{(Let-G)}
\BIC{$\Gamma \vdash \mlet{x}{e}{f} : \M_{{r} \odot \eff{s}} \tau$}
\bottomAlignProof
\DisplayProof
\end{center}

\subsection{Coeffects and Graded Comonadic Types}\label{background:coeffects}
While graded monadic types provide an expressive mechanism for tracking how
programs affect their environment, they cannot track how programs \emph{depend}
on their environment. Type systems that precisely characterize these
\emph{contextual program properties}, known as coeffects, have been developed
by \citet{Brunel:2014:coeffects}, \citet{Ghica:2014:bounded},
\citet{Petricek:2014:coeffects,petricek:2013:coeffects}, and
\citet{petricek:2017:context}. The denotational semantics of these systems,
which use \emph{comonads}---the categorical dual of monads---is
well-established.  Here, we follow the presentation of
\citet{Gaboardi:2016:combining} and \citet{Brunel:2014:coeffects} where comonads
are embedded into the syntax using a \emph{graded comonadic type} constructor,
$!_{s}$.

The type constructor $!_{s}$ generalizes the exponential type constructor $!$
from linear type systems \citep{Wadler:1990:linear}.  Linear type systems are
derived from linear
logic \citep{Katsumata:2014:effects,Girard:1987:LinearLogic}, where each
assumption must be used exactly once. If an assumption $A$ can be discarded or
duplicated, it is marked as $!A$. Similarly, in linear type systems, the type
constructor $!$ differentiates between linear, single-use data (denoted by a
plain type $\tau$) and non-linear, reusable data (denoted by the exponential
type $!\tau$).  For instance, if the body of a function can access a free
variable $x$ of type $\tau$ an arbitrary number of times, then $x$ would be
assigned the type $!\tau$. 

Refining the type constructor  $!\tau$ into a family of type constructors $!_s$
allows for tracking more fine-grained program properties, such as
\emph{bounding} or \emph{limiting} the number of times a variable can be
accessed.  The first attempt to refine the exponential type in this way was
introduced in bounded linear logic \citep{Girard:1992:BoundedLinearLogic},
where the annotation $s$ on the constructor $!_s$  is a polynomial bounding the
computational complexity of a program. More generally, if an expression can access 
a free variable $x$ of type $\tau$ at most $s$ times, then $x$ would be assigned 
the type $!_{s}\tau$, where the grade $s$ is an element of a 
\emph{preordered semiring}:
\begin{definition}\label{def:semiring}
	A \emph{preordered semiring} is a tuple 
	$\mathcal{R} = ({R},\le,0,+,1,\cdot)$ 
  where $({R},\le)$ is a preordered set,
  $ ({R},0,+,1,\cdot)$ is a semiring and 
   both $+$ and $\cdot$ are monotone with  
  respect to $\le$ in both arguments. 
\end{definition}
In this framework, accurately tracking and limiting the usage of variables
requires a redefinition of typing environments as partial maps from variables
to both types \emph{and} grades. This allows environments to not only assign
types to variables but to also track the specific number of times each variable
can be accessed, ensuring that the variable usage requirements of programs
match the grade associated with each variable. For instance, if $\Gamma(x)=
(\tau,r)$, then we have the binding $x:_r\tau$ in $\Gamma$. The type judgment
$x:_r\sigma \vdash e : \tau$ then asserts that the expression $e$ requires
access to the variable $x$ of type $\sigma$ a total of $\ceff{r}$ times.
Environments defined in this way naturally support \emph{sum}, \emph{scaling},
and \emph{translation} operations:
\begin{definition} \label{def:env_sum}
  If two typing environments $\Gamma$ and $\Delta$ always map  
  shared variables to the same type, i.e., if 
  $\Gamma(x) = (\sigma,s)$ and $x \in dom(\Delta)$ imply $\Delta(x) = (\sigma,r)$
  for some grade $r$, then their \emph{sum} is defined as follows:
\\
\
  \[(\Gamma + \Delta)(x) \triangleq  
  \begin{cases}
      (\sigma,s+r) \quad &\text{ if }  \Gamma(x) =(\sigma,s) \text{ and }
                          \Delta(x) = (r,\sigma) \\
      \Gamma(x) \quad &\text{ if } x \notin dom(\Delta) \\
      \Delta(x) \quad &\text{ if } x \notin dom(\Gamma)
  \end{cases}
  \]
\
\end{definition}
\begin{definition} \label{def:env_scale}
The \emph{scaling} operation scales the grades in a typing environment by a 
    given grade: 
\\
  \[(s\cdot \Gamma)(x) \triangleq  
  \begin{cases}
      (\sigma,s\cdot r) \quad &\text{ if }  \Gamma(x) =(\sigma,r)  \\
      \bot \quad &\text{ if } x \notin dom(\Gamma) 
  \end{cases}
  \]
\end{definition}
\begin{definition} \label{def:env_trans}
The \emph{translation} operation translates grades in a typing environment
  by a given grade:  
\\
  \[(s + \Gamma)(x) \triangleq  
  \begin{cases}
      (\sigma,s+ r) \quad &\text{ if }  \Gamma(x) =(\sigma,r)  \\
      \bot \quad &\text{ if } x \notin dom(\Gamma) 
  \end{cases}
  \]
\end{definition}

Given these operations on typing environments, we can consider an example of a
typing rule for let expressions, which provides sequencing for coeffectful
computations:
\vspace{8pt}
\begin{center}
\AXC{$\Gamma  \vdash e: !_{s} \sigma$}
\AXC{$\Gamma, x:_{\ceff{r \cdot s}}\sigma \vdash f:\tau$}
\LeftLabel{(Let-C)}
\BIC{$r \cdot \Gamma \vdash \mlet{x}{e}{f}:\tau$}
\bottomAlignProof
\DisplayProof
\end{center}
This rule composes two computations: one specifying how many times an
expression is capable of being used, and one that has a usage requirement.  It
states that  the overall let expression uses the free variables in the
binding expression a scale factor of $r$ times when the binding expression with
capability $s$ is substituted for a variable used $r \cdot s$ times in the body
of the let expression. 

\subsection{Categorical Semantics}\label{background:cat}
The language guarantees presented in \Cref{sec:nfuzz:sound} and
\Cref{sec:bean:sound} of this dissertation are obtained using a
denotational-semantic framework.  In denotational semantics,  the meaning of a
type $\tau$ is represented by an object $\denot{\tau}$ in a mathematical
domain, defined inductively over the structure of $\tau$.  Similarly, the
meaning of an expression $e$ is interpreted as an element $\denot{e}$ of
$\denot{\tau}$.  More generally, the meaning of an expression depends on its
context and its type. A judgment ${\Gamma \vdash e : \tau}$ is therefore
typically interpreted as an element of the space $\denot{\Gamma} \rightarrow
\denot{\tau}$.  Contexts are traditionally interpreted as the product of the
underlying types:
$\denot{\Gamma = x_1: \tau_1 \dots, x_n : \tau_n} \triangleq 
  \denot{\tau_1} \times \dots \times \denot{\tau_n}$.

Our domains of interest are categories, both common (such as the category
$\Met$, which we will see in \Cref{sec:nfuzz:semantics}) and novel (such as
the category $\Bel$, which we will see in \Cref{sec:bean:semantics}), and
so we will also refer to our denotational semantics as categorical semantics.
In this setting, types are interpreted as objects in a category, and
typing judgments are interpreted as morphisms between these objects.    

Although this section should contain the definitions and notation 
necessary for the presentations of categorical semantics in \Cref{sec:nfuzz:semantics}
and \Cref{sec:bean:semantics}, more detailed explanations can
be found in the introductory textbooks by
\citet{AwodeyBook,Leinster_2014} and \citet{Abramsky:2010:cat}.

\begin{definition}\label{def:cat}
A \emph{category} $\mathbf{C}$ consists of:
\begin{itemize}
\item 
A collection $Ob_\mathbf{C}$ of objects.
\item 
A collection $Hom_\mathbf{C}(A,B)$ 
of morphisms for every pair of objects $A, B \in Ob_\mathbf{C}$. 
\item
The \emph{composition} morphism $g \circ f$ in $Hom_\mathbf{C}(A,C)$ for every pair 
of morphisms $f \in Hom_\mathbf{C}(A, B)$ and $g \in Hom_\mathbf{C}(B, C)$ such that, 
$h \circ (g \circ f) = (h \circ g) \circ f$ for any maps 
$f \in Hom_\mathbf{C}(A, B)$, $g \in Hom_\mathbf{C}(B, C)$, 
and  $h \in Hom_\mathbf{C}(C, D)$.
\item   
For each object $A$, an \emph{identity} morphism $id_A \in Hom_\mathbf{C}(A, A)$
corresponding to object $A$, which acts as the identity under
composition: $f \circ id = id \circ f = f$.
\end{itemize}
\end{definition}

Languages with graded monadic and graded comonadic types embedded in the syntax 
interpret these types using \emph{graded monads} and \emph{graded comonads}
on a category $\mathbf{C}$. 

\subsubsection{Graded Monads}
Graded monads generalize the definition of a monad,
and provide a mathematical structure for interpreting 
graded monadic types.  
Variations on the  
generalization have been proposed by  
\citet{Atkey:2009:effects}, \citet{Tate:2013:effects},
\citet{Katsumata:2014:effects}, and
\citet{Orchard:2014:graded,Orchard:2020:Unifying}.  
The following elementary presentation is due to 
\citet{Katsumata:2014:effects}:

\begin{definition}\label{def:graded_monad}
Let $\mathcal{E} = (E,\le,1,\odot)$ be a preordered monoid. 
A $\mathcal{E}$-graded monad on a category $C$ consists of 
the following data:
\begin{itemize}
\item An functor $T_q: C \rightarrow C$ for every $q \in \mathcal{E}$
\item A natural transformation $T(q \le q') : T_q \rightarrow T_q'$ 
for every $q \le q'$ satisfying
  \begin{align*}
  T(q \le q')  &= \id_{T_q} \\
  T(q' \le q'') \circ (T(q \le q')) &= T(q \le q'') 
  \end{align*}
\item The natural transformation $\eta: \text{Id}_C \rightarrow T_1$
called the \emph{unit map}.
\item The natural transformation 
$\mu_{q,q'}: T_q \circ T_{q'} \rightarrow T_{q \odot q'}$ called the 
\emph{graded multiplication map}.
\end{itemize}
\end{definition}
These data make the following diagrams commute: 
\begin{center}
\[
\begin{tikzcd}
	{T_q \circ T_{q'}} && {T_{q \odot q'}} \\
	\\
	{T_r \circ T_{r'}} && {T_{r \odot r'} }
	\arrow["{T(q \le r) \circ T(q' \le r')}"', from=1-1, to=3-1]
	\arrow["{\mu_{r,r'}}"',  from=3-1, to=3-3]
	\arrow["{\mu_{q, q'}}",  from=1-1, to=1-3]
	\arrow["{T(q \odot q' \le r \odot r')}", from=1-3, to=3-3]
\end{tikzcd}
\qquad\quad
\begin{tikzcd}
	{T_q} && {T_1 \circ T_q} \\
	\\
	{T_q \circ T_1} && {T_q}
	\arrow["{T_q \circ \eta}"', from=1-1, to=3-1]
	\arrow["{\eta \circ T_q}", from=1-1, to=1-3]
	\arrow["{\mu_{1, q}}", from=1-3, to=3-3]
	\arrow["{\mu_{q, 1}}"', from=3-1, to=3-3]
	\arrow[shift right=1, no head, from=1-1, to=3-3]
	\arrow[shift left=1, no head, from=1-1, to=3-3]
\end{tikzcd}
\]
\end{center}
\begin{center}
\[ 
\begin{tikzcd}
	{T_q \circ T_{q'} \circ T_{q''}} && {T_q\circ T_{q' \odot q''}} \\
	\\
	{T_{q \odot q'} \circ T_{q''}} && {T_{q \odot q' \odot q''} }
	\arrow["{\mu_{q, q' \circ T_{q''}}}"', from=1-1, to=3-1]
	\arrow["{\mu_{q \odot q', q''}}"', from=3-1, to=3-3]
	\arrow["{T_q  \circ \mu_{q', q''}}", from=1-1, to=1-3]
	\arrow["{\mu_{q, q' \odot q''}}", from=1-3, to=3-3]
\end{tikzcd}
\]
\end{center}

{Functors} and {natural transformations} are defined as follows:

\begin{definition}\label{def:functor} 
A \emph{functor} $F: \mathbf{C} \rightarrow \mathbf{D}$ between categories 
$\mathbf{C}$ and $\mathbf{D}$
consists of:
\begin{itemize}
\item An object-map $F : Ob_\mathbf{C} \to Ob_\mathbf{D}$, 
    assigning an object $FA$ of $\mathbf{D}$
    to every object  $A$ of $\mathbf{C}$.
\item A function on morphisms $F : Hom_\mathbf{C}(A, B) \to Hom_\mathbf{D}(A, B)$, 
    assigning a morphism $Ff: FA \rightarrow FB$
    of $Hom_\mathbf{D}(A, B)$ to every morphism $f : A \rightarrow B$ of 
    $Hom_\mathbf{C}(A, B)$, so 
    that composition and identities are preserved:
    \[ F(g \circ f) = Fg \circ Ff, \qquad F (\id_{FA}) = \id_{FA}\]
\end{itemize}
\end{definition} 

\begin{definition}\label{def:nat_trans}
  Let $F,G: \mathbf{C} \rightarrow \mathbf{D}$ be functors.  A \emph{natural
  transformation} $\alpha: F \rightarrow G$ consists of a family of morphisms
  $\alpha_A \in Hom_\mathbf{D}(F(A), G(A))$, one per object $A \in
  Ob_\mathbf{C}$, that commute with the functors $F$ and $G$ applied to any
  morphism: for every $f \in Hom_\mathbf{C}(A, B)$, we have $F(f);\alpha_B =
  \alpha_A;  G(f)$. Diagrammatically,
\begin{center}
\begin{tikzcd}
F(A) \arrow[r, "F(f)"] \arrow[d, "\alpha_A"'] & F(B) \arrow[d, "\alpha_B"] \\
G(A) \arrow[r, "G(f)"']                     & G(B)                    
\end{tikzcd}
\end{center}
\end{definition}

\subsubsection{Graded Comonads}\label{background:comonad}

Dual to graded monads and graded monadic types, 
\emph{graded comonads} generalize the definition 
of a comonad and serve as the mathematical structure for interpreting 
graded comonadic types. 

Here, we provide an abridged and elementary definition for graded comonads. 
Detailed definitions are given by 
\citet{Gaboardi:2016:combining}, \citet{Brunel:2014:coeffects}, and 
\citet{Katsumata:2018:comonads}.

\begin{definition}
Let $\mathcal{S} = (S,\le,0,+,1,\odot)$ be a preordered semiring.
A $\mathcal{S}$-graded \emph{comonad} on a \emph{symmetric monoidal category} 
$C$ with tensor unit \textbf{I} consists of the following data: 
\begin{enumerate}
\item For every $s \in {S}$, a functor $D_s: C \rightarrow C$.
\item For every $s \in {S}$, a natural transformation 
$m_{s,\textbf{I}}: \textbf{I}  \rightarrow  D_s\textbf{I}$, called \emph{0-monoidality}.
\item For every $s \in {S}$, a natural transformation 
  $m_{s,A,B} : D_sA \otimes D_sB \rightarrow D_s(A \otimes B)$,
  called \emph{2-monoidality}.
\item A natural transformation $\varepsilon_A : D_1A \rightarrow A$,
called \emph{dereliction}. 
\item A natural transformation $w_A: D_0 A \rightarrow \textbf{I}$, called 
\emph{weakening}.
\item For every $r,s \in {S}$, a natural transformation 
$c_{r,s,A}: D_{(r+s)} A \rightarrow D_r A \otimes D_s A$, called \emph{contraction}.
\item For every $r,s \in {S}$, a natural transformation 
$\delta_{r,s,A} : D_{r \odot s}A \rightarrow D_r(D_sA)$, called \emph{digging}. 
\end{enumerate}
These six natural transformations satisfy over 20 equational axioms, which we will not 
write here. 
\end{definition}

\chapter{A Language for Forward Error Analysis}\label{chapter:nfuzz}

This chapter presents \Lang{} (\textsc{Num}erical \textsc{Fuzz}), a typed 
higher-order functional programming language with a linear type system that 
can express quantitative bounds on forward error. 

\section{Introduction}\label{sec:nfuzz:introduction}
From a numerical perspective, the \Lang{} approach to rounding error analysis
follows a well-established method: a global, compositional rounding error
analysis is modeled by combining a sensitivity analysis with a local rounding
error analysis. A sensitivity analysis describes how small changes in input
values can affect the overall output, while a local rounding error analysis
focuses on how individual arithmetic operations, when rounded, contribute to
the overall error. By decomposing a large-scale analysis into these smaller,
analyzable parts, a global view of rounding error can be built up from local
analyses.

What distinguishes the \Lang{} approach is how it formalizes this process in a
type system. The key novelty lies in capturing the rounding error analysis at
the level of types, enabling the static computation of rounding error bounds
for floating-point programs. 
In this approach, \emph{graded comonadic types} (see
\Cref{background:coeffects}) describe function sensitivity and \emph{graded
monadic types} describe rounding error (see \Cref{background:effects}). 
The typing rules then provide a formal, logical method for reasoning about the
interaction between function sensitivity and local rounding error, and
for analyzing the rounding error of increasingly complex programs.  Thus, the
overall rounding error of a program can be derived from the known rounding
error of numerical primitives. 
This approach is attractive because valid derivations correspond to formal
proofs that a given program satisfies the error bound assigned to it by the
type system. This guarantee is rigorously established by connecting a metric
denotational semantics for \Lang{}, which specifies the mathematical meaning of
\Lang{} programs as non-expansive maps between metric spaces, to both an ideal
and floating-point operational semantics, which specify the computational
behavior of \Lang{} programs.

While sensitivity type systems have been previously proposed in the
differential privacy literature, the key innovation of \Lang{} is extending
this concept to explicitly account for the propagation of rounding errors in
numerical computations. By encoding the interplay between sensitivity and
rounding errors directly into the type system, \Lang{} ensures that the
rounding error of floating-point programs can be reasoned about
compositionally. From the perspective of language design, this
approach offers a robust framework for developing software that requires
precise control over numerical accuracy. 

\subsubsection{Sensitivity Type Systems} The core of \Lang{}'s type system is
based on \emph{Fuzz} \citep{Fuzz}, a family of languages for differential privacy that
use linear type systems to track function sensitivity. The fundamental idea in
linear type systems is that functions must use their arguments exactly once.
This contrasts with conventional type systems, where functions are unrestricted
and can use their arguments an arbitrary number of times. To distinguish
conventional functions from linear ones, the types of linear functions  are
written as $\sigma \multimap \tau$.  Unrestricted functions are encoded in
linear type systems as $! \sigma \multimap \tau$, where the $!$ constructor is
used to indicate that the argument to the function does not need to adhere to a
linear usage constraint; \Cref{background:coeffects} provides more details on
the $!$ constructor.  

Sensitivity type systems like Fuzz build on the concept of linearity to
represent \emph{c-sensitive} functions.  Intuitively, a function is
$c$-sensitive if it can amplify distances between inputs by a factor of at most
$c$.  Formally, \emph{c-sensitivity} is defined as follows:

\begin{definition}[C-Sensitivity] A function $f : X \rightarrow Y$ between
  metric spaces is said to be $c$-\emph{sensitive} (or C-Lipschitz)
  \emph{iff} $d_Y(f(x), f(y)) \le c \cdot d_X(x, y)$ for all $x,y \in X$.
  \label{def:csensitive}
\end{definition}
In sensitivity type systems, the function type $\sigma \multimap \tau$
describes functions that are 1-sensitive; these functions are also referred
\emph{non-expansive} functions because they do not amplify distances between
inputs. Interpreting the function type in this way 
requires that both the input and
output types of the function have an associated metric. This idea is quite
natural when types are viewed as metric spaces, such as  the real numbers $\R$
with the standard metric $d_{\R}(x, y) = |x - y|$. In this setting, functions
like $g(x) = x$  and $ h(x)=sin(x)$ are 1-sensitive and can be typed with the
signature $\R \multimap \R$. 
To express functions with varying degrees of sensitivity, the $!$ constructor
is refined into a family of \emph{graded comonadic type constructors},
$\bang{s}$, where the grade $s$  indicates a \emph{metric scaling} and 
is an element of the preordered semiring
(\Cref{def:semiring}) $\mathcal{R}$ of
extended non-negative real numbers $\mathbb{R}_{\ge 0} \cup \{\infty\}$.  
For example, the
type $\bang{r} \sigma$ scales the metric of the type $\sigma$ by a factor of
$r$. The type $ \bang{r} \sigma \multimap \tau$  then describes the type of a
function that is \emph{r-sensitive} with respect to its argument. 

To adapt these core ideas from sensitivity type systems to reason about
relative rounding error in \Lang{}, we rely on the \emph{relative precision}
(\Cref{def:rp}), a pseudometric on the real numbers proposed by
\citet{Olver:1978:rp}.  If we denote the relative precision of $\tilde{x} \in
\R$ as an approximation to $x \in \R$ as $RP(\tilde{x},x)$ then the function
$f(x) = x^2$ is $2$-sensitive under the RP metric:
\begin{align}
	RP(f(x),f(y)) &= \left| \ln\left(\frac{x^2}{y^2}\right) \right| \\
	&= 2 \cdot RP(x,y)
\end{align}
\noindent Spelling this out, if we have two inputs $v$ and $v \cdot e^\delta$,
which are at distance $\delta$ under the RP metric, then applying the function
$f(x) = x^2$ results in outputs $v^2$ and $(v \cdot e^\delta)^2 = v^2 \cdot e^{2
\cdot \delta}$. These outputs are at a distance of at most $2 \cdot \delta$
under the RP metric.

The function $f(x) = x^2$ can be implemented in \Lang{} as follows:
\begin{align*}
	&\linec{f} : ~\bang{2} \num \multimap \num \\
	&\linec{f} \triangleq \lambda x. ~\nmul{(x,x)}
\end{align*}
For now, we can think of the numeric type $\num$ as the real numbers $\R$
equipped with the RP metric. The type $\bang{2} \num \multimap \num$ then
indicates that the function is 2-sensitive under this metric.

\subsubsection{Rounding Error in \Lang{}}
So far, we have not considered rounding error: the function $\linec{f}
\triangleq \lambda x. ~\nmul{(x,x)}$ simply squares its argument without
performing any rounding. To better understand how rounding error is modeled in
\Lang{}, and how function sensitivity interacts with rounding error, consider
the function ${pow2} : \R\rightarrow \R$, which squares a real number and then
rounds the result using an arbitrary rounding function $\rho$: 
\[ 
  pow2(x) = \rho(x^2)
\]
Using the alternative model for floating-point arithmetic (\Cref{def:olver_model}),
the error analysis is simple: 
\begin{equation}\label{eq:ex_pow2}
  pow2(x) =  (x \cdot x)e^\delta 
\end{equation}
where $|\delta|$ is the relative precision and $|\delta| \le u/(1-u)$.  Our
insight is that  type system can be used to perform this analysis, by modeling
rounding as an \emph{error producing} effectful operation. To see how this
works, the function \linec{pow2} can be defined in \Lang{} as follows:
\begin{align*}
  &\linec{pow2} : ~\bang{2} \num \multimap \monad{\varepsilon} \num \\
  &\linec{pow2} \triangleq \lambda x.  ~\rnd (\nmul{(x,x)}) 
\end{align*}
Here, $\rnd$ is a primitive operation that produces values of graded monadic
type $\monad{\varepsilon} \num$, where $\varepsilon$ is a constant that
models the error due to a single rounding, measured as relative precision.  
Therefore, $\varepsilon \le u/(1-u)$. 

More generally, the type $\monad{r} \num$ describes computations that produce
numeric results and might also perform an arbitrary number of roundings. The
grade $r$ expresses an upper bound on the total rounding error produced by the
computation, measured as relative precision. Thus, the type for \linec{pow2}
captures the desired error bound from \Cref{eq:ex_pow2}: when applied to any
input, \linec{pow2} produces an output that approximates its ideal, infinitely
precise counterpart to within RP distance at most $u/(1-u)$.

To formalize this guarantee, our denotational semantics in
\Cref{sec:nfuzz:semantics} interprets values of graded monadic type
$\monad{q}\tau$ as pairs of values whose components are separated by a distance
no greater than $q$. Next, our operational semantics in
\Cref{sec:nfuzz:opsemantics} specify two ways to execute programs of graded
monadic type: under an ideal operational semantics, where rounding
operations act as the identity function, and under a floating-point operational
semantics, where rounding operations round their arguments following some
prescribed rounding strategy.  Then, our main soundness theorem in
\Cref{sec:nfuzz:sound} connects our denotational and operational
semantics, so that the first component of the interpretation of a value of type
$\monad{q}\num$ is the result under the ideal operational semantics and the
second component is the result under the floating-point operational semantics.
This theorem guarantees that programs of monadic type $\monad{q}\num$ represent
computations that produce values with at most $q$ rounding error.

\subsubsection{Composing Error Bounds}
The type of $\linec{pow2} : \bang{2} \num \lin \monad{u} \num$ actually
guarantees a bit more than just a bound on the roundoff: it also guarantees
that the function is $2$-sensitive under an \emph{ideal} semantics. This additional 
piece of information is crucial for analyzing how functions that produce rounding 
error compose.

To see why, consider the function $h(x) = x^4$.  We can implement this function
using $\linec{pow2}$:
\begin{align*}
	&\linec{pow4}: \num \multimap  \monad{3\varepsilon} \num \\
	&\linec{pow4}\triangleq \lambda x.~ \letm{y}{\linec{pow2}~ x}{
        \linec{pow2}~y}
\end{align*}

The $\letm{-}{-}{-}$ construct sequentially composes two monadic, effectful
computations. To keep the example readable, some \Lang{} syntax is elided.
Thus, $\linec{pow4}$ first squares its argument, rounds the result, then
squares again, rounding a second time.

The bound $3\varepsilon$ on the total roundoff error deserves some explanation. 
In the typing rules for \Lang{} given in \Cref{sec:nfuzz:language}, we will 
see that this grade on the monadic type is computed as the sum 
$2\varepsilon + \varepsilon$, where the first term $2\varepsilon$ is the error 
$\varepsilon$ from the first rounding operation \emph{amplified by $2$ since this 
error is fed into the second call of $\linec{pow2}$, a $2$-sensitive function}, and the 
second term $\varepsilon$ is the roundoff error from the second rounding operation. 

Let $\mapsto_{id}$ denote the evaluation of \linec{pow4} under an ideal
execution model, where $\rnd$ in the body of \linec{pow2} behaves like the
identity function, and let $\mapsto_{fp}$ denote the evaluation of
\linec{pow4} under a floating-point execution model, where $\rnd$ in the body
of \linec{pow2} behaves like a specified rounding function.  We can then
visualize \linec{pow4} applied to a numeric value $a$ as the following
composition: 

\begin{figure}[H] 
  \centering
\begin{tikzpicture}[scale=1.5]

  \node[draw, circle, fill=SeaGreen, fill opacity=1,
        draw opacity =0.35, text opacity = 1,
        draw=SeaGreen,minimum size=.75cm] (A) at (0,0) { $a$};

  \node[draw, circle, fill=SeaGreen, fill opacity=1,
        draw opacity =0.35, text opacity = 1,
        draw=SeaGreen,minimum size=.75cm] (Y1) at (2, 0.5) { $b$};

  \node[draw, circle, fill=Melon, fill opacity=1,
        draw opacity =0.35, text opacity = 1,
        draw=Melon,minimum size=.75cm] (Y2) at (2, -0.5) { $b'$};

  \node[draw, circle, fill=Melon, fill opacity=1,
        draw opacity =0.35, text opacity = 1,
        draw=Melon,minimum size=.75cm] (Z3) at (4, -1) {$d'$};

  \node[draw, circle,  fill=SeaGreen, fill opacity=1,
        draw opacity =0.35, text opacity = 1,
         draw=SeaGreen,minimum size=.75cm] (Z2) at (4, 1) { $c$};

  \node[draw, circle,  fill=SeaGreen, fill opacity=1,
        draw opacity =0.35, text opacity = 1,
        draw=SeaGreen,minimum size=.75cm] (Z1) at (4, 0) {$d$ };

  \draw (0.3,0.1) edge[thick,black, ->,shorten >=0.25cm] node[above,sloped] 
	{$\small \linec{pow2}$} node[below, xshift=25pt,sloped] {$ id$}  (Y1);
  \draw (0.3,-0.1) edge[thick,black, ->,shorten >=0.25cm] node[above,sloped] 
	{$\small \linec{pow2}$} node[below, xshift=25pt,sloped] {$fp$}  (Y2);  
  \draw (2.3,-0.4) edge[thick,black, ->,shorten >=0.25cm] node[above,sloped] 
	{$\small \linec{pow2}$} node[below, xshift=25pt,sloped] {$id$}  (Z1);    
  \draw (2.3,0.6) edge[thick,black, ->,shorten >=0.25cm] node[above,sloped] 
	{$\small \linec{pow2}$} node[below, xshift=25pt,sloped] {$id$}  (Z2);  
  \draw (2.3,-0.6) edge[thick,black, ->,shorten >=0.25cm] node[above,sloped] 
	{$\small \linec{pow2}$} node[below, xshift=25pt,sloped] {$fp$}  (Z3);

\end{tikzpicture}
\end{figure}

From left-to-right, the ideal and approximate results of $\linec{pow2}(a)$ are
$b$ and ${b'}$, respectively; error soundness for \Lang{}, which we will see in
\Cref{sec:nfuzz:sound} guarantees that the grade $\varepsilon$ on the
monadic return type of $\linec{pow2}$ is an upper bound on the distance between
these values. The ideal result of $\linec{pow4}(a)$ is $c$, while the
approximate result of $\linec{pow4}(a)$ is ${d'}$.  (The value $d$ arises from
mixing ideal and approximate computations, and does not fully correspond to
either the ideal or approximate semantics.) The $2$-sensitivity guarantee of
$\linec{pow2}$ ensures that the distance between $c$ and $d$ is at most twice
the distance between $b$ and ${b'}$---leading to the $2\varepsilon$ term in the
error---while the distance between $d$ and ${d'}$ is at most $\varepsilon$.
Applying the triangle inequality yields an overall error bound of at most
$2\varepsilon + \varepsilon = 3\varepsilon$.

From a numerical perspective, the meaning of the two terms in the total error
$2\varepsilon + \varepsilon$ of $\linec{pow4}$ is clear: the first reflects how
the function $\linec{pow2}$ magnifies errors in the inputs---the sensitivity of
the function, and the second reflects the \emph{local} rounding error of
$\linec{pow2}$---how much error due to rounding is produced locally in the
body of a function.

\section{Type System}\label{sec:nfuzz:language}
This section describes the syntax of \Lang{}, which was briefly introduced in 
the previous section. \Lang{} is based on \emph{Fuzz} \citep{Fuzz}, a linear 
call-by-value $\lambda$-calculus, extended with explicit constructs for monadic 
types to model rounding. For simplicity we do not treat recursive types, and \Lang{} 
does not have general recursion. 

\subsection{Types}\label{sec:nfuzz:types}
\begin{figure}[tbp]
  \begin{alignat*}{3}
         &\sigma, \tau &::=~ &\unit
         \mid \num
         \mid \sigma ~\&~ \tau 
         \mid \sigma \otimes \tau
         \mid \sigma + \tau 
         \mid \sigma \multimap \tau
         \mid {\bang{s} \sigma}
    	   \mid {\monad{q} \tau}
         \tag*{(Types)} \\
         &v, w \ &::=~ &x
         \mid ()
         \mid k \in R
         \mid \langle v,w \rangle 
         \mid  (v, w)
         \mid \inl \ v
         \mid \inr \ v
         \tag*{(Values)}
         \\
         & & & \mid \lambda x.~e 
         \mid \boxx{v}
         \mid \rnd ~ v
         \mid \ret ~ v 
         \mid \letm{x}{\rnd~ v}{f} 
         \\
         &e, f &::=~ & v
	       \mid v~w
         \mid {\pi}_i~ v
         \mid \lett{(x,y)}{v}{e}
         \mid \case{v}{x.e}{y.f}
         \tag*{(Terms)}
         \\
         & & & 
        \mid \letc{x}{v}{e}
         \mid \letm{x}{v}{f} 
	  \mid \lett{x}{e}{f}
         \mid \mathbf{op}(v) \quad \mathbf{op} \in \mathcal{O}
  \end{alignat*}
  \caption{\Lang{} types, values, and terms; $s,q \in \NNR \cup \{\infty\}$.}
  \label{fig:syntax}
\end{figure}

The syntax of \Lang{} types is given in \Cref{fig:syntax}.  The linear function
type $\sigma \multimap \tau$, the graded comonadic (metric scaling) type
$\bang{s}\sigma$, and the graded monadic type $\monad{q}\tau$ have already been
introduced in \Cref{sec:nfuzz:introduction}. Additional background on these types
is given in \Cref{background:coeffects} and \Cref{background:effects}.

The base types in the language are a $\unit$ type and a base numeric type
$\num$. 
The $\unit$ type, combined with the binary sum type constructor + is used to
encode Boolean types, i.e.,  $\mathbb{B} \triangleq \unit + \unit$. The sum
type constructor itself represents a choice between two values, and is used to
encode conditional statements. 
%
Like \emph{Fuzz} and other linear type systems, \Lang{} supports two product types: a
\emph{multiplicative} product $\tensor$ and an \emph{additive} product \&. 

\subsection{Values and Terms} 
Aside from the monadic and comonadic constructs, most values in \Lang{}
correspond to those found in a linear call-by-value typed $\lambda$-calculus
without recursive types.  These include variables, a unit value (), additive
$\langle$-,-$\rangle$ products, multiplicative (-,-) products, sum constructors
$\inl$ and $\inr$, and lambda abstractions.  

Languages with monadic types embedded in their syntax typically separate values
and terms into two disjoint classes, and use \linec{\textbf{return}} and
\linec{\textbf{let}} constructs to sequence monadic computations
\citep{DalLago:2022:relational,torczon:2023:effects-coeffects,Levy2003}.  In \Lang{}, although
values and terms are not disjoint, all computations are explicitly sequenced
using let expressions, ${\lett{x}{v}{e}}$, and term constructors and
eliminators are restricted to values. To sequence monadic and comonadic types,
\Lang{} provides the eliminators $\letm{x}{v}{e}$ and $\letc{x}{v}{e}$,
respectively.  The constructs $\rnd~ v$ and $\ret~ v$ lift values of plain type
to monadic type, while the comonadic construct $\boxx{v}$ indicates scaling the
metric of the underlying type by a constant.

\Lang{} is parameterized by a set $R$ of numeric constants with type $\num$,
a fixed constant $\varepsilon$ representing the rounding error produced by
the evaluation of a rounding function,
and a signature $\Sigma$ defining the primitive operations in the language: a
set of operation symbols $\mathbf{op} \in \mathcal{O}$, each with a type
$\sigma \lin \tau$, and a function $op : CV(\sigma) \rightarrow CV(\tau)$
mapping closed values of type $\sigma$ to closed values of type $\tau$. We
write $\{ \mathbf{op} : \sigma \multimap \tau\}$ in place of the tuple $(\sigma
\multimap \tau , op : CV(\sigma) \rightarrow CV(\tau), \mathbf{op})$.   For
now, we make no assumptions on the functions $op$; we will see in
\Cref{sec:nfuzz:semantics} that, for soundness, we need to ensure that each
function is non-expansive with respect to its type signature.  In
\Cref{sec:nfuzz:examples}, we instantiate $R$, interpret $\num$ as a
concrete set of numbers with a particular metric, and provide a concrete
signature $\Sigma$. 

\subsection{Typing Relation}
\begin{figure}
\begin{center}
\AXC{}
\LeftLabel{(Unit)}
\UIC{$\Gamma \vdash (): \unit$}
\bottomAlignProof
\DisplayProof
\hskip 2em
\AXC{$k \in R$}
\LeftLabel{(Const)}
\UIC{$\Gamma \vdash k : \num$}
\bottomAlignProof
\DisplayProof
\hskip 1em
\AXC{$x \notin dom(\Gamma)$}
\AXC{$s \ge 1$}
\LeftLabel{(Var)}
\BIC{$\Gamma, x:_s \sigma, \Delta \vdash x : \sigma$}
\bottomAlignProof
\DisplayProof
\vskip 2em

\AXC{$\Gamma, x:_1 \sigma \vdash e : \tau$}
\LeftLabel{($\multimap$ I)}
\UnaryInfC{$\Gamma \vdash \lambda x. e : \sigma \multimap \tau $}
\bottomAlignProof
\DisplayProof
\hskip 2em
\AXC{$\Gamma \vdash v : \sigma \multimap \tau$}
\AXC{$\Theta \vdash w : \sigma $}
\LeftLabel{($\multimap$ E)}
\BinaryInfC{$\Gamma + \Theta \vdash vw : \tau $}
\bottomAlignProof
\DisplayProof
\vskip 2em

\AXC{$\Gamma \vdash v : \sigma$}
\AXC{$\Gamma \vdash w : \tau$}
\LeftLabel{($\&$ I)}
\BinaryInfC{$\Gamma \vdash \langle v, w \rangle: \sigma ~\&~ \tau $}
\bottomAlignProof
\DisplayProof
\hskip 2em
\AXC{$\Gamma \vdash v : \tau_1 ~\& ~ \tau_2$}
\LeftLabel{($\&$ E)}
\UIC{$\Gamma \vdash {\pi}_i \ v : \tau_i$}
\bottomAlignProof
\DisplayProof
\vskip 2em

\AXC{$\Gamma \vdash v : \sigma $}
\AXC{$\Theta \vdash w : \tau$}
\LeftLabel{($\otimes$ I)}
\BIC{$\Gamma + \Theta \vdash (v, w) : \sigma~ \otimes ~\tau$}
\bottomAlignProof
\DisplayProof
\hskip 2em
\AXC{$\Gamma \vdash v : \sigma ~\otimes ~\tau$ }
\AXC{$\Theta,x:_s \sigma,y:_s\tau \vdash e: \rho $}
\LeftLabel{($\otimes$ E)}
\BIC{$s \cdot \Gamma + \Theta \vdash \tlet~ (x,y) \ = \ v \ \tin~ e : \rho $}
\bottomAlignProof
\DisplayProof
\vskip 2em

\AXC{$\Gamma \vdash v : \sigma$ }
\LeftLabel{($+$ $\text{I}_L$)}
\UIC{$\Gamma \vdash \inl \ v : \sigma + \tau$}
\bottomAlignProof
\DisplayProof
\hskip 2em
\AXC{$\Gamma \vdash v : \tau$ }
\LeftLabel{($+$ $\text{I}_R$)}
\UIC{$\Gamma \vdash \inr \ v : \sigma + \tau$}
\bottomAlignProof
\DisplayProof
\vskip 2em

\AXC{$\Gamma \vdash v : \sigma+\tau$}
\AXC{$\Theta, x:_s \sigma \vdash e : \rho$ \qquad
$\Theta, y:_s \tau \vdash f: \rho$}
\AXC{$s > 0$}
\LeftLabel{($+$ E)}
\TIC{$s \cdot \Gamma + \Theta \vdash \mathbf{case} \ v \ 
	\mathbf{of} \ (\inl ~x.e \ | \ \inr~ y.f) : \rho$}
\bottomAlignProof
\DisplayProof
\vskip 2em

\AXC{$\Gamma \vdash v : \sigma$ }
\LeftLabel{($!$ I)}
\UIC{$s \cdot \Gamma \vdash \boxx{v} : {!_s \sigma}$}
\bottomAlignProof
\DisplayProof
\hskip 2em
\AXC{$\Gamma \vdash v : {!_s \sigma}$}
\AXC{$\Theta, x:_{t\cdot s} \sigma \vdash e : \tau$}
\LeftLabel{($!$ E)}
\BIC{$t \cdot \Gamma + \Theta \vdash \letc{x}{v}{e} : \tau$}
\bottomAlignProof
\DisplayProof
\vskip 2em

\AXC{$\Gamma \vdash e :  \tau$}
\AXC{$\Theta, x:_{s} \tau \vdash f : \sigma$}
\AXC{$s > 0$}
\LeftLabel{(Let)}
\TIC{$s \cdot \Gamma + \Theta \vdash \tlet~ x = e \ \tin~ f : \sigma$}
\bottomAlignProof
\DisplayProof
\vskip 2em

\AXC{$\Gamma \vdash v : \tau$}
\LeftLabel{(Ret)}
\UIC{$\Gamma \vdash \ret~ v : M_0 \tau$}
\bottomAlignProof
\DisplayProof
\hskip 2em
\AXC{$\Gamma \vdash v : \num$}
\LeftLabel{(Rnd)}
	\UIC{$\Gamma \vdash \rnd \ v : M_{\varepsilon} \num$}
\bottomAlignProof
\DisplayProof
\hskip 2em
\AXC{$\Gamma \vdash e :  M_q \tau$}
\AXC{$r \ge q$}
\LeftLabel{(MSub)}
\BIC{$\Gamma \vdash e :  M_{r} \tau$}
\bottomAlignProof
\DisplayProof
\vskip 2em

\AXC{$\Gamma \vdash v : M_r \sigma$}
\AXC{$\Theta, x:_{s} \sigma \vdash f : M_{q} \tau$}
\LeftLabel{(MLet)}
\BIC{$s \cdot \Gamma + \Theta \vdash \letm{x}{v}{f} : M_{s \cdot r+q} \tau$}
\bottomAlignProof
\DisplayProof
\hskip 2em
\AXC{$\Gamma \vdash v : \sigma$}
\AXC{$\{ \mathbf{op} :\sigma \lin \num \} \in {\Sigma}$}
\LeftLabel{(Op)}
\BIC{$\Gamma \vdash \mathbf{op}(v) : \num$}
\bottomAlignProof
\DisplayProof

\end{center}
    \caption{Typing rules for \Lang, with $s,t,q,r \in \NNR \cup \{\infty\}$. 
                \Lang{} is parametric in $R$ (Const), ${\Sigma}$ (Op), and $\rnderr$ (Rnd). }
    \label{fig:typing_rules}
\end{figure}

The typing relation of \Lang{} is presented in \Cref{fig:typing_rules}.  Before
stepping through each rule defining the relation, we provide some background on
typing judgments and typing environments.

Typing environments in \Lang{} are defined as follows:
\[
	\Gamma, \Delta ::= \emptyset \mid \Gamma, x:_r \sigma
\] 
where grade annotations $r$ are elements of the preordered semiring
(\Cref{def:semiring}) of extended positive real numbers, i.e.  $r \in
([0,\infty],\le,0,+,1,\cdot)$.  We extend the definition of multiplication as
follows:
\\
\
\begin{align}
	r \cdot \infty = \infty \cdot r =
	\begin{cases}
		0 \quad &\text{if } r=0\\
		\infty \quad &\text{otherwise}
	\end{cases}
\end{align}
\

A well-typed expression \[x:_r \sigma \vdash e : \tau\] represents a
computation that is $r$-sensitive to perturbations in the variable $x$. Zero
sensitivity (r=0) indicates that $e$ is independent of $x$, while infinite
sensitivity (r=$\infty$) means that any perturbation in $x$ can result in
arbitrarily large changes in $e$. 

Following  \Cref{background:coeffects}, a typing environment $\Gamma$ can also
be viewed as a partial map from variables to types and sensitivities, where
$(\sigma, r) = \Gamma(x)$ when ${x:_r\sigma \in \Gamma}$. The \emph{sum}
$\Gamma + \Delta$ of two typing environments (\Cref{def:env_sum}) and the
\emph{scaling} $r\cdot \Gamma$ of a type environment by a grade
(\Cref{def:env_scale}) are defined as in \Cref{background:coeffects}. 

With the structure of typing environments established, we now describe the
rules in \Cref{fig:typing_rules}. 
The ({Const}) rule allows any numeric constants with type $\num$ to be used
under any environment; for now, these constants can be though of as real
numbers, but their exact meaning will be fixed in example instantiations of the
language given in \Cref{sec:nfuzz:examples}. 
The ({Var}) rule allows a variable from the environment to be used so long as
its sensitivity is at least 1. This rule also embeds a form of weakening into
the system: the treatment of typing environments in the rule allowing variables
to be declared but not used, and also allows a variables to be declared with a
higher sensitivity than is actually required. Intuitively, this captures the
fact that $r$-sensitive functions are also $s$-sensitive for $r \le s$.

The introduction and elimination rules for multiplicative products $\tensor$
and additive product $\&$ are identical to those used in \emph{Fuzz}.  To understand
the difference between these two products, consider the treatment of typing
environments in their respective introduction rules: 
\vspace{8pt}
\begin{center}
\AXC{$\Gamma \vdash v : \sigma $}
\AXC{$\Theta \vdash w : \tau$}
\LeftLabel{($\otimes$ I)}
\BIC{$\Gamma + \Theta \vdash (v, w) : \sigma~ \otimes ~\tau$}
\bottomAlignProof
\DisplayProof
\hskip 2em
\AXC{$\Gamma \vdash v : \sigma$}
\AXC{$\Gamma \vdash w : \tau$}
\LeftLabel{($\&$ I)}
\BinaryInfC{$\Gamma \vdash \langle v, w \rangle: \sigma ~\&~ \tau $}
\bottomAlignProof
\DisplayProof
\end{center}
In the multiplicative product ($\tensor$ I), the components of the pair have
free variables in summable environments, and the (variablewise) sensitivity of
the resulting pair is determined by the sum of the environments.  
In the additive product ($\&$ I), the components of the pair share a typing
environment, and the (variablewise) sensitivity of the pair is determined by
this shared environment. 
Although the descriptors \emph{multiplicative} and \emph{additive} are
inherited from linear logic \citep{wood:2022:substructural}, they conflict with
the context operations in sensitivity type systems, where  multiplicative rules
\emph{add} their contexts, and additive rules \emph{share} their contexts.

The typing rules for sequencing (Let) and case analysis ($+$ E) both require
that the sensitivity $s$ scaling the environment in the conclusion of the rule
be strictly positive. While this restriction in the (Let) rule for let
expressions is really only required for soundness in the presence of
non-termination \citep{Gavazzo:2018:applicative} and can be omitted for a terminating
calculus like \Lang{}, it is essential for soundness in the ($+$ E) rule, as
described by \citet{Amorim:2017:metric}. 

The remaining interesting rules are those for metric scaling and monadic types.
In the ($!$ I) rule, the box constructor $\boxx{-}$ indicates scalar
multiplication of an environment.  The ($!$ E) rule is similar to ($\tensor$
{E}), but includes the scaling on the variable in the scope of the elimination.

The rules (MSub), ({Ret}), (Rnd), and (MLet) are the core rules for rounding
error analysis in \Lang. Intuitively, the monadic type $\monad{\rnderr}\num$
describes computations that produce numeric results while performing rounding,
and incur at most $\rnderr$ in rounding error.  The subsumption rule states
that rounding error bounds can be loosened.  The ({Ret}) rule states that we
can lift terms of plain type to monadic type without introducing rounding
error. The (Rnd) rule types the primitive rounding operation, which introduces
roundoff errors.  Here, $\rnderr$ is a fixed numeric constant describing the
roundoff error incurred by a rounding operation. The precise value of this
constant depends on the precision of the format and the specified rounding
function; we leave $\rnderr$ unspecified for now. In
\Cref{sec:nfuzz:examples}, we will illustrate how to instantiate our
language to different settings.

The monadic elimination rule (MLet) allows sequencing two rounded computations
together. This rule formalizes the interaction between sensitivities and
rounding, as illustrated by example in \Cref{sec:nfuzz:introduction}: the
rounding error of the overall let expression $\letm{x}{v}{f}$ is upper bounded by 
the sum of the roundoff error of the value $v$ scaled by the sensitivity of $f$ to 
$x$, and the roundoff error of $f$.

Before introducing our operational semantics, we note that the static aspects
of our system introduced so far satisfy the properties of weakening and
substitution, and define the notion of a subenvironment. We write $e[v/x]$ for
the capture-avoiding substitution of the value $v$ for all free occurrences of
the variable $x$ in the expression $e$.  We will use the notation
$e[\vec{v}/dom(\Gamma)]$ to indicate the (simultaneous) substitution of all
values $v_i$ corresponding to variables $x_i$ in the domain of the typing
environment $\Gamma$ into the expression $e$.

\begin{definition}[Subenvironment] \label{def:subenv}
	The environment $\Delta$ is a \emph{subenvironment} of $\Gamma$, written
	$\Delta \sqsubseteq \Gamma$, if whenever $\Gamma(x) = (\sigma,s)$ for some
	sensitivity $s$ and type $\sigma$,  then there exists a sensitivity $s'$ such
	that $s' \ge s$ and $\Delta(x) = (\sigma,s')$.
\end{definition}

\begin{lemma}[Weakening] \label{thm:weakening} 
	Let $\Gamma \vdash e : \tau$ be a well-typed term. Then for any typing
	environment $\Delta \sqsubseteq \Gamma$, there is a derivation of
	$\Delta \vdash e : \tau$.
\end{lemma}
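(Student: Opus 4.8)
The plan is to argue by structural induction on the derivation of $\Gamma \vdash e : \tau$, carrying the induction hypothesis ``for every $\Delta \sqsubseteq \Gamma$ there is a derivation $\Delta \vdash e : \tau$'' through every rule of \Cref{fig:typing_rules}. Before starting the induction I would establish a handful of closure properties of the subenvironment order of \Cref{def:subenv} under the context operations of \Cref{def:env_sum} and \Cref{def:env_scale}: it is reflexive and transitive; it is preserved by extending both sides with the same bound variable, so $\Delta \sqsubseteq \Gamma$ gives $\Delta, x:_s\sigma \sqsubseteq \Gamma, x:_s\sigma$ for $x$ fresh; it is preserved by summing componentwise, so $\Delta_1 \sqsubseteq \Gamma_1$ and $\Delta_2 \sqsubseteq \Gamma_2$ give $\Delta_1 + \Delta_2 \sqsubseteq \Gamma_1 + \Gamma_2$; and it is preserved by scaling, so $\Delta \sqsubseteq \Gamma$ gives $t\cdot\Delta \sqsubseteq t\cdot\Gamma$ for every $t \in \NNR\cup\{\infty\}$ (using the extended arithmetic $r\cdot\infty$). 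I would also record that inserting an unused variable at an arbitrary grade is itself an instance of the relation ($\Gamma, x:_s\sigma \sqsubseteq \Gamma$), so bound variables may always be $\alpha$-renamed apart from $\dom(\Delta)$ without loss of generality.

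The technical heart is a \emph{decomposition} lemma converse to these closure properties: if $\Delta \sqsubseteq s\cdot\Gamma + \Theta$ then there exist $\Gamma' \sqsubseteq \Gamma$ and $\Theta' \sqsubseteq \Theta$ with $\Delta = s\cdot\Gamma' + \Theta'$, and similarly for any finite scaled sum. I would prove this variable by variable: for a variable shared by $\Gamma$ and $\Theta$ one leaves the $\Gamma$-component fixed and raises the $\Theta$-component to absorb the slack in $\Delta$ (the subcase where $s$ times the $\Gamma$-grade is $\infty$ forces the slack to be $\infty$ already and is trivial); for a variable of $\Gamma$ alone with $s$ a positive real one divides, giving the new grade $\Delta(x)/s$; and in the degenerate cases --- $s = 0$ or $s = \infty$, where division is unavailable and $s\cdot\Gamma'$ cannot reproduce a finite positive grade --- one instead puts the variable into $\Theta'$ at its $\Delta$-grade, which is legitimate precisely because \Cref{def:subenv} constrains only the variables of $\dom(\Theta)$.

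Granting these, the induction is mostly mechanical. The leaves (Unit), (Const), (Var) are immediate: the first two hold in any well-formed context, and in (Var) the needed weakening --- raising the grade of the used variable above $1$ (which stays $\ge 1$) and threading the extra variables of $\Delta$ --- is exactly the shape of the rule. The rules whose conclusion reuses a premise context verbatim --- ($\&$ I), ($\&$ E), ($+$ $\text{I}_L$), ($+$ $\text{I}_R$), (Ret), (Rnd), (Op), (MSub), and ($\multimap$ I) once $\Delta$ is extended by the bound variable --- follow by applying the induction hypothesis to each premise along the appropriate (extended) subenvironment and re-applying the rule. For the rules building their conclusion as a scaled sum of premise contexts --- ($\multimap$ E) and ($\otimes$ I) (with scalar $1$), ($\otimes$ E), ($+$ E), ($!$ E), (Let), (MLet), and ($!$ I) (with $\Theta$ empty) --- I would invoke the decomposition lemma to split $\Delta$ into the matching scaled sum of subenvironments of the premise contexts, extending by bound variables where the rule does (e.g.\ $x:_{t\cdot s}\sigma$ in ($!$ E), $x:_s\sigma$ in (MLet) and ($+$ E)), apply the induction hypothesis to each premise, and reassemble with the original rule; because the decomposition reproduces $\Delta$ on the nose and reuses the rule's scalars, the reassembled judgment has context $\Delta$, satisfies the same side conditions $s > 0$, and carries the same monadic grade (e.g.\ $M_{s\cdot r+q}$ in (MLet)).

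I expect the decomposition lemma, and within it the degenerate scalars, to be the main obstacle: reconciling a subenvironment of $s\cdot\Gamma$ with a \emph{scaled} subenvironment of $\Gamma$ is delicate when $s \in \{0,\infty\}$, and the ($!$ I) rule is the one place where there is no surrounding additive context to absorb the difference, so that case leans most heavily on the exact reading of \Cref{def:subenv} and of scaling by $0$ and $\infty$. Everything else --- freshness of bound variables, the $r\cdot\infty$ arithmetic, and verifying that the side conditions $s \ge 1$ and $s > 0$ remain satisfied --- is routine bookkeeping.
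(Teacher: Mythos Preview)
Your plan is the same structural induction the paper gives in a single line, and your decomposition lemma is exactly the engine that makes the multiplicative rules go through. But the case you flag as merely ``delicate'' --- ($!$ I) with $s \in \{0,\infty\}$ --- is an actual obstruction, not just a bookkeeping nuisance, and appealing to ``the exact reading of \Cref{def:subenv} and of scaling'' does not save it. The conclusion context of ($!$ I) is exactly $s\cdot\Gamma$, with no additive summand to absorb slack, so to re-apply the rule you must exhibit $\Gamma' \sqsubseteq \Gamma$ with $\Delta = s\cdot\Gamma'$ on the nose. When $s = 0$ every grade in $s\cdot\Gamma'$ is $0$, so no $\Delta$ containing a positive grade is of this form; when $s = \infty$ every grade in $s\cdot\Gamma'$ lies in $\{0,\infty\}$, so no finite positive grade is reachable either.

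Concretely: $\emptyset \vdash [()] : \bang{0}\,\unit$ is derivable (apply (Unit), then ($!$ I) with $s = 0$), and $\{x :_5 \num\} \sqsubseteq \emptyset$ holds vacuously, yet $\{x :_5 \num\} \vdash [()] : \bang{0}\,\unit$ admits no derivation --- the only rule whose conclusion is a boxed term at a $!$-type is ($!$ I), and with $s = 0$ it forces every grade in the conclusion context to be $0$. So your plan cannot close this case as written; either the statement needs a side condition (e.g.\ restricting to scalars $s > 0$, or to subenvironments that are themselves scalings of a common base context, which is all the paper's later uses actually require), or the system needs an explicit context-subsumption principle for the induction to invoke. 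The paper's one-line proof glosses over the same point.
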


\begin{proof} 
By induction on the typing derivation of $\Gamma \vdash e: \tau$.
\end{proof}

\begin{lemma}[Substitution] 	\label{thm:substitution}
	Let $\Gamma, \Delta \vdash e : \tau$ be a well-typed term, and let
	$\vec{v} \vDash \Delta$ be a well-typed substitution of closed values, i.e.,
	we have derivations $\vdash v_i : \Delta(x_i)$ for every 
	$x_i \in dom({\Gamma})$. Then there is a
	derivation of \[ \Gamma \vdash e[\vec{v}/dom(\Delta)] : \tau. \]
\end{lemma}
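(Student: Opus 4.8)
The plan is to argue by induction on the derivation of $\Gamma, \Delta \vdash e : \tau$, keeping the partition of the context into the ``kept'' part $\Gamma$ and the ``substituted'' part $\Delta$ general throughout. Because each $v_i$ is a \emph{closed} value, the simultaneous substitution $e[\vec v/dom(\Delta)]$ creates no new free variables, so the only hygiene obligation is to $\alpha$-rename, in the rules that bind a variable in a subterm --- ($\multimap$ I), ($\otimes$ E), ($+$ E), ($!$ E), (Let), and (MLet) --- the bound variable so that it lies outside $dom(\Delta)$; substitution then commutes with every term constructor and eliminator.

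The leaf cases carry the content. For (Const) and (Unit) the term is unchanged and the same rule re-applies in context $\Gamma$. For (Var), if the variable $x$ lies in $dom(\Gamma)$ then $e[\vec v/dom(\Delta)] = x$ and (Var) re-applies directly; if $x = x_i \in dom(\Delta)$ then $e[\vec v/dom(\Delta)] = v_i$, and the hypothesis $\vdash v_i : \Delta(x_i)$ --- whose type component is precisely the $\sigma$ in the (Var) conclusion --- together with weakening (\Cref{thm:weakening}), using that every environment is a subenvironment of the empty one (\Cref{def:subenv}), yields $\Gamma \vdash v_i : \sigma$. The side-condition $s \ge 1$ on (Var) plays no role, since closed-value typing imposes no constraint on grades.

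In each inductive case the work is uniform bookkeeping on environments. When a rule forms its conclusion context from the premise contexts via the sum $(+)$ and scaling $(r\cdot-)$ operations (\Cref{def:env_sum}, \Cref{def:env_scale}), I decompose the substituted block $\Delta$ along the identical decomposition --- for instance writing $\Delta = \Delta_1 + \Delta_2$ for a two-premise rule whose conclusion context is $\Gamma_1 + \Gamma_2$, or $\Delta = s\cdot\Delta'$ for ($!$ I) --- note that the restriction of $\vec v$ to each resulting sub-block is still a valid closed-value substitution (type components are unchanged and grades are irrelevant), invoke the induction hypothesis on each premise, and reassemble with the corresponding rule using associativity, commutativity, and distributivity of scaling over sums of environments. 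No capture arises from scaling, since it does not rename variables, and every grade appearing in the conclusion type --- such as the $s\cdot r + q$ on the monadic type in (MLet), or the $s$ on $!_s\sigma$ in ($!$ I) --- is preserved verbatim because the induction hypothesis returns derivations of the same type.

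The one genuinely delicate point I expect is this context-splitting: showing that the partition of $\Gamma, \Delta$ into kept and substituted halves descends compatibly through the environment algebra of every multi-premise rule, with no $v_i$ duplicated across a $+$ or dropped by a $0$-scaling in a way that breaks the invariant. Since the substituted values are closed, none of this interacts with $\alpha$-conversion or with sensitivities, so each case collapses to a short identity on environments followed by one application of the relevant typing rule; as a sanity check, taking $\Delta = x:_s\sigma$ recovers the familiar single-variable form, ``$\Gamma \vdash e[v/x] : \tau$ whenever $\Gamma, x:_s\sigma \vdash e : \tau$ and $\vdash v : \sigma$''.
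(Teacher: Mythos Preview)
Your proposal is correct and follows essentially the same approach as the paper: induction on the typing derivation, with the base cases handled directly (using weakening for (Var) when the variable lies in $\Delta$) and the inductive cases by applying the hypothesis to each premise. The paper's own proof is a two-sentence sketch stating exactly this, so your write-up simply spells out the environment-splitting bookkeeping that the paper leaves implicit.
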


\begin{proof} 
	The base cases (Unit), (Const), and (Var) are direct, and the
	remaining of the cases follow by applying the induction hypothesis to
	every premise of the relevant typing rule. 
\end{proof}

\section{Operational Semantics}\label{sec:nfuzz:opsemantics}
\begin{figure}
\begin{center}

\begin{align*}
	\mathbf{op}(v) &\mapsto op(v) \\
	(\lambda x.e) \ v &\mapsto e[v/x] \\
	\pi_i\langle v_1,v_2 \rangle &\mapsto v_i \\
	\letc{x}{\boxx{v}}{e} &\mapsto e[v/x] \\
	\letm{x}{\ret~ v}{e} &\mapsto e[v/x] \\ 
	\lett{x \tensor y}{(v, w)}{e} &\mapsto e[v/x][w/y] \\
	\case{\inl v}{x.e}{y.f} &\mapsto e[v/x] \\
	\case{\inr v}{x.e}{y.f} &\mapsto f[v/x]
\end{align*}
\vskip -1.75em
\begin{align*}
	\letm{y}{(\letm{x}{\rnd ~v}{f})}{g} &\mapsto 
	\letm{x}{\rnd ~v}{(\letm{y}{f}{g})} \quad x\notin FV(g)
\end{align*}
\vskip -0.25em

	\AXC{$e \mapsto e'$}
	\UIC{$\lett{x}{e}{f} \mapsto \lett{x}{e'}{f}$}
	\DisplayProof

\end{center}
    \caption{Evaluation rules for \Lang.}
    \label{fig:eval_rules}
\end{figure}

The operational semantics described in this section specify the computational
behavior of \Lang{} by defining an evaluation strategy for terms. To capture a
forward rounding error analysis, we ultimately define two operational
semantics: one that describes how terms evaluate under an ideal semantics, and
one that describes how terms evaluate under a floating-point semantics.  Our
denotational semantics given in \Cref{sec:nfuzz:semantics} then describe
the distance between the values that terms reduce to under these two different
semantics; this connection is made precise in \Cref{cor:err-sound}.

We start by defining a general small-step operational semantics, based on the
operational semantics of \emph{Fuzz}~\citep{Fuzz}, and then refine these general
semantics into an ideal operational semantics and a floating-point operational
semantics. The complete set of evaluation rules is given in Figure
\ref{fig:eval_rules}, where the judgment $e \mapsto e'$ indicates that the
expression $e$ takes a single step, resulting in the expression $e'$. 

Although our language does not have recursive types, the $\letmx$ construct
makes it somewhat less obvious that the calculus is terminating: the evaluation
rules for $\letmx$ rearrange the term but do not reduce its size.  Even so, a
standard logical relations argument can be used to show that well-typed
programs are terminating. If we denote the set of closed values of type $\tau$
by $CV(\tau)$ and the set of closed terms of type $\tau$ by $CT(\tau)$, so that
$CV(\tau) \subseteq CT(\tau)$, and define $\mapsto^*$ as the reflexive
transitive closure of the single step judgment $\mapsto$, then we can state our
termination theorem as follows.

\begin{theorem}[Termination] \label{thm:SN}
If $\emptyset \vdash e : \tau$ then there exists $v \in CV(\tau)$ such that $e
\mapsto^* v$.  
\end{theorem}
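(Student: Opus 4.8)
The plan is to prove termination (strong normalization in the sense that every closed term reduces to a value) via a logical relations argument, following the standard structure used for \emph{Fuzz} and adapted to handle the monadic and comonadic constructs as well as the nonstandard $\letmx$ reassociation rule. First I would define, by induction on types, a family of sets $\llbracket\tau\rrbracket_{CV} \subseteq CV(\tau)$ of \emph{reducible values} and $\llbracket\tau\rrbracket_{CT} \subseteq CT(\tau)$ of \emph{reducible terms}, with $e \in \llbracket\tau\rrbracket_{CT}$ iff $e \mapsto^* v$ for some $v \in \llbracket\tau\rrbracket_{CV}$. The value interpretation would be the expected one: $()$ at $\unit$; all constants $k\in R$ at $\num$; pairs $\langle v,w\rangle$ at $\sigma \mathbin{\&} \tau$ when $\pi_1$-reducts and $\pi_2$-reducts are reducible; pairs $(v,w)$ at $\sigma\otimes\tau$ when both components are; $\inl v$, $\inr v$ at sums when $v$ is reducible at the appropriate summand; $\lambda x.e$ at $\sigma\multimap\tau$ when applying it to any reducible value of type $\sigma$ yields a reducible term of type $\tau$; $\boxx{v}$ at $\bang{s}\sigma$ when $v$ is reducible at $\sigma$; and for the monadic type $\monad{q}\tau$ the reducible values are $\ret v$ with $v$ reducible at $\tau$, together with terms of the form $\letm{x}{\rnd\,v}{f}$ that are themselves ``monadic normal forms'' — this last clause is the subtle one, since by the evaluation rules $\letm{x}{\rnd\,v}{f}$ does not reduce further on its own but is a legitimate stuck-at-a-value-like term.

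Next I would prove the usual \emph{closure/expansion} lemmas: reducibility at a term type is closed under backward reduction (if $e \mapsto e'$ and $e' \in \llbracket\tau\rrbracket_{CT}$ then $e \in \llbracket\tau\rrbracket_{CT}$, which is immediate from the definition), and that each elimination form applied to reducible arguments lands in a reducible term — e.g.\ if $v \in \llbracket\sigma\multimap\tau\rrbracket_{CV}$ and $w\in\llbracket\sigma\rrbracket_{CV}$ then $vw \in \llbracket\tau\rrbracket_{CT}$, and analogously for $\pi_i$, $\mathbf{case}$, the two $\mathbf{let}$-destructors for $\otimes$ and $\bang{s}$, the primitive operations $\mathbf{op}$ (using that $op$ is a total function $CV(\sigma)\to CV(\tau)$ from the signature), and crucially for $\letm{x}{v}{f}$ and the outer $\lett{x}{e}{f}$ sequencing. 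Then comes the main theorem, the \emph{fundamental lemma}: if $\Gamma \vdash e : \tau$ and $\vec v$ is a tuple of reducible closed values with $v_i \in \llbracket\Gamma(x_i)\rrbracket_{CV}$, then $e[\vec v/dom(\Gamma)] \in \llbracket\tau\rrbracket_{CT}$. This is by induction on the typing derivation, one case per rule of \Cref{fig:typing_rules}; each case uses the corresponding closure lemma, and the (MSub), (Ret), (Var), (Const), (Unit) cases are direct. Applying the fundamental lemma with the empty context to $\emptyset \vdash e : \tau$ gives $e \in \llbracket\tau\rrbracket_{CT}$, hence $e \mapsto^* v \in CV(\tau)$, which is exactly the statement.

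The main obstacle — as the text itself flags — is the $\letmx$ machinery: the rule $\letm{y}{(\letm{x}{\rnd\,v}{f})}{g} \mapsto \letm{x}{\rnd\,v}{(\letm{y}{f}{g})}$ rearranges a term without shrinking it, so a naive ``size decreases'' argument fails and one must be careful that the logical relation is stable under this commuting conversion. The fix is to bake into the $\monad{q}\tau$-value clause a notion of monadic canonical form that is preserved by the reassociation, and to prove a separate lemma that the reassociation rewrites drive any $\monad{}$-typed term to such a canonical form in finitely many steps (this is a simple measure argument — e.g.\ the multiset of nesting depths of $\rnd$-bindings strictly decreases under the commuting conversion, even though the raw term size does not). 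With that lemma in hand, the (MLet) case of the fundamental lemma goes through: given reducible $v$ at $\monad{r}\sigma$ and a reducible body $f$ at $\monad{q}\tau$, one case-splits on whether $v$ reduces to $\ret w$ or to a $\letm{x}{\rnd\,w}{f'}$ canonical form, uses the ($\mathbf{let}_M$) reduction rule in the first case and the reassociation-plus-canonicalization lemma in the second. The remaining cases are routine.
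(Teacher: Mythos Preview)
Your overall structure—type-indexed reducibility predicates, closure under expansion, and a fundamental lemma by induction on typing derivations—is exactly the paper's. The gap is in the clause for $\monad{q}\tau$ and the handling of (MLet).

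You admit $\letm{x}{\rnd~v}{f}$ as a reducible value without stating a condition on $f$, and any useful condition (e.g.\ that $f[w/x]$ be reducible at a monadic type for every reducible $w$) makes the clause self-referential. Your proposed fix—a standalone lemma that reassociation terminates, via the multiset of nesting depths of $\rnd$-occurrences—does not supply the missing well-foundedness, and the specific measure is wrong: under
\[
\letm{y}{(\letm{x}{\rnd~v}{f})}{g} \;\mapsto\; \letm{x}{\rnd~v}{(\letm{y}{f}{g})}
\]
the depth of $\rnd~v$ drops by one but the depth of every $\rnd$ inside $g$ \emph{rises} by one, so the multiset can strictly increase (take $g$ containing two $\rnd$-occurrences). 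More to the point, the operational semantics has no congruence rule for the monadic let, so a head-level reassociation fires once and immediately produces a syntactic value; the difficulty in (MLet) is not termination of the rewrite but showing that the resulting value inhabits the predicate, and that is exactly where the unstated condition on $f$ is needed.

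The paper's solution is to stratify: it sets $\mathcal{VR}_{\monad{u}\tau} \triangleq \bigcup_n \mathcal{VR}^n_{\monad{u}\tau}$, with $\mathcal{VR}^0$ containing the leaves $\ret~w$ and $\rnd~k$, and $\mathcal{VR}^{n+1}$ adding $\letm{x}{v}{f}$ whenever $v \in \mathcal{VR}^j$ for some $j < n$ and $f[w/x] \in \mathcal{VR}^{n-j}$ for every reducible $w$. The (MLet) case is then a case analysis on the shape of $v$ together with a short auxiliary lemma (\Cref{lem:SN_aux2}) stating that the level $n$ at which $f[\vec{w}][t/x]$ lands in $\mathcal{VR}^n$ is independent of which reducible value $t$ is substituted. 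This natural-number index is the well-founded device your multiset argument was reaching for, but placed in the predicate rather than in the rewrite system; the paper also reuses it later as the induction variable for the pairing lemma (\Cref{lem:nfuzz:pairing}).
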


The proof of \Cref{thm:SN} follows by a standard logical relations argument.
Below, we define the logical relation as the reducibility predicate
$\mathcal{R}_\tau$ and state the key auxiliary lemmas used in the proof of
\Cref{thm:SN}.  A detailed proof is given in \Cref{app:nfuzz:termination}.

\begin{definition} \label{def:redpred}
We define the reducibility predicate $\mathcal{R}_\tau$ inductively on types in
Figure \ref{fig:redpred}.
\end{definition} 
\begin{figure} 
\begin{align*}
\mathcal{R}_\tau &\triangleq \{e \mid e \in CT(\tau) ~\wedge~\exists v \in
CV(\tau). \ e \mapsto^* v ~\wedge~v \in \mathcal{VR}_\tau \}   \\
\mathcal{VR}_{\unit} &\triangleq \{ \langle \rangle \} \\
\mathcal{VR}_{\num} &\triangleq R \\
\mathcal{VR}_{\bang{s} \tau} & \triangleq \{ [v] \mid v \in \mathcal{R}_\tau \}
\\ 
\mathcal{VR}_{\sigma \tand \tau} &\triangleq \{ \langle v,w \rangle \mid v \in
\mathcal{R}_\sigma ~\wedge~w \in \mathcal{R}_\tau \} \\
\mathcal{VR}_{\sigma \tensor \tau} &\triangleq \{ (v, w) \mid v \in
\mathcal{R}_\sigma ~\wedge~w \in \mathcal{R}_\tau \} \\ 
\mathcal{VR}_{\sigma + \tau} &\triangleq \{ v \mid \inl v ~\wedge~ v \in
\mathcal{R}_\sigma \text{ or } \inr v ~\wedge~v \in \mathcal{R}_\tau \} \\
\mathcal{VR}_{\sigma \multimap \tau} &\triangleq 
\{ v \mid \forall w \in \mathcal{VR}_\sigma.\ vw \in \mathcal{R}_\tau \} \\ 
\mathcal{VR}_{\monad{u} \tau} & \triangleq \bigcup_{n \in \mathbb{N}}
\mathcal{VR}^n_{\monad{u} \tau} \\ 
\mathcal{VR}^0_{\monad{u} \tau} & \triangleq \{ v \mid v \equiv \ret~ w
~\wedge~ w \in \mathcal{R}_\tau \text{ or } v \equiv \rnd \ k ~\wedge~k \in
\mathcal{R}_{\num}\} \\
\mathcal{VR}^{n+1}_{\monad{u} \tau} &\triangleq \mathcal{VR}^n_{\monad{u} \tau}
\cup \Bigg\{ \letm{x}{v}{f} \mid \exists \ \sigma, u_1, u_2, j. \ u \ge u_1 +
u_2 ~\wedge~ n > j \\  &\qquad \qquad \qquad \qquad ~\wedge~v \in
\mathcal{VR}^j_{\monad{u_1} \sigma} \Bigg.  \Bigg. ~\wedge~ \left(\forall w\in
\mathcal{VR}_\sigma, f[w/x] \in \mathcal{VR}^{n-j}_{\monad{u_2} \tau}\right)
\Bigg\} 
\end{align*}
\caption{Reducibility Predicate.} \label{fig:redpred} \end{figure}

The proof of \Cref{thm:SN} relies on two key lemmas: \Cref{lem:SN_aux1}
and \Cref{thm:subsumption}.  The definition of the reducibility predicate
ensures that the proofs of these lemmas follow without difficulty.
\Cref{lem:SN_aux1} is fairly standard and follows by induction on the
typing derivation $\emptyset \vdash e : \tau$.  The proof of
\Cref{thm:subsumption} proceeds by induction on the depth of the predicate
$\mathcal{VR}$.

\begin{lemma}\label{lem:SN_aux1}
The predicate $\mathcal{R}_\tau$ is preserved by backward and forward
reductions.  Specifically, if $\emptyset \vdash e : \tau$ and $e \mapsto e'$
then $e \in \mathcal{R}_\tau \iff e' \in \mathcal{R}_\tau$.
\end{lemma}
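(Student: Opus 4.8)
The plan is to reduce the biconditional to two standard properties of the small-step relation of \Cref{fig:eval_rules}: \emph{subject reduction} (so that a step $e \mapsto e'$ preserves both closedness and the type $\tau$) and \emph{determinism} (no value steps, and every non-value closed term has at most one $\mapsto$-successor). Given these, both implications follow from a one-line manipulation of the definition of $\mathcal{R}_\tau$, and the ``induction on the typing derivation'' mentioned in the text is really just a way to organize the verification of these two properties.

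First I would establish subject reduction: if $\emptyset \vdash e : \tau$ and $e \mapsto e'$ then $\emptyset \vdash e' : \tau$, so in particular $e' \in CT(\tau)$. This is routine by case analysis on the reduction rule, appealing to the Substitution lemma (\Cref{thm:substitution}) for the $\beta$-like rules (application, projection, and the various $\mathbf{let}$ and $\mathbf{case}$ eliminators) and to a direct check for the $\letmx$ rearrangement rule, whose side condition $x \notin FV(g)$ is exactly what makes that rule capture-free and type-preserving. Next I would record determinism, which is immediate from inspecting \Cref{fig:eval_rules}: the left-hand sides of the non-congruence rules have pairwise-disjoint syntactic shapes (applications, projections, the $\mathbf{let}$-shaped forms distinguished by their binding patterns, $\mathbf{case}$, and the $\letmx$ rearrangement); the single congruence rule descends into the unique non-value subterm of a $\mathbf{let}$; and no value reduces --- in particular $\rnd v$ and $\letm{x}{\rnd v}{f}$ are values and there is no congruence rule for $\letmx$.

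With these in hand, write $\mathrm{Val}(e) \triangleq \{\, v \in CV(\tau) \mid e \mapsto^* v \,\}$, so that $e \in \mathcal{R}_\tau$ holds exactly when $e \in CT(\tau)$ and $\mathrm{Val}(e) \cap \mathcal{VR}_\tau \neq \emptyset$. Suppose $e \mapsto e'$. For $(\Leftarrow)$, any reduction $e' \mapsto^* v$ to a value extends to $e \mapsto e' \mapsto^* v$, so $\mathrm{Val}(e') \subseteq \mathrm{Val}(e)$; since $\emptyset \vdash e : \tau$ by hypothesis, $e' \in \mathcal{R}_\tau$ yields $e \in \mathcal{R}_\tau$. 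For $(\Rightarrow)$, since $e$ steps it is not a value, so any reduction $e \mapsto^* v$ to a value has length at least one and, by determinism, factors through $e'$; hence $\mathrm{Val}(e) \subseteq \mathrm{Val}(e')$, and together with $e' \in CT(\tau)$ from subject reduction, $e \in \mathcal{R}_\tau$ yields $e' \in \mathcal{R}_\tau$. Thus $\mathrm{Val}(e) = \mathrm{Val}(e')$ and the biconditional holds. Exactly this reasoning can be recast as an induction on the derivation of $\emptyset \vdash e : \tau$, where the only case that invokes the induction hypothesis is the (Let) rule paired with the congruence reduction rule; the content is unchanged.

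I expect the only genuinely delicate point to be the bookkeeping around the $\letmx$ rearrangement rule, where $e = \letm{y}{(\letm{x}{\rnd v}{f})}{g}$ is a non-value term --- its first component $\letm{x}{\rnd v}{f}$ is a value but is not of the form $\rnd w$ --- while the contractum $e' = \letm{x}{\rnd v}{(\letm{y}{f}{g})}$ \emph{is} a value, so that here $\mathrm{Val}(e) = \mathrm{Val}(e') = \{e'\}$. The argument above still applies word for word, since it never unfolds $\mathcal{VR}_{\monad{u}\tau}$ and only uses that $e$ is a non-value with unique successor $e'$; one just has to be careful that no other rule shadows this step --- none does, precisely because $\letm{x}{\rnd v}{f}$ is itself a value and has no reduction of its own that a congruence rule could exploit.
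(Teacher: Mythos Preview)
Your argument is correct. The paper gives only a one-line sketch (``follows by induction on the typing derivation $\emptyset \vdash e : \tau$''), so there is little to compare against; your decomposition via subject reduction plus determinism is the standard elaboration of such a sketch, and your careful handling of the $\letmx$ rearrangement rule---where the redex is a non-value whose unique contractum is itself a value of monadic type---is indeed the only point requiring real attention.
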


\begin{lemma}[Subsumption]\label{thm:subsumption}
For any $m \in \mathbb{N}$, and for any monadic grades $q,r$ such that  $r \ge
q$, if $e \in \mathcal{VR}^m_{\monad{q} \tau}$, then $e \in
\mathcal{VR}^m_{M_{r} \tau}$. 
\end{lemma}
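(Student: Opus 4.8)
The plan is to argue by induction on the depth $m$, exploiting the key structural feature of \Cref{fig:redpred}: in the definition of $\mathcal{VR}^n_{\monad{u}\tau}$ the outer grade $u$ occurs only in the single inequality constraint $u \ge u_1 + u_2$, while the inner sub-grades $u_1,u_2$ are \emph{existentially} quantified and so are never functionally determined by $u$. This means the grade never has to be ``recomputed'' when it is enlarged; it merely has to satisfy a weaker inequality.

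For the base case $m = 0$, the set $\mathcal{VR}^0_{\monad{q}\tau}$ does not mention the grade at all — it is $\{v \mid v \equiv \ret~w \wedge w \in \mathcal{R}_\tau \text{ or } v \equiv \rnd~k \wedge k \in \mathcal{R}_{\num}\}$, independent of $q$ — so $\mathcal{VR}^0_{\monad{q}\tau} = \mathcal{VR}^0_{\monad{r}\tau}$ and the inclusion is immediate.

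For the inductive step $m = n+1$, I would unfold $\mathcal{VR}^{n+1}_{\monad{q}\tau} = \mathcal{VR}^n_{\monad{q}\tau} \cup \{\,\letm{x}{v}{f} \mid \dots\,\}$ and case on which disjunct $e$ belongs to. If $e \in \mathcal{VR}^n_{\monad{q}\tau}$, the induction hypothesis at $n$ (with the same $q,r$) gives $e \in \mathcal{VR}^n_{\monad{r}\tau}$, and since $\mathcal{VR}^n_{\monad{r}\tau} \subseteq \mathcal{VR}^{n+1}_{\monad{r}\tau}$ by construction, we are done. Otherwise $e = \letm{x}{v}{f}$ with witnesses $\sigma, u_1, u_2, j$ such that $q \ge u_1 + u_2$, $n > j$, $v \in \mathcal{VR}^j_{\monad{u_1}\sigma}$, and $f[w/x] \in \mathcal{VR}^{n-j}_{\monad{u_2}\tau}$ for all $w \in \mathcal{VR}_\sigma$; I would then reuse these same witnesses $\sigma, u_1, u_2, j$ to establish membership in $\mathcal{VR}^{n+1}_{\monad{r}\tau}$. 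Every clause except the first is literally unchanged, and the first is satisfied because $r \ge q \ge u_1 + u_2$. Hence $e \in \mathcal{VR}^{n+1}_{\monad{r}\tau}$.

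I do not expect a genuine obstacle here; the proof is essentially bookkeeping. The one point deserving care is the choice of induction measure: because the recursion in \Cref{fig:redpred} decreases the depth superscript but imposes no uniform relationship between the grade of a set and the grades of the sub-predicates used to build it, induction on $m$ — rather than on the grade or on a typing derivation — is what makes the argument go through cleanly, and, unlike many monotonicity arguments, no induction hypothesis is needed on the inner components $v$ and $f$, since their witnessing grades are carried over unchanged.
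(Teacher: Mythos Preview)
Your proposal is correct and matches the paper's approach exactly: the paper simply states that the proof ``proceeds by induction on the depth of the predicate $\mathcal{VR}$,'' and your argument fills in precisely those details, observing that the grade appears only in the loosenable inequality $u \ge u_1 + u_2$. Nothing further is needed.
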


\subsubsection{Ideal and Floating-Point Operational Semantics}
Thus far, terms that include the primitive monadic rounding operation $\rnd$
have been treated as values, both in our presentation of the syntax of \Lang{}
in \Cref{sec:nfuzz:language} and in our definition of the evaluation rules,
given in \Cref{fig:eval_rules}.  To define an ideal and floating-point
operational semantics, we refine our syntax and semantics, so that the rounding
operation is now an expression that steps to a number.  The syntax of \Lang{}
is updated as follows. 
  \begin{alignat*}{3}
         &v, w \ &::=~ &x
         \mid ()
         \mid k \in R
         \mid \langle v,w \rangle 
         \mid  (v, w)
         \mid \inl \ v
         \mid \inr \ v
         \tag*{(Values)}
         \mid \lambda x.~e 
         \mid \boxx{v}
         \mid \ret ~ v 
         \\
         &e, f &::=~ & \dots \mid \rnd ~ v
         \tag*{(Terms)}
  \end{alignat*}
The evaluation rules are refined by defining two distinct step relations that
capture the behavior of ideal and floating-point computations.  Under the ideal
semantics, the rounding operation behaves like the identity function, and under
the floating-point semantics, the rounding operation behaves like a rounding
function, $\rho$. For now,
we make no assumptions on the function $\rho$, but further assumptions 
will be needed in our denotational semantics. It is sufficient to think of
$\rho$ as a well-defined rounding function as described in
\Cref{background:floatingpoint}.
\begin{definition} \label{def:id-fp-steps}
We define two step relations $e \mapsto_{id} e'$ and $e \mapsto_{fp} e'$ by
augmenting the operational semantics in \Cref{fig:eval_rules} with the
following rules:
  \begin{align*}
    \rnd~ k \mapsto_{id} \ret~ k
    \qquad \text{and} \qquad
    \rnd~ k \mapsto_{fp} \ret~ \rho(k)
  \end{align*}
\end{definition}

\section{Denotational Semantics}\label{sec:nfuzz:semantics}
Our type system is designed to bound the distance between the outputs of two 
closely related computations: an ideal
computation and its floating-point counterpart. In the previous section, we
demonstrated how programs of graded monadic type can be executed under two
different operational semantics, producing both an ideal, 
infinitely precise value and a floating-point value that may have incurred
rounding error during execution. Formally, the operational semantics say
nothing about the distance between the these two results.  In this section, we
will show that programs of type $\monad{\varepsilon}\tau$ can be interpreted as
pairs of computations that produce values separated by a distance of at most
$\varepsilon$ under a metric on $\mathbb{R}$.  In the next section, we
will connect the operational results from the previous section with the
denotational results presented here to   establish our main result,
demonstrating that well-typed programs of type $\monad{\varepsilon}\tau$
produce at most $\varepsilon$ rounding error.

\subsection{The Category of Metric Spaces}
To formally capture the notion of the distance between program outputs, the
denotational semantics for \Lang{} are based on the categorical semantics for
Fuzz introduced by \citet{Amorim:2017:metric}, where types are interpreted
as \emph{extended pseudo-metric spaces} and programs are interpreted and
non-expansive maps in the category ($\Met$) of these metric spaces. 

\begin{definition}[Extended Pseudo-Metric Space] \label{def:psmet}
  An \emph{extended pseudo-metric space} $(X, d_X)$ consists of a \emph{carrier}
  set $X$, denoted by $|X|$, and a \emph{distance function} 
  $d_X : X \times X \to \NNR \cup \{ \infty \}$
  satisfying the following properties for all $a, b, c, \in X$:
\begin{itemize}
\item reflexivity: $d(a, a) = 0$
\item symmetry: $d(a, b) = d(b, a)$
\item triangle inequality: $d(a, c) \leq d(a, b) + d(b, c)$
\end{itemize}
\end{definition}

Extended pseudo-metric spaces differ from standard metric spaces in two ways.
First, their distance functions can assign infinite distances (\emph{extended}
real numbers). Second, their distance functions are only \emph{pseudo}-metrics
because they can assign distance zero to pairs of distinct points. Since we
will only be concerned with extended pseudo-metric spaces, we will refer to
them as metric spaces.

Now, before we define $\Met$, we define non-expansive maps: 
\begin{definition}\label{def:non_expansive}
  A \emph{non-expansive map} $f : (X, d_x) \to (Y, d_Y)$ between extended
  pseudo-metric spaces consists of a set-map $f : X \to Y$ such that $d_Y(f(x),
  f(x')) \leq d_X(x, x')$. 
\end{definition}
\begin{definition}[The Category of Metric Spaces ($\Met$)]\label{def:met-cat}
  The category $\Met{}$ of \emph{extended pseudo-metric spaces} is the category
  with the following data.
\begin{itemize}
  \item The obejcts are extended pseudo-metric spaces.
  \item The morphisms from $X$ to $Y$ are non-expansive maps from $X$ to $Y$.
\end{itemize}
  The identity function is a non-expansive map, and
  non-expansive maps are closed under composition. Therefore, extended
  pseudo-metric spaces and non-expansive maps form a category $\Met$.
\end{definition}

The category $\Met$ supports several constructions that are useful for
interpreting linear type systems:
\begin{itemize}
\item The Cartesian product $(A, d_A) \tand (B, d_B)$ with carrier $A \times
B$ and metric $d_{A \tand B} ((a, b), (a', b')) =
\text{max}(d_A(a, a'), d_B(b, b'))$. 
\item The tensor product $(A, d_A) \otimes (B,
d_B)$ with carrier $A \times B$ and metric $d_{A
\otimes B} ((a, b), (a', b')) = d_A(a, a') + d_B(b, b')$.  
\item Coproducts $(A, d_A) + (B, d_B)$, where the carrier is the disjoint union
$A \uplus B$ and the metric $d_{A + B}$ assigns distance $\infty$ to pairs of
elements in different injections, and distance $d_A$ or $d_B$ to pairs of
elements in $A$ or $B$, respectively.
\item  Non-expansive functions
$(A, d_A) \lin (B, d_B)$, where the carrier set is $\{ f : A \to B \mid
f~\text{non-expansive} \}$ and the metric is given by the supremum norm: 
\[ 
d_{A\lin B}(f, g) = \text{sup}_{a \in A} d_B(f(a), g(a)).
\] 
\item Terminal objects are the singleton metric space $I = (\{\star\}, d_I)$ 
with a single element and a constant distance function $d_I({\star,\star})=0$. 
Specifically, for every object $X \in \Met$, there is a morphism 
$e_X : X \to I$ given by $e_X := x \mapsto \star$.
\end{itemize}

\begin{theorem} \label{thm:smcc}
The category $(\Met, I, \otimes, \lin)$ is a symmetric monoidal 
closed category (SMCC), where the unit object $I$ is the metric space with a single 
element.
\end{theorem}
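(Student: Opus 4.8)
The plan is to exhibit all of the SMCC structure on $\Met$ explicitly and then verify the required naturality and coherence equations by pushing them down to $(\Set, \times, 1)$ along the forgetful functor. First I would check that $\otimes$ is a bifunctor on $\Met$: on objects, the metric $d_{A \otimes B}((a,b),(a',b')) = d_A(a,a') + d_B(b,b')$ is an extended pseudometric, since reflexivity, symmetry, and the triangle inequality are inherited termwise from $d_A$ and $d_B$; on morphisms, if $f : A \to A'$ and $g : B \to B'$ are non-expansive then $f \otimes g := (a,b) \mapsto (f(a),g(b))$ satisfies $d_{A' \otimes B'}((f(a),g(b)),(f(a'),g(b'))) = d_{A'}(f(a),f(a')) + d_{B'}(g(b),g(b')) \le d_A(a,a') + d_B(b,b')$, so it is non-expansive, and functoriality is immediate because $f \otimes g$ is the set-function $f \times g$. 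Likewise $I = (\{\star\}, d_I)$ is a legitimate object.

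Next I would define the structural isomorphisms by the same underlying set-functions as the standard Cartesian structure on $\Set$: the associator $\alpha_{A,B,C} : (A \otimes B) \otimes C \to A \otimes (B \otimes C)$, $((a,b),c) \mapsto (a,(b,c))$; the unitors $\lambda_A : I \otimes A \to A$, $(\star,a) \mapsto a$, and $\rho_A : A \otimes I \to A$, $(a,\star) \mapsto a$; and the braiding $\gamma_{A,B} : A \otimes B \to B \otimes A$, $(a,b) \mapsto (b,a)$. In each case the sum metric on source and target agree after relabeling — using $d_I(\star,\star) = 0$ for the unitors and commutativity and associativity of $+$ on $\NNR \cup \{\infty\}$ for the rest — so each map is an isometry, hence a non-expansive bijection with non-expansive inverse, hence an isomorphism in $\Met$. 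Naturality of $\alpha, \lambda, \rho, \gamma$, as well as the pentagon, triangle, and hexagon coherence identities and $\gamma_{B,A} \circ \gamma_{A,B} = \id$, are all equations between morphisms of $\Met$; applying the faithful forgetful functor $U : \Met \to \Set$, which sends $\otimes$ to $\times$, $I$ to a one-point set, and the chosen maps to the standard structure isomorphisms of $(\Set, \times, 1)$, each equation becomes one that already holds in the symmetric monoidal category $(\Set, \times, 1)$; since $U$ is faithful, the equations hold in $\Met$. This gives that $(\Met, I, \otimes)$ is symmetric monoidal.

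For closedness I would show that for each object $B$ the functor $- \otimes B$ has right adjoint $B \lin -$, via the currying bijection $\mathrm{Hom}_{\Met}(A \otimes B, C) \cong \mathrm{Hom}_{\Met}(A, B \lin C)$. Given non-expansive $f : A \otimes B \to C$, set $\widehat{f}(a) := (b \mapsto f(a,b))$: for fixed $a$ this map is non-expansive because $d_C(f(a,b),f(a,b')) \le d_A(a,a) + d_B(b,b') = d_B(b,b')$, and $\widehat{f}$ is itself non-expansive into the supremum-metric space $B \lin C$ because $d_C(f(a,b),f(a',b)) \le d_A(a,a') + d_B(b,b) = d_A(a,a')$ for every $b$, so the supremum over $b$ is bounded by $d_A(a,a')$. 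Conversely, given non-expansive $g : A \to B \lin C$, the map $\check{g}(a,b) := g(a)(b)$ is non-expansive since $d_C(g(a)(b), g(a')(b')) \le d_C(g(a)(b),g(a')(b)) + d_C(g(a')(b),g(a')(b')) \le d_{B \lin C}(g(a),g(a')) + d_B(b,b') \le d_A(a,a') + d_B(b,b')$. These assignments are mutually inverse and natural in $A$ and $C$ (again verifiable on underlying sets), so $- \otimes B \dashv B \lin -$, which is exactly the closed structure.

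A remark on difficulty: essentially every verification is a one-line inequality, and the naturality and coherence conditions are inherited wholesale from $(\Set, \times, 1)$ through the faithful functor $U$, so there is no genuine obstacle. The two points that warrant a moment's care are (i) confirming that $B \lin C$ is a valid object, i.e.\ that the supremum of a family of $[0,\infty]$-valued pseudometrics is again a pseudometric, the triangle inequality following from $\sup_{a}\big(d_C(f(a),g(a)) + d_C(g(a),h(a))\big) \le \sup_{a} d_C(f(a),g(a)) + \sup_{a} d_C(g(a),h(a))$, and (ii) checking that currying actually lands in the non-expansive functions and is itself non-expansive for the supremum metric, which is precisely the computation displayed above.
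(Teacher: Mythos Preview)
Your proposal is correct and follows essentially the same approach as the paper, just with far more detail: the paper's own justification is a one-line observation that $(- \otimes B) \dashv (B \lin -)$ via currying and uncurrying, leaving the symmetric monoidal structure implicit, while you spell out the bifunctoriality of $\otimes$, the structural isomorphisms, and the coherence conditions (nicely dispatched via the faithful forgetful functor to $\Set$) before giving the same adjunction argument.
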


\Cref{thm:smcc} follows by observation of the following:
the functor $(- \otimes B)$ is left-adjoint to the functor $(B \lin -)$, so
maps $f : A \otimes B \to C$ can be curried to $\lambda(f) : A \to (B \lin
C)$, and uncurried. 

\subsection{A Graded Comonad on $\Met$}
As discussed in \Cref{background:coeffects}, graded comonadic types can be 
modeled by a categorical structure called a $\mathcal{S}$-\emph{graded exponential
comonad} \citep{Brunel:2014:coeffects,Gaboardi:2016:combining,Katsumata:2018:comonads}. 
Given any metric space $(A, d_A)$ and non-negative number $r$, there is an evident
operation that scales the metric by $r$: $(A, r \cdot d_A)$. This operation can
be extended to a graded comonad:

\begin{definition} \label{def:scale-comonad}
  Let the pre-ordered semiring $\mathcal{S}$ be the extended non-negative real
  numbers $\NNR \cup \{ \infty \}$ with the usual order, addition, and
  multiplication; $0 \cdot \infty$ and $\infty \cdot 0$ are defined to be $0$.
  We define functors $\{ D_s : \Met \to \Met \mid s \in \mathcal{S} \}$ such
  that $D_s : \Met \to \Met$ takes metric spaces $(A, d_A)$ to metric spaces
  $(A, s \cdot d_A)$, and non-expansive maps $f : A \to B$ to $D_s f : D_s A \to
  D_s B$, with the same underlying map.

  We also define the following associated natural transformations:
  \begin{itemize}
    \item For $s, t \in \mathcal{S}$ and $s \leq t$, the map $(s \leq t)_A : D_t
      A \to D_s A$ is the identity; note the direction.
    \item The map $m_{s, I} : I \to D_s I$ is the identity map on the singleton
      metric space.
    \item The map $m_{s, A, B} : D_s A \otimes D_s B \to D_s (A \otimes B)$ is the
      identity map on the underlying set.
    \item The map $w_A : D_0 A \to I$ maps all elements to the singleton.
    \item The map $c_{s, t, A} : D_{s + t} A \to D_s A \otimes D_t A$ is the
      diagonal map taking $a$ to $(a, a)$.
    \item The map $\epsilon_A : D_1 A \to A$ is the identity.
    \item The map $\delta_{s, t, A} : D_{s \cdot t} A \to D_s (D_t A)$ is the identity.
  \end{itemize}
\end{definition}

These maps are all non-expansive and it can be shown that they satisfy the
diagrams \citep{Gaboardi:2016:combining} defining a $\mathcal{S}$-graded
exponential comonad.

\subsection{A Graded Monad on $\Met$}\label{sec:nfuzz:monad}
Our type system is designed to bound the distance between various kinds of
program outputs. Intuitively, types should be interpreted as \emph{metric
spaces}, which are sets equipped with a distance function satisfying several
standard axioms. \citet{Amorim:2017:metric} identified the following slight
generalization of metric spaces as a suitable category to interpret \emph{Fuzz}.

The categorical structures we have seen so far are enough to interpret the
non-monadic fragment of our language, which is essentially the core of the \emph{Fuzz}
language~\citep{Amorim:2017:metric}. As proposed
by \citet{Gaboardi:2016:combining}, this core language can model
effectful computations using a graded monadic type, which can be modeled
categorically by (i) a \emph{graded strong monad}, and (ii) a
\emph{distributive law} modeling the interaction of the graded comonad and the
graded monad.

\subsubsection{The Neighborhood Monad}
Recall the intuition behind our system: closed programs $e$ of type
$M_{\rnderr} \num$ are computations producing outputs in $\num$ that may
perform rounding operations. The index $\rnderr$ should bound the distance
between the output under the \emph{ideal} semantics, where rounding is the
identity, and the \emph{floating-point (FP)} semantics, where rounding maps a
real number to a representable floating-point number following a prescribed
rounding procedure. Accordingly, the interpretation of the graded monad should
track \emph{pairs} of values---the ideal value, and the FP value.

This perspective points towards the following graded monad on $\Met$, which we
call the \emph{neighborhood monad}. While the definition appears quite natural
mathematically, we are not aware of this graded monad appearing in prior work.

\begin{definition} \label{def:nhd-monad}
  Let the pre-ordered monoid $\mathcal{R}$ be the extended non-negative real
  numbers $\NNR \cup \{ \infty \}$ with the usual order and addition. The
  \emph{neighborhood monad} is defined by the functors $\{ T_r : \Met \to \Met
  \mid r \in \mathcal{R} \}$ and associated natural transformations as follows:
  \begin{itemize}
    \item The functor $T_r : \Met \to \Met$ takes a metric space
      $M$ to a metric space with underlying set
      \[
        |T_r M| \triangleq \{ (x, y) \in M \mid d_M(x, y) \leq r \}
      \]
     and metric 
      \[
        d_{T_r M} ( (x, y), (x', y') ) \triangleq d_M (x, x'). 
      \]
    \item The functor $T_r$ takes a non-expansive function $f : A \to B$ to
      $T_r f : T_r A \to T_r B$ with
      \[
        (T_r f)( (x, y) ) \triangleq (f(x), f(y))
      \]
    \item For $r, q \in \mathcal{R}$ and $q \leq r$, the map $(q \leq r)_A : T_q
      A \to T_r A$ is the identity.
    \item The unit map $\eta_A : A \to T_0 A$ is defined via:
      $
        \eta_A(x) \triangleq (x, x).
      $
    \item The graded multiplication map $\mu_{q, r, A} : T_q (T_r A) \to T_{r +
      q} A$ is defined via:
      \[
        \mu_{q, r, A} ( (x, y), (x', y') ) \triangleq (x, y').
      \]
  \end{itemize}
\end{definition}

The definitions of $T_r$ are evidently functors. The associated maps are
natural transformations, and define a graded
monad~\citep{Katsumata:2014:effects,Fujii:2016:graded}. %

\begin{lemma}\label{lem:monad-nat}
  Let $q, r \in \mathcal{R}$. For any metric space $A$, the maps $(q \leq r)_A$,
  $\eta_A$, and $\mu_{q, r, A}$ are non-expansive maps and natural in $A$.
\end{lemma}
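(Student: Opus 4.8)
The plan is to verify, for each of the three maps, three things in turn: that the map is well defined as a function into the stated codomain (recall from \Cref{def:nhd-monad} that each $T_s M$ has carrier a \emph{subset} of a product, cut out by a distance bound), that it is non-expansive for the metrics of \Cref{def:nhd-monad}, and that the family is natural in $A$. Only one of these sub-checks involves any real content; the rest are unfoldings of definitions.

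First I would settle well-definedness. For $(q\le r)_A$ it is immediate: if $d_A(x,y)\le q$ and $q\le r$ then $d_A(x,y)\le r$, so the identity on underlying sets restricts to a map $|T_q A|\to|T_r A|$. For $\eta_A$ it is immediate from reflexivity, since $d_A(x,x)=0\le 0$ puts $(x,x)$ in $|T_0 A|$. The one case needing an argument is $\mu_{q,r,A}$. Given $((x,y),(x',y'))\in|T_q(T_r A)|$, unfolding the metrics gives $d_A(x,x')=d_{T_r A}\big((x,y),(x',y')\big)\le q$, and since $(x',y')\in|T_r A|$ we also have $d_A(x',y')\le r$; the triangle inequality in $A$ then yields $d_A(x,y')\le d_A(x,x')+d_A(x',y')\le q+r=r+q$, so $(x,y')\in|T_{r+q}A|$. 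This single estimate is exactly what forces the grade $r+q$ in the multiplication, and I expect it to be the one place where anything is genuinely being proved.

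Next I would check non-expansiveness. The key observation is that, by \Cref{def:nhd-monad}, the distance on any $T_s M$ reads off only the \emph{first} components of its pairs, and each of the three maps leaves the first component either unchanged ($(q\le r)_A$ and $\mu_{q,r,A}$) or duplicated ($\eta_A$). Hence in each case the two sides of the non-expansiveness inequality are literally equal: $d_{T_r A}\big((q\le r)_A(x,y),(q\le r)_A(x',y')\big)=d_A(x,x')=d_{T_q A}\big((x,y),(x',y')\big)$; $d_{T_0 A}\big(\eta_A(x),\eta_A(x')\big)=d_A(x,x')$; and $d_{T_{r+q}A}\big(\mu_{q,r,A}(X_1),\mu_{q,r,A}(X_2)\big)=d_A(x_1,x_2)=d_{T_q(T_r A)}(X_1,X_2)$ for $X_i=((x_i,y_i),(x_i',y_i'))$. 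So all three maps are non-expansive, indeed isometric on the coordinate that the metric sees.

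Finally, naturality. Given a non-expansive $f:A\to B$, I would unfold the functor action $(T_s f)(x,y)=(f(x),f(y))$ from \Cref{def:nhd-monad} and check that each naturality square commutes pointwise. For $(q\le r)$ both composites send $(x,y)$ to $(f(x),f(y))$; for $\eta$ both send $x$ to $(f(x),f(x))$; and for $\mu$, using $T_q(T_r f)\big((x,y),(x',y')\big)=\big((f(x),f(y)),(f(x'),f(y'))\big)$, both $\mu_{q,r,B}\circ T_q(T_r f)$ and $T_{r+q}f\circ\mu_{q,r,A}$ send $((x,y),(x',y'))$ to $(f(x),f(y'))$. Since every verification collapses to an elementary equality, there is no genuine obstacle beyond the triangle-inequality step noted above.
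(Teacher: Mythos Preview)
Your proposal is correct and matches the paper's approach essentially step for step: the paper likewise dismisses $(q\le r)_A$ and $\eta_A$ as straightforward, then for $\mu_{q,r,A}$ checks well-definedness via the same triangle-inequality estimate $d_A(x,y')\le q+r$, establishes non-expansiveness by the same chain of equalities on first components, and verifies naturality by unfolding $T_q(T_r f)$. You have simply written out the ``straightforward'' cases more explicitly than the paper does.
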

\noindent The proof of \Cref{lem:monad-nat} is provided in 
\Cref{app:nfuzz:monad}.

\begin{lemma} \label{lem:monad}
  The functors $T_r$ and its associated maps form a $\mathcal{R}$-graded monad on
  $\Met$.
\end{lemma}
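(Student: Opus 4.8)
The plan is to verify the data and axioms of Definition~\ref{def:graded_monad} for the preordered monoid $\mathcal{R} = (\NNR \cup \{\infty\}, \le, 0, +)$, where the generic monoid unit ``$1$'' is instantiated to $0$ and the generic operation ``$\odot$'' to $+$, reusing Lemma~\ref{lem:monad-nat} for all of the analytic content. First I would record that each $T_r$ is genuinely a functor: it sends a map $f$ to $(T_r f)((x,y)) = (f(x), f(y))$, and since identities and composites of set maps are computed componentwise, $T_r(\id) = \id$ and $T_r(g \circ f) = T_r g \circ T_r f$ are immediate; that $T_r f$ actually lands in $T_r B$ when $f$ is non-expansive is already part of Lemma~\ref{lem:monad-nat}. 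Next, the comparison maps $(q \le r)_A$ assemble into a functor from the preorder, which is trivial because each is the identity on the shared underlying set: $(q \le q)_A = \id_{T_q A}$ and $(q' \le q'')_A \circ (q \le q')_A = (q \le q'')_A$. Lemma~\ref{lem:monad-nat} supplies non-expansiveness and naturality in $A$ of $(q \le r)_A$, $\eta_A$, and $\mu_{q,r,A}$, so it only remains to check the three coherence diagrams.

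Each diagram reduces to a one-line element chase, since every structure map is either the identity or a projection at the level of underlying sets. For the square relating $\mu$ to the comparison maps, both composites send $((x,y),(x',y'))$ to $(x,y')$. For the unit triangles, recall from Definition~\ref{def:nhd-monad} that $\eta_A(x) = (x,x)$ and $\mu_{q,r,A}((a,b),(a',b')) = (a,b')$: the composite $T_q A \xrightarrow{\eta_{T_q A}} T_0(T_q A) \xrightarrow{\mu_{0,q,A}} T_q A$ sends $(x,y) \mapsto ((x,y),(x,y)) \mapsto (x,y)$, and the composite $T_q A \xrightarrow{T_q \eta_A} T_q(T_0 A) \xrightarrow{\mu_{q,0,A}} T_q A$ sends $(x,y) \mapsto ((x,x),(y,y)) \mapsto (x,y)$, both being $\id_{T_q A}$ (using $0 + q = q + 0 = q$). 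For associativity, a typical element of $T_q(T_{q'}(T_{q''}A))$ has the form $(((a_1,b_1),(a_2,b_2)),((a_3,b_3),(a_4,b_4)))$; pushing it along the ``$T_q \mu$ then $\mu$'' side gives $((a_1,b_2),(a_3,b_4))$ and then $(a_1,b_4)$, while the ``$\mu$ then $\mu$'' side gives $((a_1,b_1),(a_4,b_4))$ and then $(a_1,b_4)$, so the square commutes in $T_{q+q'+q''}A$.

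The one point that warrants care---and the only thing I would slow down for---is the bookkeeping of which subset each intermediate element inhabits, so that the distance constraints cutting out $|T_r M|$ are met at every stage of a chase. For example, in the second unit triangle one must note that $((x,x),(y,y))$ really does lie in $T_q(T_0 A)$ because $d_{T_0 A}((x,x),(y,y)) = d_A(x,y) \le q$, which holds precisely because $(x,y) \in T_q A$. All such side conditions are immediate from the definitions, and in any case they are subsumed by the well-definedness already granted by Lemma~\ref{lem:monad-nat}, so no obstacle remains and the lemma goes through essentially by inspection.
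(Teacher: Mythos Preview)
Your proposal is correct and follows exactly the approach the paper takes: the paper's proof is the single sentence ``checking that the diagrams in \Cref{def:graded_monad} commute, which follows directly by unfolding definitions,'' and you have simply carried out those unfoldings explicitly, invoking \Cref{lem:monad-nat} for the well-definedness and naturality conditions just as the paper's structure intends.
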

\begin{proof}
Establishing this fact requires checking that the diagrams in 
\Cref{def:graded_monad} commute, which follows directly by unfolding 
definitions.
\end{proof}

As we will soon see, the monad structure defined so far is
insufficient for interpreting the graded monadic sequencing rule (MLet).
Simply put, the issue is that 
sequencing requires the input and output 
types of programs to match, but the natural transformations we 
have defined for the neighborhood monad lack a 
mechanism to ensure this in our semantics. 
This issue also arises for the standard (not graded) monadic sequencing rule 
described in \Cref{background:effects}, for which
\citet{moggi:1991:notions} originally proposed the use 
of \emph{strong monads} to address the issue. The basic idea is that 
a strong monad is a monad along with an additional natural 
transformation $\sigma_{A,B} : A \otimes TB \rightarrow T(A \otimes B)$
known as the
\emph{strength map}.  Generalizations of the notion of  
strong monads have been presented by 
\citet{Atkey:2009:effects} and \citet{Katsumata:2014:effects}. 
We follows the presentation of \citet{Gaboardi:2016:combining}:

\begin{lemma}\label{lemma:monad_strong}
  The neighborhood monad (\Cref{def:nhd-monad}) together with 
  the \emph{tensorial strength maps} 
  $st_{r, A, B} : A \otimes T_r B \to T_r (A \otimes B)$ 
  defined as
  \begin{align*}
    st_{r, A, B}(a, (b, b')) &\triangleq ((a, b), (a, b'))
  \end{align*}
  for every $r \in \mathcal{R}$ form a 
  $\mathcal{R}$-strong graded monad on $\Met$.
\end{lemma}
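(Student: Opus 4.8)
The plan is to extend the graded monad structure of \Cref{lem:monad} to a strong graded monad in the sense of \citet{Gaboardi:2016:combining} by checking three things in turn: that each $st_{r,A,B}$ is a genuine morphism of $\Met$, that the family is natural in both arguments, and that it satisfies the coherence diagrams relating the strength to the monoidal structure of $(\Met, I, \otimes, \lin)$ and to the unit and graded multiplication of the neighborhood monad. First I would verify well-definedness. Given $a \in A$ and $(b, b') \in T_r B$, the pair $((a,b),(a,b'))$ lies in $T_r(A \otimes B)$ because, by the definition of the tensor metric from \Cref{thm:smcc}, $d_{A \otimes B}((a,b),(a,b')) = d_A(a,a) + d_B(b,b') = d_B(b,b') \le r$, the last inequality holding since $(b,b') \in T_r B$. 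Non-expansiveness is just as direct: on inputs $(a,(b,b'))$ and $(a_1,(b_1,b_1'))$, the domain distance $d_{A \otimes T_r B}$ equals $d_A(a,a_1) + d_B(b,b_1)$ (using the metric on $T_r B$ from \Cref{def:nhd-monad}, which reads only first components), and the codomain distance $d_{T_r(A \otimes B)}$ evaluated at the images is again $d_A(a,a_1) + d_B(b,b_1)$. Hence $st_{r,A,B} : A \otimes T_r B \to T_r(A \otimes B)$ is a morphism in $\Met$.

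Next I would check naturality. For non-expansive maps $f : A \to A'$ and $g : B \to B'$, both composites $T_r(f \otimes g) \circ st_{r,A,B}$ and $st_{r,A',B'} \circ (f \otimes T_r g)$ send $(a,(b,b'))$ to $((f(a),g(b)),(f(a),g(b')))$, so the naturality square commutes on underlying sets, which suffices since morphisms of $\Met$ are determined by their action on carriers.

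The bulk of the bookkeeping is the coherence axioms for a strong graded monad: compatibility of $st$ with the left unitor $I \otimes T_r B \cong T_r B$; compatibility with the associator, relating $st_{r, A \otimes B, C}$ on $(A \otimes B) \otimes T_r C$ to $T_r(\alpha) \circ st_{r, A, B \otimes C} \circ (\id_A \otimes st_{r,B,C}) \circ \alpha$; the unit law $st_{0,A,B} \circ (\id_A \otimes \eta_B) = \eta_{A \otimes B}$; and the multiplication law stating that $\mu_{q,r,A \otimes B} \circ T_q(st_{r,A,B}) \circ st_{q, A, T_r B}$ equals $st_{q+r, A, B} \circ (\id_A \otimes \mu_{q,r,B})$ as maps $A \otimes T_q(T_r B) \to T_{q+r}(A \otimes B)$. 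Each diagram is settled by unfolding the formulas from \Cref{def:nhd-monad}: $\eta$ doubles a value, $\mu$ retains the outer-first and inner-last components, $st$ copies the left factor, and the structural isomorphisms of the SMCC are the evident set bijections. For instance, in the multiplication law both composites carry $(a, ((b_1,b_1'),(b_2,b_2')))$ to $((a,b_1),(a,b_2'))$, and the grade indices agree because composing $T_q$ with $T_r$ lands in $T_{q+r}$ along either route; the other diagrams are comparably short.

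I expect the only real (and mild) difficulty to be keeping the grade subscripts and the distinction between the product metric $\max$ and the tensor metric $+$ straight while traversing the associativity and multiplication diagrams. There is no conceptual obstacle here: every map in sight is a composite of projections, diagonals, and identities on the underlying carriers, so commutativity on elements is immediate, and the constraint-satisfaction needed to stay inside the $T_r$ subspaces is exactly the computation carried out in the first step.
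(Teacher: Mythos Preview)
Your proposal is correct and follows the same approach as the paper---direct verification by unfolding the definitions of the neighborhood monad and the tensor metric. In fact your outline is more complete: the paper's proof (in the appendix) only records the non-expansiveness computation and notes that naturality ``follows directly by unfolding definitions,'' whereas you additionally spell out well-definedness of the codomain and the coherence diagrams for the unit, multiplication, unitor, and associator, all of which indeed reduce to the elementwise checks you describe.
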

\noindent
We check the non-expansiveness (\Cref{def:non_expansive}) and 
naturality (\Cref{def:nat_trans}) of the 
tensorial strength map in \Cref{app:nfuzz:monad}.

\subsubsection{A Graded Distributive Law}
\citet{Gaboardi:2016:combining} showed that languages supporting graded
coeffects and graded effects can be modeled with a graded comonad, a graded
monad, and a graded distributive law. In our setting, we have the following
family of maps defining the interaction between the neighborhood monad 
$(T_r)_{r \in \mathcal{R}}$ and the graded comonad $(D_s)_{s \in \mathcal{S}}$. 

\begin{lemma} \label{lem:distr}
  Let $s \in \mathcal{S}$ and $r \in \mathcal{R}$ be grades, and let $A$ be a
  metric space. Then identity map on the carrier set $|A| \times |A|$ is a
  non-expansive map
  \[
    \lambda_{s, r, A} : D_s (T_r A) \to T_{s \cdot r} (D_s A)
  \]
  Moreover, these maps are natural in $A$.
\end{lemma}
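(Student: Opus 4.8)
The plan is to unfold the definitions of the scaling comonad $(D_s)$ from \Cref{def:scale-comonad} and the neighborhood monad $(T_r)$ from \Cref{def:nhd-monad}, and then check in turn that $\lambda_{s,r,A}$ is a well-defined function on underlying sets, that it is non-expansive, and that it is natural in $A$. All three are direct consequences of the definitions, so the work is essentially bookkeeping.

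First I would identify the two metric spaces explicitly. Since $|T_r A| = \{(x,y)\in|A|\times|A| \mid d_A(x,y)\le r\}$ and $D_s$ leaves the carrier set unchanged, we get $|D_s(T_r A)| = \{(x,y)\mid d_A(x,y)\le r\}$ with $d_{D_s(T_r A)}\big((x,y),(x',y')\big) = s\cdot d_{T_r A}\big((x,y),(x',y')\big) = s\cdot d_A(x,x')$. On the other side, $D_s A$ has carrier $|A|$ and metric $s\cdot d_A$, so $|T_{s\cdot r}(D_s A)| = \{(x,y)\mid s\cdot d_A(x,y)\le s\cdot r\}$ with $d_{T_{s\cdot r}(D_s A)}\big((x,y),(x',y')\big) = d_{D_s A}(x,x') = s\cdot d_A(x,x')$. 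Well-definedness of $\lambda_{s,r,A}$ as (the restriction of) the identity function then amounts to the inclusion $\{(x,y)\mid d_A(x,y)\le r\}\subseteq\{(x,y)\mid s\cdot d_A(x,y)\le s\cdot r\}$, which holds by monotonicity of multiplication by $s\ge 0$ in the extended non-negative reals, including the edge cases $s=0$ (where both conditions trivialize) and $s=\infty$ or $r=\infty$ (handled by the conventions $0\cdot\infty=\infty\cdot 0=0$ fixed in \Cref{def:scale-comonad}).

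Non-expansiveness is then immediate from the computation above: the metric on $D_s(T_r A)$ and the metric on $T_{s\cdot r}(D_s A)$ both send $\big((x,y),(x',y')\big)$ to $s\cdot d_A(x,x')$, so $\lambda_{s,r,A}$ is in fact an isometry onto its image, hence non-expansive. For naturality, given a non-expansive $f:A\to B$, I would check that the square with top edge $D_s(T_r f)$, bottom edge $T_{s\cdot r}(D_s f)$, left edge $\lambda_{s,r,A}$, and right edge $\lambda_{s,r,B}$ commutes. Unwinding the functor actions: $T_r f$ acts on underlying elements as $(x,y)\mapsto(f(x),f(y))$ and $D_s$ does not change underlying maps, so $D_s(T_r f)$ also acts as $(x,y)\mapsto(f(x),f(y))$; likewise $D_s f$ has underlying map $f$ and $T_{s\cdot r}$ applies it componentwise, so $T_{s\cdot r}(D_s f)$ acts as $(x,y)\mapsto(f(x),f(y))$. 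Since $\lambda_{s,r,A}$ and $\lambda_{s,r,B}$ are identities on carriers, both composites around the square equal $(x,y)\mapsto(f(x),f(y))$, so the square commutes. Each composite is well-defined because $d_B(f(x),f(y))\le d_A(x,y)$ keeps pairs inside the relevant sublevel sets, a fact already used for $T_r f$ and $D_s f$ in \Cref{lem:monad-nat} and \Cref{def:scale-comonad}.

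There is no real obstacle here: the argument is a direct unfolding of definitions. The only points that deserve a moment's care are the extended-real edge cases ($s\in\{0,\infty\}$ and $r=\infty$) in the carrier-set inclusion, and confirming that the composite functors $D_s\circ T_r$ and $T_{s\cdot r}\circ D_s$ genuinely land in the claimed spaces, both of which reduce to well-definedness facts established earlier in the section.
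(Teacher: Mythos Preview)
Your proposal is correct and follows essentially the same approach as the paper: unfold the definitions of $D_s$ and $T_r$ to verify the carrier-set inclusion, observe that the two metrics agree (so the map is an isometry, hence non-expansive), and note naturality is immediate since $\lambda_{s,r,A}$ is the identity on underlying sets. The paper's argument is terser and does not dwell on the extended-real edge cases, but the structure is identical.
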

\noindent 
It is straightforward to verify the non-expansiveness (\Cref{def:non_expansive}) and 
naturality (\Cref{def:nat_trans}) of the 
distributive map. Details are provided in \Cref{app:nfuzz:monad}.

Similarly, it is straightforward to show that the maps $\lambda_{s, r, A}$ form
a graded distributive law in the sense
of \citet{Gaboardi:2016:combining}: for $s \leq s'$ and $r \leq r'$ the
identity map $T_{s \cdot r}(D_s A) \to T_{s' \cdot r'}(D_{s'} A)$ is also
natural in $A$, and the four diagrams required for a graded distributive law
all commute~\citep[Fig.  8]{Gaboardi:2016:combining}, but since we do not
rely on these properties we will omit these details.

\subsection{Interpreting \Lang{}}\label{subsec:interp}
We are now ready to define an interpretation of \Lang{} in the category $\Met$.

\subsubsection{Interpreting  Types}
We interpret each type $\tau$ as a metric space $\denot{\tau}$, using
the constructions described in the previous sections: the basic constrictions in 
$\Met$ along with the graded comonad and graded monad. 

\begin{definition}\label{def:interp-ty}
We define the interpretation of types by induction on the type syntax:
\begin{equation*}
\begin{aligned}[c]
  \denot{\unit} &\triangleq I = (\{ \star \}, 0) \\
  \denot{A \otimes B} &\triangleq \denot{A} \otimes \denot{B}\\
  \denot{A + B} &\triangleq \denot{A} + \denot{B} \\
  \denot{\bang{s} A} &\triangleq D_s \denot{A}
\end{aligned}
\qquad \qquad
\begin{aligned}[c]
  \denot{\num} &\triangleq (R, d_R) \\
  \denot{A \tand B} &\triangleq \denot{A} \tand \denot{B}\\
  \denot{A \lin B} &\triangleq \denot{A} \lin \denot{B} \\
  \denot{M_r A} &\triangleq T_r \denot{A}
\end{aligned}
\end{equation*}
\end{definition}
It is not yet necessary to fix the interpretation of the base type $\num$. For now,
$(R, d_R)$ can be any metric space.

\subsubsection{Interpreting Judgments}
We interpret a typing judgment of the form $\Gamma \vdash e : \tau$ as a morphism in 
$\Met$ from the metric space $\denot{\Gamma}$ to the metric space $\denot{\tau}$. 
Since all morphisms in $\Met$ are non-expansive, this interpretation ensures 
a version of \emph{metric preservation} for \Lang{}, which is the central 
language guarantee for \emph{Fuzz} \citep{Fuzz,Amorim:2017:metric}. Intuitively, 
this property guarantees that well-typed 
programs respect the sensitivity bound assigned by the type system. 

To interpret typing judgments, we first require an interpretation of typing 
contexts, as well as few auxiliary maps. 
The interpretation of typing contexts is defined inductively as follows:
\begin{align*}
  \denot{\cdot} &\triangleq I = (\{ \star \}, 0) \\
  \denot{\Gamma, x :_s \tau} &\triangleq \denot{\Gamma} \otimes D_s \denot{\tau}
\end{align*}

The interpretation of typing judgments is defined inductively over the 
typing derivation (\Cref{fig:typing_rules}). Since many of the typing rules 
rely on the scaling ($s \cdot \Gamma$) and summing ($\Gamma + \Delta$) of contexts, 
we need to give a clear semantics to these context operations. 

First, given any binding $x :_r \tau \in \Gamma$, there is a non-expansive map 
from $\denot{\Gamma}$ to $\denot{\tau}$ projecting out the $x$-th position. 
Formally, projections are defined via the weakening maps $(0 \leq s)_A ; w_A :
D_s A \to I$ and the unitors, but we will use notation that treats an element 
$\gamma \in \denot{\Gamma}$ as a function, so that $\gamma(x) \in \denot{\tau}$. 
We use this notation to state the following lemma about the sum of two contexts.
\begin{lemma} \label{lem:ctx-contr}
  Let $\Gamma$ and $\Delta$ such that $\Gamma + \Delta$ is defined. Then there
  is a non-expansive map $c_{\Gamma, \Delta} : \denot{\Gamma + \Delta} \to
  \denot{\Gamma} \otimes \denot{\Delta}$ given by:
  \[
    c_{\Gamma, \Delta}(\gamma) \triangleq (\gamma_\Gamma, \gamma_\Delta)
  \]
  where $\gamma_\Gamma$ and $\gamma_\Delta$ project out the positions in
  $dom(\Gamma)$ and $dom(\Delta)$, respectively.
\end{lemma}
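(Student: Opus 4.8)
The plan is to build $c_{\Gamma,\Delta}$ as a composite of the structural isomorphisms of the symmetric monoidal category $(\Met, I, \otimes, \lin)$ (\Cref{thm:smcc}) together with the contraction maps $c_{s,t,A} : D_{s+t}A \to D_sA \otimes D_tA$ of the graded comonad (\Cref{def:scale-comonad}), and then to check that the resulting non-expansive morphism has exactly the claimed underlying function.

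First I would split $dom(\Gamma+\Delta) = dom(\Gamma)\cup dom(\Delta)$ into three disjoint blocks: the variables occurring only in $\Gamma$, those occurring only in $\Delta$, and the shared variables $S = dom(\Gamma)\cap dom(\Delta)$. By \Cref{def:env_sum}, each $x \in S$ carries a common type $\sigma_x$ with $\Gamma(x) = (\sigma_x,s_x)$, $\Delta(x) = (\sigma_x,r_x)$, and grade $s_x+r_x$ in $\Gamma+\Delta$. Using the associator, unitor, and symmetry isomorphisms of $\Met$, I would rewrite the left-nested tensor $\denot{\Gamma+\Delta}$ from \Cref{def:interp-ty} as a product grouped by these three blocks, with the shared part appearing as $\bigotimes_{x\in S} D_{s_x+r_x}\denot{\sigma_x}$. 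On each shared factor I apply the contraction $c_{s_x,r_x,\denot{\sigma_x}}$, which is the diagonal $a\mapsto(a,a)$; on every factor of the two unshared blocks I apply the identity. A further rearrangement of the codomain by coherence isomorphisms then sends the first copy $D_{s_x}\denot{\sigma_x}$ of each contracted pair, together with the $\Gamma$-only block, into $\denot{\Gamma}$, and the second copy $D_{r_x}\denot{\sigma_x}$, together with the $\Delta$-only block, into $\denot{\Delta}$, yielding a map $\denot{\Gamma+\Delta}\to\denot{\Gamma}\otimes\denot{\Delta}$.

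Non-expansiveness is then immediate: every component (coherence isomorphisms in $\Met$, diagonal contractions, identities) is non-expansive and non-expansive maps compose. In fact, since $d_{D_sA} = s\cdot d_A$ and the tensor metric is additive, a shared variable $x$ contributes $(s_x+r_x)\,d_{\denot{\sigma_x}}$ to the distance on both the source and the target, while an unshared variable contributes its single grade times $d$ on both sides; so $c_{\Gamma,\Delta}$ is actually an isometric embedding, which makes the bound trivial. To finish, I would trace a generic element $\gamma$ through the composite: each coordinate indexed by a $\Gamma$-only (resp.\ $\Delta$-only) variable is copied unchanged into the $\denot{\Gamma}$ (resp.\ $\denot{\Delta}$) component, and each shared coordinate is duplicated into both, which is exactly the projection description $\gamma\mapsto(\gamma_\Gamma,\gamma_\Delta)$.

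I expect the only real obstacle to be bookkeeping, namely making the reindexing of the iterated tensor precise: because the interpretation of contexts in \Cref{subsec:interp} is a specific left-nested tensor, the passage to the ``three blocks with shared factors duplicated'' form must be justified by an explicit (if entirely routine) chain of associator and symmetry isomorphisms. There is no conceptual difficulty — once the indexing is set up, both non-expansiveness and agreement with the stated formula fall out by unfolding \Cref{def:scale-comonad} and \Cref{def:interp-ty}.
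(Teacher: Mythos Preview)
Your proposal is correct. The paper states this lemma without proof, treating it as essentially immediate from the ambient structure; your construction via the SMCC coherence isomorphisms together with the contraction maps $c_{s,t,A}$ of the graded comonad (\Cref{def:scale-comonad}) is exactly the natural way to spell out the details, and your observation that the map is in fact an isometric embedding is correct and makes non-expansiveness trivial.
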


Finally, we use the graded comonad to interpret the scaling of a context: 

\begin{lemma} \label{lem:ctx-scale}
  Let $\Gamma$ be a context and $s \in \mathcal{S}$ be a sensitivity. Then the
  identity function is a non-expansive map from $\denot{s \cdot \Gamma} \to D_s
  \denot{\Gamma}$.
\end{lemma}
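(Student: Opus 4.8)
The plan is to proceed by induction on the structure of the context $\Gamma$, reducing the claim to facts already recorded in \Cref{def:scale-comonad}: that $D_s$ interacts with the tensor product, with iterated scaling, and with the unit object via \emph{identity} maps on underlying sets. Concretely, for any metric spaces $A, B$ and grades $s, r \in \mathcal{S}$ we have the metric-space equalities $D_s I = I$, $D_s(A \otimes B) = D_s A \otimes D_s B$, and $D_s(D_r A) = D_{s \cdot r} A$: in each case both sides carry the same underlying set and the same distance function, since $s \cdot 0 = 0$, $s \cdot (d_A + d_B) = s \cdot d_A + s \cdot d_B$, and $s \cdot (r \cdot d_A) = (s \cdot r) \cdot d_A$ hold in the extended non-negative reals (the last two by distributivity and associativity, which remain valid in the presence of $\infty$ and under the convention $0 \cdot \infty = \infty \cdot 0 = 0$).

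For the base case $\Gamma = \cdot$ we have $s \cdot \Gamma = \cdot$, so $\denot{s \cdot \Gamma} = I = D_s I = D_s \denot{\Gamma}$, and the identity is trivially non-expansive (indeed an isometry). For the inductive step $\Gamma = \Gamma', x :_r \tau$, unfolding the interpretation of contexts gives
\[
  \denot{s \cdot \Gamma} = \denot{(s \cdot \Gamma'),\, x :_{s \cdot r} \tau} = \denot{s \cdot \Gamma'} \otimes D_{s \cdot r} \denot{\tau},
\]
while, using the equalities above,
\[
  D_s \denot{\Gamma} = D_s\bigl(\denot{\Gamma'} \otimes D_r \denot{\tau}\bigr) = D_s \denot{\Gamma'} \otimes D_s\bigl(D_r \denot{\tau}\bigr) = D_s \denot{\Gamma'} \otimes D_{s \cdot r} \denot{\tau}.
\]
By the induction hypothesis the identity is a non-expansive map $\denot{s \cdot \Gamma'} \to D_s \denot{\Gamma'}$, and the identity is certainly non-expansive $D_{s \cdot r}\denot{\tau} \to D_{s \cdot r}\denot{\tau}$; since $\otimes$ is a bifunctor on $\Met$ (\Cref{thm:smcc}), monotone with respect to non-expansiveness in each argument, their pairing — which is again the identity on the common underlying set $|\denot{\Gamma'}| \times |\denot{\tau}|$ — is a non-expansive map $\denot{s \cdot \Gamma} \to D_s \denot{\Gamma}$, as required.

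There is no real obstacle here beyond careful bookkeeping: the one point that warrants attention is that the ``identity'' maps witnessing the monoidality and digging structure of the graded comonad $D$ are genuine equalities of extended pseudo-metric spaces (same carrier, same distance function), so that the composite identity map is well-defined and the inductive step accumulates no loss; checking this amounts to verifying the extended-real identities $s(d_A + d_B) = sd_A + sd_B$ and $s(rd_A) = (sr)d_A$ including the boundary cases involving $0$ and $\infty$, which hold under the arithmetic conventions fixed in \Cref{def:scale-comonad}. Alternatively, one may skip the induction entirely and observe directly that, for $\Gamma = x_1 :_{r_1} \tau_1, \dots, x_n :_{r_n} \tau_n$, both $\denot{s \cdot \Gamma}$ and $D_s \denot{\Gamma}$ are the metric space on the set $\{\star\} \times |\denot{\tau_1}| \times \dots \times |\denot{\tau_n}|$ with distance $\sum_{i} (s \cdot r_i) \cdot d_{\denot{\tau_i}}$, so that the identity is literally an isometry and in particular non-expansive.
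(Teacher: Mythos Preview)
Your proof is correct. The paper states this lemma without proof, treating it as routine; your induction on the context structure, together with the direct verification that $D_s$ commutes with $\otimes$ and nests via literal equalities of metrics (distributivity and associativity of multiplication in $[0,\infty]$), is precisely the natural argument the paper is tacitly relying on.
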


We are now ready to define our interpretation of typing judgments. Our
definition is parametric in the interpretation of three things: the numeric type 
$\denot{\num} = (R, d_R)$, the rounding operation $\rnd$, and the operations in 
the signature $\Sigma$.

\begin{definition}(Interpretation of \Lang{} Terms.) \label{def:interp-prog}
  Fix $\rho : R \to R$ to be a (set) function such that for every $r \in R$ we
  have \[d_R(r, \rho(r)) \leq \rnderr.\]
  Furthermore, for every operation $\{ \mathbf{op} :
  \sigma \lin \tau \} \in \Sigma$ fix an interpretation
  $\denot{\op} : \denot{\sigma} \to \denot{\tau}$ such that, for every closed
  value $\emptyset \vdash v : \sigma$, we have $\denot{\op}(\denot{v}) =
  \denot{op(v)}$.
  Given these assumptions, 
  we can interpret each well-typed program
  $\Gamma \vdash e : \tau$ as a non-expansive map $\denot{\Gamma \vdash e :
  \tau} : \denot{\Gamma} \to \denot{\tau}$ by induction on the typing
  derivation, via case analysis on the last rule.
\end{definition}

We demonstrate the construction for several cases here, 
including all cases involving terms of monadic type.
The remaining cases can be found in \Cref{sec:app_interp_nfuzz}. 
To reduce
notation, we elide the  
the unitors $\lambda_A : I \otimes A \to A$
and $\rho_A: A \otimes I \to I$; the associators $\alpha_{A, B, C}: (A \otimes
B) \otimes C \to A \otimes (B \otimes C)$; and the symmetries 
$\sigma_{A, B}: A \otimes B \to B \otimes A$. 
\begin{description}
\item[(\textbf{Const}).] Define $\denot{\Gamma \vdash k : \num} :
\denot{\Gamma} \to \denot{\num}$ to be the constant function returning
$k \in R$.
\item [(\textbf{Op}).] 
By assumption, we have an interpretation $\denot{\op}$ for every operation in
the signature $\Sigma$. We can then define:
\[
  \denot{\Gamma \vdash \op(v) : \tau} \triangleq \denot{\Gamma \vdash v :
  \sigma} ; \denot{\op}
\]
\item[(\textbf{Ret}).] Let $f = \denot{\Gamma \vdash v : \tau}$. Define 
\[\denot{\Gamma \vdash \ret ~v : M_0 \tau} \triangleq f;\eta_{\denot{\tau}}.\]
\item[(\textbf{MSub}).] Let $f = \denot{\Gamma \vdash e : M_r
\tau}$. Define \[\denot{\Gamma \vdash e : M_{r'} \tau}  \triangleq f ;
(r \leq r')_{\denot{\tau}}.\]
\item[(\textbf{Rnd}).] Let $f = \denot{\Gamma \vdash k : \num}$. Define
\[
\denot{\Gamma \vdash \rnd ~k : M_{\rnderr} \num}
\triangleq f ; \langle id, \rho \rangle.
\]
Explicitly, the second map takes $r \in R$ to the pair $(r, \rho(r))$. The
output is in $\denot{M_\rnderr \num}$ by our assumption of the rounding
function $\rho$, and the function is non-expansive by the definition of the
metric on $\denot{M_{\rnderr} \num}$.
\item[(\textbf{MLet}).] Let $f = \denot{\Gamma \vdash e : M_r \sigma}$
and $g = \denot{\Theta, x:_s \sigma \vdash e' : M_q \tau}$. 
Define  
\[
\denot{s \cdot \Gamma + \Theta \vdash \letm{x}{e}{f} : \monad{s\cdot r +q} \tau}
\triangleq c_{s \cdot \denot{\Gamma}, \denot{\Theta}}; h_1;h_2.
\]
The maps $h_1$ and $h_2$ are defined as follows. First, apply the comonad to
$f$ and then compose with the distributive law to get:
\[
D_s f ; \lambda_{s, r, \denot{\sigma}} : D_s \denot{\Gamma} \to T_{s \cdot r}
D_s \denot{\sigma} .
\]
Repeatedly pre-composing with the map 
$m_{s,A,B}: D_sA \otimes D_sB \rightarrow D_s(A \otimes B)$ produces a map 
$\denot{s \cdot \Gamma} \to T_{s\cdot r}D_s\denot{\sigma}$.
Then, composing in parallel with 
$id_{\denot{\Theta}}$ and then post-composing with the strength
map $st_{s \cdot r, \denot{\Theta}, \denot{\sigma}}$ yields:
\[
h_1 = ((m;D_s f ; \lambda_{s, r, \sigma}) \otimes id_{\denot{\Theta}}) ;  \otimes st_{s
\cdot r, \denot{\Theta}, \denot{\sigma}} : \denot{\Theta} \otimes \denot{s
\cdot \Gamma} \to T_{s \cdot r} (\denot{\Theta} \otimes D_s \denot{\sigma})
\]
Next, applying the functor $T_{s \cdot r}$ to $g$ and then post-composing with
the multiplication $\mu_{s \cdot r, q, \denot{\tau}}$, we have:
\[
h_2 = T_{s \cdot r} g ; \mu_{s \cdot r, q, \denot{\sigma}} : T_{s \cdot r}
(\denot{\Theta} \otimes D_s \denot{\sigma}) \to T_{s \cdot r + q} \denot{\tau}.
\]
The composition $h_1;h_2$ yields a map from $\denot{\Theta} \otimes \denot{s
\cdot \Gamma}$ to $T_{s \cdot r + q} \denot {\tau} = \denot{M_{s \cdot r + q}
\tau}$, as required. 
\end{description}

\subsubsection{Ideal and Floating-Point Denotational Semantics}
The metric semantics we have just defined
interprets each \Lang{} program as a non-expansive map. 
Ultimately, we aim to
show that values of monadic type $\monad{r} \sigma$ are interpreted as pairs of
values: the first being the result under the ideal operational semantics
and the second being the result under an approximate, floating-point
operational semantics. 
These operational semantics were defined in \Cref{sec:nfuzz:opsemantics}.

In this section, we define two \emph{denotational
semantics} that capture the ideal and floating-point 
behaviors of our programs, respectively. We then relate
the metric semantics from the previous section to the ideal and 
floating-point denotational semantics in a \emph{pairing lemma}
(\Cref{lem:pairing}).
 
We develop both the ideal and floating-point 
semantics in $\Set$, where maps are not required to be non-expansive.
Intuitively, while we can
define an ideal semantics of well-typed programs as non-expansive maps in
$\Met$, programs under the floating-point  semantics are not guaranteed to be
non-expansive---a tiny change in the input to a rounding operation could lead to
a relatively large change in the rounded output. 
We therefore develop both semantics in $\Set$, where maps are not required to be 
non-expansive.

\begin{definition} \label{def:id-fp-sem}
Let $\Gamma \vdash e : \tau$ be a well-typed program. We can define two
semantics in $\Set$:
\begin{align*}
\pdenot{\Gamma \vdash e : \tau}_{id} &: \pdenot{\Gamma}_{id} \to \pdenot{\tau}_{id} \\
\pdenot{\Gamma \vdash e : \tau}_{fp} &: \pdenot{\Gamma}_{fp} \to \pdenot{\tau}_{fp}
\end{align*}
We take the graded comonad $D_s$ and the graded monad $T_r$ to both be the
identity functor on $\Set$:
\begin{align*}
\pdenot{\monad{q}\tau}_{id} &= \pdenot{\bang{s}~\tau}_{id} \triangleq \pdenot{\tau}_{id}\\
\pdenot{\monad{q}\tau}_{fp} &= \pdenot{\bang{s}~\tau}_{fp} \triangleq \pdenot{\tau}_{fp}
\end{align*}

The ideal and floating point interpretations of well-typed programs are both
straightforward, by induction on the derivation of the typing judgment. The
only interesting case is for the rule (Rnd) for the rounding operation:
\begin{align*}
\pdenot{\Gamma \vdash \rnd~ k : \monad{\rnderr} \num}_{id}
&\triangleq \pdenot{\Gamma \vdash k : \num}_{id} \\
\pdenot{\Gamma \vdash \rnd~ k : \monad{\rnderr} \num}_{fp}
&\triangleq \pdenot{\Gamma \vdash k : \num}_{fp} ; \rho
\end{align*}
where $\rho : R \to R$ is a rounding function.
\end{definition}

We now relate the metric semantics with the ideal and floating-point semantics
we have just defined. Let $U : \Met \to \Set$ be the forgetful functor mapping 
each metric space to its
underlying set, and each morphism of metric spaces to its underlying function on
sets. We have:

\begin{lemma}[Pairing] \label{lem:nfuzz:pairing}
  Let $\emptyset \vdash e : \monad{r} \num$. Then we have:
  \[
    U \denot{e} = \langle \pdenot{e}_{id}, \pdenot{e}_{fp} \rangle
  \]
  in $\Set$. The first projection of $U \denot{e}$ is $\pdenot{e}_{id}$, and the
  second projection is $\pdenot{e}_{fp}$.
\end{lemma}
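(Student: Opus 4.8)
The plan is to prove a strengthened statement covering \emph{all} open terms of \emph{all} types and then specialize. Concretely, I would define by induction on the type syntax two binary relations $R^{\mathrm{id}}_{\tau} \subseteq U\denot{\tau} \times \pdenot{\tau}_{id}$ and $R^{\mathrm{fp}}_{\tau} \subseteq U\denot{\tau} \times \pdenot{\tau}_{fp}$, extend them pointwise to contexts $R^{\mathrm{id}}_{\Gamma}, R^{\mathrm{fp}}_{\Gamma}$ (ignoring grades, since the carrier of $D_s\denot{\tau}$ is that of $\denot{\tau}$), and prove a fundamental lemma: for every derivable $\Gamma \vdash e : \tau$ and every $(\gamma,\gamma_{id}) \in R^{\mathrm{id}}_{\Gamma}$ we have $(U\denot{e}(\gamma), \pdenot{e}_{id}(\gamma_{id})) \in R^{\mathrm{id}}_{\tau}$, and symmetrically for $R^{\mathrm{fp}}$. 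Since $R^{\mathrm{id}}_{\emptyset}$ and $R^{\mathrm{fp}}_{\emptyset}$ are singletons, instantiating at $\emptyset \vdash e : \monad{r}\num$ and reading off the two relations gives $\pi_1(U\denot{e}) = \pdenot{e}_{id}$ and $\pi_2(U\denot{e}) = \pdenot{e}_{fp}$, which is exactly $U\denot{e} = \langle \pdenot{e}_{id}, \pdenot{e}_{fp} \rangle$.

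For the relations I would take the graph of the identity at $\unit$ and $\num$ (the carriers coincide), the componentwise/injectionwise lifting at $\tand$, $\otimes$, and $+$, the relation for $\tau$ itself at $\bang{s}\tau$, and the usual logical-relation clause at $\sigma \multimap \tau$ (related functions send related arguments to related results). The only delicate clause is the monadic one: recalling from \Cref{def:nhd-monad} that $|T_q M| = \{(x,y) : d_M(x,y) \le q\}$ with the first coordinate intended as the ideal value and the second as the floating-point value, I would set $R^{\mathrm{id}}_{\monad{q}\tau} = \{\,((x,y),a) : (x,a) \in R^{\mathrm{id}}_{\tau}\,\}$ and $R^{\mathrm{fp}}_{\monad{q}\tau} = \{\,((x,y),a) : (y,a) \in R^{\mathrm{fp}}_{\tau}\,\}$. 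Keeping the ideal and floating-point tracks in two separate binary relations, rather than folding them into one ternary relation, is the key design choice: the two coordinates of a neighborhood-monad value are related to each other only by $d \le q$, so a ternary relation would not be preserved.

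The fundamental lemma then goes by induction on the typing derivation. The purely linear cases are routine, because $U$ carries the symmetric-monoidal-closed and (co)product structure of $\Met$---together with the context operations $c_{\Gamma,\Delta}$, $s\cdot(-)$, and the projections, which act as carrier identities or diagonals---to exactly the set-level operations the $\Set$ semantics uses, so relatedness is preserved clause by clause. For (Ret) use that $U\eta_A$ is the diagonal; for (MSub) use that $(q \le r)_A$ is a carrier identity and $q \le r$; for (Rnd) use that the interpreting map sends $k$ to $(k,\rho(k))$, which lands in $|T_{\rnderr}\denot{\num}|$ by $d_R(k,\rho(k)) \le \rnderr$ from \Cref{def:interp-prog}, while $\pdenot{\rnd~k}_{id}$ returns $k$ and $\pdenot{\rnd~k}_{fp}$ returns $\rho(k)$; for (Op) take the $\Set$ interpretation of an operation to be the underlying set function of $\denot{\op}$, so relatedness is immediate at its non-monadic type.

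The hard part will be (MLet). I would unfold $U(c_{s\cdot\Gamma,\Theta}; h_1; h_2)$ using that under $U$ the comonad maps $m$, $D_s(-)$ and the graded distributive law $\lambda_{s,r,A}$ are carrier identities, the strength $st$ duplicates the $\denot{\Theta}$-component, $T_{s\cdot r}g$ applies $U\denot{f}$ to \emph{both} coordinates of the incoming neighborhood pair, and $\mu$ retains the first (ideal) coordinate of the first component and the second (floating-point) coordinate of the second. I would then match this against the ordinary sequencing performed by each $\Set$ semantics, invoking the induction hypothesis on $e$ to relate the two coordinates $x_e,y_e$ of $U\denot{e}(\gamma)$ to $\pdenot{e}_{id}(\gamma_{id})$ and $\pdenot{e}_{fp}(\gamma_{fp})$ respectively, and then the induction hypothesis on $f$ \emph{along each track separately} (substituting $x_e$ for the ideal conclusion, $y_e$ for the floating-point conclusion). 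Verifying that this bookkeeping lines up---in particular that the metric semantics runs $f$ on both coordinates while each $\Set$ semantics runs it on only one---is the crux, and it is exactly what the two-relation formulation is engineered to make go through.
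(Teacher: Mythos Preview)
Your proposal is correct and represents a genuinely different route from the paper's proof. The paper does \emph{not} induct on typing derivations; instead it exploits the termination logical relation already established in \Cref{thm:SN}: since $\emptyset \vdash e : \monad{r}\num$ implies $e \in \mathcal{R}^n_{\monad{r}\num}$ for some $n$, the paper inducts on $n$. In the base case $e$ reduces to $\ret~k$ or $\rnd~k$ and the claim is immediate; in the inductive case $e$ reduces to $\letm{x}{\rnd~k}{f}$, and the paper applies the inductive hypothesis to $f[k/x]$ and $f[\rho(k)/x]$ (which lie at smaller depth by the reducibility predicate), then finishes by the semantic substitution lemma (\Cref{lem:subst-sem}) and an explicit unfolding of $\denot{\letm{x}{\rnd~k}{f}}$.

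Your type-indexed logical relation is the more orthodox logical-relations argument: it covers all open terms and all types uniformly, and your separation into two binary relations $R^{\mathrm{id}}$ and $R^{\mathrm{fp}}$ is exactly the right device for the asymmetric monad clause---a single ternary relation would indeed fail at (MLet), because the metric semantics feeds $g$ the pair $(\theta,\pi_1(f\gamma))$ on the ideal track and $(\theta,\pi_2(f\gamma))$ on the floating-point track, and there is no reason these two inputs are related to each other beyond $d\le r$. The paper's approach, by contrast, is shorter in context: it reuses the depth-stratified reducibility predicate and the substitution lemma already proved for other purposes, and avoids introducing a second logical relation. It is, however, specific to the closed base-type case actually needed for \Cref{cor:err-sound}, whereas your argument would also yield the analogous statement at compound monadic types and under open contexts.
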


\begin{proof}
By the logical relation for termination, the judgment $\emptyset \vdash e : M_r
\num$ implies that $e$ is in $\mathcal{R}_{\monad{r} \num}^n$ for some $n \in
\mathbb{N}$. We proceed by induction on $n$.

\begin{description}
\item[Case.]
For the base case $n = 0$, we know that $e$ reduces to either $\ret ~v$
or $\rnd ~v$. We can conclude since by inversion $v$ must be a real
constant, and $U\denot{v} = \pdenot{v}_{id} = \pdenot{v}_{fp}$.
\item[Case.] 
For the inductive case $n = m + 1$, we know that $e$ reduces to
$\letm{x}{\rnd ~k}{f}$. By the logical relation, we have
$f[v/x] \in R_{\monad{r} \num}^m$ for all values $v$ such that
$\emptyset \vdash v : \num$. By induction we then have:
\begin{align*}
\pdenot{f[k/x]}_{id}
&\triangleq \pdenot{\letm{x}{\rnd~ k}{f}}_{id}
= U \denot{f[k/x]} ; \pi_1 \\
\pdenot{f[\rho(k)/x]}_{fp}
&\triangleq \pdenot{\letm{x}{\rnd~ k}{f}}_{fp}
= U \denot{f[\rho(k)/x]} ; \pi_2
\end{align*}
Thus we just need to show:
\[
U\denot{\letm{x}{\rnd~ k}{f}}
= \langle U\denot{f[k/x]}; \pi_1, U\denot{f[\rho(k)/x]}; \pi_2 \rangle
\]
where we have judgments $\emptyset \vdash \rnd ~k : \monad{r} \num$ and $x :_s
\num \vdash f : \monad{q} \num$. 
We can conclude by applying the substitution lemma (\Cref{lem:subst-sem}) and 
unfolding the definition of $\denot{\letm{x}{\rnd ~k}{f}}$.
\qedhere
\end{description}
\end{proof}

\section{Forward Error Soundness}\label{sec:nfuzz:sound}
The primary guarantee for \Lang{} is forward error soundness, which  
ensures that well-typed programs of graded monadic type satisfy the error
bound indicated by their type. In this section, we prove this 
guarantee by demonstrating that the ideal and floating-pint 
operational semantics, as described in \Cref{sec:nfuzz:opsemantics},
are computationally sound: stepping a well-typed \Lang{} term
does not change its semantics.
Forward error soundness then follows as a 
corollary to 
computational soundness and 
the pairing lemma (\Cref{lem:nfuzz:pairing}), which 
relates the metric semantics to both the ideal and 
floating-point semantics. 

\begin{lemma}[Computational Soundness] \label{lem:pres-id-fp}
Let $\emptyset \vdash e : \tau$ be a well-typed closed term, and suppose $e
\mapsto_{id} e'$. Then there is a derivation of $\emptyset \vdash e' : \tau$ and
the semantics of both derivations are equal: $\pdenot{\vdash e : \tau}_{id} =
\pdenot{\vdash e' : \tau}_{id}$. The same holds for the floating-point denotational and
operational semantics.
\end{lemma}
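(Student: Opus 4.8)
The plan is to prove the statement by induction on the derivation of the step judgment $e \mapsto_{id} e'$ (equivalently, by case analysis on which evaluation rule from \Cref{fig:eval_rules}, augmented with the $\rnd$ rule from \Cref{def:id-fp-steps}, is applied, together with the congruence rule for $\lett{x}{e}{f}$). For each reduction rule, the first task is \emph{type preservation}: from the typing derivation $\emptyset \vdash e : \tau$ I must extract a derivation $\emptyset \vdash e' : \tau$. Since all the $\beta$-like reduction rules are of the form ``eliminator applied to a matching constructor'', inversion on the typing derivation exposes the typing data for the subterms, and the substitution lemma (\Cref{thm:substitution}) — instantiated for closed-value substitutions — produces the typing derivation for the contractum. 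The one genuinely syntactic subtlety is the $\letmx$-commuting conversion rule $\letm{y}{(\letm{x}{\rnd~v}{f})}{g} \mapsto \letm{x}{\rnd~v}{(\letm{y}{f}{g})}$: here preservation requires reassembling two nested (MLet) derivations, checking that the grade $s_1 \cdot (s_2 \cdot r + q_f) + q_g$ obtained on the left equals (or is subsumed by) the grade $s_1 s_2 \cdot r + (s_1 q_f + q_g)$ obtained on the right, and verifying the side condition $x \notin FV(g)$ lets the context scalings line up. I would handle this via associativity and distributivity of the semiring operations on grades, plus (MSub) to absorb any slack.

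The second task, for each case, is \emph{semantic preservation}: $\pdenot{\vdash e : \tau}_{id} = \pdenot{\vdash e' : \tau}_{id}$, and likewise for $\pdenot{-}_{fp}$. Since the ideal and floating-point semantics live in $\Set$ and collapse both $D_s$ and $T_r$ to the identity functor, all the graded structure disappears and what remains is essentially the equational theory of a simply-typed call-by-value calculus with pairs, sums, and functions. Each $\beta$-reduction rule then matches exactly the corresponding $\beta$-law that holds by construction of the denotation — the key ingredient being a \emph{semantic substitution lemma} of the form $\pdenot{\Gamma \vdash e[\vec v/dom(\Delta)] : \tau}_{id} = \pdenot{\Gamma, \Delta \vdash e : \tau}_{id} \circ \langle id_{\denot{\Gamma}}, \langle \pdenot{v_i}_{id} \rangle_i \rangle$, which I will state and prove separately (this is the $\Set$-analogue of \Cref{lem:subst-sem} referenced in the pairing lemma). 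For the $\rnd~k \mapsto_{id} \ret~k$ rule, both sides denote $\pdenot{k}_{id}$ in the ideal semantics by \Cref{def:id-fp-sem} (since $T_0$ and $T_\rnderr$ are both the identity and $\ret$ is interpreted trivially); for $\rnd~k \mapsto_{fp} \ret~\rho(k)$, the left side is $\pdenot{k}_{fp};\rho$ and the right side is $\pdenot{\rho(k)}_{fp}$, and these agree because $\rho(k)$ is a closed value whose $fp$-denotation is exactly $\rho$ applied to $\pdenot{k}_{fp}$ (using that constants denote themselves and $\op$-interpretations commute with evaluation). Finally, the congruence rule for $\lett{x}{e}{f}$ follows immediately from the induction hypothesis applied to $e \mapsto_{id} e'$, compositionality of the $\Set$-interpretation of (Let), and functoriality of composition.

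I expect the main obstacle to be the $\letmx$-commuting conversion case, on both the type-preservation side (matching up the grade arithmetic and context scalings across a reassociation of two (MLet) applications, with careful use of $x \notin FV(g)$) and, to a lesser extent, the semantic side (showing the two nested-$\letmx$ terms have the same $\Set$-denotation, which amounts to associativity of composition but requires tracking the right projections). Everything else reduces to bookkeeping: inversion lemmas for each type constructor, the substitution lemma, and the $\beta$-equations built into \Cref{def:interp-prog} and \Cref{def:id-fp-sem}.
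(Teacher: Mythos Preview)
Your proposal is correct and takes essentially the same approach as the paper: case analysis on the step relation, with the $\rnd$ cases being the only ones that genuinely distinguish $\mapsto_{id}$ from $\mapsto_{fp}$. The paper's own proof is far terser---it says only ``by case analysis on the step relation'' and then details just the two $\rnd$ cases---implicitly treating all the $\beta$-reductions and the $\letmx$-commuting conversion as routine (the latter is handled explicitly for the metric semantics in the appendix as \Cref{lem:pres-sem}, and the $\Set$ version is indeed simpler since the grading collapses). Your plan fills in exactly the details the paper elides, including the grade-arithmetic check $s_1(s_2 r + q_f) + q_g = s_1 s_2 r + (s_1 q_f + q_g)$ for the commuting conversion and the semantic substitution lemma in $\Set$, so there is no gap.
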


\begin{proof}
By case analysis on the step relation. We detail the cases where $e = \rnd ~ k$.
\begin{description}
\item[Case: $\rnd ~k \mapsto_{id} \ret ~k$.] Suppose that $\emptyset \vdash \rnd~k
: \monad{q} \num$, where $\rnderr \leq q$. Then
the rules (Ret) and (MSub) can be used to derive the judgment 
$\cdot \vdash \ret~k : \monad{q} \num$, and
\[
\pdenot{\emptyset \vdash \rnd~k : \monad{q} \num}_{id}
= \pdenot{\emptyset \vdash k : \num}_{id}
= \pdenot{\emptyset \vdash \ret~k : \monad{q} \num}_{id} .
\]
\item[Case: $\rnd~k \mapsto_{fp} \ret ~ \rho(k)$.] Suppose that $\emptyset \vdash
\rnd~k : \monad{q} \num$, where $\rnderr \leq q$. Then
the rules (Ret) and (MSub) can be used to derive the judgment 
$\emptyset \vdash \ret~k : \monad{q} \num$, and
\[
\pdenot{\emptyset \vdash \rnd~k : \monad{q} \num}_{fp}
= \pdenot{\emptyset \vdash \rho(k) : \num}_{fp}
= \pdenot{\emptyset \vdash \ret \rho(k) : \monad{q} \num}_{fp} .
\]
\qedhere
\end{description}
\end{proof}

As a corollary, we have soundness of the error bound for programs with monadic type.

\begin{corollary}[Forward Error Soundness]\label{cor:err-sound}
  Let $\emptyset \vdash e : M_r \num$ be a well-typed program. Then $e
  \mapsto_{id}^* \ret ~ v_{id}$ and $e \mapsto_{fp}^* \ret ~ v_{fp}$ such that
  $d_{\denot{\num}}(\pdenotid{v_{id}}, \pdenot{v_{fp}}_{fp}) \leq r$.
\end{corollary}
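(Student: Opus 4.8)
The plan is to combine the termination result (\Cref{thm:SN}), which pins down the shape of the normal forms reached under the two operational semantics, with computational soundness (\Cref{lem:pres-id-fp}) and the pairing lemma (\Cref{lem:nfuzz:pairing}), and then read the bound off the fact that the metric interpretation $\denot{e}$ lands in the carrier of the neighborhood monad.

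First I would settle the reduction behavior. Termination for $\mapsto_{id}$ and $\mapsto_{fp}$ follows from essentially the same logical-relations argument as \Cref{thm:SN}: the only new rules (\Cref{def:id-fp-steps}) rewrite $\rnd k$ to $\ret k$, resp.\ $\ret \rho(k)$, which strictly decreases term size, so they cannot threaten termination. Hence $e \mapsto_{id}^{*} w_{id}$ and $e \mapsto_{fp}^{*} w_{fp}$ for closed values $w_{id}, w_{fp}$. By the type-preservation part of \Cref{lem:pres-id-fp} these still have type $M_r \num$, and since in the refined syntax the only closed values of monadic type have the form $\ret v$, inverting (Ret) and (MSub) forces $v$ to be a numeric constant. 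So $w_{id} = \ret v_{id}$ and $w_{fp} = \ret v_{fp}$ with $v_{id}, v_{fp} \in R$, which already establishes the reduction part of the statement.

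Next I would iterate the semantic-equality part of \Cref{lem:pres-id-fp} along each reduction sequence, obtaining $\pdenot{\vdash e : M_r \num}_{id} = \pdenot{\vdash \ret v_{id} : M_r \num}_{id}$ and likewise on the floating-point side. Since the graded comonad $D_s$ and the graded monad $T_r$ both act as the identity functor in the $\Set$ semantics (\Cref{def:id-fp-sem}), the right-hand sides collapse to $\pdenot{v_{id}}_{id}$ and $\pdenot{v_{fp}}_{fp}$, i.e.\ to the constants themselves. Feeding this into \Cref{lem:nfuzz:pairing} gives $U\denot{\vdash e : M_r \num} = \langle \pdenot{e}_{id}, \pdenot{e}_{fp} \rangle = \langle \pdenotid{v_{id}}, \pdenot{v_{fp}}_{fp} \rangle$, and evaluating at the unique point of $I$ yields the pair $(\pdenotid{v_{id}}, \pdenot{v_{fp}}_{fp})$.

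Finally, $\denot{\vdash e : M_r \num}$ is by construction a morphism of $\Met$ into $\denot{M_r \num} = T_r \denot{\num}$, so its value lies in $|T_r \denot{\num}| = \{(x,y) \mid d_{\denot{\num}}(x,y) \le r\}$ by \Cref{def:nhd-monad}. Therefore the pair $(\pdenotid{v_{id}}, \pdenot{v_{fp}}_{fp})$ satisfies $d_{\denot{\num}}(\pdenotid{v_{id}}, \pdenot{v_{fp}}_{fp}) \le r$, which is exactly the claim. I expect the only real friction to be the preliminary step — re-establishing termination and canonical forms for the \emph{refined} relations $\mapsto_{id}$ and $\mapsto_{fp}$ rather than the original $\mapsto$, and checking that intervening (MSub) applications do not prevent the final value from being a bare numeric constant; once past that, the corollary is a short chain of rewrites through the two lemmas and the definition of the neighborhood monad's carrier set.
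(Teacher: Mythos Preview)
Your proposal is correct and follows essentially the same approach as the paper: use termination and canonical forms to get $\ret v_{id}$ and $\ret v_{fp}$, apply computational soundness (\Cref{lem:pres-id-fp}) to identify the $\Set$ semantics, apply pairing (\Cref{lem:nfuzz:pairing}) to connect with $U\denot{e}$, and then read the distance bound off the carrier of $T_r\denot{\num}$. Your write-up is somewhat more explicit about the termination and inversion steps for the refined relations, but the argument is the same.
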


\begin{proof}
Under the ideal and floating point semantics, the only values of monadic type
are of the form $\ret ~v$. Since these operational semantics are
type-preserving and normalizing, we must have 
\[
e \mapsto_{id}^* \ret~ v_{id} \quad \text{and} \quad e \mapsto_{fp}^* \ret~ v_{fp}.
\] 
By computational soundness~(\Cref{lem:pres-id-fp}), we have
\[
\pdenot{e}_{id} = \pdenot{\ret~v_{id}}_{id} \quad \text{and} \quad 
\pdenot{e}_{fp} = \pdenot{\ret~ v_{fp}}_{fp}.
\] 
Now, by pairing (\Cref{lem:nfuzz:pairing}), we have
\[
U\denot{e} = \langle \pdenot{\ret~ v_{id}}_{id}, \pdenot{\ret~ v_{fp}}_{fp} \rangle.
\]
Since the forgetful functor is the identity on
morphisms, we have $U\denot{e} = \denot{e}: I \to T_r \denot{\num}$ in $\Met$. By
definition, $\pdenot{\ret~ v_{id}}_{id} = \pdenot{v_{id}}_{id}$ and
$\pdenot{\ret~ v_{fp}}_{fp} = \pdenot{v_{fp}}_{fp}$. Thus,
$\langle \pdenot{ v_{id}}_{id}, \pdenot{ v_{fp}}_{fp} \rangle$ is 
an element of $T_r \denot{num}$ and we conclude by
the definition of the monad $T_r$: 
\[
d_{\denot{\num}}(\pdenotid{v_{id}}, \pdenot{v_{fp}}_{fp}) \leq r.
\]
\end{proof}

\section{Examples}\label{sec:nfuzz:examples}
This section illustrates how to instantiate \Lang{} in practice. Recall that
\Lang{}  is parameterized by a set $R$ of numeric constants of type $\num$, a
fixed constant $\varepsilon$ representing an upper bound on the rounding error
produced by evaluating a rounding function, and a signature $\Sigma$ defining
the primitive operations in the language.  Our language guarantee of forward
error soundness (\Cref{cor:err-sound}) holds under the following
assumptions on these parameters. 

First, the interpretation $\denot{\num} = (R,d_R)$ of the numeric type must be
a metric space. Second, for every operation $\{ \mathbf{op} : \sigma \lin \tau
\} \in \Sigma$, we must fix an interpretation $\denot{\op} : \denot{\sigma} \to
\denot{\tau}$ such that, for every closed value $\emptyset \vdash v : \sigma$,
we have $\denot{\op}(\denot{v}) = \denot{op(v)}$. Simply stated, this means any
primitive operation included in an instantiation of \Lang{} must be a
non-expansive map.  Third, the interpretation $\denot{\rnd}$ of the primitive
rounding operation must use a well-defined rounding function $\rho : R \to R$
such that, for every $r \in R$, we have: \[d_R(r, \rho(r)) \leq \rnderr.\] This
choice of the rounding function fixes the language constant $\rnderr$. 

Now, we will explore how instances of these parameters can be soundly 
chosen in practice.  In the next section, we evaluate
how an implementation of this instance compares to existing sound tools
that automatically bound the relative rounding error of floating-point
programs.

\paragraph{Interpreting $\num$.}
If we interpret our numeric type $\num$ as the set of strictly positive real
numbers $\R_{>0}$ with the relative precision (RP) metric (\Cref{def:rp}), then
we can use \Lang{} to perform a relative error analysis as described by
\citet{Olver:1978:rp}.  

\paragraph{Defining primitive operations.}
Using the $RP$ metric, we can extend the language with four primitive
arithmetic operations, typed as follows: 

\begin{alignat*}{2}
&\mathbf{add} &&: (\num ~ \& ~ \num) \multimap \num  \\
&\mathbf{mul} &&: (\num \otimes \num) \multimap \num \\
&\mathbf{div} &&: (\num \otimes \num) \multimap \num    \\
&\mathbf{sqrt} &&:~ ![0.5]\num \multimap \num
\end{alignat*}

Above, we use the syntax $![s]$ in place of $\bang{s}$ and the syntax $M[q]$ in
place of $\monad{q}$ for readability. This is also the syntax of our
implementation of \Lang{}, which we will introduce in the next section.

In order to soundly add these operations as primitives to the language, we must
verify that their interpretations are each non-expansive maps.  If we fix the
interpretation of these operations as their natural mathematical counterparts
over the positive real numbers, then these interpretations are non-expansive
functions.

As an example, consider the operation $\mathbf{add}: (\num ~ \& ~ \num)
\multimap \num$. Showing non-expansiveness amounts to verifying, for every
$a,b,a',b' \in \R_{>0}$, that 
\[
  RP((a+b),(a'+b')) \le \max(RP(a,a'),RP(b,b')).
\]
If we let $\alpha = RP(a,a')$ and $\beta = RP(b,b')$, then we have that
\begin{align*}
RP((a+b),(a'+b')) &\le RP((a+b) \cdot \max(e^\alpha,e^\beta),(a+b))\\
&= \max(RP(a,a'),RP(b,b'))
\end{align*}
as required. 
Similarly, for the operation
$\mathbf{mul}: (\num ~ \otimes ~ \num) \multimap \num$. 
Showing non-expansiveness
amounts to verifying, for every $a,b,a',b' \in \R_{>0}$, that 
\[
RP((a \cdot b),(a' \cdot b')) \le RP(a,a') + RP(b,b').
\]
Recall that when we first introduced the multiplicative ($\tensor$) product 
and
additive ($\&$) product in \Cref{sec:nfuzz:language}, the descriptors
\emph{multiplicative} and \emph{additive} inherited from linear logic
conflicted with the context operations in sensitivity type systems. Here,
however, they are fitting: multiplication is most naturally typed with the
multiplicative product, while addition is most naturally typed with the
additive product. While we could also type $\mathbf{add}$ with the
multiplicative product, the type is coarser than necessary and would ultimately
result in looser rounding error bounds. (Note that while we can soundly type
$\mathbf{add}$ using both the additive and multiplicative product, we can not
soundly type $\mathbf{mul}$ with the additive product).  To fully understand
why, we first need to interpret our rounding operation $\rnd$. For now, suppose
we added addition typed with the multiplicative product, $\mathbf{add}': (\num
\otimes \num) \multimap \num$, to the primitive operations. Then, consider the
following typing derivations:
\begin{center}
\AXC{$x:_1 R \vdash x : R$}
\AXC{$x:_1 R \vdash x : R$}
\LeftLabel{($\&$ I)}
\BIC{$x:_1 R \vdash \langle x,x \rangle : R\tand R$}
\AXC{$ \{ \mathbf{add} : R\tand R\multimap R\in \Sigma \}$}
\LeftLabel{(Op)}
\BIC{$ x:_{1} R\vdash 
  \mathbf{add}~ \langle x,x \rangle : \R$}
\bottomAlignProof
\DisplayProof
\end{center}
\vspace{10pt}
\begin{center}
\AXC{$x:_1 R\vdash x : R$}
\AXC{$x:_1 R\vdash x : \R$}
\LeftLabel{($\otimes$ I)}
\BIC{$x:_2 R\vdash (x,x) : R\tand \R$}
\AXC{$ \{ \mathbf{add}' : R\otimes R\multimap R\in \Sigma \}$}
\LeftLabel{(Op)}
\BIC{$ x:_{2} R\vdash 
  \mathbf{add}'~ (x,x): \R$}
\bottomAlignProof
\DisplayProof
\end{center}

In the first derivation, the expression $\mathbf{add}~ \langle x,x \rangle$ is
$1$-sensitive in $x$. This is because, in the left branch where the pair
introduction rule ($\&$ I) is used, the pair $\langle x,x \rangle$ is typed in
the same environment as the components, resulting in the pair being
$1$-sensitive in $x$.  In contrast, in the second derivation,  the pair
introduction rule ($\otimes$ I) is used, and the environments used to type each
component are summed as \[x :_2 R = x :_1 R+ x :_1 R,\] making the pair
$2$-sensitive in $x$.  The overall expression $\mathbf{add}'~  (x,x)$ is
therefore $2$-sensitive in $x$.  We will soon see how this difference would
propagate through an error analysis.

\paragraph{Choosing the rounding function.}
Given our interpretation $\denot{\num} = (\R_{>0},RP)$ for the numeric type, we
require the rounding function $\rho$ to be a function such that for every $x
\in \R_{>0}$, we have $RP(x, \rho(x)) \le \epsilon$; that is, the rounding
function must satisfy an accuracy guarantee with respect to the metric $RP$ on
$\R_{>0}$. If we choose $\rho_{RU}: \R \rightarrow \R$ to be
round towards $+\infty$, then by \cref{lem:olver_model_pos} we have that
$RP(x,\rho(x)) \le u$, where $u$ is the unit roundoff. 

Now, using the $\rnd$ operation, we can write the floating-point counterparts
of the primitive operations $\mathbf{add}$, $\mathbf{mul}$, $\mathbf{div}$, and
$\mathbf{sqrt}$ defined above in \Lang{}. Consider the example for
$\mathbf{add}$:
\begin{align*}
  &\mathbf{addfp}: (\num ~\&~ \num) \multimap M[u]\num  \\
  &\mathbf{addfp}~ z \triangleq 
  \lett{y}{\mathbf{add} ~ z }{\rnd~ y} 
\end{align*}
Observe that the type $M[u]$ reflects our interpretation of $\rnd$, which
produces at most unit roundoff ($u$) rounding error when evaluated.  Error
soundness (\Cref{cor:err-sound}) guarantees that 
the function $\mathbf{addfp}$
approximates an infinitely precise computation with relative precision $u$.
Similarly,
for the function  $\mathbf{add}$, we could lift the result to monadic type 
using the return construct ($\ret$), and error soundness would 
guarantee that the function is an exact approximation to an infinitely precise 
computation.
The
functions $\mathbf{mul}$, $\mathbf{div}$, and $\mathbf{sqrt}$ can be similarly
defined, with the following type signatures: 
\begin{alignat*}{2}
  &\mathbf{mulfp} &&: (\num \otimes \num) \multimap M[u]\num\\
  &\mathbf{divfp} &&: (\num \otimes \num) \multimap M[u]\num \\   
  &\mathbf{sqrtfp} &&: ~![0.5]\num \multimap M[u]\num
\end{alignat*}

We can now see how the choice of type signature for our primitive
operations impacts the result of a rounding error analysis.  First, let
$\mathbf{addfp}': (\num \otimes \num) \multimap M[u]\num$ be the program that
rounds the result of the primitive operation $\mathbf{add}' : \num \otimes \num
\multimap \num$, described above.  Then, consider the following \Lang{}
programs\footnote{Recall that the 
$\lett{[y']}{y}{-}$ deconstructor is used to 
sequence values with comonadic type.}:
\begin{align*}
  &\mathbf{fun1}: \num \multimap M[2u]\num  \\
  &\mathbf{fun1} \triangleq  \lambda y.~
  \letm{x}{\rnd ~y}{\mathbf{addfp}~\langle x,x\rangle} \\
    \\  
  &\mathbf{fun2}: ~![2]\num \multimap M[3u]\num  \\
  &\mathbf{fun2} \triangleq \lambda y.~
  \lett{[y']}{y}{\letm{x}{\rnd ~y'}{\mathbf{addfp}'~(x,x)}} 
\end{align*}
The programs $\mathbf{fun1}$ and $\mathbf{fun2}$ both take a value of numeric
type as an input, round this value, and sum the result with itself using
floating-point addition. However, $\mathbf{fun2}$ is $2$-sensitive in its 
argument, while $\mathbf{fun1}$ is only 
$1$-sensitive in its argument. We will see that this is because 
$\mathbf{fun1}$ uses $\mathbf{addfp}: (\num
~\&~ \num) \multimap M[u]\num$ for floating-point addition, while
$\mathbf{fun2}$ uses $\mathbf{addfp}': (\num \otimes \num) \multimap M[u]\num$
for floating-point addition. The consequence of this choice is seen in the 
return types of the function
signatures: the type of $\mathbf{fun1}$ guarantees at most $2u$ roundoff error,
while  the type of $\mathbf{fun2}$ guarantees a looser bound, of at most $3u$
roundoff error. 

To see how this works, consider the corresponding derivations of the typing
judgments. For $\mathbf{fun1}$, we have the following valid derivation: 
\begin{center}
\AXC{}
\LeftLabel{(Var)}
\UIC{$y:_1 R\vdash y: R$}
\LeftLabel{(Rnd)}
\UIC{$y:_1 R\vdash \rnd~ y: M_u R$}
\AXC{[A1]}
\noLine
\UIC{$\vdots$}
\noLine
\UIC{$ x:_{1} R\vdash  \mathbf{add}~ \langle x,x\rangle: R$}
\AXC{[B1]}
\noLine
\UIC{$\vdots$}
\noLine
\UIC{$ y:_1 R\vdash \rnd ~y: R$}
\LeftLabel{(Let)}
\BIC{$ x:_{1} R\vdash  \mathbf{addfp}~ \langle x,x\rangle
  : M_{u} \R$}
\LeftLabel{(MLet)}
\BIC{$1 \cdot y:_1 R \vdash \letm{x}{\rnd ~y}{\mathbf{addfp}
      ~ \langle x,x\rangle}: M_{1 \cdot u+u}\R$}
\LeftLabel{(Abs)}
\UIC{$\emptyset \vdash \mathbf{fun1}: M_{2u} \R$}
\bottomAlignProof
\DisplayProof
\end{center}
In the right branch of the derivation above, we have already seen 
the derivation $[A1]$, showing that the expression 
$\mathbf{add}~ \langle x,x\rangle$ is 
$1$-sensitive in $x$.

For $\mathbf{fun2}$, the derivation is:
\begin{center}
\AXC{}
\LeftLabel{(Var)}
\UIC{$y:_1 R\vdash y: \R$}
\LeftLabel{(Rnd)}
\UIC{$y:_1 R\vdash \rnd~ y: M_u \R$}
\AXC{[A2]}
\noLine
\UIC{$\vdots$}
\noLine
\UIC{$ x:_{2} R\vdash  \mathbf{add}'~ (x,x): R$}
\AXC{[B2]}
\noLine
\UIC{$\vdots$}
\noLine
\UIC{$ y:_1 R\vdash \rnd ~y: \R$}
\LeftLabel{(Let)}
\BIC{$ x:_{2} R\vdash  \mathbf{addfp}'~ (x,x)
  : M_{u} \R$}
\LeftLabel{(MLet)}
\BIC{$2 \cdot y:_1 \num \vdash \letm{x}{\rnd ~y}{\mathbf{addfp}'
 ~ (x,x)}: M_{2 \cdot u+u} \num$}
\LeftLabel{(Abs)}
\UIC{$ \emptyset \vdash \mathbf{fun2}: M_{3u} \num$}
\bottomAlignProof
\DisplayProof
\end{center}

In the right branch of the derivation above, we have already seen 
the derivation $[A2]$ showing that the expression 
$\mathbf{add}'~ (x,x)$ is 
$2$-sensitive in $x$. These derivations illustrate
the importance of using the best possible type 
for primitive operations. 

Now that we have described an instantiation of \Lang{} and 
justified our choice of the types for primitive operations 
in this instance, we can proceed to consider some more detailed examples. 

\subsubsection{Examples}
The examples presented in this section use the actual syntax of an
implementation of \Lang, introduced in \Cref{sec:nfuzz:implementation}. The
implementation closely follow the language syntax presented in
\Cref{fig:typing_rules}, with some additional syntactic sugar, defined below:

\begin{align*}
\text{\lstinline{(x = e; f)}} &\equiv \lett{x}{e}{f} 
  \tag{pure sequencing} \\
\text{\lstinline{(let x = v; f)}} &\equiv \letm{x}{e}{f}
  \tag{monadic sequencing}\\
\text{\lstinline{(let [x] = v; f)}} &\equiv \lett{\boxx{x}}{v}{f} 
  \tag{comonadic sequencing}
\end{align*}

For top-level
programs, we write \lstinline{(function ID args} \{\lstinline{v}\} \lstinline{e)} to
denote the let-binding $\lett{\text{ID}}{v}{e}$, where $v$ is a lambda term with
arguments \lstinline{args}. We write additive and 
multiplicative pairs as \lstinline{$\langle$-,-$\rangle$} and \lstinline{(-,-)}, 
respectively.
Finally, for types, we write \lstinline{M[u]num} to represent monadic types
with a numeric grade \lstinline{u} and we write \lstinline{![s]} to represent
comonadic types with a numeric grade \lstinline{s}.

\paragraph{Example: Fused Multiply-Add}
We warm up with a simple example of a \emph{multiply-add} (MA) operation: given
$x, y, z$, we want to compute $x \cdot y + z$.  The \Lang{} implementation of
\lstinline{MA} is: 

\begin{lstlisting}
// Multiply$\color{Gray}\text{-}$add operation
function MA (x: num, y: num, z: num) { 
  let a = mulfp (x,y); // monadic sequencing of multiplication with rounding
  addfp $\langle$a,z$\rangle$
}
\end{lstlisting}

We can soundly type the program \lstinline{MA} in \Lang{} as 
\begin{center}
{\lstinline{MA : num $\multimap$ num $\multimap$ num $\multimap$ M[2u]num}},
\end{center}
where the index \lstinline{2u} on the
return type indicates that the roundoff error is at most twice the unit
roundoff,  due to the two separate rounding operations in \lstinline{mulfp} and
\lstinline{addfp}. The monadic sequencing \lstinline{let a = mulfp (x,y)} 
allows us to use the result of \lstinline{mulfp (x,y)}---which has a monadic type---
as an argument to \lstinline{addfp}, which accepts pure (non-monadic) numeric
arguments. 

Multiply-add is extremely common in numerical code, and modern architectures
typically support a \emph{fused} multiply-add (FMA) operation.  This operation
performs a multiplication followed by an addition, $x\cdot y+z$, 
as though it were a
single floating-point operation. Consequently, the FMA operation incurs a 
single rounding error instead of two.
The \Lang{} implementation of the FMA operation is:

\begin{lstlisting}
// Fused multiply$\color{Gray}\text{-}$add operation
function FMA (x: num, y: num, z: num) { 
  a = mul (x,y); // multiplication without rounding
  b = add $\langle$a,z$\rangle$; // addition without rounding
  rnd b 
}
\end{lstlisting}

We can soundly type the program \lstinline{MA} in \Lang{} as 
\begin{center}
{\lstinline{FMA : num $\multimap$ num $\multimap$ num $\multimap$ M[u]num}}.
\end{center} 
The index \lstinline{u} on the return type of \lstinline{FMA} is
reflects the  
reduced rounding error when compared to \lstinline{MA}.
 
\paragraph{Example: Polynomial Evaluation}
A standard method for evaluating a polynomial is Horner's scheme, which
rewrites an $n$th-degree polynomial $p(x) = a_0+a_1x+\cdots a_n x^n$ as 
\[ p(x) = a_0 + x(a_1 + x(a_2  + \cdots x(a_{n-1} + ax_n)\cdots)),\]
and computes the result using only $n$ multiplications and $n$ additions.
Using \Lang, we can perform an error analysis on a version of Horner's scheme
that uses the FMA operation to evaluate second-order polynomials of the form
$p(\vec{a},x) = a_2x^2 + a_1x + a_0$ where $x$ and all $a_i$s are non-zero
positive constants.  The implementation \lstinline{Horner2} in \Lang{}: 

\begin{lstlisting}
// Horner's scheme for a second order polynomial
function Horner2  
  (a0: num, a1: num, a2: num, x: ![2]num) {
  let [x1] = x; // comonadic sequencing
  s1 = FMA a2 x1 a1; 
  let z = s1; // monadic sequencing
  FMA z x1 a0 
}
\end{lstlisting}

We can soundly type \lstinline{Horner2} in \Lang{} with the following signature:
\begin{center}
{\lstinline{Horner2 : num $\multimap$ num $\multimap$ num $\multimap$ ![2]num $\multimap$ M[2u]num}}
\end{center}
The type of \lstinline{Horner2} guarantees that at most \lstinline{2u} rounding error is produced by 
the function, measured as relative precision.

As a consequence of the metric interpretation of programs
(\Cref{subsec:interp}), the type of \lstinline{Horner2} 
also ensures bounded sensitivity of the ideal semantics, which corresponds to the polynomial
\[p(\vec{a},x) = a_2x^2
+ a_1x + a_0.\] For any $\vec{a},\vec{u} \in \R^3_{>0}$, and
for any $x, x' \in \R_{>0}$, we can measure the
sensitivity of \lstinline{Horner2} to rounding errors introduced by the inputs:
if $RP(x,x') = q$ and  $RP(a_i,u_i) = r$  for each $i \in {0,1,2}$, then 
\begin{align} 
RP(p(\vec{a},x), p(\vec{u},x')) &\le RP(\vec{a}, \vec{u}) + 2 \cdot RP(x, x') \nonumber \\ 
&= \sum_{i=0}^{2} RP(a_i,u_i) + 2 \cdot RP(x, x') \nonumber \\
&= 3r + 2q \label{eq:horner_sens} 
\end{align} 
The term $2q$ reflects that \lstinline{Horner2} is $2$-sensitive in the
variable $x$. The fact that we take the sum of the RP distances over
the $a_i$'s follows from the metric on the function type
(\Cref{subsec:interp}). Since \Lang{} supports currying (see \Cref{thm:smcc}),
the metric is the same as for the multiplicative (tensor) product. 

The interaction between the sensitivity of the function under its ideal
semantics and the local rounding error incurred in the body of the 
function \lstinline{Horner2} over exact
inputs is illustrated by the function \lstinline{Horner2_with_error},
which takes arguments that have rounding error:

\begin{lstlisting}
// Horner's scheme for a second order polynomial with input error
function Horner2_with_error   
  (a0: M[u]num, a1: M[u]num, a2: M[u]num, x: ![2](M[u]num)) { 
  let [x1] = x; // comonadic sequencing
  let x' = x1; // monadic sequencing needed for FMA
  let a0' = a0; 
  let a1' = a1;
  let a2' = a2; 
  Horner2 a0' a1' a2' x'
}
\end{lstlisting}

We can soundly type \lstinline{Horner2_with_error} in \Lang{} with the following signature:
\begin{center}
{\lstinline{Horner2 : M[u]num $\multimap$ M[u]num $\multimap$ M[u]num $\multimap$ ![2](M[u]num) $\multimap$ M[7u]num}}
\end{center}
From the type, we see that \Lang{} guarantees that 
\lstinline{Horner2_with_error} produces at most  \lstinline{7u} rounding error: from \cref{eq:horner_sens}
it follows that the sensitivity of the function contributes \lstinline{5u},
and rounding error incurred by evaluating \lstinline{Horner2} over exact inputs
contributes the remaining \lstinline{2u}.

\Lang{} is a higher-order language,
and it is possible to implement Horner's scheme for polynomials of fixed
degree $n$ by first writing a $n$-ary monadic fold function---a higher-order
function---and then applying it to a product of $n$ coefficients $a_i$. The type
system of \Lang{} is capable of expressing the fold function along with its
roundoff error. For example, \lstinline{fold3} is a $2$-ary monadic fold function:

\begin{lstlisting}
// A fold$\color{Gray}\text{-}$like function over lists of length 3
function fold3   
  (a : num otimes num otimes num, g : ![2](num mapsto num mapsto M[u]num)) {
  let [g'] = g; 
  let (a0, b) = a;
  let (a1, a2)= b;
  s = g' a2 a1;
  let z = s;
  g' z a0
}
\end{lstlisting}

The type of \lstinline{fold3} is:
\begin{center}
{\lstinline{fold3 :  num otimes num otimes num mapsto M[u]num mapsto ![2](num mapsto num mapsto M[u]num) mapsto M[2u]num}}.
\end{center}
We can use this fold-like function to implement \lstinline{Horner2} like so: 

\begin{lstlisting}
// Horner's scheme for a second order polynomial using a fold$\color{Gray}\text{-}$like function
function Horner2  
  (a: num otimes num otimes num, x: ![2]num) {
  let [x1] = x; // comonadic sequencing
  g = fun (a: num) {fun (b: num) {FMA a x1 b}};
  fold3 a [g$\color{Cerulean}{\{2\}}$]
}
\end{lstlisting}

We will see why the annotation $\color{Cerulean}{\{2\}}$ is required in 
\Cref{sec:nfuzz:implementation}. 

\subsection{Floating-Point Conditionals} 
In the presence of rounding error, conditional branches present a particular
challenge: while the ideal execution may follow one branch, the floating-point
execution may follow another.  In \Lang, we can perform rounding error analysis
on programs with conditional expressions  (case analysis) when executions
\emph{take the same branch}, for instance, when the data in the conditional is
a boolean expression that does not  have floating-point error because it is
some kind of parameter to the system, or some exactly-represented value that is
computed only from other exactly-represented values.  This is a restriction of
the \emph{Fuzz}-style type system of \Lang, which is not able to compare the
difference between two different branches since the main metatheoretic
guarantee only serves as a sensitivity analysis describing how a single program
behaves on two different inputs.  In \Lang, the rounding error of a program
with a case analysis is then a measure of the maximum rounding error that
occurs in any single branch.

As an example of performing rounding error analysis in \Lang~ on functions with
conditionals, we first add the primitive operation $\mathbf{is\_pos}: !_\infty
\R \multimap \mathbb{B}$, which tests if a real number is greater than zero.
The sensitivity on the argument to $\mathbf{is\_pos}$ is necessarily  infinity,
since an arbitrarily small  change in the argument to could lead to an
infinitely large change in the boolean output. Using $\mathbf{is\_pos}$ we
define the function $\mathbf{case1}$, which computes the square of a negative
number, or returns the value 0 (lifted to monadic type):
\begin{align*}
&\mathbf{case1} : ~ !_\infty \R \multimap M_u\R \\
&\mathbf{case1}~ x \triangleq ~\lett{\boxx{c}}{\mathbf{is\_pos}}{x} \\
& \qquad \qquad \ \mathbf{if } ~c ~\mathbf{then}~ 
\mathbf{mulfp} (x,x) ~\mathbf{ else } ~ \ret ~0.
\end{align*} 
From the signature of $\mathbf{case1}$, we see that the relative precision (RP)
is unit roundoff, due to the single rounding in $\mathbf{mulfp} ~(x,x) $. 

\section{Implementation}\label{sec:nfuzz:implementation}

We have developed a prototype  cker for \Lang{} in OCaml, based on the
sensitivity-inference algorithm due to \citet{Amorim:2014:typecheck}
developed for a dependently-typed extension of \emph{Fuzz} \citep{Gaboardi:2013:dfuzz}.
Given an environment $\Gamma$, a term $e$, and a type $\sigma$, the goal of
type checking is to determine if a derivation $\Gamma \vdash e : \sigma$
exists.  For sensitivity type systems, type checking and type inference can be
achieved by solving the sensitivity inference problem. The sensitivity
inference problem is defined using \emph{context skeletons} $\Gamma^{\bullet}$
which are partial maps from variables to \Lang{} types. If we denote by
$\overline{\Gamma}$ the context $\Gamma$ with all sensitivity assignments
removed, then the sensitivity inference problem is defined 
\citep[Definition 5]{Amorim:2014:typecheck} as follows.

\begin{definition}[Sensitivity Inference]
Given a skeleton $\Gamma^{\bullet}$ and a term $e$, the \emph{sensitivity
inference problem} computes an environment $\Gamma$ and a type $\sigma$ with a
derivation $\Gamma \vdash e : \sigma$ such that $\Gamma^{\bullet} =
\overline{\Gamma}$. 
\end{definition}

\begin{figure}[htbp]
\begin{center}
\AXC{}
\LeftLabel{(Var)}
\UIC{$\Gamma^{\bullet}, x: \sigma; x \Rightarrow \Gamma^0, x :_1\sigma;\sigma$}
\bottomAlignProof
\DisplayProof
\hskip 2em
\AXC{}
\LeftLabel{(Const)}
\UIC{$\Gamma^{\bullet};k \in R \Rightarrow \Gamma^0; \num$}
\bottomAlignProof
\DisplayProof
\vskip 2em

\AXC{}
\LeftLabel{(Unit)}
\UIC{$\Gamma^{\bullet};e \in \unit \Rightarrow \Gamma^0; \unit$}
\bottomAlignProof
\DisplayProof
\hskip 2em
\AXC{$\Gamma^{\bullet}; e \Rightarrow \Gamma;  \tau$}
\noLine \UIC{$\Gamma^{\bullet},x ; f \Rightarrow \Theta, x:_{s} \tau; \sigma$}
\AXC{}
\noLine \UIC{$s > 0$}
\LeftLabel{(Let)}
\BIC{$\Gamma^{\bullet}; \lett{x}{e}{f}  \Rightarrow s * \Gamma + \Theta; \sigma$}
\bottomAlignProof
\DisplayProof
\vskip 2em

\AXC{$\Gamma^{\bullet}, x : \sigma; e \Rightarrow \Gamma, x:_{s}\sigma; \tau$}
\noLine \UIC{$s \ge 1$}
\LeftLabel{($\multimap$ I)}
\UnaryInfC{$\Gamma^{\bullet}; \lambda (x : \textcolor{blue}{\sigma}).e \Rightarrow \Gamma; \sigma \multimap \tau $}
\bottomAlignProof
\DisplayProof
\hskip 2em
\AXC{$\Gamma^{\bullet}; v \Rightarrow \Gamma; \sigma \multimap \tau$}
\noLine \UIC{$\Gamma^{\bullet}; w \Rightarrow \Delta;\sigma' $}
\noLine \UIC{ $\sigma' \sqsubseteq \sigma$}
\LeftLabel{($\multimap$ E)}
\UnaryInfC{$\Gamma^{\bullet}; v w \Rightarrow \Gamma + \Delta; \tau $}
\bottomAlignProof
\DisplayProof
\vskip 2em

\AXC{$\Gamma^{\bullet}; v \Rightarrow \Gamma_1;  \sigma$}
\AXC{}
\noLine \UIC{$\Gamma^{\bullet}; w \Rightarrow \Gamma_2; \tau$}
\LeftLabel{($\tand$ I)}
\BIC{$\Gamma^{\bullet}; \langle v, w \rangle \Rightarrow \max({\Gamma_1, \Gamma_2}); \sigma \tand \tau $}
\bottomAlignProof
\DisplayProof
\hskip 2em
\AXC{$\Gamma^{\bullet};  v \Rightarrow \Gamma; \tau_1 \tand \tau_2$}
\LeftLabel{($\tand$ E)}
\UIC{$\Gamma^{\bullet}; {\pi}_i \ v \Rightarrow \Gamma; \tau_i$}
\bottomAlignProof
\DisplayProof
\vskip 2em

\AXC{}
\noLine \UIC{$\Gamma^{\bullet};v  \Rightarrow \Gamma_1; \sigma $}
\noLine \UIC{$\Gamma^{\bullet}; w \Rightarrow \Gamma_2; \tau$}
\LeftLabel{($\tensor$ I)}
\UIC{$\Gamma;(v,w) \Rightarrow \Gamma_1 + \Gamma_2; \sigma \tensor \tau$}
\bottomAlignProof
\DisplayProof
\hskip 2em
\AXC{$\Gamma^{\bullet}; v  \Rightarrow \Delta; \sigma \tensor \tau$ }
\noLine \UIC{$\Gamma^{\bullet},x: \sigma,y:\tau; e \Rightarrow \Gamma, x:_{s_1}\sigma,  y:_{s_2} \tau; \rho $}
\LeftLabel{($\tensor$ E)}
\UIC{$ \Gamma^{\bullet}; \lett{(x,y)}{v}{e}  \Rightarrow \Gamma + \max(s_1,s_2)* \Delta; \rho$}
\bottomAlignProof
\DisplayProof
\vskip 2em

\AXC{$\Gamma^{\bullet}; v \Rightarrow \Gamma; \sigma$ }
\LeftLabel{($+$ I)}
\UIC{$\Gamma^{\bullet}; \inl \ v \Rightarrow \Gamma; \sigma + \tau$}
\bottomAlignProof
\DisplayProof
\hskip 2em
\AXC{$\Gamma^{\bullet}, x; e \Rightarrow \Theta,  x:_s \sigma; \rho_1$}
\noLine \UIC{$\Gamma^{\bullet},y ; f \Rightarrow \Theta,  y:_s \tau; \rho_2$}
\AXC{}
\noLine \UIC{$\Gamma^{\bullet}; v \Rightarrow \Gamma;  \sigma+\tau$}
\LeftLabel{($+$ E)}
\BIC{$\Gamma^{\bullet}; \mathbf{case} \ v \ \mathbf{of} \ (\inl x.e \ | \ \inr y.f) 
\Rightarrow \bar{s} * \Gamma + \Theta; \max(\rho_1,\rho_2)$}
\bottomAlignProof
\DisplayProof
\vskip 2em

\AXC{$\Gamma^{\bullet}; v \Rightarrow \Gamma; \sigma$ }
\LeftLabel{($!$ I)}
\UIC{$\Gamma^{\bullet}; [v\textcolor{blue}{\{s\}}] \Rightarrow s * \Gamma; {!_s \sigma}$}
\bottomAlignProof
\DisplayProof
\hskip 0.5em
\AXC{$\Gamma^{\bullet}; v \Rightarrow \Gamma; {!_s \sigma}$}
\noLine \UIC{$\Gamma^{\bullet}; x \Rightarrow \Theta, x:_{t*s} \sigma ; e : \tau$}
\LeftLabel{($!$ E)}
\UIC{$\Gamma^{\bullet} ; \lett{\boxx{x}}{v}{e}\Rightarrow t*\Gamma + \Theta; \tau$}
\bottomAlignProof
\DisplayProof
\vskip 2em

\AXC{$\Gamma^{\bullet}; v \Rightarrow \Gamma; \tau$}
\LeftLabel{(Ret)}
\UIC{$\Gamma^{\bullet}; \ret v \Rightarrow \Gamma; M_0 \tau$}
\bottomAlignProof
\DisplayProof
\hskip 2em
\AXC{$\Gamma^{\bullet}; v \Rightarrow \Gamma; \num$}
\LeftLabel{(Rnd)}
\UIC{$\Gamma^{\bullet}; \rnd ~v \Rightarrow \Gamma; M_q \num$}
\bottomAlignProof
\DisplayProof
\vskip 2em

\AXC{$\Gamma^{\bullet} ; v \Rightarrow \Gamma; \sigma$}
\noLine \UIC{$\{ \mathbf{op} :\sigma \lin \num \} \in {\Sigma}$}
\LeftLabel{(Op)}
\UIC{$\Gamma^{\bullet}; \mathbf{op}(v) \Rightarrow \Gamma; \num$}
\bottomAlignProof
\DisplayProof
\hskip 2em
\AXC{$\Gamma^{\bullet}; v  \Rightarrow \Gamma; M_r \sigma$}
\noLine \UIC{$\Gamma^{\bullet}, x; f \Rightarrow \Theta, x:_{s} \sigma;  M_{q} \tau$}
\LeftLabel{($M_u$ E)}
\UIC{$\Gamma^{\bullet}; \letm{x}{v}{f} \Rightarrow s * \Gamma + \Theta;  M_{s*r+q} \tau$}
\bottomAlignProof
\DisplayProof

\begin{equation*}
\bar{s} \triangleq
\begin{cases}
    s &  s > 0 \\ \epsilon &  \text{otherwise} 
\end{cases}
\end{equation*}

\end{center}
    \caption{Algorithmic rules for \Lang.  $\Gamma^0$ denotes  
    an environment where all variables have sensitivity zero. 
    The supertype ($\max$) and subtype ($\sqsubseteq$) relations on 
    \Lang{} types are given in \Cref{fig:max_ty} and \Cref{fig:sub_ty}, 
    respectively.}
    \label{fig:alg_rules}
\end{figure}

\begin{figure}[htbp]

\begin{align*}
\max(\sigma_1 \tand \tau_1, \sigma_2 \tand \tau_2) &\triangleq
\max(\sigma_1, \sigma_2) \tand \max(\tau_1,\tau_2) \\ 
\max(\sigma_1 \otimes \tau_1, \sigma_2 \otimes \tau_2) &\triangleq
\max(\sigma_1, \sigma_2) \otimes \max(\tau_1,\tau_2) \\ 
\max(\sigma_1 + \tau_1, \sigma_2 + \tau_2) &\triangleq
\max(\sigma_1, \sigma_2) + \max(\tau_1,\tau_2) \\
\max(\text{M}_{s_1}\tau_1,\text{M}_{s_2}\tau_2) &\triangleq
\text{M}_{\max(s_1,s_2)}\max(\tau_1,\tau_2) \\
\max(!_{s_1}\tau_1,!_{s_2}\tau_2) &\triangleq
\ !_{\min(s_1,s_2)}\max(\tau_1,\tau_2) \\
\min(\sigma_1 \tand \tau_1, \sigma_2 \tand \tau_2) &\triangleq
\min(\sigma_1, \sigma_2) \tand \min(\tau_1,\tau_2) \\ 
\min(\sigma_1 \otimes \tau_1, \sigma_2 \otimes \tau_2) &\triangleq
\min(\sigma_1, \sigma_2) \otimes \min(\tau_1,\tau_2) \\ 
\min(\sigma_1 + \tau_1, \sigma_2 + \tau_2) &\triangleq
\min(\sigma_1, \sigma_2) + \min(\tau_1,\tau_2) \\
\min(\text{M}_{s_1}\tau_1,\text{M}_{s_2}\tau_2) &\triangleq
\text{M}_{\min(s_1,s_2)}\min(\tau_1,\tau_2) \\
\min(!_{s_1}\tau_1,!_{s_2}\tau_2) &\triangleq
\ !_{\max(s_1,s_2)}\min(\tau_1,\tau_2) \\
\max(\sigma_1 \multimap \tau_1, \sigma_2 \multimap \tau_2) &\triangleq
\min(\sigma_1, \sigma_2) \multimap \max(\tau_1,\tau_2) \\
\min(\sigma_1 \multimap \tau_1, \sigma_2 \multimap \tau_2) &\triangleq
\max(\sigma_1, \sigma_2) \multimap \min(\tau_1,\tau_2) 
\end{align*}
    \caption{The $\max$ (supertype) relation on \Lang{} types, with $s_1,s_2 \in \NNR \cup \{\infty\}$. }
    \label{fig:max_ty}
\end{figure}

\begin{figure}[htbp]
\begin{center}
\AXC{$\sigma \sqsubseteq \sigma'$}
\AXC{$\tau \sqsubseteq \tau'$}
\RightLabel{$\sqsubseteq .\tand$}
\BIC{$ \sigma \tand \tau \sqsubseteq \sigma' \tand \tau' $}
\DisplayProof
\hskip 2.5em
\AXC{$\sigma \sqsubseteq \sigma'$}
\AXC{$\tau \sqsubseteq \tau'$}
\RightLabel{$\sqsubseteq .\tensor$}
\BIC{$ \sigma \tensor \tau \sqsubseteq \sigma' \tensor \tau' $}
\DisplayProof
\hskip 2.5em
\AXC{$\sigma \sqsubseteq \sigma'$}
\AXC{$\tau \sqsubseteq \tau'$}
\RightLabel{$\sqsubseteq .+$}
\BIC{$ \sigma + \tau \sqsubseteq \sigma' + \tau' $}
\DisplayProof
\vskip 1em

\AXC{$\sigma' \sqsubseteq \sigma$}
\AXC{$\tau \sqsubseteq \tau' $}
\RightLabel{$\sqsubseteq .\multimap$}
\BIC{$ \sigma \multimap \tau \sqsubseteq \sigma' \multimap \tau' $}
\DisplayProof
\hskip 2.5em
\AXC{$\sigma \sqsubseteq \sigma'$}
\AXC{$u \le u'$}
\RightLabel{$\sqsubseteq .$M}
\BIC{$ \text{M}_u \sigma \sqsubseteq \text{M}_{u'} \sigma' $}
\DisplayProof
\hskip 2.5em
\AXC{$\sigma \sqsubseteq \sigma'$}
\AXC{$s \le s'$}
\RightLabel{$\sqsubseteq .!$}
\BIC{$ \bang{s'} \sigma \sqsubseteq !_{s} \sigma' $}
\DisplayProof
\vskip 1.5em

\AXC{}
\RightLabel{$\sqsubseteq$-refl}
\UIC{$\sigma \ \sqsubseteq \sigma$ }
\DisplayProof

\end{center}
    \caption{\Lang{} subtyping relation, with $s,s',u,u' \in \NNR \cup \{\infty\}$. }
    \label{fig:sub_ty}
\end{figure}

We solve the sensitivity inference problem using the algorithm given in
\Cref{fig:alg_rules}. 

Given a term $e$ and a skeleton environment $\Gamma^{\bullet}$, the algorithm
produces an environment $\Gamma^{\bullet}$ with sensitivity information and a
type $\sigma$. Calls to the algorithm are written as $\Gamma^{\bullet}; e
\Rightarrow \Delta; \sigma$.  

Every step of the algorithm corresponds to a derivation in \Lang.  The syntax
of the algorithmic rules differs from the syntax of \Lang{}
(\Cref{fig:typing_rules}) in two places: the argument of lambda terms require
type annotations $(x : \textcolor{blue}{\sigma})$, and the box constructor
requires a sensitivity annotation $([v\textcolor{blue}{\{s\}}])$. The
algorithmic rules for these constructs are as follows:
\begin{center}
\AXC{$\Gamma^{\bullet}; v \Rightarrow \Gamma; \sigma$ }
\RightLabel{($!$ I)}
\UIC{$\Gamma^{\bullet}; [v\textcolor{blue}{\{s\}}] \Rightarrow s * \Gamma; {!_s \sigma}$}
\bottomAlignProof
\DisplayProof
\qquad\qquad
\AXC{$\Gamma^{\bullet}, x : \sigma; e \Rightarrow \Gamma, x:_{s}\sigma; \tau$}
\AXC{$s \ge 1$}
\RightLabel{($\multimap$ I)}
\BIC{$\Gamma^{\bullet}; \lambda (x : \textcolor{blue}{\sigma}).e \Rightarrow \Gamma; \sigma \multimap \tau$}
\bottomAlignProof
\DisplayProof
\end{center}

Following \citet{Amorim:2014:typecheck} the algorithm uses a \emph{bottom-up}
rather than a \emph{top-down} approach.  In the \emph{top-down} approach, given
a term $e$, type $\sigma$, and environment $\Gamma$, the environment is split
and used recursively to type the subterms of the expression $e$.  The
\emph{bottom-up} approach avoids splitting the environment $\Gamma$ by
calculating the minimal sensitivities and roundoff errors required to type each
subexpression.  The sensitivities and errors of each subexpression are then
combined and compared to $\Gamma$ and $\sigma$ using subtyping. The  subtyping
relation in \Lang{} is defined in \Cref{fig:sub_ty} and captures the fact that a
{$k$-sensitive} function is also {$k'$-sensitive} for $k \le k'$. Importantly,
subtyping is admissible in \Lang.
\begin{theorem}\label{thm:sub_type}
The typing judgment $\Gamma \vdash e : \tau'$ is derivable given a derivation
$\Gamma \vdash e : \tau$ and a type $\tau'$ such that $\tau \sqsubseteq \tau'$.
\end{theorem}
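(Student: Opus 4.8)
The plan is to argue by induction on the derivation of $\Gamma \vdash e : \tau$, with a case analysis on the last typing rule of \Cref{fig:typing_rules}. The workhorse is that the subtyping relation of \Cref{fig:sub_ty} is essentially syntax-directed in its left argument: apart from the reflexivity axiom, the outermost constructor of $\tau$ determines which rule derives $\tau \sqsubseteq \tau'$, so $\tau'$ must share that constructor and the judgment reduces to component-wise subtyping obligations. I would first record these \emph{inversion} facts: if $\sigma \tensor \tau \sqsubseteq \rho$ then $\rho = \sigma' \tensor \tau'$ with $\sigma \sqsubseteq \sigma'$ and $\tau \sqsubseteq \tau'$ (and similarly for $\tand$ and $+$); if $\sigma \multimap \tau \sqsubseteq \rho$ then $\rho = \sigma' \multimap \tau'$ with $\sigma' \sqsubseteq \sigma$ (contravariant domain) and $\tau \sqsubseteq \tau'$; if $\monad{q}\sigma \sqsubseteq \rho$ then $\rho = \monad{q'}\sigma'$ with $q \le q'$ and $\sigma \sqsubseteq \sigma'$; if $\bang{s}\sigma \sqsubseteq \rho$ then $\rho = \bang{s'}\sigma'$ with $s' \le s$ and $\sigma \sqsubseteq \sigma'$; and $\num \sqsubseteq \rho$, $\unit \sqsubseteq \rho$ force $\rho = \num$, $\rho = \unit$. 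I would also note that $\sqsubseteq$ is reflexive and transitive, which is needed when a premise's type is built from several subtyped pieces.

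With these facts in hand, most inductive cases are mechanical: invert $\tau \sqsubseteq \tau'$ to expose the component obligations, apply the induction hypothesis to the premise carrying the relevant component, and re-apply the same typing rule. Widening a monadic grade needs nothing new, since the rule (MSub) already permits it, so for $\monad{q}$-types only the carrier is changed by the induction hypothesis. In rules whose conclusion environment is a scaling $s \cdot \Gamma$ — chiefly ($!$ I) — the reconstructed derivation yields $s' \cdot \Gamma$ for a possibly smaller $s'$, and I would close the gap with the weakening lemma (\Cref{thm:weakening}): shrinking a grade in the conclusion environment only turns it into a subenvironment.

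The delicate part is the cases touching contravariant positions — ($\multimap$ I) and ($!$ E) — since there the induction hypothesis must change the \emph{declared} type of a bound variable in a premise's context to a subtype, not merely change a result type. I would therefore prove, simultaneously with the main statement, a companion \emph{context narrowing} lemma: if $\Gamma, x :_s \sigma \vdash e : \tau$ and $\sigma' \sqsubseteq \sigma$ then $\Gamma, x :_s \sigma' \vdash e : \tau$. The two claims are mutually recursive precisely at the (Var) rule — narrowing $x$'s binding produces a derivation of $x : \sigma'$, which the conclusion-subtyping statement must then convert back to the declared type — so the whole argument has to be organized as a single induction on the height of the typing derivation proving both claims at once. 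I expect the (Var) case, where the conclusion type is rigidly fixed by the context and thus cannot simply be re-derived, to be the main obstacle; it is handled either by treating the variable rule as closed under subtyping from the outset (an admissible derived rule established by the same induction) or by tracking the subtyping obligation at a variable through the contravariant position at which it was introduced. Everything else should be a routine re-application of the relevant rule after inverting $\sqsubseteq$.
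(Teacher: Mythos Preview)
Your outline parallels the paper's --- induction on the typing derivation, with ($!$ I) closed via weakening --- but is more explicit: you correctly isolate the narrowing principle needed for the contravariant binders in ($\multimap$ I) and ($!$ E), which the paper does not mention. The paper details only (Var) and ($!$ I), dispatching (Var) by appealing to \Cref{thm:weakening} together with the claim that ``the subenvironment relation is preserved by subtyping; i.e., $x:_s \tau' \sqsubseteq x:_s \tau$.''

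The real gap is exactly where you locate it, at (Var), and your proposed mutual induction with narrowing does not close it: the two statements call each other at height~$1$ with no decreasing measure. The obstruction is intrinsic to the rules of \Cref{fig:typing_rules}: the only rule whose subject is a bare variable is (Var), which pins the conclusion type to the context binding, and the sole subsumption rule (MSub) fires only at monadic type. Hence for a non-monadic instance such as $\bang{2}\num \sqsubseteq \bang{1}\num$ there is simply \emph{no} derivation of $x :_1 \bang{2}\num \vdash x : \bang{1}\num$, so the statement fails there outright --- not for lack of the right induction scheme. The paper's (Var) step hits the same wall: by \Cref{def:subenv} the subenvironment relation fixes types and varies only sensitivities, so the asserted $x:_s \tau' \sqsubseteq x:_s \tau$ is not an instance of it, and strengthening weakening to allow type changes is precisely the narrowing lemma whose (Var) case is the obstacle.
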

\begin{proof}
The proof follows by induction on the derivation $\Gamma \vdash e : \tau$. Most
cases are immediate, but some require weakening (\Cref{thm:weakening}). We
detail two examples here.  
\begin{description}
\item[Case (Var).] We are required to show $\Gamma,x:_s \tau,\Delta \vdash x :
\tau'$ given $\tau \sqsubseteq \tau'$. We can conclude by weakening
(\Cref{thm:weakening}) and the fact that the subenvironment relation is
preserved by subtyping; i.e., $x:_s \tau' \sqsubseteq x:_s \tau$.  %
\item[Case ($!$ I).] We are required to show 
$s * \Gamma \vdash [v] : \bang{s'}\sigma'$ for some $s'\le s$ and $\sigma'
\sqsubseteq \sigma$. By the induction hypothesis we have $\Gamma \vdash v :
\sigma'$, and by the box introduction rule ($!$ I) we have $s' * \Gamma \vdash
[v] : \bang{s'} \sigma'$. Because $s * \Gamma \sqsubseteq s' * \Gamma$ for $s'
\le s$ we can conclude by weakening.  \end{description}
\end{proof}

The algorithmic rules presented in \Cref{fig:alg_rules} define a \emph{sound}
type checking algorithm for \Lang: %
\begin{theorem}[Algorithmic Soundness]\label{thm:algsound}
If $ \ \Gamma^{\bullet};e \Rightarrow \Gamma; \sigma$ then 
there exists a derivation $\Gamma \vdash e : \sigma$. 
\end{theorem}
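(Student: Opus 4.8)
The plan is to prove the theorem by induction on the derivation of the algorithmic judgment $\Gamma^{\bullet}; e \Rightarrow \Gamma; \sigma$. Because the algorithmic rules of \Cref{fig:alg_rules} are syntax-directed, this is equivalently an induction on the structure of $e$: every algorithmic rule is mirrored by exactly one declarative rule of \Cref{fig:typing_rules}, and in each case the work is to apply the induction hypothesis to the algorithmic premises --- obtaining declarative derivations for the subterms --- and then reassemble a declarative derivation for $e$ with the matching declarative rule, inserting applications of Weakening (\Cref{thm:weakening}) and admissibility of subtyping (\Cref{thm:sub_type}) wherever the environment or type the algorithm returns differs from what the declarative rule demands. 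Throughout, the proof works with the annotated syntax used by the algorithm (type annotations on $\lambda$-binders, sensitivity annotations on boxes) and erases these annotations at the end.

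First I would dispatch the base cases (Var), (Const), and (Unit): the algorithm returns the environment $\Gamma^{0}$ with all sensitivities $0$ --- extended by a binding $x:_1\sigma$ in the variable case --- which is exactly the shape the declarative base rules accept once Weakening is used to reintroduce the zero-sensitivity bindings those rules permit but do not require. The structural cases then split into three kinds of reconciliation. (1) Rules whose declarative form needs two subderivations over a \emph{single} shared environment but whose algorithmic form merges two differently-inferred environments --- the paradigm is (\& I), which returns $\max(\Gamma_1,\Gamma_2)$: since a pointwise $\max$ has pointwise-larger sensitivities, $\max(\Gamma_1,\Gamma_2)$ is a subenvironment (\Cref{def:subenv}) of each $\Gamma_i$, so Weakening moves both subderivations to the common environment; the ($\otimes$ I) and ($\multimap$ E) cases, by contrast, sum their subenvironments just as the declarative rules do and need no repair. (2) Rules that scale and sum environments --- ($\otimes$ E), ($!$ E), ($+$ E), (Let), ($M_u$ E) --- where the algorithmic output $s * \Gamma + \Theta$ (or $\bar{s} * \Gamma + \Theta$) must be shown to coincide, up to the subenvironment relation and the extended-arithmetic conventions on $0$ and $\infty$, with the environment the declarative rule builds once its own scaling and summation are unfolded; here the definition of $\bar{s}$ is used to recover the strict-positivity side condition $s>0$ that the declarative (Let) and ($+$ E) rules impose. (3) On-the-fly subtyping: in ($\multimap$ E) the algorithm checks $\sigma' \sqsubseteq \sigma$ for the argument, so \Cref{thm:sub_type} coerces the argument's declarative derivation from $\sigma'$ to $\sigma$ before ($\multimap$ E) is applied, and the monadic-grade coercion implicit in the output of (Rnd) is discharged analogously via (MSub). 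The supertype/subtype clauses of \Cref{fig:max_ty} and \Cref{fig:sub_ty} on $\multimap$, $\otimes$, \&, $!$, and $M$ are exactly what make the required $\sqsubseteq$-facts and subenvironment inclusions hold.

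I expect the main obstacle to be kind (2): checking, rule by rule, that the inferred environment lies on the correct side of the subenvironment relation so that Weakening actually applies --- Weakening only licenses passing to an environment with \emph{larger} sensitivities, so the direction of every context scaling and sum has to be verified --- and that the monadic grade $s*r+q$ returned by ($M_u$ E) is precisely the grade the declarative (MLet) rule assigns. This is bookkeeping rather than a conceptual difficulty, but it is where all the side conditions (positivity of scaling factors, the conventions $0 \cdot \infty = \infty \cdot 0 = 0$, and the interplay of context scaling with the box and monadic constructors) must be discharged simultaneously; a secondary subtlety is matching the inferred sensitivity of a $\lambda$-bound or boxed variable against the sensitivity fixed by the declarative ($\multimap$ I) and ($!$ I) rules, which relies on the algorithmic side conditions together with the subtyping clauses just mentioned.
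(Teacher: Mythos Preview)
Your proposal is correct and matches the paper's approach: induction on the algorithmic derivation, with Weakening (\Cref{thm:weakening}) and admissibility of subtyping (\Cref{thm:sub_type}) doing the reconciliation work. The paper singles out exactly the cases you flag as non-immediate --- ($\multimap$ E), ($\&$ I), ($\otimes$ E), and ($+$ E) --- and treats them the same way. One small point you leave implicit: in ($+$ E) the algorithm returns the \emph{type} $\max(\rho_1,\rho_2)$, so beyond the environment-scaling bookkeeping of your kind (2) you also need subtyping on the output (both $\rho_i \sqsubseteq \max(\rho_1,\rho_2)$) to move each branch's derivation to the common type before applying the declarative ($+$ E); this is really an instance of your kind (3), and the paper makes that step explicit.
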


\begin{proof}
By induction on the algorithmic derivations, we see that every step of the
algorithm corresponds to a derivation in \Lang. Many cases are immediate, but
those that use subtyping, supertyping, or subenvironments are not; we detail
those here. 

\begin{description}
\item[Case ($\multimap$ E).] Applying subtyping (\Cref{thm:sub_type}) to the
induction hypothesis we have $\Delta \vdash w : \sigma$. We conclude by the
($\multimap$ E) rule.  %
\item[Case ($\tand$ I).] We define the $\max$
relation on any two subenvironments $\Gamma_1$ and $\Gamma_2$ so that
$\max(\Gamma_1,\Gamma_2) \sqsubseteq \Gamma_1$ and $\max(\Gamma_1,\Gamma_2)
\sqsubseteq \Gamma_2$. Let us denote $\max(\Gamma_1,\Gamma_2)$ by the
environment $\Delta$. By the induction hypothesis and weakening
(\Cref{thm:weakening}) we have  $\Delta \vdash v : \sigma $ and 
$\Delta \vdash w : \tau$. We conclude by the ($\tand$ I) rule.
\item[Case ({$\tensor$ E}).] 
Let us denote by $\Theta$ the environment $\Gamma + {\max}(s_1,s_2) * \Delta$
and by $s$ the sensitivity ${\max}(s_1,s_2)$. By the induction hypothesis and
weakening (\Cref{thm:weakening}), we have \newline ${\Gamma, x:_{s}\sigma,
y:_{s}\tau \vdash e : \rho}$ and we can conclude by the ($\otimes$ E) rule.
\item[Case (+ E).] 
The proof relies on the fact that, given a $\rho$ such that $\rho =
\max(\rho_1,\rho_2)$, both $\rho_1$ and $\rho_2$ are subtypes of $\rho$. Using
this fact, and by subtyping (\Cref{thm:sub_type}) and the induction hypothesis,
we have $\Theta, x:_s \sigma \vdash e : \rho$  and $\Theta, y:_s \tau \vdash f
: \rho$. If $s >0$, we can can conclude directly by the ${\text{(+ E)}}$ rule.
Otherwise, we first apply weakening (\Cref{thm:weakening}) and then conclude by
the ${\text{(+ E)}}$ rule.  \end{description}
\end{proof}

\subsection{Evaluation}
In order to serve as a practical tool, our type checker must infer useful error
bounds within a reasonable amount of time.  Our empirical evaluation therefore
focuses on measuring two key properties: tightness of the inferred error bounds
and performance.  To this end, our evaluation includes a comparison in terms of
relative error and performance to two popular tools that {soundly} and
automatically bound relative error: {FPTaylor} \citep{FPTaylor} and
{Gappa} \citep{GAPPA}.  Although Daisy \citep{DAISY} and Rosa \citep{Rosa2} also
compute relative error bounds, they do not compute error bounds for the
directed rounding modes, and our instantiation of \Lang{} requires round towards
$+\infty$ (see \Cref{sec:nfuzz:examples}).  For our comparison to Gappa and FPTaylor,
we use benchmarks from FPBench \citep{FPBench}, which is the standard set of
benchmarks used in the domain; we also include the Horner scheme discussed in
\Cref{sec:nfuzz:examples}. There are limitations, summarized below, to the arithmetic
operations that the instantiation of \Lang{} used in our type checker can handle,
so we are only able to evaluate a subset of the FPBench benchmarks.  Even so,
larger examples with more than 50 floating-point operations are intractable for
most tools \citep{SATIRE}, including FPTaylor and Gappa, and are not part of
FPBench. Our evaluation therefore includes larger examples with well-known
relative error bounds that we compare against.  Finally, we used our
type checker to analyze the rounding error of four floating-point conditionals.

Our experiments were performed on a MacBook with a 1.4 GHz processor and 8 GB
of memory. Relative error bounds are derived from the relative precision
computed by \Lang{} using \Cref{eq:conv}.

\subsubsection{Limitations of \Lang} \label{sec:limit}
Soundness of the error bounds inferred by our type checker 
is guaranteed by \Cref{cor:err-sound} and the instantiation of
\Lang{} described in \Cref{sec:nfuzz:examples}. 
This instantiation imposes the following limitations on the benchmarks we can
consider in our evaluation. First, only the operations $+$, $*$, $/$, and
{sqrt} are supported by our instantiation, so we can't use benchmarks with
subtraction or transcendental functions. Second, all constants and variables
must be strictly positive numbers, and the rounding mode must be fixed as round
towards $+\infty$. These limitations follow from the fact that the RP metric
(\Cref{def:rp}) is only well-defined for non-zero values of the same sign. We
leave the exploration of tradeoffs between the choice of metric and the
primitive operations that can be supported by the language to future work.  
Given these limitations, along with the fact that \Lang{} does not currently
support programs with loops, we were able to include 13 of the 129 unique (at
the time of writing) benchmarks from FPBench in our evaluation.

\begin{table}
  \caption{Comparison of \Lang{} to {FPTaylor} and {Gappa}. The {Bound} column
    gives upper bounds on relative error (smaller is better); the bounds for
    {FPTaylor} and {Gappa} assume all variables are in $[0.1, 1000]$.  The
    {Ratio} column gives the ratio of \Lang's relative error bound to the
  tightest (best) bound of the other two tools; values less than 1 indicate that
\Lang{} provides a tighter bound. The {Ops} column gives the number of operations
in each benchmark.  Benchmarks from FPBench are marked with a (*).} 
\label{tab:comp} 
\begin{adjustbox}{max width=1.0\textwidth,center}
  \begin{tabular}{l c c c c c c c c} 
\hline
{Benchmark}
& {Ops}
& \multicolumn{3}{c}{Bound} 
& {Ratio}
& \multicolumn{3}{c}{Timing (s)} 
 \\ 
\cline{3-5}
\cline{7-9}
&
& {\Lang}
& {{FPTaylor}}
& {{Gappa}}
&
& {\Lang}
& {{FPTaylor}}
& {{Gappa}}
\\
\hline 
{hypot*} & 4 & 5.55e-16& {5.17e-16} & \textbf{4.46e-16} 
&1.3  & 0.002 & 3.55 & 0.069
\\ 
{x\_by\_xy*}& 3  &{4.44e-16} & fail & \textbf{2.22e-16} 
& 2 & 0.002 & -  & 0.034 
\\
{one\_by\_sqrtxx} & 3 &{5.55e-16} & 5.09e-13 & \textbf{3.33e-16} 
& 1.7 & 0.002 & 3.34  & 0.047 
\\
{{sqrt\_add*} }  & 5 & 9.99e-16 & {6.66e-16} & \textbf{5.54e-16} 
& 1.5  & 0.003 & 3.28 & 0.092
\\
{test02\_sum8*} & 8 & {1.55e-15} & 9.32e-14 & {1.55e-15} 
& 1 &  0.002 & 14.61 & 0.244 
\\
{nonlin1*}  & 2 & {4.44e-16} & {4.49e-16} & \textbf{2.22e-16} 
& 2 & 0.003 & 3.24 & 0.040 
\\
{test05\_nonlin1*} & 2 & {4.44e-16} & {4.46e-16} & \textbf{2.22e-16} 
& 2 & 0.008 & 3.27 & 0.042 
\\
{verhulst*} & 4 & {8.88e-16} & {7.38e-16} & \textbf{{4.44e-16}} 
& 2 & 0.002 & 3.25 & 0.069 
\\
{predatorPrey*} & 7 & {1.55e-15} & {4.21e-11} & \textbf{8.88e-16}
& 1.7 & 0.002 & 3.28 & 0.114 
\\
{test06\_sums4\_sum1*} & 4 & {6.66e-16} & {6.71e-16} & {6.66e-16} 
& 1 & 0.003 & 3.84 & 0.069
\\
{test06\_sums4\_sum2*} & 4 & {6.66e-16} & {1.78e-14} & 
\textbf{4.44e-16} & 1.5 & 0.002 & 11.02 & 0.055 
\\
{i4*} & 4 & {4.44e-16} & {4.50e-16} & {4.44e-16} 
& 1 & 0.002 & 3.30 & 0.055 
\\
{Horner2} & 4 & {4.44e-16} & 6.49e-11 & {4.44e-16}
& 1  & 0.002 & 11.72  & 0.052 
\\
{{Horner2\_with\_error}} & 4 & {1.55e-15} & 1.61e-10 & 
\textbf{1.11e-15} & 1.4 & 0.002 & 19.56 & 0.119
\\
{Horner5} & 10 & {1.11e-15} & 1.62e-01 & {1.11e-15} 
& 1 & 0.003 & 22.08  & 0.209
\\
{Horner10} & 20 & {2.22e-15} & 1.14e+13 & {2.22e-15} 
& 1 & 0.003 & 40.68  & 0.650
\\
{Horner20} & 40  & {4.44e-15} & 2.53e+43 & {4.44e-15} 
& 1 & 0.003 & 109.42  & 2.246
\\
\hline 
\end{tabular}
\end{adjustbox}
\end{table}

\subsubsection{Small Benchmarks}
The results for benchmarks with fewer than 50 floating-point operations are
given in \Cref{tab:comp}. Eleven of the seventeen benchmarks are taken from the
FPBench benchmarks.  Both {FPTaylor} and {Gappa} require user provided interval
bounds on the input variables in order to compute the relative error; we used
an interval of $[0.1,1000]$ for each of the benchmarks.  We used the default
configuration for {FPTaylor}, and used {Gappa} without providing hints for
interval subdivision.  The floating-point format of each benchmark is binary64,
and the rounding mode is set at round towards $+\infty$; the unit roundoff in
this setting is $2^{-52}$ (approximately {$2.22\text{e-}16$}).  Only
\lstinline{Horner2_with_error} assumes error in the inputs.

\subsubsection{Large Benchmarks}
\Cref{tab:large} shows the results for benchmarks with 100 or more
floating-point operations. Five of the nine benchmarks are taken from
\textsc{Satire} \citep{SATIRE}, an \emph{empirically sound} static analysis tool
that computes absolute error bounds.  Although \textsc{Satire} does not
statically compute relative error bounds for the benchmarks listed in
\Cref{tab:large}, most of these benchmarks have well-known worst case relative
error bounds that we can compare against.  These bounds are given in the {Std.}
column in \Cref{tab:large}; the relevant references are as follows: Horner's
scheme \citep[cf.][p. 95]{Higham:2002:Accuracy}, summation 
\citep[cf.][p.260]{boldo:2023:floatingpoint}, and matrix multiply 
\citep[cf.][p.63]{Higham:2002:Accuracy}. For matrix multiplication, we report the max elementwise
relative error bound produced by \Lang.  When available, the Timing column in
\Cref{tab:large} lists the time reported for \textsc{Satire} to compute
\emph{absolute} error bounds \citep[cf.][Table III]{SATIRE}. 

\subsubsection{Conditional Benchmarks}
\Cref{tab:cond} shows the results for conditional benchmarks. Two of the four
benchmarks are taken from FPBench and the remaining benchmarks are examples
from Dahlquist and Bj\"{o}rck \citep[cf.][p. 119]{Dahlquist}.  We were unable
to compare the performance and computed relative error bounds shown in
\Cref{tab:cond} against other tools. While Daisy, FPTaylor, and Gappa compute
relative error bounds, they don't handle conditionals. And, while PRECiSA can
handle conditionals, it doesn't compute relative error bounds. Only Rosa
computes relative error bounds for floating-point conditionals, but Rosa
doesn't compute bounds for the directed rounding modes. 

\begin{table}
  \caption{
The performance of \Lang{} on benchmarks with 100 or more floating-point operations.  
The {Std.} column gives relative error bounds from the literature.
Benchmarks from  \textsc{Satire} are marked with with an ({a});
the \textsc{Satire} subcolumn gives timings for the computation of \emph{absolute} error bounds as reported in \citep{SATIRE}.
} 
\label{tab:large} 
  \begin{tabular}{l c c c c c} 
\hline
{Benchmark}
& {Ops}
& {Bound (\Lang)} 
& {Bound (Std.)} 
& \multicolumn{2}{c}{Timing (s)} 
\\
\cline{5-6}
&
&
&
& {\Lang}
& {{\textsc{Satire}}}

\\
\hline 
{Horner50}$^{a}$& 100  &{1.11e-14}  & {1.11e-14} & 0.009 & 5
\\
{MatrixMultiply4}& 112  &{1.55e-15}  & {8.88e-16} & 0.003 & -
\\
{Horner75}& 150  &{1.66e-14} & {1.66e-14} & 0.020 & -
\\
{Horner100}& 200  &{2.22e-14} & {2.22e-14} & 0.040 & -
\\
{SerialSum}$^{\text{a}}$& 1023  &{2.27e-13}  &  {2.27e-13}  & 5 & 5407
\\
{{Poly50}}$^{\text{a}}$  & 1325 &{2.94e-13} &  - & 2.120 & 3
\\
{MatrixMultiply16}& 7936  &{6.88e-15}  & {3.55e-15} & 0.040 & -
\\
{MatrixMultiply64}$^{\text{a}}$ & 520192 &{2.82e-14}  & {1.42e-14} & 10 & 65
\\
{MatrixMultiply128}$^{\text{a}}$ & 4177920  &{5.66e-14}  & {2.84e-14} & 1080 & 763
\\
\hline 
\end{tabular}
\end{table} 
\begin{table}
  \caption{
The performance of \Lang{} on conditional benchmarks. Benchmarks 
from FPBench are marked with a (${*}$). 
Benchmarks 
from Dahlquist and Bj\"{o}rck ~\citep[cf.][p. 119]{Dahlquist} are marked with with a ($\text{b}$). 
} \label{tab:cond} 
\centering
  \begin{tabular}{l c c } 
\hline
{Benchmark}
& {Bound} 
& {Timing (ms)} 

\\
\hline 
{PythagoreanSum}$^{\text{b}}$ &{8.88e-16} & 2 
\\
{HammarlingDistance}$^{\text{b}}$  &{1.11e-15} & 2 
\\
{squareRoot3}$^{*}$  &{4.44e-16} & 2 
\\
 {{squareRoot3Invalid}}$^{*}$  &{4.44e-16} & 2
\\
\hline 
\end{tabular}
\end{table} 

\subsubsection{Evaluation Summary}

We draw three main conclusions from our evaluation. 

\paragraph{\emph{Roundoff error analysis via type checking is fast.}}
On small and conditional benchmarks, \Lang{} infers an error bound in the order
of milliseconds. This is at least an order of magnitude faster than either
{Gappa} or {FPTaylor}. On larger benchmarks, \Lang's performance surpasses that
of comparable tools by computing bounds for problems with up to 520k operations
in under a minute.

\paragraph{ \emph{Roundoff error bounds derived via type checking are useful.}}
On most small benchmarks \Lang{} produces a tighter relative error bound than
either FPTaylor or Gappa. On the few benchmarks where FPTaylor computes a
tighter bound, \Lang's results are still well within an order of magnitude. For
benchmarks where rounding errors are composed and magnified, such as
\lstinline{Horner2_with_error}, and on somewhat larger benchmarks like
\lstinline{Horner2}-\lstinline{Horner20}, our type-based approach performs
particularly well. On larger benchmarks that are intractable for the other
tools, \Lang{} produces bounds that are nearly optimal in comparison to those
from the literature. \Lang{} is also able to provide non-trivial relative error
bounds for floating-point conditionals.

\paragraph{ \emph{Roundoff error bounds derived via type checking are strong.}} 
The relative error bounds produced by \Lang{} hold for all positive real inputs,
assuming the absence of overflow and underflow. In comparison, the relative
error bounds derived by  {FPTaylor} and {Gappa} only hold for the user provided
interval bounds on the input variables, which we took to be $[0.1,1000]$.
Increasing this interval range allows {FPTaylor} and {Gappa} to give stronger
bounds, but can also lead to slower analysis.  Furthermore, given that relative
error is poorly behaved for values near zero, some tools are sensitive to the
choice of interval.  We see this in the results for the benchmark
\lstinline{x_by_xy} in \Cref{tab:comp}, where we are tasked with calculating
the roundoff error produced by the expression $x/(x+y)$, where $x$ and $y$ are
binary64 floating-point numbers in the interval $[0.1,1000]$.  For these
parameters, the expression lies in the interval $[5.0\text{e-}05,1.0]$ and the
relative error should still be well defined.  However, {FPTaylor} (used with
its default configuration) fails to provide a bound, and issues a warning due
to the value of the expression being too close to zero. 

\begin{remark}[User specified Input Ranges]
Allowing users to specify input ranges is a feature of many tools used for
floating-point error analysis, including FPTaylor and Gappa.  In some cases, a
useful bound can't be computed for an unbounded range, but can be computed
given a well-chosen bounded range for the inputs.  Input ranges are also
required for computing absolute error bounds.  Extending \Lang{} with bounded
range inputs is left to future work; we note that this feature could be
supported by adding a new type to the language, and by adjusting the types of
primitive operations.
\end{remark}

\section{Related Work}\label{sec:nfuzz:related}

\subsubsection{Abstract Interpretation}
The theory of abstract interpretation offers a generic framework for designing
sound static analysis tools. At the heart of any abstract interpretation
framework is the concept of an \emph{abstract domain}, which provides a
mathematical approximation of the program properties being analyzed. For
floating-point programs, abstract interpretation frameworks use numerical
abstract domains to soundly overapproximate the set of values that program
variables can represent. Common numerical abstract domains for analyzing
floating-point programs include interval arithmetic \citep{Moore:2009:Interval},
affine arithmetic \citep{Figueiredo:2004:Affine}, and convex
polyhedra \citep{Chen:2008:Sound}.   

Many tools based on abstract interpretation aim to derive sound and accurate
bounds for floating-point variables but do not compute bounds or estimates on
the rounding error in a floating-point result. These tools make it possible
to validate numerical behaviors of systems without precisely tracking the
rounding error associated with each floating-point operation, and are described
in works by \citet{Rivera:2024:TVPI}, \citet{mine:2004:relational},
\citet{Chen:2008:Sound,Chen:2009:Polyhedra,Chen:2010:Abstract},
\citet{Bertrand:2009:Apron}, \citet{Chapoutot:2010:Interval}, and
\citet{Chapoutot:2009:Simulink}.  In a somewhat orthogonal research direction,
abstract interpretation frameworks have also been used in the development of
satisfiability decision procedures for constraints over floating-point
arithmetic \citep{Haller:2012:deciding}.

Tools that use abstract domains to derive sound rounding error bounds include
Gappa \citep{Gappa:2010}, Rosa \citep{Rosa2}, Daisy \citep{Daisy:2018},
PRECiSA \citep{PRECISA,PRECISA:2024}, Fluctuat \citep{Fluctuat:2011}, and
Astr{\'e}e \citep{Astree:2005}. 

While abstract interpretation is flexible and can be generally applied to programs with
conditionals and loops, it can significantly overestimate rounding error,
and it is difficult to model the cancellation of errors.
Unlike abstract interpretation, type-based approaches like \Lang{} provide a
mechanism for defining valid programs, and can support features like foreign
function interfaces \citep{Ghica:2014:bounded}. 

\subsubsection{Type Systems for Numerical Computations}
A type system due to \citet{Martel:2018:types} uses dependent types to track
numerical errors. A significant difference between \Lang{} and the type system 
proposed by Martel is error soundness. In Martel's system, the soundness result 
says that a semantic relation capturing the notion of accuracy between a 
floating-point expression and its ideal counterpart is preserved by a reduction 
relation. This is weaker than a standard type soundness guarantee. In particular, 
it is not shown that  well-typed terms satisfy the semantic relation.  In \Lang, 
the central novel property guaranteed by our type system is much stronger:
well-typed programs of monadic type satisfy the error bound indicated by their
type.

\subsubsection{Optimization Techniques for Program Analysis}
To provide more precise bounds, many methods rely on optimization.
Conceptually, these methods bound the roundoff error by representing the error
symbolically as a function of the program inputs and the error variables
introduced during the computation, and then perform global optimization over
all settings of the errors \citep{truong:2014:finding}. 
Since the error expressions are typically complex,
verification methods use approximations to simplify the error expression to
make optimization more tractable, and mostly focus on straight-line programs.
For instance, {Real2Float} \citep{REAL2FLOAT} separates the error expression
into a linear term and a non-linear term; the linear term is bounded using
semidefinite programming, while the non-linear term is bounded using interval
arithmetic. FPTaylor \citep{FPTaylor} was the first tool to use Taylor
approximations of error expressions.  \citet{Abbasi:2023:compose}
describe a modular method for bounding the propagation of errors using Taylor
approximations, and {Rosa} \citep{Rosa1,Rosa2} uses Taylor series to
approximate the propagation of errors in possibly non-linear terms.

In contrast, our type system does not rely on global optimization
and can be instantiated
to different models of floating-point arithmetic with minimal changes. Our
language supports a variety of datatypes and higher-order functions. While our
language does not support recursive types and general recursion, similar
languages support these features \citep{Fuzz,Amorim:2017:metric,DalLago:2022:relational}
and we expect they should be possible in \Lang; however, the precision of the
error bounds for programs using general recursion might be poor. Another
limitation of our method is in typing conditionals: while \Lang{} can only
derive error bounds when the ideal and floating-point executions follow the
same branch, tools that use general-purpose solvers (e.g., {PRECiSA} and
{Rosa}) can produce error bounds for programs where the ideal and
floating-point executions take different branches.

\subsubsection{Verification and Synthesis} 
Formal verification has a long history in the area of numerical computations,
starting with the pioneering work of Harrison
\citep{Harrison:1999:holfloat,Harrison:1997:holfloat1,Harrison:2000:trig}.
Formalized specifications of floating-point arithmetic have been developed in
the Coq \citep{Flocq}, Isabelle \citep{IEEE_Floating_Point-AFP}, and
PVS \citep{MINER199531} proof assistants. These specifications have been used to
develop sound tools for floating-point error analysis that generate proof
certificates, such as {VCFloat} \citep{VCFloat1,Appel:CPP:2024} and
{PRECiSA} \citep{PRECISA,PRECISA:2024}.  They have also been used to  mechanize proofs of
error bounds for specific numerical programs~(e.g.,
\citet{Kellison:2022:ode,boldo2014,Tekriwal:2023:jacobi,Kellison:Arith:2023,Moscato:2019,
tekriwal:2023:phd}).
Work by \citet{ASTREE} has applied abstract interpretation to verify the
absence of floating-point faults in flight-control software, which have caused
real-world accidents.  Finally, recent work uses \emph{program synthesis}:
{Herbie} \citep{Herbie} automatically rewrites numerical programs to reduce
numerical error, while {RLibm} \citep{RLIBM} automatically generates
correctly-rounded math libraries.

\subsubsection{Sensitivity Type Systems}
\Lang{} belongs to a line of work on linear type systems for sensitivity
analysis, starting with \emph{Fuzz} \citep{Fuzz}. We point out a few especially
relevant works. Our syntax and typing rules are inspired by \citet{DalLago:2022:relational}, who
propose a family of \emph{Fuzz}-like languages and define various notions of
operational equivalence; we are inspired by their syntax, but our case
elimination rule ($+$ E) is different: we require $s$ to be strictly positive
when scaling the conclusion. This change is due to a subtle difference in how
sums are treated. 

In \Lang, as in \emph{Fuzz}, the distance between left and right injections is
$\infty$, whereas in the system by \citet{DalLago:2022:relational}, left and right injections are
not related at any distance. Our approach allows non-trivial operations
returning booleans to be typed as infinite sensitivity functions, but the case
rule must be adjusted: to preserve soundness, the conclusion must retain a
dependence on the guard expression, even if the guard is not used in the
branches.

\citet{Amorim2019} added a graded monadic type to \emph{Fuzz} to handle more complex
variants of differential privacy; in their application, the grade does not
interact with the sensitivity language.  
Finally, recent work by
\citet{BunchedFuzz} proposes a variant of \emph{Fuzz} with ``bunched'' (tree-shaped)
contexts, with two methods of combining contexts.  It could be interesting to
develop a bunched version of \Lang---the metrics for $\otimes$ and $\tand$
could be naturally accommodated in the contexts, possibly leading to more
precise error analysis.

\subsubsection{Other Approaches}
The numerical analysis literature has explored other conceptual tools for
static error analysis, such as stochastic error analysis
\citep{doi:10.1137/20M1334796}.  Techniques for \emph{dynamic} error analysis,
which estimate the rounding error at runtime, have also been proposed
\citep{Higham:2002:Accuracy}. 

It would be interesting to consider these techniques from a formal methods
perspective, whether by connecting dynamic error analysis with ideas like
runtime verification, or developing methods to verify the correctness of
dynamic error analysis.

\section{Conclusion}\label{sec:nfuzz:conclusion}
\Lang{} is a functional programming language designed to express quantitative
bounds on forward rounding error.  The rounding error analysis modeled by
\Lang{} is standard: a sensitivity analysis is combined with a local rounding
error analysis to derive a global rounding error bound.  \Lang{} uses a linear
type system and a graded comonad to perform a sensitivity analysis, and uses a
novel a graded monad to track rounding error.  A major benefit of our
type-based approach is \emph{soundness}: \Lang{} programs are guaranteed to
satisfy the error bounds assigned to them by the type system, which are
overapproximations of the true rounding error. Another advantage is
scalability: \Lang{} can infer tight error bounds for significantly larger
programs than existing static analysis tools in a reasonable amount of time.
Moreover, on well known benchmarks, \Lang{} infers error bounds that are
competitive with those produced by existing tools, often with significantly
faster performance. \Lang{} can be extended in various ways, and we conclude
this chapter with a discussion of  promising  directions for future
development. 

\subsubsection{Additional Language Features}
Rounding error bounds are typically parametric in the length of the input
data. For instance, the error bound for Horner's scheme (cf.
\Cref{sec:nfuzz:examples}) is usually expressed in terms of the polynomial's
degree, which corresponds to the length of the vector of polynomial
coefficients.  Currently, \Lang{}, like \emph{Fuzz}, only supports numeric
annotations (grades), but these are insufficient for expressing how error
bounds depend on properties of input data. To address this limitation,
\Citet{Gaboardi:2013:dfuzz} introduced lightweight dependent types---sensitivity 
variables and quantifiers over these variables---to the types of
\emph{Fuzz}, enabling the expression of more general sensitivity properties.
Given this prior work, we expect \Lang{}'s type system could similarly be
extended to support lightweight dependent types.

Combining this extension with a bounded loop construct would further enhance
\Lang{} by enabling the expression of more general error bounds and reducing
the verbosity of programs.  Compared to a bounded loop construct, it is less
obvious that extending \Lang{} to support general recursive functions and
types, even those with  precise sensitivity as described by
\citet{Amorim:2017:metric} for \emph{Fuzz}, would be immediately useful for
potential users. (Although it is clear that it would complicate the metatheory). 

\subsubsection{Probabilistic Rounding}
Probabilistic models of rounding errors have been proposed for both
deterministic computations \citep{doi:10.1137/18M1226312,
Ipsen:2020:probabilistic} and probabilistic computations
\citep{Constantinides:2021:probabilistic}.  \Lang{} could be extended to track
probabilistic rounding errors by incorporating techniques from probabilistic
languages, such as those described by \citet{Amorim2019} or
\citet{Crubille:2015:metric}.  While \citet{kahan:1996:improbability} and
\citet{Ipsen:2020:probabilistic} provide critical assessments of probabilistic
rounding error analyses, \citet{Constantinides:2021:probabilistic} argue that
the probabilistic approach is necessary for analyzing rounding error in
probabilistic computations. 

\subsubsection{Mixed-Precision Computations}
It is possible to represent mixed-precision computations in \Lang{} by adding
additional $\rnd$ constructs to the language, with each construct corresponding
to a different supported precision. The challenge lies in  accurately modeling
the expected behavior when composing rounding operations.  According to the
IEEE standard \citep{IEEE2019}, conversions between formats with the same radix
but wider precision should always be exact. Evaluating the expression
$\letm{x}{(\rnds ~3.0)}{(\rndd ~x)}$, where $\rnds$ rounds to binary32 and
$\rndd$ rounds to binary64, should therefore produce only a single rounding
effect, due to the evaluation of $(\rnds ~3.0)$. However, under the current
monadic sequencing (MLet) rule, each operation introduces its own error,
effectively modeling a scenario where both rounding steps contribute to the
total error. While this is a sound overapproximation, it raises the question of
whether \Lang{} can support a more precise, fine-grained analysis that
distinguishes between scenarios where no additional error is introduced. One
possible approach to achieving this finer-grained analysis is to use
\emph{graded monad transformers}, as described by \citet{ivaskovic:2023:phd},
for combining two analyses of computations based on the same type of effectful
operation.

\subsubsection{Additional Error Measures}
A natural follow up to our work on \Lang{} is to consider 
whether or not other error measures from the literature can be used in
place of relative precision (\Cref{def:rp}).
Error measures that can uniformly represent floating-point error on
both large and small values are the \emph{units in the last place} (ULP) error,
which measures the number of floating-point values between an approximate and
exact value, and its logarithm, \emph{bits of error} \citep{FPBench}:
\begin{align}
  er_{\textsc{ulp}}(x,\tilde{x}) = 
  |\mathbb{F} \cap [\min(x,\tilde{x}),\max(x,\tilde{x})] |
        \quad \text{and} \quad
  er_{bits}(x,\tilde{x}) =  \log_2{er_{\textsc{ulp}}(x,\tilde{x})}. 
  \label{def:ulps_ers}
\end{align}
While static analysis tools that provide sound, worst-case error bounds
for floating-point programs compute relative or absolute error 
bounds (or both), the ULP error and its logarithm are often used in tools 
that optimize either the performance or accuracy of floating-point programs,
like Herbie \citep{Herbie} and \textsc{Stoke} \citep{STOKE-FP}.

Generalizations of the relative precision metric have been proposed by 
\citet{ziv:1982:relative} and 
\citet{Pryce:1985:Matrix,Pryce:1984:Vectors},
for analyzing the error of programs involving vectors and matrices. 
It would be interesting 
to explore whether \Lang{} could accommodate these metrics, though this 
would naturally require extending the type system to support 
types for matrices and vectors; \emph{Fuzz}-like languages with 
matrix types have been described by \citet{Duet} and \citet{BunchedFuzz}. 

\subsubsection{Mechanization}
It would be possible to mechanize several results about \Lang{}
in a proof assistant like Coq. For instance, formalizing
key aspects of the operational semantics, such as type-preservation and 
strong normalization (\Cref{thm:SN}), as well as soundness of the 
type checking algorithm (\Cref{thm:algsound}), would be straightforward 
but valuable exercises. To our knowledge, 
there is currently no mechanization of sensitivity type 
systems like \Lang{} in a proof assistant, making this an interesting area 
for further exploration.

\chapter{A Language for Backward Error Analysis} \label{chapter:bean}

This chapter presents \bea{}, a programming language for
\textsc{\scalefont{1.25}{\textbf{b}}}ackward
\textsc{\scalefont{1.25}{\textbf{e}}}rror
\textsc{\scalefont{1.25}{\textbf{an}}}alysis.  \bea{} features a type system
that tracks how backward error flows through programs, and ensures that
well-typed programs have bounded backward error. 

\section{Introduction}\label{sec:bean:introduction}
With \bean{}, our point of departure from other static analysis tools for floating-point
rounding error analysis is our focus on deriving backward error bounds rather
than forward error bounds. To facilitate our description of backward error, we
will use the following notation: floating-point approximations to real-valued
functions, as well as data with perturbations due to floating-point rounding
error, will be denoted by a tilde. For instance, the floating-point
approximation of a real-valued function $f$ will be denoted by $\tilde{f}$, and
data that are intended to represent slight perturbations of $x$ will be denoted
by $\tilde{x}$.

\subsubsection*{Backward Error and Backward Stability}
Given a floating-point result $\tilde{y}$ approximating $y = f(x)$ with $f:
\R^n \rightarrow \R^m$, a forward error analysis directly measures the accuracy
of the floating-point result by bounding the distance between $\tilde{y}$ and
$y$. In contrast, a backward error analysis identifies an input $\tilde{x}$
that would yield the floating-point result when provided as input to $f$; i.e.,
such that $\tilde{y} = f(\tilde{x})$.  The backward error is a measure of the
distance between the input $x$ and the input $\tilde{x}$. 

\begin{figure}
\begin{center}
\begin{tikzpicture}
  \draw[draw=blue!20, thick,fill=blue!20, opacity = 0.6] (0,0) 
  ellipse (1.25cm and 1.5cm);
  \draw[draw=green!20, thick,fill=green!20, opacity = 0.6] (5,0) 
  ellipse (1.25cm and 1.5cm);
  \node[label=below:{Input Space $\R^n$}] (A) at (0,2.2) {};
  \node[label=below:{Output Space $\R^m$}] (B) at (5,2.2) {};
  \node[circle,fill=black,inner sep=0pt,minimum size=3pt,
  label=above:{$x$}] (A) at (0,0.5) {};
  \node[circle,fill=black,inner sep=0pt,minimum size=3pt,
  label=below:{$\tilde{x}$}] (B) at (0,-0.5) {};
  \node[circle,fill=black,inner sep=0pt,minimum size=3pt] (C) at (5,0) {};
  \node[align=center,anchor=north] (lab) at (5.4,0.75) 
    {$\tilde{{f}}(x)$ \\ = \\ $f(\tilde{x})$};
  \draw (0.1,0.5)  edge[bend left, ->, thick] node[above] 
  {$\tilde{{f}}$} (4.9,0.1);
  \draw (0.1,-0.5) edge[bend right, ->, thick] node[above] {${{f}}$} 
  (4.9,-0.1);
  \draw[dashed,thick,draw = red] node[left] {${\delta x}$} (A) -- (B);
\end{tikzpicture}
\end{center}
\caption{An illustration of backward error.
  The function $\tilde{f}$ represents a floating-point implementation of the
  function $f$. Given the points $\tilde{x} \in \R^n$ and $x \in \F^n \subset
  \R^n$ such that $\tilde{f}(x) = f(\tilde{x})$, the backward error is the
  distance $\delta x$ between $x$ and $\tilde{x}$. }
\label{fig:BEA}
\end{figure}

An illustration of the backward error is given in \Cref{fig:BEA}. If the
backward error is small for every possible input, then an implementation is
said to be \emph{backward stable}:

\begin{definition}(Backward Stability)\label{def:bstab}
  A floating-point implementation $\tilde{f} : \F^n \rightarrow \F^m$ of a
  real-valued function ${f} : \R^n \rightarrow \R^m$ is \emph{backward stable}
  if, for every input $x \in \F^n$, there exists an input 
  $\tilde{x} \in \R^n$ such that 
  \begin{equation} 
    f(\tilde{x}) = \tilde{f}(x) \text{ and } d(x,\tilde{x}) \le \alpha u
  \end{equation} 
  where $u$ is the \emph{unit roundoff}---a value that depends on the precision
  of the floating-point format $\F$, $\alpha$ is a small constant, and 
  $d :\R^n \times \R^n \rightarrow \R \cup \{+\infty\}$ provides a measure 
  of distance in $\R^n$.
\end{definition}

In general, a large forward error can have two causes: the conditioning of the
problem being solved or the stability of the program used to solve it. If the
problem being solved is \emph{ill-conditioned}, then it is highly sensitive to
floating-point rounding errors, and can amplify these errors to produce
arbitrarily large changes in the result. Conversely, if the problem is
well-conditioned but the program is \emph{unstable}, then inaccuracies in the
result can be attributed to the way rounding errors accumulate during the
computation. While forward error alone does not distinguish between these two
sources of error, backward error provides a controlled way to separate them. 
The relationship between forward error and backward error is governed by the 
\emph{condition number}, which provides a quantitative measure of the 
conditioning of a problem:
\begin{align}
  \text{forward error $\le$ condition number $\times$ backward error}.
\end{align}
A more precise definition of the condition number is given in \Cref{def:cnum}. 

By automatically deriving sound backward error bounds that indicate the
backward stability of programs, \bea{} addresses a significant gap in current
tools for automated error analysis.  To quote Dianne P. O'Leary
\citep{Oleary:2009:book}: ``Life may toss us some ill-conditioned problems, 
but there is no good reason to settle for an unstable algorithm.''

\subsubsection*{Backward Error Analysis by Example} 
A motivating example illustrating the importance of backward error is the dot
product of two vectors. While the dot product can be computed in a backward
stable way, if the vectors are orthogonal (i.e., when the dot product is zero)
the floating-point result can have arbitrarily large relative forward
error.  This means that, for certain inputs, a forward error analysis can only
provide trivial bounds on the accuracy of a floating-point dot product. In
contrast, a backward error analysis can provide non-trivial bounds describing
the quality of an implementation for all possible inputs. 

To see how a backward error analysis works in practice, suppose we are given
the vectors $x = (x_0,x_1) \in \R^2$ and $y = (y_0,y_1) \in \R^2$ with
floating-point entries. The exact dot product simply computes the sum $s = x_0
\cdot y_0 + x_1 \cdot y_1$, while the floating-point dot product computes
$\tilde{s} = (x_0 \odot_\F y_0) \oplus_\F (x_1 \odot_\F y_1)$, where
$\oplus_\F$ and $\odot_\F$ represent floating-point addition and
multiplication, respectively. A backward error bound for the computed result
$\tilde{s}$ can be derived based on bounds for addition and multiplication.
Following the error analysis proposed by Olver \citep{Olver:1978:rp}, and
assuming no overflow and underflow, floating-point addition and multiplication
behave like their exact arithmetic counterparts, with each input subject to
small perturbations.  Specifically, for any $a_1,a_2,b_1,b_2 \in \R$, we have:
\begin{align}
a_1 \oplus_\F a_2  &=  a_1e^\delta + {a}_2e^\delta \\
&= \tilde{a}_1 + \tilde{a}_2
\label{eq:be_add} \\
b_1 \odot_\F b_2    &= {b}_1e^{\delta'/2} \cdot {b}_2e^{\delta'/2} \\
&= \tilde{b}_1 \cdot \tilde{b}_2 \label{eq:be_mul}
\end{align}
with $|\delta|,|\delta'| \le u/(1-u)$, where $u$ is the unit roundoff. 
For convenience, we use the notation $\varepsilon = u/(1-u)$. 
The basic intuition behind a perturbed input like $\tilde{a}_1 = a_1e^\delta$
in \Cref{eq:be_add} is that $\tilde{a}$ is approximately equal to $a_1 +
a_1\delta$ when the magnitude of $\delta$ is extremely small. 

We can use \Cref{eq:be_add} and \Cref{eq:be_mul} to perform a backward error
analysis for the dot product: we can define the vectors $\tilde{x} =(\tilde{x}_0
, \tilde{x}_1)$ and $\tilde{y} = (\tilde{y}_0,\tilde{y}_1)$ such that their dot
product computed in exact arithmetic is equal to the floating-point result
$\tilde{s}$: 
\begin{align}
\tilde{s} = (x_0 \odot_\F y_0) \oplus_\F (x_1 \odot_\F y_1) 
&= (x_0 e^{\delta_0/2} \cdot y_0 e^{\delta_0/2}) \oplus_\F 
  (x_1 e^{\delta_1/2} \cdot y_1 e^{\delta_1/2}) 
  \nonumber \\
&= (x_0 e^{\delta_0/2} \cdot y_0 e^{\delta_0/2})e^{\delta_2} +
    (x_1 e^{\delta_1/2} \cdot y_1 e^{\delta_1/2}) e^{\delta_2}
  \nonumber \\
&= (x_0 e^{\delta} \cdot y_0 e^{\delta}) +
    (x_1 e^{\delta'} \cdot y_1 e^{\delta'})
  \nonumber \\
&= (\tilde{x}_0  \cdot \tilde{y}_0) + (\tilde{x}_1 \cdot \tilde{y}_1)
  \label{eq:dpbe}
\end{align}
where $|\delta|, |\delta'| \le {3\varepsilon/2}$. Spelling this out, the above
analysis says that the floating-point dot product of the vectors $x$  and $y$
is equal to an exact dot product of the slightly perturbed inputs $\tilde{x}$
and $\tilde{y}$. This means that, by \Cref{def:bstab}, the dot product can be
implemented in a backward stable way, with the backward error of its two 
input vectors each bounded by $3\varepsilon/2$.

A subtle point is that the backward error for multiplication can be described
in a slightly different way, while still maintaining the same backward error
bound given in \Cref{eq:be_mul}. In particular, floating-point multiplication
behaves like multiplication in exact arithmetic with a \emph{single} input
subject to small perturbations: for any $b_1,b_2\in \R$, we have
\begin{align}\label{eq:be_mul2}
b_1 \odot_\F b_2 = {b}_1 \cdot {b}_2e^\delta = {b}_1 \cdot \tilde{b}_2
\end{align} 
with $|\delta| \le \varepsilon$. There are many other ways to assign backward
error to multiplication as long as the exponents sum to $\delta$; in general, a
given program may satisfy a variety of different backward error bounds depending
on how the backward error is allocated between the program inputs.

The cost of using the backward error analysis for multiplication described in
\Cref{eq:be_mul2} instead of \Cref{eq:be_mul} is that all of the rounding error
in the result of a floating-point multiplication is assigned to a single input,
rather than distributing half of the error to each input.  We will see in
\Cref{sec:linearity}, the payoff is that, in some cases, it enables a backward
error analysis of computations that share variables across subexpressions.

\subsection{Backward Error Analysis in \bea{}: Motivating Examples}
In order to reason about backward error as it has been described so far, the
type system of \bea{} combines three ingredients: coeffects, distances, and
linearity. To get a sense of the critical role each of these components plays
in the type system, we first consider the following \bea{} program for
computing the dot product of 2D-vectors \stexttt{x} and \stexttt{y}:

\begin{lstlisting}
// Bean program for the dot product of vectors x and y
DotProd2 x y :=
let (x0, x1) = x in  
let (y0, y1) = y in
let v = mul x0 y0 in  
let w = mul x1 y1 in  
add v w
\end{lstlisting}

\subsubsection{Coeffects}
The type system of \bea{} allows us to prove the following typing judgment: 
\begin{equation}\label{eq:dpbound} 
\emptyset \mid \stexttt{x} :_{3\varepsilon/2}  \R^2, \stexttt{y}:_{3\varepsilon/2}
\R^2 \vdash  \stexttt{DotProd2 x y}: \R
\end{equation}
The \emph{coeffect} annotations ${3\varepsilon/2}$ in the context bindings
$\stexttt{x}:_{3\varepsilon/2} \R^2$ and $\stexttt{y} :_{3\varepsilon/2} \R^2$ express
per-variable relative backward error bounds for \stexttt{DotProd2}.
Thus, the typing judgment for \stexttt{DotProd2} captures the desired
backward error bound in \Cref{eq:dpbe}.

{Coeffect} systems \citep{Ghica:2014:bounded, Petricek:2014:coeffects,
Brunel:2014:coeffects, Tate:2013:effects} have traditionally been used in the
design of programming languages that perform resource management, and provide a
formalism for precisely tracking the usage of variables in programs.  In
\emph{graded coeffect systems} \citep{Gaboardi:2016:combining}, bindings in a
typing context $\Gamma$ are of the form $x :_r \sigma$, where the annotation
$r$ is some quantity controlling how $x$ can be used by the program.  In
\bea{}, these annotations describe the amount of backward error that can be
assigned to the variable. In more detail, a typing judgment $\emptyset \mid x
:_r \sigma \vdash e : \tau$ ensures that the term $e$ has at most $r$ backward
error with respect to the variable $x$. 

In \bea{}, the coeffect system allows us to derive backward error bounds for
larger programs from the known language primitives; the typing rules are used
to track the backward error of increasingly large programs in a compositional
way.  For instance, the typing judgment given in \Cref{eq:dpbound} for the
program \stexttt{DotProd2} is derived using primitive typing rules for
addition and subtraction. These rules capture the backward error bounds
described in \Cref{eq:be_add} and \Cref{eq:be_mul}: 
\vspace{1em}
\begin{center}
\AXC{}
\RightLabel{(Add)}
\UIC{$\emptyset \mid  \Gamma, x:_\varepsilon \R, 
            y:_\varepsilon\R \vdash
   \add x y : \R$}
\bottomAlignProof
\DisplayProof
\hspace{2em}
\AXC{}
\RightLabel{({Mul})}
\UIC{$\emptyset \mid  \Gamma, x:_{\varepsilon/2} \R, 
            y:_{\varepsilon/2}\R \vdash \mul  x  y : \R$}
\bottomAlignProof
\DisplayProof
\vspace{1em}
\end{center}

The following rule similarly captures the backward error bound described in 
\Cref{eq:be_mul2}:  

\vspace{1em}
\begin{center}
\AXC{}
\RightLabel{({DMul})}
\UIC{$ x: \R \mid 
    \Gamma, y:_{\varepsilon}\R \vdash \dmul  x  y : \R$}
\bottomAlignProof
\DisplayProof
\end{center}

\subsubsection{Distances}
In order to derive concrete backward error bounds, we require a notion
of \emph{distance} between points in an input space. To this 
end, each type $\sigma$ in \bea{} is equipped with a {distance} function 
$d_\sigma : \sigma \times \sigma \to \R_{\ge 0} \cup \{+\infty\}$ describing 
how close pairs of values of type $\sigma$ are to one another. 
For instance, for our numeric type $\R$, choosing the \emph{relative precision}
metric (\Cref{def:rp}) proposed by \citet{Olver:1978:rp} for the distance 
function $d_\R$ allows us to prove backward error bounds for a relative notion of 
error. This idea is reminiscent of type systems capturing function sensitivity
\citep{Fuzz,Gaboardi:2013:dfuzz,NUMFUZZ}; however, the \bean{} type system does not 
capture function sensitivity since this concept does not play a central role in 
backward error analysis.

\subsubsection{Linearity}\label{sec:linearity}
The conditions under which composing backward stable programs yields another
backward stable program are poorly understood. Our development of a static
analysis framework for backward error analysis led to the following insight:
the composition of two backward stable programs remains backward stable \emph{as long
as they do not assign backward error to shared variables}. Thus, to ensure 
that our programs satisfy a backward stability guarantee, 
\bea{} features a \emph{linear} typing discipline to control the
duplication of variables. While most coeffect type systems allow using a
variable $x$ in two subexpressions as long as the grades $x :_r \sigma$ and $x
:_s \sigma$ are combined in the overall program, \bea{} requires a stricter
condition: linear variables cannot be duplicated at all.

To understand why a type system for backward error analysis should disallow
unrestricted duplication, consider the floating-point computation
corresponding to the evaluation of the polynomial $h(x) = ax^2 + bx$. The
variable $x$ is used in each of the subexpressions $f(x) = ax^2$ and $g(x) =
bx$. Using the backward error bound given in \Cref{eq:be_mul} for
multiplication, the backward stability of $f$ is guaranteed by the existence of
the perturbed coefficient $\tilde{a} = ae^{\delta_1}$ and the perturbed
variable  $\tilde{x}_f = xe^{\delta_2}$:
\begin{align}
\tilde{f}(x) = 
ae^{\delta_1} \cdot (xe^{\delta_2})^2  = \tilde{a} \cdot \tilde{x}_f^2
\end{align}
Similarly, the backward stability of $g$ is guaranteed by the existence of the
perturbed coefficient $\tilde{b} = be^{\delta_3/2}$ and the variable
$\tilde{x}_g = xe^{\delta_3/2}$ . However, there is no common variable
$\tilde{x}$ that ensures the stability of $f$ and $g$ simultaneously. That is,
there is no input $\tilde{x}$ such that ${f}(\tilde{x}) + g(\tilde{x}) =
\tilde{f}(x) + \tilde{g}(x)$. 

By requiring linearity, \bea{} ensures that we never need to reconcile multiple
backward error requirements for the same variable. However, this restriction can
be quite limiting, and rules out the backward error analysis of some programs
that are backward stable---for instance, the polynomial $h(x) = ax^2 + bx$ above
\emph{is} actually backward stable! To regain flexibility in \bea{}, we note
that there is a special situation when a variable \emph{can} be duplicated
safely: when it doesn't need to be perturbed in order to provide a backward
error guarantee. For our polynomial $h(x)$, we can obtain a backward error
guarantee using \Cref{eq:be_mul2} to assign zero backward error to the variable
$x$ and non-zero backward error to the coefficients $a$ and $b$.  Since $x$ does
not need to be perturbed in order to provide an overall backward error guarantee
for $h(x)$, it can be duplicated without violating backward stability.

To realize this idea in \bea, the type system distinguishes between linear,
restricted-use data and non-linear, reusable data.  Linear variables are those
we can assign backward error to during an analysis, while non-linear variables
are those we do not assign backward error to during an analysis.  Technically,
\bea{} uses a dual context judgment, reminiscent of work on linear/non-linear
logic \citep{DBLP:conf/csl/Benton94}, to track the two kinds of variables.  In
more detail, a typing judgment of the form  
$y: \alpha \mid x :_r \sigma \vdash e: \tau$
ensures that the term $e$ has at most $r$ backward error with respect to the
\emph{linear} variable $x$, and has \emph{no backward error} with respect to
the \emph{non-linear} variable $y$. (Note that the bindings in the nonlinear
context do not carry an index, because no amount of backward error can be
assigned to these variables.) The soundness theorem for \bea{}, which we
introduce in \Cref{sec:bean:sound}, formalizes this result.

\section{Type System}\label{sec:bean:language}

\begin{figure}[tbp]
  \begin{alignat*}{3}
         &\alpha &::=~ &\dnum 
         \mid \alpha \otimes  \alpha 
         \tag*{(discrete types)} \\
         &\sigma &::=~ &\unit \mid \alpha
         \mid \num
         \mid \sigma \otimes  \sigma
         \mid \sigma + \sigma 
         \tag*{(linear types)} \\
	 &\Gamma &::=~ &\emptyset
	 \mid \Gamma, x:_r \sigma
	 \tag*{(linear typing contexts)} \\
	 &\Phi &::=~ &\emptyset 
	 \mid \Phi, z:\alpha
         \tag*{(discrete typing contexts)} \\
         &e,f \ &::=~ &x \mid z
         \mid ()
         \mid ~!e
         \mid  (e, f) 
         \mid \inl e
         \mid \inr e \\
         & & & \hspace{-0.25em} 
         \mid \slet x e f 
         \mid \slet {(x,y)} e f 
         \mid \dlet z e f 
         \mid \dlet {(x,y)} e f \\
         & & & \hspace{-0.25em} 
         \mid \case e {x.f} {x.f} 
         \mid \add e f
         \mid \sub e f 
         \mid \mul e f 
         \mid \dmul e f 
         \mid \fdiv e f
        \tag*{(expressions)}
  \end{alignat*}
  \caption{Grammar for $\bea{}$ types and terms.}
  \label{fig:grammar}
\end{figure}

\bea{} is a simple first-order programming language, extended with a few
constructs that are unique to a language for backward error analysis.  The
grammar of the language is presented in \Cref{fig:grammar}, and the typing
relation is presented in \Cref{fig:typing_rules}.

\subsection{{Types}} 
We use \emph{linear} and \emph{discrete} types to distinguish between linear,
restricted-use data that can have backward error, and non-linear,
unrestricted-use data that cannot: linear types $\sigma$ are used for linear
data, and {discrete} types $\alpha$ are used for non-linear data.  Both linear
and discrete types include a base numeric type, denoted  by $\num$ and $\dnum$,
respectively. Linear types also include a tensor product $\otimes$, a unit type
$\unit$, and a sum type $ + $.

\subsection{Typing Judgments}
Terms are typed with judgments of the form 
${\Phi, z: \alpha \mid \Gamma, x:_r \sigma \vdash e : \tau}$ 
where $\Gamma$ is a linear typing context and $\sigma$ is a linear type, $\Phi$
is a discrete typing context and $\alpha$ is a discrete type, and $e$ is an
expression.  For linear typing contexts, variable assignments have the form $x
:_r \sigma$, where the grade $r$ is a member of a preordered monoid
$\mathcal{M} = (\mathbb{R}^{\ge0},+,0)$. Typing contexts, both linear and
discrete, are defined inductively as shown in \Cref{fig:grammar}.

Although linear typing contexts cannot be joined together with discrete typing
contexts, linear typing contexts can be joined with other linear typing
contexts as long as their domains are disjoint.  We write $\Gamma, \Delta$ to
denote the disjoint union of the linear contexts $\Gamma$ and $\Delta$.

While most graded coeffect systems support the composition of linear typing
contexts $\Gamma$ and $\Delta$ via a \emph{sum} operation $\Gamma + \Delta$
\citep{DalLago:2022:relational, Gaboardi:2016:combining}, where the grades of
shared variables in the contexts are added together, this operation is not
supported in \bea{}. This is because the sum operation serves as a mechanism
for the restricted duplication of variables, but \bea{}'s strict linearity
requirement does not allow variables to be duplicated. However, \bea{}'s type
system does support a sum operation that adds a given grade $q \in \mathcal{M}$
to the grades in a linear typing context:  
\begin{align*} 
  q + \emptyset &= \emptyset \\
  q + (\Gamma, x:_r \sigma) &= q + \Gamma, x:_{q+r} \sigma.
\end{align*}

In \bea{}, a well-typed expression ${ \Phi \mid x:_r \sigma \vdash e : \tau}$
is a program that has at most $r$ backward error with respect to the
\emph{linear}  variable $x$, and has \emph{no backward error} with respect to
the \emph{discrete} variables in the context $\Phi$.  For more general programs
of the form 
\[
	\Phi \mid x_1:_{r_1} \sigma_1, \dots, x_i:_{r_i} \sigma_i \vdash e : \tau,
\] 
\bea{} guarantees that the program has at most $r_i$ backward error with
respect to each variable $x_i$, and has \emph{no backward error} with respect
to the {discrete} variables in the context $\Phi$. This idea is formally
expressed in our soundness theorem (\Cref{thm:main}).

\subsection{{Expressions}} 
\bea{} expressions include linear variables $x$ and discrete variables $z$, as
well as a unit () value. Linear variables are bound in let-bindings of the form
{$\slet x {e} {f}$}, while discrete variables are bound in let-bindings of the
form {$\dlet z {e} {f}$}. The !-constructor is a syntactic convenience for
declaring that an expression can be duplicated. The pair constructor $(e,f)$
corresponds to a tensor product, and can be composed of expressions of both
linear and discrete type.  Discrete pairs are eliminated by pattern matching
using the construct {$\dlet {(x,y)} {e} {f}$}, whereas linear pairs are
eliminated by pattern matching using the construct $\slet {(x,y)} {e} {f}$.
The injections $\inl e$ and $\inr e$ correspond to a coproduct, and are
eliminated by case analysis using the construct $\case {e} {x.f} {y.f}$. Some
of the primitive arithmetic operations of the language ($\mathbf{add}$,
$\mathbf{mul}$, $\mathbf{dmul}$) were already introduced in
\Cref{sec:bean:introduction}. \bea{} also supports division ($\mathbf{div}$) and
subtraction ($\mathbf{sub}$).

\subsection{{Typing relation}} 
\begin{figure}
\begin{center}
\AXC{ }
\RightLabel{(Var)}
\UIC{$\Phi\mid \Gamma, x:_r \sigma \vdash x : \sigma$}
\bottomAlignProof
\DisplayProof
\hskip 0.5em
\AXC{}
\RightLabel{(DVar)}
\UIC{$\Phi, z:\alpha\mid \Gamma \vdash z : \alpha$}
\bottomAlignProof
\DisplayProof
\vskip 1em
\AXC{$\Phi\mid\Gamma \vdash e : \sigma$}
\AXC{$\Phi\mid\Delta \vdash f : \tau$}
\RightLabel{($\otimes $ I)}
\BinaryInfC{$\Phi\mid\Gamma, \Delta \vdash ( e, f ): \sigma \otimes   \tau $}
\bottomAlignProof
\DisplayProof
\hskip 0.5em
\AXC{}
\RightLabel{(Unit)}
\UIC{$\Phi\mid \Gamma \vdash () : \mathbf{unit}$}
\bottomAlignProof
\DisplayProof
\vskip 1em
\AXC{$\Phi\mid\Gamma \vdash e : \tau_1 \otimes  \tau_2$}
\AXC{$\Phi\mid\Delta, x:_r \tau_1, 
            y:_r \tau_2 \vdash f : \sigma$}
\RightLabel{($\otimes $ E$_\sigma$)}
\BIC{$\Phi\mid r + \Gamma, \Delta \vdash \slet {(x,y)} e f : \sigma$}
\bottomAlignProof
\DisplayProof
\vskip 1em

\AXC{$\Phi\mid\Gamma \vdash e : \alpha_1 \otimes  \alpha_2$}
\AXC{$\Phi, z_1: \alpha_1, 
            z_2: \alpha_2\mid\Delta \vdash f : \sigma$}
\RightLabel{($\otimes $ E$_\alpha$)}
\BIC{$\Phi\mid \Gamma, \Delta \vdash 
        \dlet {(z_1,z_2)} e f : \sigma$}
\bottomAlignProof
\DisplayProof
\vskip 1em

\AXC{$\Phi\mid\Gamma \vdash e' : \sigma+\tau$}
\AXC{$\Phi\mid\Delta, x:_q \sigma \vdash e : \rho$}
\AXC{$\Phi\mid\Delta, y:_q \tau \vdash f: \rho$}
\RightLabel{($+$ E)}
\TIC{$\Phi\mid q+\Gamma,\Delta \vdash \mathbf{case} \ e' \ \mathbf{of} \ (x.e \ | \ y.f) : \rho$}
\bottomAlignProof
\DisplayProof
\vskip 1em
\AXC{$\Phi\mid\Gamma \vdash e : \sigma$ }
\RightLabel{($+$ $\text{I}_L$)}
\UIC{$\Phi\mid\Gamma \vdash \inl \ e : \sigma + \tau$}
\bottomAlignProof
\DisplayProof
\hskip 0.5em
\AXC{$\Phi\mid\Gamma \vdash e : \tau$ }
\RightLabel{($+$ $\text{I}_R$)}
\UIC{$\Phi\mid\Gamma \vdash \inr \ e : \sigma + \tau$}
\bottomAlignProof
\DisplayProof
\vskip 1em
\AXC{$\Phi\mid\Gamma \vdash e :  \tau$}
\AXC{$\Phi\mid\Delta, x :_r \tau \vdash f : \sigma$}
\RightLabel{(Let)}
\BIC{$\Phi\mid r + \Gamma,\Delta \vdash \slet x e f : \sigma$}
\bottomAlignProof
\DisplayProof
\vskip 1.3em
\AXC{$\Phi\mid\Gamma \vdash e :  \num$}
\RightLabel{(Disc)}
\UIC{$\Phi \mid \Gamma  \vdash ~ !e : \dnum$}
\bottomAlignProof
\DisplayProof
\hskip 0.5em
\AXC{$\Phi\mid\Gamma \vdash e :  \alpha$}
\AXC{$\Phi, z: \alpha \mid\Delta \vdash f : \sigma$}
\RightLabel{(DLet)}
\BIC{$\Phi \mid \Gamma,\Delta \vdash \dlet z e f : \sigma$}
\bottomAlignProof
\DisplayProof
\vskip 1.3em
\AXC{}
\RightLabel{(Add, Sub)}
\UIC{$\Phi\mid \Gamma, x:_{\varepsilon + r_1} \num, 
            y:_{\varepsilon + r_2}\num \vdash
   \{\mathbf{add}, 
    \mathbf{sub}\} \ x \ y : \num$}
\bottomAlignProof
\DisplayProof
\vskip 1.3em
\AXC{ }
\RightLabel{(Mul)}
\UIC{$\Phi\mid \Gamma, 
    x:_{\varepsilon/2 + r_1} \num, 
    y:_{\varepsilon/2 + r_2}\num \vdash \mul  x  y : \num$}
\bottomAlignProof
\DisplayProof
\vskip 1.3em
\AXC{ }
\RightLabel{(Div)}
\UIC{$\Phi\mid
    \Gamma, x:_{\varepsilon/2 + r_1} \num, 
    y:_{\varepsilon/2 + r_2}\num
     \vdash \fdiv x  y : \num + \err$}
\bottomAlignProof
\DisplayProof
\vskip 1.3em
\AXC{ }
\RightLabel{(DMul)}
\UIC{$\Phi,z: \dnum\mid 
        \Gamma, 
    x:_{\varepsilon + r}\num \vdash \dmul  z  x : \num$}
\bottomAlignProof
\DisplayProof
\vskip 1em
\end{center}
    \caption{Typing rules  with 
    $q,r,r_1,r_2 \in \mathbb{R}^{\ge0}$ and fixed parameter $\varepsilon \in \mathbb{R}^{>0}$. }
    \label{fig:bean_typing_rules}
\end{figure}

The full type system for \bea{} is given in \Cref{fig:bean_typing_rules}. It is
parametric with respect to the constant $\varepsilon = u/(1-u)$, where $u$
represents the unit roundoff. 

Let us now describe the rules in \Cref{fig:bean_typing_rules}, starting with those
that employ the sum operation between grades and linear typing contexts: the
linear let-binding rule (Let) and the elimination rules for sums ($+$ E) and
linear pairs ($\otimes$ E$_\sigma$). Using the intuition that a grade describes
the backward error bound of a variable with respect to an expression, we see
that whenever we have an expression $e$ that is well-typed in a context
$\Gamma$ and we want to use $e$ in place of a variable that has a backward
error bound of $r$ with respect to another expression, then we must assign $r$
backward error onto the variables in $\Gamma$ using the sum operation $r +
\Gamma$. That is, if an expression $e$ has a backward error bound of
$q$ with respect to a variable $x$ and the expression $f$ has backward error
bound of $r$ with respect to a variable $y$, then $f[e/y]$ will have backward
error bound of $r + q$ with respect to the variable $x$.

The action of the !-constructor is illustrated in the Disc rule, which promotes
an expression of linear numeric type to  discrete numeric type.  The
!-constructor allows an expression to be used without restriction, but
there is a drawback: once an expression is promoted to discrete type it can no
longer be assigned backward error. The discrete let-binding rule (DLet) allows
us to bind variables of discrete type.

Aside from the rules discussed above, the only remaining rules in
\Cref{fig:bean_typing_rules} that are not mostly standard are the rules for
primitive arithmetic operations: addition (Add), subtraction (Sub),
multiplication between two linear variables (Mul), division (Div), and
multiplication between a discrete and non-linear variable (DMul). While these
rules are designed to mimic the relative backward error bounds for
floating-point operations following analyses described in the numerical
analysis literature
\citep{Olver:1978:rp,Higham:2002:Accuracy,Corless:2013:Analysis} and as briefly
introduced in \Cref{sec:bean:introduction}, they also allow weakening, or relaxing, the
backward error guarantee.  Intuitively, if the backward error of an expression
with respect to a variable is \emph{bounded} by $\varepsilon$, then it is also
\emph{bounded} by $\varepsilon + r$ for some grade $r \in \mathcal{M}$. We also
note that division is a partial operation, where the error result indicates a
division by zero.

The following section is devoted to explaining how a novel categorical
semantics, where \bea{} typing judgments are interpreted as morphisms in the
category \Bel{} of \emph{backward error lenses}, supports the language features
we have described so far.

\section{Denotational Semantics}\label{sec:bean:semantics}
Now that we have seen the syntax of \bea, we turn to its semantics. We first
introduce a novel category \Bel{} of \emph{backward error lenses}, where
morphisms are functions that satisfy a backward error guarantee. We show that
this category supports a variety of useful constructions, which we use to
interpret \bea{} programs.  We will assume knowledge of the basic category theory
concepts (e.g., categories and functors)
that are briefly described in \Cref{background:cat}, introducing less well-known
constructions as we go.  

\subsection{\Bel: The Category of Backward Error Lenses}\label{sec:meta_errcat}
The key semantic structure for \bea{} is the category \Bel{} of \emph{backward
error lenses}. 
Each morphism in \Bel{} corresponds to a pair of functions describing the
continuous problem and its approximating function, along with a \emph{backward
map} that serves as a constructive mechanism for witnessing the existence of a
backward error result. We view the category \Bel{} as conceptually similar to
categories of
\emph{lenses} \citep{Hofmann:2011:symmetriclenses,Johnson:2010:lenscat,
riley:2018:categories,Johnson:2012:lenses}.  Lenses, first introduced by
\citet{Foster:2007:trees}, consist of pairs of transformations between a set of
source structures and a set of target structures: a \emph{forward}
transformation produces a target from a source and a \emph{backward}
transformation ``puts back" a modified target onto a source according to some
laws \citep{Fischer:2015:lenses,ko:2017:axiomatic}. More concretely, if $X$ is a
set of source structures and $Y$ is a set of target structures, then a lens is
comprised of a pair of functions known as \emph{get} of type $X \rightarrow Y$
(the forward transformation) and \emph{put} of type $X \times Y \rightarrow Y$
(the backward transformation). The category \textbf{Lens} of lenses then has
sets as objects and lenses as morphisms, and supplies a well defined notion of
the composition of two lenses \citep{riley:2018:categories}. 
 
In contrast to the traditional definition of lenses, \emph{backward error
lenses} consist of a \emph{triple} of transformations:

\begin{definition}[Backward Error Lenses]\label{def:error_lens} 
A \emph{backward error lens} ${{L} : X \rightarrow Y}$ is a triple
${(f,\tilde{f},b)}$ of set-maps between the \emph{generalized 
distance spaces} $(X,d_X)$ and
$(Y,d_Y)$, described by the following data:
\begin{itemize}
\item the forward map $f: X \rightarrow Y$
\item the approximation map $\tilde{f}: X \rightarrow Y$, and
\item the backward map $b: X \times Y \rightarrow X$ defined as
\[
	\forall x \in X. ~b(x,-) \in 
  \{ y\in Y \mid d_Y(\tilde{f}(x),y) \neq \infty \} \rightarrow X
\] 
\end{itemize}
satisfying the properties
\begin{enumerate}[align=left]
\item[\textbf{Property 1.}] 
  $\forall x \in X, y \in Y.~ d_X(x,b(x,y)) \le d_Y(\tilde{f}(x),y)$  
  \label{eq:prop1}
\item[\textbf{Property 2.}] 
  $\forall x \in X, y \in Y.~f(b(x,y))=y$ \label{eq:prop2}
\end{enumerate}
under the assumption that ${d_Y(\tilde{f}(x),y) \neq \infty}$.
\end{definition}

The backward map $b : X \times Y\rightarrow X$ for backward error lenses given
in \Cref{def:error_lens} maps a point $x \in X$ in the input space and a point
$y \in Y$ in the output space \emph{that is at finite distance from $x$ under
the approximation map $\tilde{f}$} (i.e., such that $d_Y(\tilde{f}(x),y) \neq
+\infty$) to a point $\tilde{x} \in X$ in the input space. By restricting the
backward map to points that are at finite distance in the output space under
the approximation map, we can guarantee that the backward map produces a point
in the input space that is at finite distance from the original input. 

Properties 1 and 2 of \Cref{def:error_lens} are closely related to the
\emph{lens laws} described in the literature: for the forward transformation
\emph{get} of type $X \rightarrow Y$ and the backward transformation \emph{put}
of type $X \times Y \rightarrow Y$, every lens must obey the following laws:
\begin{align} 
	\forall x \in X.~put ~ x ~(get ~x) &= x \label{eq:lens1}\\
	\forall x \in X, \forall y \in Y.~get ~ (put ~ x ~ y) &= y \label{eq:lens2}
\end{align}
Clearly, property 2 of \Cref{def:error_lens} and \Cref{eq:lens2} are closely
related.  For backward error lenses, property 2 requires that the backward map
precisely captures the backward error. To see why, consider instantiating
property 2 with a point $x \in X$ and $\tilde{f}(x) \in Y$: the backward map
$b(x,\tilde{f}(x))$ produces a point $\tilde{x} \in X$ and property 2 requires
that the backward error result $f(\tilde{x}) = \tilde{f}(x)$ holds.

Looking closely at property 1, we can see that it corresponds to a generalized
\Cref{eq:lens1}, reframed as an inequality. Where \Cref{eq:lens1} requires
\emph{put} to exactly restore the original point in the source space under
strict conditions on its arguments, property 1 requires that the distance
between the point produced by the backward map and the original point in the
source space is bounded. The bound in property 2, namely the distance
$d_Y(\tilde{f}(x),y)$, serves as an upper bound for the generalized notion of
backward error. To see how, consider that property 2, instantiated on a point
$x \in X$ and the point $\tilde{f}(x) \in Y$, requires the following inequality
to hold: 
\begin{align}\label{eq:prop2_bound}
	d_X(x,b(x,\tilde{f}(x))) \le d_Y(\tilde{f}(x),\tilde{f}(x)).
\end{align}
Observe that if $d_X$ and $d_Y$ were distance functions with zero self
distance, as is usual for standard metric spaces, then \Cref{eq:prop2_bound}
forces the backward error, given as the distance $d_X(x,b(x,\tilde{f}(x)))$, to
zero. 

However, we would also like our semantics to support maps with bounded, but
\emph{non-zero} backward error. It turns out that we can support these more
maps by allowing the distance functions $d_X$ to take on a wide range of
values, ranging over $\R \cup \{ \pm \infty \}$; we call such functions
\emph{generalized distances}. While it is not obvious what a negative distance
represents, intuitively, we merely use this distances as technical devices to
enable compositional reasoning about backward error. For applications, all
backward error guarantees will involve maps to \emph{standard} metric spaces,
i.e., with non-negative distance function satisfying the usual metric axioms.

\begin{definition}[The Category \Bel{} of Backward Error Lenses]
\label{def:lensC} 
The category \Bel{} of \emph{backward error lenses} is the category with the
following data:
\begin{itemize}
  \item Its objects are generalized distance spaces: $(M, d : M \times M \to \R
    \cup \{ \pm \infty \})$, where the distance function has non-positive
    self-distance: $d(x, x) \leq 0$.
\item Its morphisms from $X$ to $Y$ are backward error lenses from $X$ to $Y$:
  triples of maps $(f, \tilde{f}, b)$, satisfying the two properties in
  \Cref{def:error_lens}.
\item The identity morphism on objects $X$ is given by the triple 
$(id_X,id_X,\pi_2)$.
\item The composition 
\[
(f_2,\tilde{f}_2,b_2) \circ (f_1,\tilde{f}_1,b_1)
\]
of error lenses ${(f_1,\tilde{f}_1,b_1) : X \rightarrow Y}$
and ${(f_2,\tilde{f}_2,b_2) : Y \rightarrow Z}$ is the error lens
${(f,\tilde{f},b): X \rightarrow Z}$ defined by
\begin{itemize}
\item the forward map 
  \begin{equation} \label{eq:fcomp}
    f: x \mapsto (f_1;f_2) \ x 
  \end{equation}
\item the approximation map
  \begin{equation}\label{eq:acomp}
    \tilde{f} : x \mapsto (\tilde{f}_1;\tilde{f}_2) \ x 
  \end{equation}
\item the backward map
  \begin{equation} \label{eq:bcomp}
    b: (x,z) \mapsto b_1\bigl(x, b_2(\tilde{f}_1(x), z)\bigr) 
  \end{equation}
\end{itemize}
\end{itemize}
\end{definition}
The composition in \Cref{def:lensC} is only well-defined if the domain of the
backward map is well-defined, and if the error lens properties hold for the
composition; we check these requirements in \Cref{app:check_bel}.

\subsection{Basic Constructions in \Bel}\label{sec:meta:basic}
We can now begin defining the lenses in \Bel{} that are necessary for
interpreting the language features in \bea{}. Following the description of
\bea{} given in \Cref{sec:bean:language}, we give the constructions below for lenses
corresponding to a tensor product, coproducts, and \emph{a graded comonad}
(see \Cref{background:comonad}) for interpreting linear typing contexts.

\subsubsection{Initial and Final Objects}
We start by introducing the initial and final objects of our category.  Let
$0 \in \Bel$ be the empty metric space $(\emptyset, d_\emptyset)$, and $1 \in
\Bel$ be the singleton metric space $(\{\star\}, \underbar{0})$ with a single
element and a constant distance function $d_1({\star,\star})=0$. Then for any
object $X \in \Bel$, there is a unique morphism $0_X : 0 \to X$ where the
forward, approximate, and backward maps are all the empty map, so $0$ is an
\emph{initial object} for \Bel.

Similarly, for every object $X \in \Bel$ is a morphism $!_X : X \to 1$ given by
$f_{!} = \tilde{f}_!:= x \mapsto \star$ and $b_{!} := (x,\star) \mapsto x$. To
check that this is indeed a morphism in \Bel---we must check the two backward
error lens conditions in \Cref{def:error_lens}. The first condition boils down
to checking $d_X(x, x) \leq d_1(\star, \star) = 0$, but this holds since all
objects in \Bel{} have non-positive self distance. The second condition is
clear, since there is only one element in $1$.  Finally, this morphism is
clearly unique, so $1$ is a \emph{terminal object} for \Bel.

\subsubsection{Tensor Product}
Next, we turn to products in \Bel. Like most lens categories, \Bel{} does not
support a Cartesian product \citep{Hofmann:2011:symmetriclenses}. 
In particular, it is not possible to define a
diagonal morphism $\Delta_A : A \to A \times A$, where the space $A \times A$
consists of pairs of elements of $A$. The problem is the second lens condition
in \Cref{def:error_lens}: given an approximate map $\tilde{f} : A \to A \times
A$ and a backward map $b : A \times (A \times A) \to A$, we need to satisfy
\[
  \tilde{f}(b(a_0, (a_1, a_2))) = (a_1, a_2)
\]
for all $(a_0, a_1, a_2) \in A$. But it is not possible to satisfy this
condition when $a_1 \neq a_2$: the backward map can only return 
one of $a_0, a_1$, or
$a_2$. As a consequence, there is not enough information for the approximate map
$\tilde{f}$ to recover $(a_1, a_2)$. More conceptually, this is the technical
realization of the problem described in \Cref{sec:bean:introduction}: if we think of
$a_1$ and $a_2$ as backward error witnesses for two subcomputations that both
use $A$, we may not be able to reconcile these two witnesses into a single
backward error witness.

Although a Cartesian product does not exist, \Bel{} does support a weaker,
monoidal product, which makes it a \emph{symmetric monoidal category}.  
Specifically, given two objects $X$ and $Y$ in \Bel{} we have the object 
$(X \times Y,d_{X\otimes Y})$
where the metric $d_{X\otimes Y}$ takes the componentwise max. 
Additionally, given any two morphisms ${(f,\tilde{f},b) : A \rightarrow X}$ 
and
${(g,\tilde{g},b'): B \rightarrow Y}$, we have the morphism
\[
  {(f,\tilde{f},b) \otimes  (g,\tilde{g},b') : 
  A \otimes  B \rightarrow X \otimes  Y}
\]
defined by:
\begin{itemize}
\item the forward map
  \begin{equation}\label{eq:tensor_lens1}
     (a_1, a_2) \mapsto (f(a_1),g(a_2)) 
   \end{equation}
\item the approximation map
  \begin{equation}\label{eq:tensor_lens2}  
    (a_1,a_2) \mapsto (\tilde{f}(a_1),\tilde{g}(a_2)) 
  \end{equation}
\item the backward map
  \begin{equation} \label{eq:tensor_lens3}
    ((a_1, a_2),(x_1,x_2)) \mapsto (b(a_1,x_1),b'(a_2,x_2)) 
  \end{equation}
\end{itemize}

We check that the tensor product given in
\Cref{eq:tensor_lens1,eq:tensor_lens2,eq:tensor_lens3} is well-defined in
\Cref{app:products}.

\begin{lemma}\label{lem:bifun}
	The tensor product operation on lenses induces a bifunctor on  \Bel.
\end{lemma}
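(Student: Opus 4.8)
The plan is to verify the two functoriality axioms, since the assignment of $\otimes$ on objects (sending $(X,Y)$ to $(X \times Y, d_{X \otimes Y})$ with $d_{X \otimes Y}$ the componentwise maximum) and on morphisms (via \Cref{eq:tensor_lens1,eq:tensor_lens2,eq:tensor_lens3}) has already been spelled out, and the fact that the tensor of two lenses is again a morphism of \Bel{} is established in \Cref{app:products}. So I would organize the argument into three short steps: first record well-definedness on objects and morphisms, citing \Cref{app:products}; then check preservation of identities; then check preservation of composition.

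For identity preservation, I would unfold the identity lens $\id_X = (\id_X, \id_X, \pi_2)$ together with the tensor formulas: the forward and approximation components of $\id_X \otimes \id_Y$ are visibly $\id_{X \otimes Y}$ by \Cref{eq:tensor_lens1,eq:tensor_lens2}, and by \Cref{eq:tensor_lens3} its backward component sends $((a_1,a_2),(x_1,x_2))$ to $(\pi_2(a_1,x_1), \pi_2(a_2,x_2)) = (x_1,x_2)$, which is exactly $\pi_2$ on $(X \otimes Y) \times (X \otimes Y)$. Hence $\id_X \otimes \id_Y = \id_{X \otimes Y}$.

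For composition preservation, I would fix four lenses $A \xrightarrow{L_1} X \xrightarrow{L_2} Z$ and $B \xrightarrow{L_1'} Y \xrightarrow{L_2'} W$ with $L_i = (f_i, \tilde f_i, b_i)$ and $L_i' = (g_i, \tilde g_i, b_i')$, and compare $(L_2 \otimes L_2') \circ (L_1 \otimes L_1')$ with $(L_2 \circ L_1) \otimes (L_2' \circ L_1')$. The forward and approximation maps on both sides are products in \Set{} of the respective component maps, so they agree by functoriality of the product of functions. For the backward maps, I would expand \Cref{eq:bcomp} for the outer composition on the left-hand side, substitute \Cref{eq:tensor_lens2} for the intermediate approximation map and \Cref{eq:tensor_lens3} for the two backward maps, and compute that the result at $((a_1,a_2),(z_1,z_2))$ is the pair $\bigl(b_1(a_1, b_2(\tilde f_1(a_1), z_1)),\, b_1'(a_2, b_2'(\tilde g_1(a_2), z_2))\bigr)$; on the right-hand side one obtains the same pair by applying \Cref{eq:bcomp} to each factor and then \Cref{eq:tensor_lens3}. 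So the two composite lenses coincide.

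The one genuinely delicate point — the step I would single out as needing care — is not the equality of the total backward maps but the agreement of their \emph{partial domains}: the backward map of a \Bel{}-morphism is only defined on pairs $(x,z)$ at finite distance under the approximation map, so I must check that both composites are partial functions with the same domain before concluding they are equal. This reduces to observing that $d_{Z \otimes W}$ is the componentwise maximum, so finiteness of $d_{Z \otimes W}$ at a pair is equivalent to componentwise finiteness, which matches the domain conditions already verified for composition in \Cref{app:check_bel} and for the tensor product in \Cref{app:products}. Once the domains are seen to agree, the two bifunctor axioms follow from the componentwise computations above, and \Cref{lem:bifun} is proved.
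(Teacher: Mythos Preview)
Your proposal is correct and follows essentially the same approach as the paper: both verify the bifunctor axioms by directly unfolding the tensor and composition formulas, checking identity preservation and then composition preservation componentwise (the paper only spells out the backward map for composition and declares identity preservation ``clear''). Your additional remark about the agreement of the partial domains of the backward maps is a point the paper's proof simply glosses over, so your argument is, if anything, slightly more careful.
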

\noindent The proof of \Cref{lem:bifun} requires checking conditions expressing
preservation of composition and identities, and is given in \Cref{app:products}.

The bifunctor $\otimes $ on the category \Bel{} gives rise to a symmetric
monoidal category of error lenses. The unit object $I$ is defined to be the
terminal object $1 = (\{\star\}, {0})$ with a single element and a
constant distance function $d_I({\star,\star})=0$ along with natural
isomorphisms for the associator ($\alpha_{X,Y,Z} : X \otimes  (Y \otimes Z))$,
and we can define the usual left-unitor ($\lambda_X : I \otimes  X \rightarrow
X)$, right-unitor ($\rho_X : X \otimes I \rightarrow X )$, and symmetry
($\gamma_{X,Y} : X \otimes  Y \rightarrow Y \otimes  X$) maps.  These
definitions are provided in \Cref{app:products}.

\subsubsection{Projections}
For any two spaces $X$ and $Y$ with the same self distance, i.e., with $d_X(x,
x) = d_Y(y, y)$ for all $x \in X$ and $y \in Y$, we can define a projection map
$\pi_1 : X \otimes Y \rightarrow X$ via:
\begin{itemize}
\item the forward map 
  \[
    f: (x, y) \mapsto x
  \]
\item the approximation map
  \[
    \tilde{f} : (x, y) \mapsto x
  \]
\item the backward map
  \[
    b: ((x, y),z) \mapsto (z, y)
  \]
\end{itemize}
The projection $\pi_2 : X \otimes Y \rightarrow Y$ is defined similarly.

\subsubsection{Coproducts}
For any two objects $X$ and $Y$ in \Bel{} we have the object $(X + Y,d_{X+Y})$,
where the metric $d_{X+Y}$ is defined as 
\\
\
\begin{equation} 
  d_{X+Y}(x,y) \triangleq   
   \begin{cases} 
      d_X(x_0,y_0) & \text{if } x = inl \ x_0  \text{ and } y = 
      inl \ y_0  \\
      d_Y(x_0,y_0) & \text{if } x = inr \ x_0  \text{ and } y = 
      inr \ y_0 \\
     \infty & \text{otherwise.}
    \end{cases}
\label{eq:prod_met}
\end{equation}
\
We define the morphism for the first projection 
$in_1: X \rightarrow X + Y$ as the triple
\\
\
\begin{align}
  f_{in_1}(x) = \tilde{f}_{in_1}(x) &\triangleq inl \ x \label{eq:inlf} \\
  b_{in_1}(x,z) &\triangleq 
      \begin{cases}  
        x_0 & \text{if } z = inl \ x_0 \\
        x & \text{otherwise.}
      \end{cases} \label{eq:inlb}
\end{align}
\
We check that the first projection is well-defined in \Cref{app:coproducts}.
The morphism $in_2 : Y \rightarrow X + Y $ for the second projection can be
defined similarly. 

Now, given any two morphisms 
\begin{align}
g : X \rightarrow C &\triangleq (f_g,\tilde{f}_g,b_g) \\
h : Y \rightarrow C &\triangleq (f_h,\tilde{f}_h,b_h)
\end{align}
we define the unique \emph{copairing} morphism ${[g,h] : X + Y \rightarrow C}$ 
such that $[g,h] \circ in_1 = g$ and $[g,h]  \circ in_2= h$ by the following triple:
\\
\
\begin{align}
  f_{[g,h]}(z) &\triangleq       
      \begin{cases}  
        f_g(x) & \text{if } z = inl \ x \\
        f_h(y) & \text{if } z = inr \ y
      \end{cases}  \\
  \tilde{f}_{[g,h]}(z) &\triangleq 
      \begin{cases}  
        \tilde{f}_g(x) & \text{if } z = inl \ x \\
        \tilde{f}_h(y) & \text{if } z = inr \ y
      \end{cases} \\
  b_{[g,h]}(z,c) &\triangleq 
      \begin{cases}  
        inl \ (b_g(x,c)) & \text{if } z = inl \ x \\
        inr \ (b_h(y,c)) & \text{if } z = inr \ y.
      \end{cases} 
\end{align}
\
We check that the morphism ${[g,h] : X + Y \rightarrow C}$ 
is well-defined in \Cref{app:coproducts}.

\subsubsection{A Graded Comonad}
Next, we turn to the key construction in \Bel{} that enables our semantics to
capture morphisms with non-zero backward error. The rough idea is to use a
graded comonad to shift the distance by a numeric constant; this change then
introduces slack into the lens conditions in \Cref{def:error_lens} to support
backward error.

More precisely, we construct a comonad graded by the real numbers.  Let the
pre-ordered monoid $\mathcal{R}$ be the non-negative real numbers ${R^{\ge 0}}$
with the usual order and addition.  We define a graded comonad on \Bel{} by the
family  of functors 
\begin{align*}
\{D_r :
  \Bel\rightarrow \Bel \ | \ r \in \mathcal{R}\}
\end{align*}
as follows.
\begin{itemize}
\item The object-map $D_r: \mathbf{Bel} \rightarrow \mathbf{Bel}$ takes
  $(X,d_X)$ to $(X,d_X - r)$, where $\pm \infty - r$ is defined to be equal to
  $\pm \infty$.
\item The arrow-map $D_r$ takes an error lens 
$(f,\tilde{f},b) : A \rightarrow X$ to an error lens 
\begin{align}
	(D_rf,D_r\tilde{f},D_rb) : D_rA \rightarrow D_rX
\end{align}
 where
\begin{align}\label{eq:monad1}
	(D_rg)x \triangleq g(x). 
\end{align}
\item The \emph{counit} map $\varepsilon_X: D_0 X \rightarrow X$ is 
the identity lens.
\item The \emph{comultiplication} map 
$\delta_{q,r,X}: D_{q+r}X \rightarrow D_q(D_rX)$ is the identity lens.
\item The \emph{2-monoidality} map 
$m_{r,X,Y} : D_rX \otimes D_rY \xrightarrow{\sim} D_r(X \otimes Y)$ is the identity
lens. 
\item The map
$m_{q\le r,X} : D_rX \rightarrow D_qX$ is the identity lens.
\end{itemize}
Unlike similar graded comonads considered in the literature on coeffect
systems, our graded monad does not support a graded contraction map: there is
no lens morphism $c_{r, s, A} : D_{r + s} A \to D_r A \otimes D_s A$. This is
for the same reason that our category does not support diagonal maps: it is not
possible to satisfy the second lens condition in \Cref{def:error_lens}. Thus,
we have a graded comonad, rather than a graded exponential comonad
\citep{Brunel:2014:coeffects}.

\subsubsection{Discrete Objects}
While there is no morphism $A \to A \otimes A$ in general, graded or not, there
is a special class of objects where we do have a diagonal map: the
\emph{discrete} spaces.

\begin{definition}[Discrete space]
  We say a generalized distance space $(X, d_X)$ is \emph{discrete} if its
  distance function satisfies $d_X(x_1, x_2) = \infty$ for all $x_1 \neq x_2$.
\end{definition}
We write \Del{} for category of the discrete spaces and backward error lenses;
this forms a full subcategory of \Bel.  Discrete objects are closed under the
monoidal product in \Bel, and the unit object $I$ is discrete.

There are two other key facts about discrete objects. First, it is possible to
define a diagonal lens.
\begin{lemma}[Discrete diagonal] \label{lem:discr-diag}
  For any discrete object $X \in \Del$ there is a lens morphism $t_X : X
  \to X \otimes X$ defined via
  \begin{align*}
    f_t = \tilde{f}_t &\triangleq x \mapsto (x,x) \\
    b_t &\triangleq (x,(x_1,x_2)) \mapsto x.
  \end{align*}
\end{lemma}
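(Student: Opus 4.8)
The plan is to verify directly that the triple $(f_t, \tilde{f}_t, b_t)$ meets the conditions for a morphism of $\Bel$ set out in \Cref{def:error_lens}. There are three things to check: that $b_t(x,-)$ has the right domain, namely that it sends $\{\, w \in X \otimes X \mid d_{X \otimes X}(\tilde{f}_t(x), w) \neq \infty \,\}$ into $X$; that Property 1 holds; and that Property 2 holds, in both cases for every $x$ and every $w$ in that finite-distance set.

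The single useful observation is that discreteness of $X$ collapses the relevant domain of the backward map to a point. Since $\tilde{f}_t(x) = (x, x)$ and the metric on $X \otimes X$ is the componentwise maximum, $d_{X \otimes X}((x,x), (x_1, x_2)) = \max(d_X(x, x_1), d_X(x, x_2))$, which is finite only when both $d_X(x, x_1)$ and $d_X(x, x_2)$ are finite. But $X$ is discrete, so $d_X$ is $\infty$ on every pair of distinct points; hence finiteness forces $x_1 = x$ and $x_2 = x$. Therefore the finite-distance set is exactly $\{(x,x)\}$, and since $b_t$ is in any case defined on all of $X \times (X \otimes X)$ with values in $X$, its restriction to that set is certainly well-typed.

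Given this, both lens properties are immediate. For Property 2 we need $f_t(b_t(x, (x_1, x_2))) = (x_1, x_2)$ for $(x_1, x_2)$ at finite distance from $(x,x)$; by the observation $(x_1, x_2) = (x, x)$, so the left side is $f_t(x) = (x, x)$, which agrees. For Property 1 we need $d_X(x, b_t(x,(x_1,x_2))) \le d_{X \otimes X}((x,x), (x_1,x_2))$ on the same set; again $(x_1,x_2) = (x,x)$, and both sides equal $d_X(x,x)$, so the inequality holds trivially. Finally $f_t$ and $\tilde{f}_t$ are evidently set-maps $X \to X \otimes X$.

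I do not expect any real difficulty: the content of the lemma is precisely that the diagonal, which \Cref{sec:bean:introduction} and the discussion preceding this lemma show fails to exist for a general object, does exist once the ambient distance is $\infty$ off the diagonal. The only point requiring a little care is the bookkeeping around the backward map's domain — one must apply the lens laws only to targets at finite distance, since Property 2 would genuinely fail for a pair $(x_1, x_2)$ with $x_1 \neq x_2$.
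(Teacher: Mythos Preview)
Your proof is correct and takes essentially the same approach as the paper: both argue that discreteness forces the finite-distance set around $\tilde{f}_t(x)=(x,x)$ to collapse to the single point $(x,x)$, after which the two lens properties are immediate.
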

The key reason this is a lens morphism is that according to the lens
requirements in \Cref{def:error_lens}, we only need to establish the lens
properties for $(x_1, x_2)$ at \emph{finite} (i.e., not equal to $+\infty$)
distance from $f_t(x) = (x, x)$ under the distance on $X \otimes X$. Since
this is a discrete space, we only need to consider pairs $(x_1, x_2)$ that are
\emph{equal} to $(x, x)$; thus, the lens conditions are obvious. More
conceptually, we can think of a discrete object as a space that can't have any
backward error pushed onto it. Thus, the backward error witnesses $x_1$ and
$x_2$ are always equal to the input, and can always be reconciled.

Second, the graded comonad $D_r$ restricts to a graded comonad on $\Del$. In
particular, if $X$ is a discrete object, then $D_r X$ is also a discrete object.

\subsection{Interpreting \bea{}}
With the basic structure of \Bel{} in place, we can now interpret the types and
typing judgments of \bea{} as objects in \Bel{} and morphisms in \Bel{},
respectively.
Given a type $\tau$, we define a metric space $\denot{\tau}$ with the rules
\begin{align*}
	\denot{\dnum} &\triangleq (\mathbb{R},d_\alpha) \\
	\denot{\num}  &\triangleq (\R,d_\R) \\
	\denot{\sigma \otimes  \tau} &\triangleq 
	\denot{\sigma} \otimes \denot{\tau} \\
	\denot{\sigma + \tau} &\triangleq \denot{\sigma} + \denot{\tau} \\
  \denot{\textbf{unit}} &\triangleq (\{\star\}, \underbar{$0$}) 
\end{align*}
where the distance function $d_\alpha$ is the discrete metric on $\R$
where self-distance is zero and the distance 
between two distinct point is $+\infty$, and 
$d_\R$ is the relative precision metric (\Cref{def:rp}). 
By definition, if $d_\R$
is a standard distance function, then $\denot{\tau}$ is a standard metric space.   

The interpretation of typing contexts is defined inductively as follows:
\begin{align*}
  \denot{\emptyset \mid \emptyset}  &\triangleq I \\
  \denot{\emptyset \mid  \Gamma, x:_r\sigma}  &\triangleq 
  \denot{\Gamma} \otimes D_r\denot{\sigma} \\
  \denot{\Phi, z : \alpha \mid \emptyset}  &\triangleq 
  \denot{\Phi} \otimes \denot{\alpha} \\
  \denot{\Phi, z : \alpha \mid  \Gamma, x:_r\sigma}  &\triangleq 
  \denot{\Phi} \otimes \denot{\alpha} \otimes \denot{\Gamma} \otimes
  D_r\denot{\sigma}
\end{align*}
where the graded comonad $D_r$ is used to interpret the linear variable
assignment $x:_r \sigma$, and $I$ is the monoidal unit $(\{ \star \},
\underbar{$-\infty$})$.

Given the above interpretations of types and typing environments, we can
interpret \bea{} programs in \Bel:

\begin{definition}(Interpretation of \bea{} Terms.)\label{def:interpL}
	We can interpret each well-typed term $\Phi \mid \Gamma \vdash e : \tau$ as an
	error lens $\denot{e} : \denot{\Phi} \otimes \denot{\Gamma} \rightarrow
	\denot{\tau}$ in \Bel, by structural induction on the typing derivation. 
\end{definition}
\noindent The details of each construction for \Cref{def:interpL} can be found
in \Cref{sec:app_interp_bea}. We provide the cases for the (Let), (Add), and (Mul) 
rules here as a demonstration.
\begin{description}
\item[Case (Let).] Given the maps
\begin{align*}
h_1 &= \denot{\Phi\mid \Gamma \vdash e : \tau} : 
  \denot{\Phi} \otimes \denot{\Gamma} \rightarrow {\denot{\tau}}\\
h_2 &= \denot{\Phi\mid \Delta,x :_r \tau \vdash f : \sigma} : 
  \denot{\Phi} \otimes \denot{\Delta} \otimes D_r\denot{\tau} \rightarrow 
  {\denot{\sigma}}
\end{align*}
we need to define a map 
$\denot{\Phi\mid r + \Gamma,\Delta \vdash \slet x e f  : \sigma } $.

We first define a map $h : D_r \denot{\Phi} \otimes D_r \denot{\Gamma} \otimes
\denot{\Delta} \to \denot{\sigma}$ as the following composition:
\[  h_2 \circ  
  (D_r(h_1) \otimes (\varepsilon_{\denot{\Phi}} \circ m_{0 \leq r, \denot{\Phi}})\otimes id_{\denot{\Delta}}) \circ 
  (m_{r,\denot{\Phi},\denot{\Gamma}} \otimes id_{D_r \denot{\Phi}} \otimes id_{\denot{\Delta}}) \circ
  (t_{D_r \denot{\Phi}} \otimes id_{D_r\denot{\Gamma} \otimes \denot{\Delta}}) .
\] 
Here, the map $\varepsilon$ is the counit of the graded comonad.

Since $\denot{\sigma}$ is a metric space, its distance is bounded below by $0$.
Since $\denot{\Phi}$ is a discrete space, we observe that forward, approximate,
and backward maps in $h$ are also a lens morphism between objects:
\[
  h : \denot{\Phi} \otimes D_r \denot{\Gamma} \otimes \denot{\Delta} \to \denot{\sigma}
\]
This is the desired map to interpret let-binding.
\item[Case (Add).] Suppose the contexts $\Phi$ and $\Gamma$ have total length
  $i$. We define the map
  \[
    \denot{\Phi\mid \Gamma, x :_{\varepsilon + q} \num, y :_{\varepsilon + r} \num \vdash
    \add x y : \num }
  \]
as the composition
\[ 
  \pi_i
  \circ
  (id_{\denot{\Phi}}
  \otimes (\overline{\varepsilon_{\denot{\sigma_j}}}
  \circ \overline{m_{0 \leq q_j, \denot{\sigma_j}}})
  \otimes id_{\denot{\num}})
  \circ
  (id_{\denot{\Phi} \otimes \denot{\Gamma}}
  \otimes {L}_{add})
  \circ
  (id_{\denot{\Phi} \otimes \denot{\Gamma}}
  \otimes m_{\varepsilon \le\varepsilon + q, \denot{\num}}
  \otimes m_{\varepsilon \le \varepsilon + r, \denot{\num}}) ,
\]
where the map $\overline{\varepsilon_{\denot{\sigma_j}}}$ applies the counit map
$\varepsilon_X : D_0 X \to X$ to each object in the context $\denot{\Gamma}$,
and the map $\overline{m_{0 \leq q_j, \denot{\sigma_j}}}$ applies the map $m_{0
\leq q, X} : D_q X \to D_0 X$ to each binding $\denot{x :_q \sigma}$ in the
context $\denot{\Gamma}$.

The lens ${L}_{add} : D_\varepsilon(\R) \otimes D_\varepsilon(\R) \rightarrow \R$ 
is given by the triple
\begin{align*}
  f_{add}(x_1,x_2) &\triangleq x_1 + x_2 \\
  \tilde{f}_{add}(x_1,x_2)&\triangleq  (x_1 + x_2)e^\delta; \quad |\delta|
  \le \varepsilon \\
  b_{add}((x_1,x_2),x_3) &\triangleq  \left(\frac{x_3x_1}{x_1+x_2},
  \frac{x_3x_2}{x_1+x_2}\right)
\end{align*}
where $\varepsilon = u/(1-u)$ and $u$ is the unit roundoff.

We now show ${L}_{add} : D_\varepsilon(\R) \otimes D_\varepsilon(\R) \rightarrow
\R$ is well-defined: it is clear that $L_{add}$ satisfies property 2 of a
backward error lens, and so we are left with checking property 1:
\noindent Assuming, for any $x_1,x_2,x_3 \in \R$,
\begin{equation}
  d_{R}\left(\tilde{f}_{add}(x_1,x_2),x_3\right) \neq \infty,
  \label{eq:add_asum}
\end{equation}
we are required to show
\begin{align*}
	d_{\R \otimes \R}\left((x_1,x_2),b_{add}((x_1,x_2),x_3)\right) - \varepsilon
	&\le d_{\R}\left(\tilde{f}_{add}(x_1,x_2),x_3\right) 
  = d_{\R}\left((x_1+x_2)e^\delta,x_3\right).
\end{align*}

Note that \Cref{eq:add_asum} implies $d_R((x_1+x_2)e^\delta, x_3) \neq \infty$:
by \Cref{eq:olver}, we have that $(x_1+x_2)$ and $x_3$ are either both zero, or
are both non-zero and of the same sign.
We can assume, without loss of generality, 
\begin{equation}
  d_\R\left(x_2,\frac{x_3x_2}{x_1+x_2}\right)  \le 
  d_\R\left(x_1,\frac{x_3x_1}{x_1+x_2}\right).
\end{equation}
Under this assumption, we have
\begin{equation}
  d_{\R \otimes  \R}\left((x_1,x_2),b_{add}((x_1,x_2),x_3)\right) =
  d_\R\left(x_1,\frac{x_3x_1}{x_1+x_2}\right)
\end{equation}
and we are then required to show
\begin{align}
  d_\R\left(x_1,\frac{x_3x_1}{x_1+x_2}\right) &\le
  d_{\R}\left((x_1+x_2)e^\delta,x_3\right) + \varepsilon. \label{eq:add_req}
\end{align}

Using the distance function given in \Cref{eq:olver}, the inequality in
\Cref{eq:add_req} becomes
\begin{align}
  \left|{\ln\left(\frac{x_1+x_2}{x_3}\right)}\right| \le
  \left|{\ln\left(\frac{x_1+x_2}{x_3}\right)} + \delta\right| + \varepsilon,
  \label{eq:p1_add_end}
\end{align}
which holds under the assumptions of $|\delta| \le \varepsilon$ and $0 <
\varepsilon$.
Set $\alpha = \left|{\ln\left((x_1+x_2)/{x_3}\right)}\right|$,
and assume, without loss of generality, that $\alpha < 0$. If
$\alpha + \delta < 0$ then $|\alpha| = -\alpha$ and
$|\alpha + \delta| = -(\alpha + \delta)$; the inequality in \Cref{eq:p1_add_end}
reduces to $\delta \le \varepsilon$, which follows by assumption.
Otherwise, if $0 \le \alpha + \delta$, then $-\alpha \le \delta \le \varepsilon$
and it suffices to show that $\varepsilon \le \alpha + \delta + \varepsilon$.
\item[Case (Mul).]
  We proceed the same as the case for (Add), with slightly different indices.
  We define a lens 
  $\mathcal{L}_{mul} : D_{\varepsilon/2}(\R) \otimes 
            D_{\varepsilon/2}(\R) \rightarrow R$ 
  given by the triple
\begin{align*}
    f_{mul}(x_1,x_2) &\triangleq x_1  x_2 \\
    \tilde{f}_{mul}(x_1,x_2)&\triangleq  x_1 x_2e^{\delta}; \quad |\delta| 
  \le \varepsilon \\
    b_{mul}((x_1,x_2),x_3) &\triangleq  \left(x_1\sqrt{\frac{x_3}{x_1x_2}},
  x_2\sqrt{\frac{x_3}{x_1x_2}}\right).
\end{align*}

We check that $\mathcal{L}_{mul} : D_{\varepsilon/2}(R) \otimes
D_{\varepsilon/2}(R) \rightarrow  R$ is well-defined.

For any $x_1,x_2,x_3 \in R$ such that 
\begin{equation} 
d_{R}\left(\tilde{f}_{mul}(x_1,x_2),x_3\right) \neq \infty.
\label{eq:mul_asum}
\end{equation}
holds, we need to check the that $\mathcal{L}_{mul}$ satisfies the properties 
of an error lens. We again take the distance function $d_R$ as the metric 
given in \Cref{eq:olver}, so \Cref{eq:mul_asum} implies that $(x_1x_2)$ 
and $x_3$ are either both zero or are both non-zero and of the same sign; 
this guarantees that the backward map (containing square roots) is indeed 
well defined. 
\begin{enumerate}[align=left]
\item[Property 1.] We are required to show
\begin{align*} \label{eq:p1_mul}
d_{R \otimes  R}\left((x_1,x_2),b_{mul}((x_1,x_2),x_3)\right) - \varepsilon/2&\le 
     d_{ R}\left(\tilde{f}_{mul}(x_1,x_2),x_3\right) \\
    &\le d_{R}\left(x_1 x_2e^{\delta},x_3\right) 
\end{align*}
Unfolding the definition of the distance function (\Cref{eq:olver}), we have
\begin{align*}
d_{R \otimes  R}\left((x_1,x_2),b_{mul}((x_1,x_2),x_3)\right)  
&= d_R\left(x_1,x_1\sqrt{\frac{x_3}{x_1x_2}}\right) \\
&= d_R\left(x_2,x_2\sqrt{\frac{x_3}{x_1x_2}}\right) \\
&= \frac{1}{2}\left|\ln \left( \frac{x_1x_2}{x_3}\right)\right|,
\end{align*}
and so we are required to show 
\begin{equation}  \label{eq:p1_mul_end}
\frac{1}{2}\left|\ln \left( \frac{x_1x_2}{x_3}\right)\right| \le
\left|\ln \left( \frac{x_1x_2}{x_3}\right) + \delta \right| + 
  \frac{1}{2}\varepsilon
\end{equation}
which holds under the assumptions of $|\delta| \le \varepsilon$ and $0 < \varepsilon$. 
Setting $\alpha = \ln(x_1x_2/x_3)$, assume, without loss of generality,
that $\alpha < 0$. If $\alpha + \delta < 0 $ then $\alpha < - \delta$ and 
it suffices to show that $-\frac{1}{2} \delta \le - \delta + \frac{1}{2} \varepsilon$, 
which follows by assumption. Otherwise, if $0 \le \alpha + \delta $ then 
$-\alpha \le \delta$ and it suffices to show that 
$\frac{1}{2} \delta \le \alpha + \delta + \frac{1}{2} \varepsilon$,
which follows by assumption.

\item[Property 2.] 
\begin{align*}
f_{mul}\left(b_{mul}((x_1,x_2),x_3)\right) = 
     f_{mul}\left(x_1\sqrt{\frac{x_3}{x_1x_2}},x_2\sqrt{\frac{x_3}{x_1x_2}}\right)
    = x_3.
\end{align*} 
\end{enumerate}
\end{description}
\section{Backward Error Soundness}\label{sec:bean:sound}
Recall the intuition behind the guarantee for \bea{}'s type system: a
well-typed term of the form 
\[ 
  \Phi \mid x_1:_{r_1} \sigma_1, \dots, x_i:_{r_i} \sigma_i \vdash e : \tau
\]
is a program that has at most $r_i$ backward error with respect to each
variable $x_i$, and has {no backward error} with respect to the {discrete}
variables in the context $\Phi$. By interpreting \bea{} programs as morphisms
in \Bel{}, we are able to precisely describe how every \bea{} program captures
the complex interaction between an ideal problem, its approximate program, and
a map that constructs the backward error between them.  Using these
constructions \Bel{}, we can clearly see a path towards a \emph{backward error
soundness theorem}: given the interpretation 
\[
  \denot{z_1: \alpha_1, \dots, z_j: \alpha_j \mid x_1:_{r_1} \sigma_1, 
  \dots, x_i:_{r_i} \sigma_i \vdash e : \tau} = 
  (f,\tilde{f},b) : \denot{\alpha_1} \otimes
  \dots \otimes  D_{r_i}\denot{\sigma_i} \rightarrow \denot{\tau}
\]
if $\tilde{f}[u_1/z_1]\dots[v_1/x_1]$ evaluates to a value $w$ for the
well-typed substitutions $({u})_{1 \le n \le j}$ and $({{v}})_{1 \le n \le i}$,
then we can guarantee our desired backward error result if we can witness the
existence of a well-typed substitution $(\tilde{v})_{1 \le n \le i}$ such that
${f}[u_1/z_1]\dots[\tilde{v}_1/x_1]$ also evaluates to the value $w$, and
$d_{\sigma_i}(v_i,\tilde{v}_i) \le r_i$; our backward map $b$ can be used to
construct the required witness. 

Formalizing the above result requires explicit access to each transformation in
the backward error lens individually. We achieve this by defining an
intermediate language, which we call \LangS, where programs denote morphisms in
$\Set$.  We then define an ideal and approximate operational semantics for
\LangS, and relate these semantics to the backward error lens semantics of
$\bea{}$ via the $\Set$ semantics of \LangS. As we will see, the semantic
constructions for $\bea{}$ can be transformed in a straightforward way to
semantic constructions for \LangS{} using the forgetful functors $U_{id} : \Bel
\rightarrow \Set$ and $U_{ap} : \Bel \rightarrow \Set$; these functors
associate each metric space with its underlying set, and associate each
backward error lens with its underlying ideal (resp., approximate) function on
sets. The actions of the forgetful functors $U_{id}$ and $U_{ap}$ on objects
are both denoted by $U$ for simplicity.

\subsection{\LangS{}: A Language for Projecting \bea{} into \Set \ }

\begin{figure}
\begin{center}
\AXC{}
\RightLabel{(Var)}
\UIC{$\Gamma, x: \sigma, \Delta \vdash x : \sigma$}
\bottomAlignProof
\DisplayProof
\hskip 0.5em
\AXC{}
\RightLabel{(Unit)}
\UIC{$\Gamma \vdash (): \mathbf{unit}$}
\bottomAlignProof
\DisplayProof
\hskip 0.5em
\AXC{$k \in R$}
\RightLabel{(Const)}
\UIC{$\Gamma \vdash k : \textbf{num}$}
\bottomAlignProof
\DisplayProof
\hskip 0.5em
\vskip 1em
\AXC{$\Phi,\Gamma \vdash e : \sigma$}
\AXC{$\Phi,\Delta \vdash f : \tau$}
\RightLabel{($\otimes $ I)}
\BinaryInfC{$\Phi,\Gamma, \Delta \vdash ( e, f ): \sigma \otimes   \tau $}
\bottomAlignProof
\DisplayProof
\vskip 1em
\AXC{$\Phi,\Gamma \vdash e : \tau_1 \otimes  \tau_2$}
\AXC{$\Phi,\Delta, x : \tau_1, y : \tau_2 \vdash f : \sigma $}
\RightLabel{($\otimes $ E)}
\BIC{$\Phi,\Gamma, \Delta  \vdash \slet {(x,y)} e f : \sigma$}
\bottomAlignProof
\DisplayProof
\vskip 1em
\AXC{$\Phi,\Gamma \vdash e' : \sigma+\tau$}
\AXC{$\Phi,\Delta, x: \sigma \vdash e : \rho$}
\AXC{$\Phi,\Delta, y: \tau \vdash f: \rho$}
\RightLabel{($+$ E)}
\TIC{$\Phi,\Gamma,\Delta \vdash \mathbf{case} \ e' 
    \ \mathbf{of} \ (\inl x.e \ | \ \inr y.f) : \rho$}
\bottomAlignProof
\DisplayProof
\vskip 1em
\AXC{$\Phi,\Gamma \vdash e : \sigma$ }
\RightLabel{($+$ $\text{I}_L$)}
\UIC{$\Phi,\Gamma \vdash \inl \ e : \sigma + \tau$}
\bottomAlignProof
\DisplayProof
\hskip 0.5em
\AXC{$\Phi,\Gamma \vdash e : \tau$ }
\RightLabel{($+$ $\text{I}_R$)}
\UIC{$\Phi,\Gamma \vdash \inr \ e : \sigma + \tau$}
\bottomAlignProof
\DisplayProof
\vskip 1em
\AXC{$\Phi,\Gamma \vdash e :  \tau$}
\AXC{$\Phi,\Delta, x : \tau \vdash f : \sigma$}
\RightLabel{(Let)}
\BIC{$\Phi,\Gamma,\Delta \vdash \slet x e f : \sigma$}
\bottomAlignProof
\DisplayProof
\vskip 1em
\AXC{$\Phi,\Gamma \vdash e : \num$}
\AXC{$\Phi,\Delta \vdash f : \num $}
\AXC{$ \mathbf{Op} \in \{\mathbf{add}, \mathbf{sub}, 
    \mathbf{mul},  \mathbf{dmul} \}$}
\RightLabel{(Op)}
\TIC{$\Phi,\Gamma, \Delta \vdash \mathbf{Op} \ e \ f : \num$}
\bottomAlignProof
\DisplayProof
\vskip 1em
\AXC{$\Phi,\Gamma \vdash e : \num$}
\AXC{$\Phi,\Delta \vdash f : \num $}
\RightLabel{(Div)}
\BIC{$\Phi,\Gamma, \Delta \vdash \fdiv e f : \num + \err$}
\bottomAlignProof
\DisplayProof

\end{center}
    \caption{Full typing rules for \LangS. }
    \label{fig:typing_rules_2_full}
\end{figure}

\subsubsection{A Type System for \LangS{}}
The type system of \LangS{} corresponds closely to \bea's.  Terms are typed
with judgments of the form $\Phi, \Gamma \vdash e : \tau$, where the typing
context $\Gamma$ corresponds to the linear typing contexts of \bea{} with all
of the grade information erased, and the typing context $\Phi$ corresponds to
the discrete typing contexts of \bea{}; we will denote the erasure of grade
information from a linear typing environment $\Delta$ as $\Delta^\circ$. Under
the erasure of grade information from a linear context $\Delta$, the disjoint
union of the contexts $\Phi, \Delta^\circ$ is well-defined.

In contrast to \bea{}, types in \LangS{} are not categorized as linear and
discrete:
\begin{align}
  \sigma \ ::=~ \num \mid \unit \mid \sigma \otimes \sigma \mid 
                \sigma + \sigma \tag*{(\LangS{} types)}
\end{align}

The grammar of terms in \LangS{} is mostly unchanged from the grammar of
\bea{}, except that \LangS{} extends $\bea{}$ to include primitive constants
drawn from a signature $R$:
\begin{align*}
  e,f \ ::=~ \dots \mid k \in R \tag*{(\LangS{} terms)}
\end{align*}
The typing relation of \LangS{} is entirely standard for a first-order simply
typed language; the
full set of rules is given in \Cref{fig:typing_rules_2_full}. 
The close correspondence between derivations in
$\bea{}$ and derivations in \LangS{} is summarized in the following lemma.  

\begin{lemma} \label{lem:derive_ls} Let 
  $\Phi \mid \Gamma \vdash e : \tau$ be a well-typed term in $\bea{}$.  Then
  there is a derivation of $\Phi,\Gamma^\circ \vdash e : \tau$ in \LangS.
\end{lemma}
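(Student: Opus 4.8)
The plan is to prove \Cref{lem:derive_ls} by structural induction on the typing derivation of $\Phi \mid \Gamma \vdash e : \tau$ in \bea{}. The guiding observation is that every \bea{} typing rule in \Cref{fig:bean_typing_rules} maps, after erasing all grade annotations from linear contexts, onto a corresponding \LangS{} rule in \Cref{fig:typing_rules_2_full}; the grades never constrain which rule applies, only the numeric bookkeeping in the conclusion. So for each rule I would exhibit the matching \LangS{} rule and check that the premises, once their linear contexts are erased via $(-)^\circ$, are exactly what the induction hypothesis delivers. The key auxiliary fact I would record up front is that erasure commutes with the context operations used in \bea{}: erasing $r + \Gamma$ gives $\Gamma^\circ$ (the added grade disappears), and erasing a disjoint union $\Gamma,\Delta$ gives the disjoint union $\Gamma^\circ,\Delta^\circ$ (which remains well-defined since the domains stay disjoint). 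With these, the conclusions line up automatically.

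Concretely, I would walk the cases as follows. For (Var) and (DVar): a linear variable $x:_r\sigma$ erases to $x:\sigma$, placing us in a context of the \LangS{} (Var) shape, and a discrete variable $z:\alpha$ already sits in the $\Phi$ part, so (Var) again applies after noting $\Phi$ plays the role of the leading context segment; (Unit) is immediate. For the introduction rules ($\otimes$ I), ($+$ I$_L$), ($+$ I$_R$): the premises translate directly by the IH and erasure distributes over the disjoint union in ($\otimes$ I). For the binding/elimination rules (Let), ($\otimes$ E$_\sigma$), ($\otimes$ E$_\alpha$), ($+$ E), (DLet): here the conclusion in \bea{} uses $r + \Gamma$ or $q + \Gamma$, which erases to $\Gamma^\circ$, matching the ungraded \LangS{} rule; the second premise in \bea{} introduces graded linear hypotheses $x:_r\tau$ (or discrete hypotheses $z:\alpha$), whose erasures are exactly the hypotheses demanded by the \LangS{} premise. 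The (Disc) rule has no \LangS{} counterpart in the obvious sense — $!e : \dnum$ — but since in \LangS{} types are not stratified, I would simply observe that $e$ already has numeric type and the promotion is a no-op at the level of \LangS{} terms (or, if \LangS{} retains the $!$ syntax, it is handled by a trivial rule); this is a small point to pin down carefully against the actual \LangS{} grammar. The arithmetic rules (Add, Sub), (Mul), (Div), (DMul): each \bea{} rule types $\mathbf{op}\ x\ y$ with graded hypotheses $x:_{\varepsilon+r_1}\num$ etc., and erases to the \LangS{} (Op) or (Div) rule applied to the variable premises; for (DMul) the discrete argument $z:\dnum$ lands in the $\Phi$ segment on the \LangS{} side, and one checks that $\mathbf{dmul}$ is among the operators covered by (Op).

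The main obstacle — though a mild one — is bookkeeping around the two-zone context and the exact relationship between \bea{}'s discrete types $\alpha$ and \LangS{}'s single type hierarchy. I would need to be precise that $\Phi,\Gamma^\circ$ is genuinely a well-formed \LangS{} context (disjointness of $\dom(\Phi)$ and $\dom(\Gamma)$ carries over, since \bea{} already required it), and that discrete type constructors like $\alpha\otimes\alpha$ are interpreted as the corresponding \LangS{} type $\sigma\otimes\sigma$. A secondary subtlety is the $!$-constructor and the \Del{}-valued discrete fragment: since erasure collapses the linear/discrete distinction, I would either include $!e$ in the \LangS{} grammar as a trivial wrapper or argue that (Disc) and (DLet) become admissible identity steps; settling this against \Cref{fig:typing_rules_2_full} (which does not list a rule for $!e$) is the one place requiring a definitional choice rather than a routine check. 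Everything else follows by the IH and the erasure-commutation lemma.
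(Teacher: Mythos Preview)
Your proposal is correct and takes essentially the same approach as the paper: induction on the \bea{} derivation, mapping each rule to its \LangS{} counterpart after erasing grades. The paper's proof singles out only the arithmetic rules as needing extra work---because they are axioms in \bea{} but the \LangS{} (Op) rule has premises, one manufactures those premises via (Var), exactly as you indicate with ``applied to the variable premises''---and otherwise treats all cases (including (Disc), which you flag) as immediate.
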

\noindent The proof of \Cref{lem:derive_ls} is given in 
\Cref{sec:app_lem_sound}.

The proof of backward error soundness requires that 
\LangS{} satisfies the basic properties of weakening and substitution: 

\begin{lemma}[Weakening] 
  Let $\Gamma \vdash e : \tau$ be a well-typed \LangS{} term. Then for any
  typing environment $\Delta$ disjoint with $\Gamma$, there is a derivation of
  $\Gamma, \Delta \vdash e : \tau$.  \label{lem:weakening}
\end{lemma}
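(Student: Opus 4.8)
The plan is to prove the lemma by structural induction on the derivation of $\Gamma \vdash e : \tau$, with a case analysis on the last rule applied; the argument is the standard one and closely parallels the weakening proof for \Lang{} (\Cref{thm:weakening}). Throughout I would use the fact that typing contexts in \LangS{} are finite partial maps from variables to types, so that $\Gamma, \Delta$ is simply their union and the ordering of bindings is irrelevant, and I would freely invoke $\alpha$-equivalence to assume every variable bound inside $e$ is fresh for $\Delta$, so that no capture arises when the context is enlarged. If one tracks the discrete context $\Phi$ explicitly, it is merely propagated unchanged through every rule, so the new bindings of $\Delta$ can be placed either in the linear part or in $\Phi$ without affecting anything.

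For the axioms (Var), (Unit), and (Const) the context in the conclusion is already arbitrary --- for instance (Var) derives $\Gamma', x:\sigma, \Gamma'' \vdash x:\sigma$ for all $\Gamma', \Gamma''$ --- so adjoining the bindings of $\Delta$ yields another instance of the same rule. For the single-premise rules that do not split the context, namely $(+\ \text{I}_L)$ and $(+\ \text{I}_R)$, I would apply the induction hypothesis to the premise with $\Delta$ adjoined, obtaining $\Gamma, \Delta \vdash e' : \rho$, and then re-apply the rule.

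The interesting cases are the rules whose conclusion joins the linear contexts of several premises by \emph{disjoint union}: $(\otimes\ \text{I})$, $(\otimes\ \text{E})$, $(+\ \text{E})$, (Let), (Op), and (Div). Here I would route all of $\Delta$ through a single premise --- say the leftmost, whose context I call $\Gamma_1$ --- apply the induction hypothesis there to get a derivation in context $\Gamma_1, \Delta$, leave the other premises untouched, and re-apply the rule. Since $\Delta$ is disjoint from $\Gamma$, hence from each $\Gamma_i$, the disjoint-union side condition of \LangS{} is preserved, and the context of the reconstructed conclusion is, as a partial map, exactly $\Gamma, \Delta$. For $(\otimes\ \text{E})$, $(+\ \text{E})$, and (Let), which introduce bound variables in a later premise, the freshness assumption above guarantees those variables do not already occur in $\Delta$, so the enlarged contexts stay well-formed.

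I do not expect any genuine obstacle: this is a routine structural-weakening argument. The only point needing care --- the \emph{main obstacle}, such as it is --- is the bookkeeping in the context-splitting cases: one must fix a consistent convention for which premise absorbs the new bindings, check that \LangS{}'s disjointness invariant survives, and perform the routine $\alpha$-renaming that avoids clashes with variables bound inside $e$.
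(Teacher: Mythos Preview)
Your proposal is correct and is the standard structural-induction argument; the paper itself does not spell out a proof for this lemma, treating it as routine (only substitution, soundness, adequacy, and pairing receive detailed proofs in the appendix). Your treatment of the context-splitting rules---routing the extra bindings through a single premise, or equivalently absorbing them into the shared $\Phi$ part---is exactly right for the \LangS{} rules in \Cref{fig:typing_rules_2_full}.
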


In the following theorem statement,
we write $e[v/x]$ for the capture avoiding substitution of the value $v$ for
all free occurrences of $x$ in $e$. Given a typing environment $x_1 : \tau_1,
\dots, x_i : \tau_i = \Gamma$, we denote the simultaneous substitution of a
vector of values $v_1, \dots, v_i = \bar{v}$ for the variables in $\Gamma$ as
$e[\bar{v}/dom(\Gamma)]$. Additionally, 
for a vector $\gamma_1,\dots, \gamma_i = \bar{\gamma}$ of well-typed closed values and 
a typing environment $x_1 : \tau_1, \dots, x_i : \tau_i = \Gamma$ (note the  
assumption that $\gamma$ and $\Gamma$ have the same length) we write 
$\bar{\gamma} \vDash \Gamma$ to denote the following 
\begin{align} \label{eq:subst_sat}
\bar{\gamma} \vDash \Gamma \triangleq
  \forall x_i \in dom(\Gamma). ~ \emptyset \vdash \gamma_i : \Gamma(x_i).
\end{align}

\begin{theorem}[Substitution] \label{thm:subst}
  Let $\Gamma \vdash e : \tau$ be a well-typed \LangS{} term. Then for any
  well-typed substitution $\bar{\gamma} \vDash \Gamma$ of closed values, there
  is a derivation $\emptyset \vdash e[\bar{\gamma}/dom(\Gamma)] : \tau.$
\end{theorem}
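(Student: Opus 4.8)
The plan is to prove this by structural induction on the typing derivation, following the recipe used for the \Lang{} substitution lemma (\Cref{thm:substitution}). Since \LangS{} is an ordinary first-order simply-typed language (the linear/discrete distinction of \bea{} has been erased), the argument is routine, but to make the induction go through I would first strengthen the statement to allow a residual context: \emph{if $\Gamma_1, \Gamma_2 \vdash e : \tau$ and $\bar\gamma \vDash \Gamma_2$, then $\Gamma_1 \vdash e[\bar\gamma/dom(\Gamma_2)] : \tau$}, where $\Gamma_1$ and $\Gamma_2$ are disjoint. The theorem is then the special case $\Gamma_1 = \emptyset$. This generalization is exactly what lets us descend through the rules---(Let), ($\otimes$ E), ($+$ E)---that introduce fresh variables into the contexts of their premises: those variables are kept out of $\Gamma_2$ (by $\alpha$-renaming, so that $dom(\Gamma_2)$ is disjoint from the bound variables) and simply carried along in $\Gamma_1$.

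First I would handle the base cases. For (Var) with $e = x$: if $x \in \Gamma_2$, say $x = x_i$ with $\tau = \Gamma_2(x_i)$, then $e[\bar\gamma/dom(\Gamma_2)] = \gamma_i$ and $\emptyset \vdash \gamma_i : \tau$ holds by the definition of $\bar\gamma \vDash \Gamma_2$ (\Cref{eq:subst_sat}); applying \Cref{lem:weakening} then yields $\Gamma_1 \vdash \gamma_i : \tau$. If instead $x \in \Gamma_1$, the substitution is vacuous and (Var) applies directly. The cases (Unit) and (Const) are immediate, since those terms are closed and unaffected by the substitution.

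For the inductive step I would proceed uniformly by case analysis on the last rule applied. In every rule the conclusion context is built from the premise contexts by disjoint union, possibly sharing a common part (the $\Phi$-portion, and $\Delta$ in the case rule); since each $\gamma_i$ is a closed value, restricting $\bar\gamma$ to the variables occurring in a given premise is again a well-typed substitution, and substituting for variables absent from that premise is a no-op. I would therefore split $\bar\gamma$ appropriately, apply the (generalized) induction hypothesis to each premise---leaving any newly bound variable in the residual context---and then re-apply the very same typing rule. That this reconstructs $e[\bar\gamma/dom(\Gamma_2)]$ uses that simultaneous substitution commutes with all the term constructors, which holds because the $\gamma_i$ are closed (so no variable capture) and the bound variables are disjoint from $dom(\Gamma_2)$. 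The arithmetic rules (Op) and (Div) are handled exactly like ($\otimes$ I). I do not anticipate a genuine obstacle: the only point requiring care is strengthening the induction hypothesis with a residual context so that the binder-introducing rules go through, together with the standard observation that substitution of closed values is capture-free; everything else is bookkeeping.
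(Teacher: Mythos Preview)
Your proposal is correct and takes essentially the same approach as the paper: structural induction on the \LangS{} typing derivation with case analysis on the last rule. Your explicit residual-context strengthening is the standard way to handle the binder-introducing cases; the paper's appendix proof leaves this implicit, instead gesturing at an ``inversion'' on the substituted left premise to extend the substitution before invoking the induction hypothesis on the right premise.
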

\noindent Most cases for substitution are routine; we provide the details of the
proof in \Cref{sec:app_lem_sound}.

\subsubsection{An Operational Semantics for \LangS{}} 
\begin{figure}
\begin{center}
\AXC{}
\UIC{$() \Downarrow ()$}
\bottomAlignProof
\DisplayProof
\hskip 0.5em
\AXC{$e \Downarrow u$}
\AXC{$f \Downarrow v$}
\BIC{$(e,f) \Downarrow (u,v)$}
\bottomAlignProof
\DisplayProof
\hskip 0.5em
\AXC{$e \Downarrow (u,v)$}
\AXC{$f[u/x][v/y] \Downarrow w$}
\BIC{$\slet {(x,y)} e f \Downarrow w$}
\bottomAlignProof
\DisplayProof
\hskip 0.5em
\vskip 1em
\AXC{}
\UIC{$k \in R \Downarrow k \in R $}
\bottomAlignProof
\DisplayProof
\hskip 0.5em
\AXC{$e \Downarrow v$ }
\UIC{$\inl e \Downarrow \inl v$}
\bottomAlignProof
\DisplayProof
\hskip 0.5em
\AXC{$e \Downarrow v$ }
\UIC{$\inr e \Downarrow \inr v$}
\bottomAlignProof
\DisplayProof
\vskip 1em
%
\AXC{$e \Downarrow u$ }
\AXC{$f[u/x] \Downarrow v$ }
\BIC{$\slet x e f \Downarrow v$}
\bottomAlignProof
\DisplayProof
\vskip 1em
%
\AXC{$e \Downarrow \inl v$}
\AXC{$e_1[v/x] \Downarrow w$}
\BIC{$\mathbf{case} \ e \ \mathbf{of} \ (x.e_1 \ |\ y.e_2) \Downarrow w$}
\bottomAlignProof
\DisplayProof
\hskip 0.5em
\AXC{$e \Downarrow \inr v$}
\AXC{$e_2[v/y] \Downarrow w$}
\BIC{$\mathbf{case} \ e \ \mathbf{of} \ (x.e_1 \ | \ y.e_2) \Downarrow w$}
\bottomAlignProof
\DisplayProof
\vskip 1em
\AXC{$e_1 \stepid k_1$}
\AXC{$e_2 \stepid k_2$}
\AXC{$\mathbf{Op} \in \{\mathbf{Add}, \mathbf{Sub}, \mathbf{Mul}, \mathbf{Div}, \mathbf{LE}\}$}
\TIC{$\mathbf{Op} \ {e_1} \ {e_2} \stepid f_{op} \ ({k_1}, {k_2})$}
\bottomAlignProof
\DisplayProof
\vskip 1em
\AXC{$e_1 \stepap k_1$}
\AXC{$e_2 \stepap k_2$}
\AXC{$\mathbf{Op} \in \{\mathbf{Add}, \mathbf{Sub}, \mathbf{Mul}, \mathbf{Div}, \mathbf{LE}\}$}
\TIC{$\mathbf{Op} \ {e_1} \ {e_2} \stepap \tilde{f}_{op} \ ({k_1}, {k_2})$}
\bottomAlignProof
\DisplayProof
\vskip 1em

\end{center}
    \caption{Evaluation rules for \LangS. 
    A generic step relation ($\Downarrow$) is used when the rule is identical for 
    both the ideal ($\stepid$) and approximate ($\stepap$) step relations.}
    \label{fig:op_semantics_full}
\end{figure}

Intuitively, an ideal problem and its approximating program can behave
differently given the same input. Following this intuition, we allow programs
in \LangS{} to be executed under an ideal or approximate big-step operational
semantics. The full set of evaluation rules is given in 
\Cref{fig:op_semantics_full}. We write $e \stepid v$ (resp., $e \stepap v$) to
denote that a term $e$ evaluates to value $v$ under the ideal (resp.,
approximate) semantics.  Values, the subset of terms that are allowed as
results of evaluation, are defined as follows.
  \begin{alignat*}{1}
         &\text{Values } v ~::=~ 
         ()
         \mid k \in R
         \mid  (v, v)
         \mid \inl v
         \mid \inr v
  \end{alignat*}

An important feature of \LangS{} is that it is deterministic and strongly
normalizing:

\begin{theorem}[Strong Normalization] \label{thm:normalizing}
  If $\emptyset \vdash e : \tau$, then the well-typed closed values $\emptyset
  \vdash v, v' : \tau$ exist such that $e \stepid v$ and $e \stepap v'$. 
\end{theorem}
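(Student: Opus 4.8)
## Proof Proposal for \Cref{thm:normalizing}

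The plan is to prove strong normalization for \LangS{} by a standard structural induction on the typing derivation $\emptyset \vdash e : \tau$, establishing simultaneously that evaluation terminates and that the resulting value is well-typed and closed. Since \LangS{} is a first-order simply-typed language with no recursion, no recursive types, and no fixed-point operators, the usual subtlety of requiring a logical-relations (reducibility) argument does not arise: a direct induction on the typing derivation suffices. This contrasts with the situation for \Lang{} (\Cref{thm:SN}), where the $\letmx$ rearrangement rules force a reducibility argument. Here, every evaluation rule in \Cref{fig:op_semantics_full} strictly decomposes its subject into subterms that appear in the premises of the corresponding typing rule, so the induction is well-founded on the size of the typing derivation.

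The key steps, in order, are as follows. First, I would strengthen the statement to carry the substitution closure: for a well-typed open term $\Gamma \vdash e : \tau$ and any well-typed closing substitution $\bar\gamma \vDash \Gamma$, the term $e[\bar\gamma/\mathrm{dom}(\Gamma)]$ evaluates under both $\stepid$ and $\stepap$ to well-typed closed values of type $\tau$. By \Cref{thm:subst}, $\emptyset \vdash e[\bar\gamma/\mathrm{dom}(\Gamma)] : \tau$ is derivable, so this reformulation is equivalent to the theorem when $\Gamma = \emptyset$. Second, I would proceed by induction on the typing derivation, case by case on the last rule applied. The base cases (Var), (Unit), (Const) are immediate since the closed term is already a value (for (Var), after substitution, $x[\bar\gamma/\dots]$ is the closed value $\gamma_x$, which is well-typed by the definition of $\bar\gamma \vDash \Gamma$ in \Cref{eq:subst_sat}). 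For the introduction rules ($\otimes$ I), ($+\,\mathrm{I}_L$), ($+\,\mathrm{I}_R$), I apply the induction hypothesis to the premise(s), obtaining value(s), and then the corresponding evaluation rule produces a value; type preservation follows by re-applying the introduction rule. For the elimination rules ($\otimes$ E), (Let), ($+$ E), I apply the induction hypothesis to the scrutinee to get a value of the appropriate shape, use inversion on its typing (canonical forms: a closed value of type $\sigma_1 \otimes \sigma_2$ must be a pair $(v_1, v_2)$; a closed value of type $\sigma + \tau$ must be $\inl v$ or $\inr v$), then form the appropriate substitution, invoke \Cref{thm:subst} to type the substituted body, and apply the induction hypothesis again to that body. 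Third, for the arithmetic rules (Op) and (Div), I apply the induction hypothesis to both operand derivations to obtain numeric constants $k_1, k_2 \in R$, then the rule $\mathbf{Op}\ e_1\ e_2 \stepid f_{op}(k_1, k_2)$ (resp. $\stepap \tilde f_{op}(k_1, k_2)$) produces a value — here I need that $R$ is closed under the interpretations $f_{op}, \tilde f_{op}$ (and that for (Div), the result of type $\num + \err$ is a well-typed value, which holds since the division operation returns either a number tagged into the left summand or the error value tagged into the right).

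The main obstacle I anticipate is the handling of the $!$-constructor and the discrete-versus-linear distinction as it projects into \LangS{}. In \bea{}, the $!e$ / $\mathbf{dlet}$ / $\mathbf{dmul}$ machinery tracks linearity, but in \LangS{} all of this type structure is erased (types are not categorized as linear and discrete), so I need to check that the erasure $\Delta^\circ$ behaves well under substitution and that the evaluation rules for the discrete let-bindings and $\mathbf{dmul}$ — which must appear in the full evaluation rules even though the displayed fragment of \Cref{fig:op_semantics_full} foregrounds the linear ones — reduce exactly like their linear counterparts. Concretely, $\dlet z e f$ and $\slet x e f$ should have the same operational behavior (evaluate $e$ to a value, substitute, evaluate $f$), and $\dmul$ should evaluate like $\mathbf{mul}$; the induction then goes through uniformly. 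A secondary, purely bookkeeping concern is ensuring that the two step relations $\stepid$ and $\stepap$ are handled in lockstep: since every rule except the arithmetic ones is shared (written with the generic $\Downarrow$), the induction for those cases proves both statements at once, and only the (Op)/(Div) cases branch, where I use $f_{op}$ for the ideal relation and $\tilde f_{op}$ for the approximate one — both landing in $R$ by assumption. Determinism, if needed as part of the theorem, follows separately by a trivial induction since each term has at most one applicable evaluation rule and the rules are syntax-directed.
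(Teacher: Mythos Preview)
The paper does not provide a proof of \Cref{thm:normalizing}; it states the result as a fact about \LangS{} and then uses it (together with substitution) in the adequacy argument. Your proposal therefore cannot be compared against a paper proof, but it is correct and supplies exactly the routine argument the paper omits. The strengthening to open terms under arbitrary closing substitutions $\bar\gamma \vDash \Gamma$ is the right move: it makes the inductive hypothesis strong enough to handle the elimination forms (Let), ($\otimes$ E), ($+$ E), where one must re-enter the body with an extended substitution. Because \LangS{} is first-order, the value produced by the scrutinee is directly a member of the substitution for the body's premise derivation, and that premise is a strict sub-derivation, so plain structural induction on the typing derivation suffices --- no logical relation is needed.

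Your anticipated obstacle about the $!$-constructor, $\mathbf{dlet}$, and $\mathbf{dmul}$ is well spotted but ultimately benign: the \LangS{} typing rules in \Cref{fig:typing_rules_2_full} collapse the discrete/linear distinction entirely (there is a single (Var), a single ($\otimes$ E), and $\mathbf{dmul}$ is folded into the generic (Op) rule), and the operational semantics in \Cref{fig:op_semantics_full} make no distinction either. The paper's presentation is slightly elliptical here --- it says the grammar of \LangS{} terms is ``mostly unchanged'' from \bea{} while the displayed typing rules and evaluation rules silently drop (Disc) and (DLet) --- but for the purposes of your induction the resolution you suggest (treat $\mathbf{dlet}$ like $\mathbf{let}$, $\mathbf{dmul}$ like $\mathbf{mul}$) is exactly what the erasure $\Gamma^\circ$ and the single-context judgment $\Phi,\Gamma \vdash e : \tau$ are meant to encode.
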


In our main result of backward error soundness, we will relate the ideal and
approximate operational semantics given above to the backward error lens
semantics of \bea{} via an interpretation of programs in \LangS{} as morphisms
in the category \Set.

\subsection{Interpreting \LangS}
Our main backward error soundness theorem requires that we have explicit access
to each transformation in a backward error lens. We achieve this by lifting the
close syntactic correspondence between \LangS{} and \bea{} to a close semantic
correspondence using the forgetful functors $U_{id} : \Bel \rightarrow \Set$
and $U_{ap} : \Bel \rightarrow \Set$ to interpret \LangS{} programs in \Set. 

We start with the interpretation of \LangS{} types, defined as follows 
\begin{align*}
    \pdenot{\num} &\triangleq U\denot{\num} 
        = U\denot{\dnum}  \\
    \pdenot{\textbf{unit}} &\triangleq U\denot{(\{\star\},\underline{0})} \\
    \pdenot{\sigma \otimes \tau}  &\triangleq 
      U\denot{\sigma} \times  U\denot{\tau}  \\
   \denot{\sigma + \tau} &\triangleq U\denot{\sigma} + U\denot{\tau} 
\end{align*}  

Given the above interpretation of types, the interpretation $\pdenot{\Gamma}$
of a \LangS{} typing context $\Gamma$ is then defined as 
\begin{align*}
  \pdenot{\emptyset} &\triangleq U\denot{(\{\star\},\underline{0})} \\
  \pdenot{\Gamma,x:\sigma} &\triangleq 
  \pdenot{\Gamma} \otimes \pdenot{\sigma}
\end{align*}  

Now, using the above definitions for the interpretations of \LangS{} types and
contexts, we can use the interpretation of \bea{} (\Cref{def:interpL}) terms
along with the functors $U_{id}$ and $U_{ap}$ to define the interpretation of
\LangS{} programs as morphisms in \Set:

\begin{definition}(Interpretation of \LangS{} terms.) \label{def:interpS}
  Each typing derivation $\Gamma \vdash e : \tau$ in \LangS{} yields the set
  maps $\pdenot{e}_{id} : \pdenot{\Gamma} \rightarrow \pdenot{\tau}$ and
  $\pdenot{e}_{ap} : \pdenot{\Gamma} \rightarrow \pdenot{\tau}$, by structural
  induction on the \LangS{} typing derivation $\Gamma \vdash e : \tau$.
\end{definition}
\noindent We give the detailed constructions for \Cref{def:interpS} in
\Cref{sec:app_interp_LS}. 

Given \Cref{def:interpS}, we can now show that \LangS{} is semantically sound
and compuationally adequate: a \LangS{} program computes to a value if and only
if their interpretations in \Set{} are equal. Because \LangS{} has an ideal and
approximate operational semantics as well as an ideal and approximate
denotational semantics, we have two version of the standard theorems for
soundness and adequacy:

\begin{theorem}[Soundness of $\pdenot{-}$]\label{thm:soundid}
  Let $\Gamma \vdash e : \tau$ be a well-typed \LangS{} term.  
  Then for any
  well-typed substitution of closed values $\bar{\gamma} \vDash \Gamma$, if
  $e[\bar{\gamma}/dom(\Gamma)]\stepid v$ for some value $v$, then
  ${\pdenot{\Gamma \vdash e : \tau}_{id}\pdenot{\bar{\gamma}}_{id} =
  \pdenot{v}_{id}}$ (and similarly for $\stepap$ and $\pdenotap{-}$).
\end{theorem}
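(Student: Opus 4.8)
The plan is to prove this by structural induction on the typing derivation $\Gamma \vdash e : \tau$ in \LangS, following the standard ``soundness'' argument that connects a big-step operational semantics to a denotational semantics. The statement is generic in the step relation, so I would phrase the induction once for $\stepid$ together with $\pdenot{-}_{id}$ and observe that the argument for $\stepap$ with $\pdenotap{-}$ is identical except for the choice of semantic maps for the arithmetic primitives (the forgetful functor $\Uid$ versus $\Uap$ applied to the lens interpretation from \Cref{def:interpL}, or equivalently the clauses of \Cref{def:interpS}). The key technical ingredient beyond the induction hypothesis is the Substitution lemma (\Cref{thm:subst}) and a corresponding \emph{semantic} substitution lemma: substituting a well-typed closed value $\gamma_i$ for $x_i$ commutes with interpretation, i.e. $\pdenot{e[\bar\gamma/dom(\Gamma)]}_{id} = \pdenot{\Gamma \vdash e : \tau}_{id} \circ \pdenot{\bar\gamma}_{id}$ (strictly speaking the closed term on the left is interpreted as a map out of the unit, and $\pdenot{\bar\gamma}_{id}$ names the point of $\pdenot{\Gamma}$ assembled from the values). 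I would either state this semantic substitution lemma separately (it is itself a routine structural induction, matching the syntactic Substitution lemma) or fold it into the main induction; stating it separately keeps the bookkeeping cleaner.

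First I would set up the induction on the derivation of $\Gamma \vdash e : \tau$ and, for each rule, invert the evaluation $e[\bar\gamma/dom(\Gamma)] \stepid v$ to see which evaluation rule of \Cref{fig:op_semantics_full} was applied. For the base cases --- (Var), (Unit), (Const) --- the evaluation is immediate ($x[\bar\gamma/dom(\Gamma)]$ is the corresponding closed value $\gamma_i$, which evaluates to itself), and the denotational equation holds by unfolding \Cref{def:interpS} and the definition of $\pdenot{\bar\gamma}_{id}$ as a projection. For the structural rules --- $(\otimes\,\mathrm{I})$, $(\otimes\,\mathrm{E})$, $(+\,\mathrm{I}_L)$, $(+\,\mathrm{I}_R)$, $(+\,\mathrm{E})$, (Let) --- I would apply the inversion to get sub-evaluations for the premises, feed those to the induction hypothesis to get the denotational equalities for the subterms, and then use the semantic substitution lemma to handle the let-style binders (in (Let), $(\otimes\,\mathrm{E})$, and $(+\,\mathrm{E})$, where the second premise's interpretation is post-composed with the value of the first premise). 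Each of these cases reduces to unfolding the relevant clause of \Cref{def:interpS} and a short diagram chase; I would not write out the calculations. For the operation rules (Op) and (Div), inversion gives $e_1 \stepid k_1$, $e_2 \stepid k_2$, and $v = f_{op}(k_1,k_2)$; the induction hypothesis plus the definition of the interpretation of $\mathbf{Op}$ as the appropriate set map (the underlying forward map $\Uid$ of the lens $\mathcal{L}_{op}$, resp.\ the underlying approximate map $\Uap$ for $\stepap$) closes the case, since by construction the interpretation of the primitive \emph{is} $f_{op}$ (resp.\ $\tilde f_{op}$).

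The main obstacle, and the step I expect to require the most care, is handling the interaction between the discrete and linear contexts in the \LangS{} interpretation --- specifically making sure the semantic substitution lemma is stated at the right generality. In \bea{} the discrete context $\Phi$ is duplicated freely (via the discrete diagonal $t_X$ of \Cref{lem:discr-diag}) while linear variables are threaded linearly, and \LangS{} flattens this distinction but still inherits the shape of the \bea{} interpretation through $\Uid$/$\Uap$. So when I substitute a vector $\bar\gamma$ for \emph{all} free variables, I need the semantic substitution lemma to respect the way $\pdenot{\Gamma}$ was built up (including the contraction/weakening structure coming from the comonad $D_r$, which becomes trivial after $U$ since $UD_r = U$ on objects). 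I would make this precise by first proving the semantic substitution lemma for a single variable and then iterating, being careful that the order of substitution does not matter because the values are closed. A secondary, more mundane obstacle is the (Div) case returning a sum type $\num + \err$: inversion on $\Downarrow$ has two subcases depending on whether the divisor is zero, and I would check that $\pdenot{-}_{id}$ (built from the coproduct structure in \Bel{} pushed through $U$) matches the operational behavior in both subcases. None of this is deep; it is the usual ``lots of easy cases, get the substitution lemma right'' structure, and the statement already flags that ``most cases for substitution are routine,'' which applies here as well.
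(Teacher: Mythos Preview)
Your proposal is correct and matches the paper's approach: structural induction on the \LangS{} typing derivation, with inversion on the big-step relation in each case to obtain the hypotheses needed for the induction hypothesis. The separate semantic substitution lemma you contemplate is not actually needed, since the theorem's conclusion is already phrased as $\pdenot{\Gamma \vdash e : \tau}_{id}\pdenot{\bar\gamma}_{id} = \pdenot{v}_{id}$ (rather than going through $\pdenot{e[\bar\gamma/dom(\Gamma)]}_{id}$), so in the binding cases ((Let), ($\otimes$ E), ($+$ E)) the induction hypothesis on the body applies directly with the substitution extended by the bound value---this is precisely your ``fold it into the main induction'' option, and it is what the paper does.
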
 
\noindent The proof of \Cref{thm:soundid} is given in \Cref{sec:app_lem_sound}.

\begin{theorem}[Adequacy of $\pdenot{-}$]\label{thm:adequacy}
  Let $\Gamma \vdash e : \tau$ be a well-typed \LangS{} term. Then for any
  well-typed substitution of closed values $\bar{\gamma} \vDash \Gamma$, if $
  \pdenot{\Gamma \vdash e : \tau}_{id}\pdenot{\bar{\gamma}}_{id} =
  \pdenot{v}_{id}$ for some value $v$, then $e[\bar{\gamma}/dom(\Gamma)]\stepid
  v$ (and similarly for $\stepap$ and $\pdenotap{-}$).
\end{theorem}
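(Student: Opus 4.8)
The plan is to derive adequacy from the two theorems already established for \LangS{}---strong normalization (\Cref{thm:normalizing}) and soundness (\Cref{thm:soundid})---together with one additional, essentially syntactic ingredient: that the interpretation of closed values is injective. First I would close off the substitution. By the substitution lemma (\Cref{thm:subst}) the term $e[\bar{\gamma}/dom(\Gamma)]$ is well-typed in the empty context, $\emptyset \vdash e[\bar{\gamma}/dom(\Gamma)] : \tau$, so strong normalization furnishes a closed value $\emptyset \vdash v' : \tau$ with $e[\bar{\gamma}/dom(\Gamma)] \stepid v'$. Applying soundness (\Cref{thm:soundid}) to this evaluation gives $\pdenot{\Gamma \vdash e : \tau}_{id}\pdenot{\bar{\gamma}}_{id} = \pdenot{v'}_{id}$, and combining with the hypothesis $\pdenot{\Gamma \vdash e : \tau}_{id}\pdenot{\bar{\gamma}}_{id} = \pdenot{v}_{id}$ we obtain the equation $\pdenot{v'}_{id} = \pdenot{v}_{id}$ in \Set{}.

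The remaining task is to upgrade this equation of denotations to the syntactic identity $v' = v$, from which $e[\bar{\gamma}/dom(\Gamma)] \stepid v$ follows immediately. I would isolate this as an auxiliary \emph{value inversion} lemma: for every \LangS{} type $\sigma$, the assignment $v \mapsto \pdenot{v}_{id}$ on closed values $\emptyset \vdash v : \sigma$ is injective. The proof is a routine induction on $\sigma$ (equivalently on the shape of \LangS{} values, which is pinned down by the type): at $\unit$ the only value is $()$; at $\num$ the values are the constants $k \in R$, and injectivity holds for any faithful interpretation of the signature, e.g. $R \subseteq \R$ with $\pdenot{k}_{id}$ the underlying real; at $\sigma \otimes \tau$ a value has the form $(v_1,v_2)$ with $\pdenot{(v_1,v_2)}_{id} = (\pdenot{v_1}_{id},\pdenot{v_2}_{id})$, so equal denotations force componentwise equality and we invoke the induction hypothesis; at $\sigma + \tau$ a value is $\inl v_1$ or $\inr v_2$, and the injection tag is recovered from the denotation because $\pdenot{\sigma+\tau}_{id}$ is the set-theoretic coproduct $U\denot{\sigma} + U\denot{\tau}$, after which the induction hypothesis finishes the case. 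Feeding $\pdenot{v'}_{id} = \pdenot{v}_{id}$ through this lemma yields $v' = v$, completing the argument. The $\stepap$/$\pdenotap{-}$ statement is proved verbatim, using the approximate halves of \Cref{thm:normalizing} and \Cref{thm:soundid} in place of the ideal ones.

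The only point requiring any care---the ``main obstacle,'' such as it is---is the base case of the value inversion lemma at type $\num$: its validity hinges on how the constant signature $R$ is interpreted, so one must either fix $R$ so that distinct syntactic constants receive distinct elements of $\pdenot{\num} = U\denot{\num}$, or simply restrict attention to such instantiations (which is harmless, since all the concrete settings of interest take $R \subseteq \R$ with the identity interpretation). Every other case is an immediate structural induction, and notably the argument does not rely on determinism of $\stepid$---existence of some normal form together with injectivity is enough---although determinism does confirm the expected uniqueness of $v$.
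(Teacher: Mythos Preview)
Your proof is correct and, if anything, more complete than the paper's own sketch. The paper proceeds by cases on the shape of $e$, stating that each case ``follows by substitution (\Cref{thm:subst}) and normalization (\Cref{thm:normalizing})'' together with determinism of $\stepid$; it does not spell out how one passes from ``$e[\bar{\gamma}/\dom(\Gamma)]$ normalizes to some value $v'$'' to the syntactic identity $v' = v$. Your argument makes precisely this step explicit: you invoke soundness (\Cref{thm:soundid}) to obtain $\pdenot{v'}_{id} = \pdenot{v}_{id}$ and then discharge the equality via a value-inversion lemma proved by induction on the type. This is a cleaner decomposition than the paper's---it avoids the case analysis on $e$ entirely and isolates the one nontrivial ingredient (injectivity of $\pdenot{-}_{id}$ on closed values), whereas the paper's per-case ``substitution and normalization'' appeal tacitly relies on the same injectivity without naming it. Your caveat about the base case at $\num$ is well-taken but harmless in the paper's intended instantiations, where $R \subseteq \R$ and constants denote themselves.
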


\noindent Details of the proof of \Cref{thm:adequacy} can be found in
\Cref{sec:app_lem_sound}.

Our main error backward error soundness theorem requires one final piece of
information: we must know that the functors $U_{id}$ and $U_{ap}$ project
directly from interpretations of \bea{} programs in \Bel{} (\Cref{def:interpL})
to interpretations of \LangS{} programs in \Set{} (\Cref{def:interpS}):  
\begin{lemma}[Pairing]\label{lem:pairing}
  Let $\Phi \mid \Gamma \vdash e : \sigma$ be a \bea{} program. Then we have 
\[
  U_{id}\denot{\Phi \mid \Gamma \vdash e : \sigma} = 
  \pdenotid{\Phi,\Gamma^\circ \vdash e : \sigma} 
  \quad \text{and} \quad 
  U_{ap}\denot{\Phi \mid \Gamma \vdash e : \sigma} = 
  \pdenotap{\Phi,\Gamma^\circ \vdash e : \sigma}.
\]
\end{lemma}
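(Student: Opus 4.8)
The plan is to prove \Cref{lem:pairing} by structural induction on the typing derivation of the \bea{} program $\Phi \mid \Gamma \vdash e : \sigma$, exploiting the fact that both the \Bel{} interpretation (\Cref{def:interpL}) and the \LangS{} interpretation (\Cref{def:interpS}) are defined by induction on essentially the same derivation structure (via \Cref{lem:derive_ls}). The key observation driving the argument is that the two forgetful functors $U_{id}, U_{ap} : \Bel \to \Set$ are defined to project a backward error lens $(f,\tilde f, b)$ to its forward map $f$ and approximation map $\tilde f$ respectively, and — crucially — these functors are \emph{strict monoidal} and preserve all the categorical structure used to build the \Bel{} interpretation: they send the monoidal product $\otimes$ in \Bel{} to the product in $\Set$, send the discrete diagonal $t_X$ to the set-theoretic diagonal, send the graded comonad maps $D_r$, $\varepsilon$, $\delta$, $m$ to identities, and send each primitive lens (e.g. $L_{add}$, $\mathcal{L}_{mul}$) to its forward (resp. approximate) set-map. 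So the strategy is: first establish that $U_{id}$ and $U_{ap}$ commute with every construction appearing in \Cref{def:interpL}, then push the induction through mechanically.

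Concretely, I would proceed as follows. First, I would record a preliminary lemma that $U_{id}$ and $U_{ap}$ are strict symmetric monoidal functors and that they commute with coproducts, projections, the discrete diagonal $t_X$, and each of the graded comonad structure maps; this is immediate from the definitions in \Cref{sec:meta:basic}, since each of those \Bel-morphisms has underlying forward and approximate maps equal to the corresponding $\Set$-morphism (identities for the comonad maps, set-theoretic diagonal for $t_X$, etc.). Second, I would check that the interpretations of \bea{} types and contexts project correctly: $U\denot{\num} = \pdenot{\num}$, $U\denot{\dnum} = \pdenot{\num}$, and, since $U$ is strict monoidal, $U\denot{\Phi \mid \Gamma} = \pdenot{\Phi, \Gamma^\circ}$ — here the grade erasure $(-)^\circ$ corresponds exactly to forgetting the $D_r$ shifts, which $U$ collapses. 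Third, I would run the induction on the derivation, case by case on the last typing rule. The structural cases (Var, DVar, Unit, $\otimes$ I, $\otimes$ E$_\sigma$, $\otimes$ E$_\alpha$, $+$ I, $+$ E, Let, Disc, DLet) all follow by applying the induction hypothesis to the subderivations and then invoking the preliminary lemma to commute $U_{id}$/$U_{ap}$ past the composition of structure maps that \Cref{def:interpL} uses to assemble the interpretation — e.g. for (Let), one unfolds the composition $h_2 \circ (D_r(h_1) \otimes \cdots) \circ (t_{D_r\denot\Phi} \otimes \cdots)$ given in \Cref{def:interpL}, applies $U$, uses strict monoidality and the IH, and matches it against the \LangS{} (Let) interpretation. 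The arithmetic cases (Add, Sub, Mul, Div, DMul) are handled by noting that $U_{id}(L_{add}) = f_{add}$ and $U_{ap}(L_{add}) = \tilde f_{add}$ (and analogously for the other primitives), which match precisely the operational-semantics functions $f_{op}$ and $\tilde f_{op}$ used in \Cref{fig:op_semantics_full} and the \LangS{} (Op)/(Div) interpretations; the surrounding counit and $m_{\varepsilon \le \varepsilon + r}$ maps in \Cref{def:interpL}'s (Add) case all become identities under $U$.

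The main obstacle I anticipate is bookkeeping rather than conceptual difficulty: the \Bel{} interpretation in \Cref{def:interpL} inserts a fair number of ``administrative'' morphisms — associators, unitors, symmetries, the comonad maps $D_r$, $\varepsilon$, $\delta$, $m_{q \le r}$, $m_{r,X,Y}$, and the discrete diagonals $t$ used to replicate the discrete context $\Phi$ — and one must verify that each of these is sent by $U_{id}$ and $U_{ap}$ to exactly the administrative morphism used (often silently, with coherence isomorphisms elided) in the \LangS{} interpretation. In particular the cases involving the discrete context $\Phi$ require care: \Cref{def:interpL} uses the discrete diagonal $t_{D_r\denot\Phi}$ to share $\Phi$ between the two subterms, whereas \LangS{} simply reuses $\Phi$ in both premises of the corresponding rule; one must check that $U(t_X)$ is the genuine set-theoretic diagonal so that these agree. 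I expect this to go through cleanly because discreteness of $\denot\Phi$ and the definition of $t_X$ in \Cref{lem:discr-diag} make the forward and approximate maps of $t_X$ literally $x \mapsto (x,x)$. Once the preliminary commutation lemma is in place, each inductive case reduces to unfolding one composition and rewriting, so the proof is routine though somewhat lengthy — appropriately deferred to \Cref{sec:app_lem_sound} with the representative cases spelled out.
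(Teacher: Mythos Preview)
Your proposal is correct and follows essentially the same approach as the paper: both proceed by structural induction on the \bea{} typing derivation, unfolding the \Bel{} interpretation in each case, using that $U_{id}$ and $U_{ap}$ send the administrative morphisms (the comonad maps $\varepsilon$, $m_{0\le r}$, $m_{r,X,Y}$, $D_r$, the discrete diagonal $t_X$, associators, unitors, symmetries) to the corresponding $\Set$ maps (mostly identities or the set-theoretic diagonal), and then matching against the \LangS{} interpretation via the induction hypothesis. Your explicit isolation of a ``preliminary commutation lemma'' is a slightly cleaner organization than the paper's, which simply unfolds these facts inline in each case, but the underlying argument is identical.
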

\noindent A proof of \Cref{lem:pairing} follows by induction on the structure
of the \bea{} derivation $\Phi\mid \Gamma \vdash e : \sigma$; details of the
proof can be found in \Cref{sec:app_lem_sound}.

\begin{theorem}[Backward Error Soundness]\label{thm:main}
  Let \[
  \Phi\mid  x_1:_{r_1}\sigma_1,\cdots,x_n:_{r_n}\sigma_n = \Gamma \vdash e
  : \sigma
  \] be a well-typed \bea{} term. Then for any well-typed substitutions
  $\bar{p} \vDash \Phi$ and $\bar{k} \vDash \Gamma^\circ$, if
  \[e[\bar{p}/dom(\Phi)][\bar{k}/dom(\Gamma)] \stepap v\] for some value $v$,
  then the well-typed substitution $\bar{l} \vDash \Gamma^\circ$ exists such
  that \[e[\bar{p}/dom(\Phi)][\bar{l}/dom(\Gamma)] \stepid v,\] and
  $d_{\denot{\sigma_i}}({k}_i,{l}_i) \le r_i$ for each $k_i \in \bar{k}$ and
  $l_i \in \bar{l}$.
\end{theorem}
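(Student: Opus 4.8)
The plan is to prove backward error soundness by transporting the statement through the $\Set$-semantics of \LangS{} and then using the backward map supplied by the \Bel{}-interpretation of the \bea{} program. First I would apply \Cref{lem:derive_ls} to the given well-typed term $\Phi \mid \Gamma \vdash e : \sigma$ to obtain a \LangS{} derivation $\Phi, \Gamma^\circ \vdash e : \sigma$, and I would write $\denot{\Phi \mid \Gamma \vdash e : \sigma} = (f,\tilde f,b)$ for its \Bel{}-interpretation, a backward error lens $\denot{\Phi} \otimes \denot{\Gamma} \to \denot{\sigma}$ where $\denot{\Gamma} = D_{r_1}\denot{\sigma_1} \otimes \cdots \otimes D_{r_n}\denot{\sigma_n}$. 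By the pairing lemma (\Cref{lem:pairing}), $U_{id}$ and $U_{ap}$ send this lens to $\pdenotid{\Phi,\Gamma^\circ \vdash e : \sigma}$ and $\pdenotap{\Phi,\Gamma^\circ \vdash e : \sigma}$, i.e.\ the underlying ideal and approximate set-maps are exactly $f$ and $\tilde f$ respectively.

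Next I would fix the well-typed substitutions $\bar p \vDash \Phi$ and $\bar k \vDash \Gamma^\circ$ and assume $e[\bar p/dom(\Phi)][\bar k/dom(\Gamma)] \stepap v$. By soundness of the approximate $\Set$-semantics (\Cref{thm:soundid} for $\stepap$), this gives $\pdenotap{\Phi,\Gamma^\circ \vdash e : \sigma}(\pdenot{\bar p}_{ap}, \pdenot{\bar k}_{ap}) = \pdenotap{v}$; combined with the pairing lemma this says $\tilde f(\bar p, \bar k) = w$ where $w \triangleq \pdenotap{v}$. Here I would use that the interpretations of closed values are literally the corresponding set-elements (an easy consequence of \Cref{def:interpS} restricted to values, which I would note but not belabor). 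Now I define the witness $\bar l$ by feeding $(\bar p, \bar k)$ and the output $w$ into the backward map: $b\bigl((\bar p,\bar k),\, w\bigr) = (\bar p,\, \bar l)$, observing that the $\denot{\Phi}$-components are untouched because $\denot{\Phi}$ is a discrete object and the backward maps on discrete components are projections (this is exactly the structure built in the (DVar)/(DLet) cases and the discrete diagonal of \Cref{lem:discr-diag}). I would then extract each $\bar l$ component as a closed value of the appropriate type $\sigma_i$ and check $\bar l \vDash \Gamma^\circ$, using that the carrier of $D_{r_i}\denot{\sigma_i}$ is the same set as that of $\denot{\sigma_i}$.

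With $\bar l$ in hand, Property 2 of the backward error lens (\Cref{def:error_lens}), namely $f(b((\bar p,\bar k),w)) = w$, gives $f(\bar p,\bar l) = w = \pdenotap{v}$. Since ideal and approximate semantics agree on values, $\pdenotap{v} = \pdenotid{v}$, so $f(\bar p,\bar l) = \pdenotid{v}$; by the pairing lemma and adequacy of the ideal $\Set$-semantics (\Cref{thm:adequacy} for $\stepid$), this yields $e[\bar p/dom(\Phi)][\bar l/dom(\Gamma)] \stepid v$. Finally, Property 1 of the lens gives $d_{\denot{\Gamma}}\bigl((\bar p,\bar k),\, b((\bar p,\bar k),w)\bigr) \le d_{\denot{\sigma}}\bigl(\tilde f(\bar p,\bar k),\, w\bigr) = d_{\denot{\sigma}}(w,w) \le 0$; unpacking the product metric (which takes componentwise max) and the $D_{r_i}$-shifted metrics (which subtract $r_i$), the $\Phi$-components contribute $\le 0$ by discreteness and each linear component contributes $d_{\denot{\sigma_i}}(k_i,l_i) - r_i$, so $d_{\denot{\sigma_i}}(k_i,l_i) \le r_i$ for every $i$, which is the claimed bound.

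I expect the main obstacle to be the careful bookkeeping in the penultimate step: correctly threading the discrete context $\Phi$ through the backward map so that it is genuinely left invariant, and then unfolding the metric on $\denot{\Gamma} = \bigotimes_i D_{r_i}\denot{\sigma_i}$ — in particular justifying that the componentwise-max structure together with the $-r_i$ shifts lets one conclude the per-variable bounds from the single inequality $d_{\denot{\sigma}}(\tilde f(\bar p,\bar k),w) - (\text{shifts}) \le 0$. A secondary subtlety is ensuring the elements $k_i, l_i$ really are (denotations of) well-typed closed values so that the $\Set$-soundness and adequacy theorems apply verbatim; this requires the small lemma that the \LangS{} interpretation of a closed value is the expected element, which follows by a routine induction on value structure using \Cref{def:interpS}.
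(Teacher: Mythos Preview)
Your proposal is correct and follows essentially the same route as the paper: interpret the term as a backward error lens $(f,\tilde f,b)$, use soundness plus pairing to identify $\tilde f(\bar p,\bar k)$ with (the denotation of) $v$, feed this into the backward map to produce the witness, invoke Property~2 together with adequacy for the ideal evaluation, and read off the per-variable bounds from Property~1 by unfolding the max metric on $\denot{\Phi}\otimes\denot{\Gamma}$ and the $-r_i$ shifts from $D_{r_i}$.

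The one place you diverge is in justifying that the $\Phi$-components are fixed by the backward map. You propose a structural argument (tracing through the (DVar)/(DLet) cases and the discrete diagonal) and correctly flag this as the main obstacle. The paper sidesteps this bookkeeping entirely: it writes $b((\bar p,\bar k),v) = (\bar s,\bar l)$ without assuming anything about $\bar s$, and then obtains $\bar s = \bar p$ \emph{semantically} from Property~1. Since $d_{\denot\Phi}(\bar p,\bar s) \le d_{\denot\sigma}(v,v) \le 0$ and $\denot\Phi$ is discrete (distinct points sit at distance $\infty$), the inequality forces $\bar s = \bar p$. This is precisely the ``$\Phi$-components contribute $\le 0$ by discreteness'' observation you already make in your final paragraph, so you have all the ingredients; you just need to apply Property~1 \emph{first} to recover $\bar s = \bar p$, rather than attempting a separate structural induction on the lens interpretation.
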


\begin{proof} 
We sketch the proof here; details are provided in \Cref{sec:app_soundness}.
The key idea is to use the backward map $b :  (\denot{\Phi} \otimes
\denot{\Gamma} \otimes \denot{\sigma}) \rightarrow (\denot{\Phi} \otimes
\denot{\Gamma})$ to construct the well-typed substitutions $\bar{s}$ and
$\bar{l}$ such that $(\bar{s},\bar{l}) = b((\bar{p},\bar{k}),v)$.  From the
second property of backward error lenses we then have
\[
f\pdenotid{(\bar{s},\bar{l})} = f\pdenotid{b((\bar{p},\bar{k}),v)} = v.
\]
We can use this result along with  pairing (\Cref{lem:pairing}) and adequacy
(\Cref{thm:adequacy}) to show 
\[ 
e[\bar{s}/dom({\Phi})][\bar{l}/dom(\Gamma)]\stepid v.
\]
 By soundness (\Cref{thm:soundid}) we can then derive a backward
error result: 
\[ 
\tilde{f}\pdenotap{(\bar{p},\bar{k})} = f\pdenotid{(\bar{s},\bar{l})}.
\]

Two things remain to be shown. First, we must show the values of discrete type
carry no backward error, i.e., $\bar{s} = \bar{p}$.  Second, we  must show the
values of linear type have bounded backward error. Both follow from the first
property of error lenses: from the inequality 
\[ 
	\max\left(d_{\denot{\Phi}}(\bar{p},\bar{s}),
	d_{\denot{\Gamma}}({\bar{k},\bar{l}})\right) \le d_{\denot{\sigma}}\left( v,
	v\right) \neq \infty
\] 
we can conclude $\bar{s}=\bar{p}$ and $d_{\denot{\sigma_i}}({k}_i,{l}_i) \le
r_i$ for each $k_i \in \bar{k}$ and $l_i \in \bar{l}$.
\end{proof}

\section{Example \bea{} Programs}\label{sec:bean:examples}
We will present a range of case studies demonstrating how algorithms with
well-known backward error bounds from the literature can be implemented in
\bea{}.  We begin by comparing two implementations of polynomial evaluation, a
naive evaluation and Horner's scheme. Next, we write several programs which
compose to perform generalized matrix-vector multiplication. Finally, we write a
triangular linear solver.

To improve the readability of our examples, we adopt several conventions.
First, matrices are assumed to be stored in row-major order.  Second, following
the convention used in the grammar for \bea{} in \Cref{sec:bean:language}, we use
\stexttt{x} and \stexttt{y} for linear variables and \stexttt{z} for discrete variables. Finally, for
types, we denote both discrete and linear numeric types by $\R$, and use a
shorthand for type assignments of vectors and matrices. For instance: $\R^2
\equiv (\R \otimes \R)$ and $\R^{3 \times 2} \equiv (\R \otimes \R) \otimes (\R
\otimes \R) \otimes (\R \otimes \R)$.

Since \bea{} is a simple first-order language and currently does not support
higher-order functions or variable-length tuples, programs can become verbose.
To reduce code repetition, we use basic user-defined abbreviations in our
examples. 

\subsection*{Polynomial Evaluation} 
To illustrate how \bea{} can provide a fine-grained backward error analysis for
numerical algorithms, we begin with simple programs for polynomial evaluation.
The first program, \stexttt{PolyVal}, evaluates a polynomial by naively
multiplying each coefficient by the variable multiple times and then summing
the resulting terms. The second program, \stexttt{Horner}, applies Horner's
method, which iteratively adds the next coefficient and then multiplies the sum
by the variable \cite[p.94]{Higham:2002:Accuracy}. We consider here \bea{} 
implementations of these algorithms for a second-order polynomial; in 
\Cref{sec:bean:implementation}, we describe a prototype implementation of \bea{} 
and evaluate the backward error bounds it infers for higher-degree polynomials.

Given a tuple $\stexttt{a} : \R^3$ of coefficients and a discrete variable
$\stexttt{z} : \R$, the \bea{} programs for evaluating a second-order
polynomial $p(z) = a_0 + a_1z + a_2z^2$ using naive polynomial evaluation and
Horner's method are shown below. 

\begin{center}
\begin{minipage}[t]{.5\textwidth}
\begin{lstlisting}
PolyVal a z :=
let (a0, a') = a  in 
let (a1, a2) = a' in
let y1  = dmul z a1  in 
let y2' = dmul z a2  in
let y2  = dmul z y2' in
let x   = add a0 y1  in
add x y2
\end{lstlisting}
\end{minipage}
\begin{minipage}[t]{.5\textwidth}
\begin{lstlisting}
Horner a z :=
let (a0, a') = a   in 
let (a1, a2) = a'  in
let y1 = dmul z a2 in
let y2 = add a1 x  in
let y3 = dmul z y2 in
add a0 y3
\end{lstlisting}
\end{minipage}
\end{center}
\noindent
Recall from \Cref{sec:bean:introduction} that the $\textbf{dmul}$ operation assigns
backward error onto its second argument; in the programs above, the operation
indicates that backward error should not be assigned to the discrete variable
\stexttt{z}. Using \bea{}'s type system, the following typing judgments are
valid:
\[ 
  \stexttt{z} : \R \mid \stexttt{a} :_{3\varepsilon} \R^3 \vdash \stexttt{PolyVal a z}: \R 
  \qquad  \qquad
  \stexttt{z} : \R \mid \stexttt{a} :_{4\varepsilon} \R^3 \vdash \stexttt{Horner a z} : \R
\] 
From these judgments, \emph{backward error soundness} (\Cref{thm:main})
guarantees that \stexttt{PolyVal} has backward error of at most ${3\varepsilon}$
with respect to each element in the tuple \stexttt{a}, while \stexttt{Horner}
has backward error of at most ${4\varepsilon}$ with respect to each element in
the tuple \stexttt{a}.

Surprisingly, though Horner's scheme is considered more numerically stable as
it minimizes the number of floating-point operations, we find it has
potentially greater backward error with respect to the vector of coefficients.
A closer look at each coefficient individually, however, reveals more
information about the two implementations. By adjusting the implementations to
take each coefficient as a separate input, we can derive the backward error
bounds for each coefficient individually. Now, \bea{}'s type system derives the
following valid judgments:
\[ 
  \stexttt{z} : \R \mid \stexttt{a0} :_{2\varepsilon}\R, \stexttt{a1}:_{3\varepsilon}\R, 
  \stexttt{a2}:_{3\varepsilon}\R \vdash
   \stexttt{PolyVal'}: \R
  \qquad
  \stexttt{z} : \R \mid \stexttt{a0} :_{\varepsilon}\R, \stexttt{a1}:_{3\varepsilon}\R, 
  \stexttt{a2}:_{4\varepsilon}\R \vdash
  \stexttt{Horner'}: \R
\] 
We see that Horner's scheme assigns more backward error onto the coefficients
of higher-order terms than lower-order terms, while naive polynomial evaluation 
assigns the same error onto all but the lowest-order coefficient.  In this way, 
\bea{} can be used to investigate the numerical stability of different polynomial 
evaluation schemes by providing a fine-grained error analysis.

\subsection*{Matrix-Vector Multiplication}
A key feature of \bea{}'s type and effect system is its ability 
to precisely track backward error across increasingly large programs.  Here, we
demonstrate this process with several programs that gradually build up to a
scaled matrix-vector multiplication. 

Given a matrix  $M \in \R^{m \times n}$, vectors $v \in \R^{n}$ and $u \in
\R^m$, and constants $a,b \in \R$, a scaled matrix-vector operation computes $a
\cdot (M \cdot v) + b \cdot u$. Since \bea{} does not currently support
variable-length tuples, we present the details of a \bea{} implementation for a
$2 \times 2$ matrix. 

We first define the program \stexttt{SVecAdd}, which computes a scalar-vector
product using \stexttt{ScaleVec} and then adds the result to another vector.
Given a discrete variable $\stexttt{a} : \R$, along with linear variables
$\stexttt{x}: \R^2$ and $\stexttt{y}: \R^2$, we implement these programs as
follows:
\begin{center}
\begin{minipage}[t]{.45\textwidth}
\begin{lstlisting}
ScaleVec a x :=
let (x0, x1) = x  in 
let u = dmul a x0 in 
let v = dmul a x1 in 
(u, v)
\end{lstlisting}
\end{minipage}
\begin{minipage}[t]{.45\textwidth}
\begin{lstlisting}
SVecAdd a x y :=
let (x0, x1) = ScaleVec a x in 
let (y0, y1) = y  in 
let u = add x0 y0 in
let v = add x1 y1 in
(u, v)
\end{lstlisting}
\end{minipage}
\end{center}
\noindent
These programs have the following valid typing judgments:
\[ 
  \stexttt{a}: \R \mid \stexttt{x} :_{\varepsilon}
  \R^2 \vdash \stexttt{ScaleVec a x} : \R
  \qquad 
  \stexttt{a}:  \R \mid \stexttt{x} :_{2\varepsilon} \R^2,
  \stexttt{y}:_{\varepsilon} \R^2 \vdash \texttt{\small{SVecAdd a x y}} : \R
\]
In the typing judgment for \texttt{\small{SVecAdd}}, we observe that the linear
variable \stexttt{x} has a backward error bound of $2 \varepsilon$, while the
linear variable \stexttt{y} has backward error bound of only $\varepsilon$.
This difference arises because  \stexttt{x} accumulates $\varepsilon$ backward
error from \stexttt{ScaleVec} and an additional $\varepsilon$ backward error
from the vector addition with the linear variable \stexttt{y}.

Now, given discrete variables $\stexttt{a} : \R$
and $\stexttt{b} : \R$, and $\stexttt{v}: \R^{2}$,
along with the linear variables 
$\stexttt{M}: \R^{2 \times 2}$ 
and $\stexttt{u}: \R^{2}$, 
we can compute a matrix-vector product of 
\stexttt{M} and \stexttt{v}
with  
\stexttt{MatVecMul},
and use the result in the scaled matrix-vector 
product, \stexttt{SMatVecMul}:

\begin{center}
\begin{minipage}[t]{.45\textwidth}
\begin{lstlisting}
MatVecMul M v :=
let (m0, m1) = M in 
let u0 = InnerProduct m0 v in 
let u1 = InnerProduct m1 v in 
(u0, u1)
\end{lstlisting}
\end{minipage}
\begin{minipage}[t]{.45\textwidth}
\begin{lstlisting}
SMatVecMul M v u a b :=
let x = MatVecMul M v in
let y = ScaleVec b u  in
SVecAdd a x y 
\end{lstlisting}
\end{minipage}
\end{center}
For \stexttt{MatVecMul}, we rely on a program \stexttt{InnerProduct},
which computes the dot product of two $2 \times 2$ vectors. Notably,
\stexttt{InnerProduct} differs from the \stexttt{DotProd2} program 
described in \Cref{sec:bean:introduction} because it assigns backward error only onto 
the first vector. The type of this program is:
\[
  \stexttt{v}: \R^2 \mid \stexttt{u} :_{2\varepsilon}
  \R^2 \vdash \stexttt{InnerProduct u v}: \R
\]
The \bea{} programs \stexttt{MatVecMul} and 
\stexttt{SMatVecMul} have the following valid typing judgments: 
\begin{align*}
  \stexttt{v} :  \R^2 \mid \stexttt{M} :_{2\varepsilon}  \R^{2\times 2} &\vdash
  \stexttt{MatVecMul M v} : \R^2
  \\
  \stexttt{a}:\R, \stexttt{b}:\R, \stexttt{v}:\R^{2} 
  \mid \stexttt{M} :_{4\varepsilon} \R^{2\times 2},
  \stexttt{u}:_{2\varepsilon}\R^{2} &\vdash \stexttt{SMatVecMul M v u a b}:
  \R^2
\end{align*}

By error soundness, these judgments say that the computation
\stexttt{SMatVecMul} produces at most $2 \varepsilon$ backward error with
respect to the vector \stexttt{u} and at most $4 \varepsilon$ backward error
with respect to the matrix \stexttt{M}.  The backward error bound for
\stexttt{M} can be understood as follows: the computation \stexttt{MatVecMul M
v} assigns at most $2 \varepsilon$ backward error to \stexttt{M}, and the
computation \stexttt{SVecAdd a x y} assigns an additional $2 \varepsilon$
backward error to \stexttt{M}, resulting in a backward error bound of $4
\varepsilon$.  Similarly, the backward error bound of $2 \varepsilon$ for the
variable \stexttt{u} arises from the computation \stexttt{ScaleVec b u}, which
assigns at most $\varepsilon$ backward error to \stexttt{u}, and
\stexttt{SVecAdd a x y}, which assigns at most an additional $\varepsilon$
backward error to \stexttt{u}, leading to a total backward error bound of $2
\varepsilon$.  In \Cref{sec:evaluation}, we will see that the backward error
bounds for matrix-vector multiplication derived by \bea{} match the worst-case
theoretical backward error bounds given in the literature.

Overall, these examples highlight the compositional nature of \bea's analysis: 
like all type systems, the type of a \bea{} program is derived from the types of its
subprograms. While the numerical analysis literature is unclear on whether (and
when) backward error analysis can be performed compositionally (e.g.,
~\citep{Bornemann:2007:backwardcompose}), \bea{} demonstrates that this is in
fact possible.

\subsection*{Triangular Linear Solver}
One of the benefits of integrating error analysis with a type system is the
ability to weave common programming language features, such as conditionals
(if-statements) and error-trapping, into the analysis. We demonstrate these
features in our final, and most complex example: a linear solver for triangular
matrices. Given a lower triangular matrix $A\in\R^{2\times 2}$ and a vector
$b\in\R^2$, the linear solver should compute return a vector $x$ satisfying $Ax
= b$ if there is a unique solution.

We comment briefly on the program \stexttt{LinSolve}, shown below.  The matrix
and vector are given as inputs $\stexttt{((a00, a01), (a10, a11))}:\R^{2\times
2}$ where \stexttt{a01} is assumed to be $0$, and $\stexttt{(b0, b1)}:\R^2$.
The program either returns the solution $x$ as a vector, or returns error if the
linear system does not have a unique solution.  The $\mathbf{div}$ operator has
return type $\R + \err$, where $\err$ represents division by zero. Ensuing
computations can check if the division succeeded using $\mathbf{case}$
expressions. This example also uses the $!$-constructor to convert a
linear variable into a discrete one; this is required since the later entries in
the vector $x$ depend on---i.e., require duplicating---earlier entries in the
vector.
\begin{center}
\begin{lstlisting}
    LinSolve ((a00, a01), (a10, a11)) (b0, b1) :=
    let x0_or_err = div b0 a00 in // solve for x0 = b0 / a00
    case x0_or_err of 
      inl (x0) => // if div succeeded
        dlet d_x0 = !x0 in // make x0 discrete for reuse
        let s0 = dmul d_x0 a10 in // s0 = x0 * a10
        let s1 = sub b1 s0 in // s1 = b1 - x0 * a10
        let x1_or_err = div s1 a11 in // solve for x1 = (b1 - x0 * a10) / a11
        case x1_or_err of 
          inl (x1) => inl (d_x0, x1) // return (x0, x1)
        | inr (err) => inr err // division by 0
    | inr (err) => inr err // division by 0
\end{lstlisting}
\end{center}
\noindent 
The type of \stexttt{LinSolve} is
$
  A:_{5\varepsilon/2}\R^{2\times 2}, b:_{3\varepsilon/2}\R^2
  \vdash \stexttt{LinSolve A b} : \R^2 + \err.
$
Hence, \stexttt{LinSolve} has a guaranteed backward error bound of at most
$5\varepsilon/2$ with respect to the matrix \stexttt{M} and at most
$3\varepsilon/2$ with respect to the vector \stexttt{b}. If either of the
division operations fail, the program returns $\err$. This example demonstrates
how various features in \bea{} combine to establish backward error guarantees
for programs involving control flow and duplication, via careful control of how
to assign and accumulate backward error through the program.
\section{Implementation}\label{sec:bean:implementation}
\subsection{{Implementation}}\label{sec:algorithm}
We implemented a type checking and coeffect inference algorithm for \bea{} in
OCaml. It is based on the sensitivity inference algorithm introduced by
\citet{Amorim:2014:typecheck}, which is used in implementations of
\emph{Fuzz}-like languages \cite{Gaboardi:2013:dfuzz, NUMFUZZ}.  Given a \bea{} program
without any error bound annotations in the context, the type checker ensures
the program is well-formed, outputs its type, and infers the tightest possible
backward error bound on each input variable.  Using the type checker, users can
write large \bea{} programs and automatically infer backward error with respect
to each variable.

More precisely, let $\Gamma^\bullet$ denote a context \emph{skeleton}, a linear
typing context with no coeffect annotations. If $\Gamma$ is a linear context,
let $\overline{\Gamma}$ denote its skeleton. Next, we say $\Gamma_1$ is a
\emph{subcontext} of $\Gamma_2$, $\Gamma_1\sqsubseteq\Gamma_2$, if
$\dom{\Gamma_1}\subseteq\dom{\Gamma_2}$ and for all $x:_r\sigma\in\Gamma_1$, we
have $x:_q\sigma\in\Gamma_2$ where $r \leq q$. In other words, $x$ has a
tighter backward error bound in the subcontext. Now, we can say the input to
the type checking algorithm is a typing context skeleton $\Phi\mid
\Gamma^\bullet$ and a \bea{} program, $e$. The output is the type of the
program $\sigma$ and a linear context $\Gamma$ such that $\Phi\mid\Gamma\vdash
e:\sigma$ and $\overline{\Gamma}\sqsubseteq\Gamma^\bullet$.  Calls to the
algorithm are written as $\Phi\mid \Gamma^\bullet;e\Rightarrow \Gamma;\sigma$.
The algorithm uses a recursive, bottom-up approach to build the final context.

For example, to type the \bea{} program $(e,f)$, where $e$ and $f$ are
themselves programs, we use the algorithm rule \[
  \AXC{$\Phi\mid\Gamma^\bullet;e\Rightarrow \Gamma_1;\sigma$} \AXC{$\Phi\mid
  \Gamma^\bullet;f\Rightarrow\Gamma_2;\tau$}
  \AXC{$\dom\Gamma_1\cap\dom\Gamma_2=\emptyset$} \RightLabel{($\tensor$ I)}
  \TrinaryInfC{$\Phi\mid\Gamma^\bullet;(e,f)\Rightarrow\Gamma_1,\Gamma_2;\sigma\otimes\tau$}
  \bottomAlignProof \DisplayProof \] In practice, this means recursively
calling the algorithm on $e$ and $f$ then combining their outputted contexts.
The output contexts discard unused variables from the input skeletons; thus,
the requirement $\dom\Gamma_1\cap\dom\Gamma_2=\emptyset$ ensures the strict
linearity requirement is met. 

The type checking algorithm is sound and complete, meaning that it agrees
exactly with \bea{}'s typing rules. Precisely: \begin{theorem}[Algorithmic
  Soundness]\label{thm:algo_sound} If $\Phi\mid\Gamma^\bullet;e\Rightarrow
  \Gamma;\sigma$, then $\overline{\Gamma}\sqsubseteq\Gamma^\bullet$ and the
derivation $\Phi\mid\Gamma\vdash e:\sigma$ exists.  \end{theorem}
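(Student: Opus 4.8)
The plan is to prove \Cref{thm:algo_sound} by structural induction on the algorithmic derivation $\Phi\mid\Gamma^\bullet;e\Rightarrow\Gamma;\sigma$, with a simultaneous strengthening: in each case I will establish both that $\overline{\Gamma}\sqsubseteq\Gamma^\bullet$ (i.e., the output context is a subcontext of the skeleton once grades are erased, meaning every output variable appears in the skeleton with a compatible type) and that the declarative judgment $\Phi\mid\Gamma\vdash e:\sigma$ is derivable. The skeleton-inclusion part is almost entirely bookkeeping: each algorithmic rule either returns a $\Gamma^0$-style context (all unused variables, as in the variable and unit rules), passes a subcontext up unchanged, or combines subcontexts from premises via disjoint union or the grade-translation operation $q+\Gamma$; in every case the domain only shrinks or stays the same and types are preserved, so $\overline{\Gamma}\sqsubseteq\Gamma^\bullet$ follows from the induction hypotheses together with the observation that $q+\Gamma$ and disjoint union preserve the skeleton inclusion relation. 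The arithmetic-operation rules (Add, Sub, Mul, Div, DMul) are the base-case-like leaves here: the algorithm assigns the minimal legal grades $\varepsilon$, $\varepsilon/2$, etc., and I invoke the corresponding declarative rule directly, using the fact that the declarative rules already permit the grade $\varepsilon + r$ with $r=0$.

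For the derivability half, most cases are immediate applications of the matching declarative rule to the induction hypotheses — (Var), (DVar), (Unit), ($\otimes$ I), ($+\,\text{I}_L$), ($+\,\text{I}_R$), (Disc), (DLet), and the let-bindings ($\slet{}{}{}$, $\slet{(x,y)}{}{}$, $\dlet z{}{}$, $\dlet{(x,y)}{}{}$) go through once I align the grade-translation operation $q+\Gamma$ in the algorithm with the one in the declarative system, which are literally the same definition. The genuinely non-trivial cases are the ones where the algorithm's bottom-up strategy produces contexts that must be \emph{widened} to match what the declarative rule's premises expect. Concretely, in the ($+$ E) rule the two branch derivations $e$ and $f$ may return \emph{different} contexts $\Theta_1$ and $\Theta_2$ (and different branch grades), so before applying the declarative ($+$ E) rule — which demands a common $\Delta$ and a common branch grade $q$ — I must first reconcile them by taking a componentwise supremum of the two contexts and a supremum of the two branch grades, then apply a weakening lemma for \bea{} to coerce each branch derivation into the reconciled context. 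The same phenomenon arises for ($\otimes$ E$_\sigma$): the two bound-variable grades $r_1,r_2$ produced for $x$ and $y$ must be unified to a single $r=\max(r_1,r_2)$, again via weakening. So the plan's real work is: (1) state and prove a \bea{} weakening lemma — if $\Phi\mid\Gamma\vdash e:\tau$ is derivable and $\Gamma'$ is obtained from $\Gamma$ by increasing some grades and/or adding unused variables, then $\Phi\mid\Gamma'\vdash e:\tau$ — by a routine induction on the derivation; and (2) in the ($+$ E) and ($\otimes$ E) cases, use it to bridge the gap between the algorithm's outputs and the declarative premises.

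The main obstacle I anticipate is getting the ($+$ E) reconciliation exactly right, because it simultaneously involves (a) unifying two heterogeneous output contexts, (b) unifying two branch grades while the declarative rule also threads that grade into the scaling $q+\Gamma$ of the conclusion, and (c) making sure the reconciled context is still a subcontext of the skeleton and still disjoint from $\Gamma$ where needed. There is also a subtlety around the strict-linearity side conditions ($\dom\Gamma_1\cap\dom\Gamma_2=\emptyset$ in the algorithmic $\otimes$-rules): I need to argue that the declarative rule's implicit disjointness requirement is met whenever the algorithm succeeds, which follows because the algorithm discards unused skeleton variables before combining, so the surviving domains are exactly the free variables actually used and these are disjoint by the side condition. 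Once the weakening lemma is in hand and the grade/context-supremum operations are defined and shown monotone, the remaining cases collapse to mechanical rule-matching, so I expect the proof to be short modulo the $+$-elimination bookkeeping. I would note the parallel with \Cref{thm:algsound} (algorithmic soundness for \Lang{}), whose ($+$ E) case uses the same $\max$-on-contexts trick, so the argument structure is already battle-tested.
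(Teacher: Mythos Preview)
Your proposal is correct and matches the paper's approach essentially exactly: induction on the algorithmic derivation, with the nontrivial cases ($\otimes$ E$_\sigma$) and ($+$ E) handled by first proving a type-system weakening lemma (the paper's \Cref{lem:type_weak}) and then using it to reconcile the branch contexts and grades via the $\max$ operation. One small inaccuracy worth flagging: in the \bea{} algorithm the base rules (Var), (DVar), (Unit) return \emph{minimal} contexts ($\{x:_0\sigma\}$ or $\emptyset$) rather than a ``$\Gamma^0$-style'' context carrying all skeleton variables at grade zero as in the \Lang{} algorithm you allude to, so the skeleton-inclusion check $\overline{\Gamma}\sqsubseteq\Gamma^\bullet$ really is about domain containment, not grade zeroing---but this only simplifies your bookkeeping and does not affect the argument.
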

\begin{theorem}[Algorithmic Completeness]\label{thm:algo_complete} If
  $\Phi\mid\Gamma\vdash e:\sigma$ is a valid derivation in \bea{}, then there
exists a context $\Delta\sqsubseteq\Gamma$ such that
$\Phi\mid\overline{\Gamma}; e\Rightarrow \Delta;\sigma$.  \end{theorem}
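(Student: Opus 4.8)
The plan is to prove \Cref{thm:algo_complete} by induction on the derivation of $\Phi\mid\Gamma\vdash e:\sigma$ in \bea{} (\Cref{fig:bean_typing_rules}), strengthening the statement slightly so that the induction hypothesis also records that the algorithm returns \emph{exactly} the declarative type $\sigma$; this way type preservation is carried through the same induction rather than imported separately. Before the main induction I would dispatch a handful of routine auxiliary facts about the subcontext order: reflexivity and transitivity of $\sqsubseteq$; preservation of $\sqsubseteq$ by the context operations appearing in the algorithmic rules, namely grade translation $q+(-)$, disjoint union of contexts with disjoint domains, and the pointwise $\max$ of contexts used in the branching and pairing rules; and the key structural lemma that the algorithm's output is \emph{insensitive to unused skeleton entries} --- if $\Phi\mid\Gamma^\bullet;e\Rightarrow\Delta;\sigma$ and $\Gamma^\bullet$ and $\Theta^\bullet$ agree on their shared variables with $\dom{\Gamma^\bullet}\subseteq\dom{\Theta^\bullet}$, then $\Phi\mid\Theta^\bullet;e\Rightarrow\Delta;\sigma$ with the \emph{same} $\Delta$. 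This last lemma (together with the easy structural observation that $\dom\Delta$ contains only free variables of $e$, which is exactly what makes the algorithm's context-discarding behaviour work) is what allows the recursive algorithmic calls inside a composite rule --- which are run against the conclusion's skeleton, extended with any bound variables --- to line up with the induction hypothesis for the premises, which speaks about the smaller premise skeletons.

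For the leaf rules ((Var), (DVar), (Unit), and the primitive arithmetic rules (Add), (Sub), (Mul), (Div), (DMul)), the declarative rules carry built-in slack: a linear variable may be declared with any grade $r\ge 0$, and each operand of an arithmetic operation with any grade $\varepsilon+r_i$ or $\varepsilon/2+r_i$. The algorithmic rule instead produces the tightest admissible grade ($0$ for a plain variable use, and $\varepsilon$ or $\varepsilon/2$ for operands) and discards every variable not free in $e$; since $0\le r$ and $\varepsilon\le\varepsilon+r_i$, etc., the output context is immediately $\sqsubseteq$ the declarative one. For the composite rules ((Let), ($\otimes$ E$_\sigma$), ($\otimes$ E$_\alpha$), (DLet), (Disc), ($+$ I), ($+$ E), ($\otimes$ I)) I would proceed uniformly: (i) apply the induction hypothesis to each premise, obtaining algorithmic derivations whose output contexts are subcontexts of the premise contexts and whose inferred types match; (ii) use the skeleton-weakening lemma --- and, for let-bindings, the fact that the inferred type of the bound expression coincides with the declared type of the bound variable --- to re-run those derivations against the conclusion's skeleton; (iii) fire the corresponding algorithmic rule, noting that the strict-linearity side conditions it imposes (disjointness of the recursively produced domains) are guaranteed by the disjoint-union discipline of the declarative rule; and (iv) conclude $\Delta\sqsubseteq\Gamma$ by pushing the premise-level subcontext facts through $q+(-)$, disjoint union and $\max$ via the monotonicity lemmas. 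The case (Let) is representative: from $\Theta_1\sqsubseteq\Gamma$, a grade $s\le r$ inferred for the bound variable, and $\Theta_2\sqsubseteq\Delta$, one gets $s+\Theta_1\sqsubseteq s+\Gamma\sqsubseteq r+\Gamma$, and disjoint union with $\Theta_2$ then yields $s+\Theta_1,\Theta_2\sqsubseteq r+\Gamma,\Delta$.

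The main obstacle will be the case-analysis rule ($+$ E). Declaratively both branches are forced to be typed in a single shared residual context $\Delta$ with a single shared branch-variable grade $q$, whereas the algorithm will in general infer two different residual contexts $\Theta_1,\Theta_2$ and two different branch-variable grades $q_1,q_2$. The right move is for the algorithmic rule to reconcile these by taking $\bar q=\max(q_1,q_2)$ and the pointwise $\max$ of $\Theta_1$ and $\Theta_2$; the work is then to check that the resulting context is still $\sqsubseteq q+\Gamma,\Delta$, which follows from $q_i\le q$, $\Theta_i\sqsubseteq\Delta$, and $\max$-monotonicity of $\sqsubseteq$, while confirming that this choice is still the \emph{minimal} one so that the side conditions downstream are not over-constrained. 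A secondary --- but entirely grade-free --- source of friction is the dual linear/discrete context discipline: the discrete context $\Phi$ is threaded through unchanged and never graded, so the rules moving variables between the two contexts ((Disc), (DLet), ($\otimes$ E$_\alpha$), (DMul)) require a small check that the algorithm's bookkeeping of $\Phi$ matches the declarative rules. Apart from these two points and the skeleton-weakening lemma, the remaining cases are mechanical.
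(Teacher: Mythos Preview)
Your proposal is correct and follows essentially the same approach as the paper: induction on the \bea{} typing derivation, using a skeleton-weakening lemma (the paper's \Cref{lem:algo_weak}) to align the recursive algorithmic calls with the induction hypotheses, and monotonicity of the context operations ($q+(-)$, disjoint union, pointwise $\max$) under $\sqsubseteq$ to push the subcontext facts through each composite rule. Your handling of the ($+$ E) case via $\max$-reconciliation and of the leaf rules via minimality of the algorithmic grades matches the paper exactly; the only superfluous remark is the ``strengthening'' to carry the output type $\sigma$, since that is already part of the theorem statement.
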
 The
full algorithm and proofs of its correctness are given in \Cref{app:algorithm}.
The \bea{} implementation is parametrized only by unit roundoff, which is
dependent on the floating-point format and rounding mode and is fixed for a
given analysis. 

\subsection{{Evaluation}}\label{sec:evaluation}
\begin{table}
\caption{ 
    A comparison of \bea{} to \citet{Fu:2015:BEA} on polynomial approximations
    of $\sin$ and $\cos$. The \bea{} implementation matches the 
    programs evaluated by \citet{Fu:2015:BEA} for the given range of input values.
} 
\label{tab:benchFu} 
\centering
\begin{tabular}{l l c c c c} 
\hline
{Benchmark} & {Range} & \multicolumn{2}{c}{Backward Bound} 
& \multicolumn{2}{c}{Timing (ms)}  \\
\cline{3-6} 
{} & {} & {{\bea}}
& {{\citet{Fu:2015:BEA}}}
& {{\bea}}
& {{\citet{Fu:2015:BEA}}} \\
\hline 
\stexttt{cos} & [0.0001, 0.01] & {1.33e-15} & {5.43e-09} & 1 & 1310 \\
\arrayrulecolor{gray}\hline
\stexttt{sin} & [0.0001, 0.01] & {1.44e-15} & {1.10e-16} & 1 & 1280
\\ \arrayrulecolor{black}\hline
\end{tabular}
\end{table} 

\begin{table}
\caption{ 
    The performance of \bea{} benchmarks with known backward error bounds
    from the literature. The Input Size column gives 
    the length of the input vector or dimensions of the input matrix; the 
    Ops column gives the total number of floating-point operations.  
    The Backward Bound column reports the bounds
    inferred by \bea{} and well as the standard bounds (Std.) from
    the literature. The Timing column reports the time in seconds for 
    \bea{} to infer the backward error bound. 
} 
\label{tab:benchStd} 
\centering
\begin{tabular}{l l l c c r} 
\hline
{Benchmark}
& {Input Size}
& {Ops}
& \multicolumn{2}{c}{Backward Bound} 
& {Timing (s)} 
\\ \cline{4-5}
& & & {{\bea}} & {{Std.}} & 
\\ \hline 
\multirow{4}{1em}{{DotProd}} 
& 20 & {39} & {2.22e-15} & {2.22e-15} & 0.004
\\ \arrayrulecolor{gray}\cline{2-6}
& 50 & {99} & {5.55e-15} & {5.55e-15} & 0.04
\\ \cline{2-6}
& 100 & 199 & {1.11e-14} & {1.11e-14} & 0.3 
\\ \cline{2-6}
& 500 & 999 & {5.55e-14} & {5.55e-14} & 30
\\ \arrayrulecolor{black}\hline
\multirow{4}{1em}{{Horner}} 
& 20 & {40} & {4.44e-15} & {4.44e-15} & 0.002
\\ \arrayrulecolor{gray}\cline{2-6}
& 50 & {100} & {1.11e-14} & {1.11e-14} & 0.02
\\ \cline{2-6}
& 100 & 200 & {2.22e-14} & {2.22e-14} & 0.1
\\ \cline{2-6}
& 500 & 1000 & {1.11e-13} & {1.11e-13} & 10
\\ \arrayrulecolor{black}\hline
\multirow{4}{1em}{{PolyVal}} 
& 10 & {65} & {1.22e-15} &{1.22e-15} & 0.004
\\ \arrayrulecolor{gray}\cline{2-6}
& 20 & {230} & {2.33e-15} & {2.33e-15} & 0.06
\\ \cline{2-6}
& 50 & 1325 & {5.66e-15} & {5.66e-15} & 5
\\ \cline{2-6}
& 100 & 5150 & {1.12e-14} & {1.12e-14} & 200
\\ \arrayrulecolor{black}\hline
\multirow{4}{1em}{{MatVecMul}} 
& 5 $\times$ 5 & 45 & {5.55e-16} & {5.55e-16} & 0.003
\\ \arrayrulecolor{gray}\cline{2-6}
& 10 $\times$ 10 & 190 & {1.11e-15} & {1.11e-15} & 0.1
\\ \cline{2-6}
& 20 $\times$ 20 & 780 & {2.22e-15} & {2.22e-15} & 6
\\ \cline{2-6}
& 50 $\times$ 50 & 4950 & {5.55e-15} & {5.55e-15} & 1000
\\ \arrayrulecolor{black}\hline
\multirow{4}{1em}{{Sum}} 
& 50 & {49} & {5.44e-15} & {5.44e-15} & 0.008
\\ \arrayrulecolor{gray}\cline{2-6}
& 100 & {100} & {1.10e-14} & {1.10e-14} & 0.04
\\ \cline{2-6}
& 500 & 499 & {5.54e-14} & {5.54e-14} & 4
\\ \cline{2-6}
& 1000 & 999 & {1.11e-13} & {1.11e-13} & 30
\\ \arrayrulecolor{black}\hline
\end{tabular}
\end{table} 

\begin{table}
\caption{ 
A comparison of forward bounds derived from \bea{}'s
backward error bounds to those of NumFuzz and Gappa.
For Gappa, we assume all variables are in the interval $[0.1,1000]$.} 
\label{tab:benchFz} 
\centering
\begin{tabular}{l c c c c c c} 
\hline
{Benchmark}
& {Input Size}
& {Ops}
& \multicolumn{3}{c}{Forward Bound} 
\\ \cline{4-6}
& & & {{\bea}} & {NumFuzz } & Gappa \\
\hline 
Sum
& 500 & 499 & {1.11e-13} & {1.11e-13} & {1.11e-13} \\
\arrayrulecolor{gray}\hline
DotProd
& 500 & 999 & {1.11e-13} & {1.11e-13} & {1.11e-13} \\
\hline
Horner
& 500 & 1000 & {2.22e-13} & {2.22e-13} & {2.22e-13} \\
\hline
PolyVal
& 100 & {5150} & {2.24e-14} & {2.24e-14} & {2.24e-14} 
\\ \arrayrulecolor{black}\hline
\end{tabular}
\end{table}

In this section, we report results from an empirical evaluation of our \bea{}
implementation, focusing primarily on the quality of the inferred bounds. Since
\bea{} is the first tool to statically derive \emph{sound} backward error
bounds, a direct comparison with existing tools is challenging. We therefore
evaluate the inferred bounds using three complementary methods.

First, we compare our results to those from a dynamic analysis tool for
automated backward error analysis introduced by \citet{Fu:2015:BEA}. To our
knowledge, the results reported by \citet{Fu:2015:BEA} provide the only
automatically derived quantitative bounds on backward error available for
comparison; these results serve as a useful baseline for assessing the tightness
of the bounds inferred by \bea{}. However, the experimental results reported by
\citet{Fu:2015:BEA} are limited to transcendental functions, while \bea{} is
designed to handle larger programs oriented towards linear algebra primitives.
Therefore, we also include an evaluation against theoretical worst-case backward
error bounds described in the literature. This allows us to benchmark \bea{}'s
bounds in relation to established theoretical limits, providing a measure of how
closely \bea{}'s inferred bounds approach these worst-case values. Finally, we
evaluate the quality of the backward error bounds derived by \bea{} using
forward error as a proxy. Specifically, using known values of the relative
componentwise condition number (\Cref{def:cnum}), we compute forward error
bounds from our backward error bounds. This approach enables a comparison to
existing tools focused on forward error analysis. We compare our derived forward
error bounds to those produced by two tools that soundly and automatically bound
relative forward error: NumFuzz \cite{NUMFUZZ} and Gappa \cite{Gappa:2010}.
Both tools are capable of scaling to larger benchmarks involving over 100
floating-point operations, making them suitable tools for comparison with
\bea{}. All of our experiments were performed on a MacBook Pro with an Apple M3
processor and 16 GB of memory.

\subsubsection{Comparison to Dynamic Analysis}
The results for the comparison of \bea{} to the optimization based tool for
automated backward error analysis due to \citet{Fu:2015:BEA} is given in
\Cref{tab:benchFu}. The benchmarks are polynomial approximations of $\sin$
and $\cos$ implemented using Taylor series expansions following the GNU C
Library (glibc) version 2.21 implementations.  Our \bea{} implementations match
the benchmarks from \Cref{tab:benchFu} on the input range $[0.0001,0.01]$.
Specifically, the Taylor series expansions implemented in \bea{} only match the
glibc implementations for inputs in this range.  Since the glibc
implementations analyzed by \citet{Fu:2015:BEA} use double-precision and
round-to-nearest, we instantiated \bea{} with a unit roundoff of $u=2^{-53}$.
Although we include timing information for reference, the implementation
described by \citeauthor{Fu:2015:BEA} is neither publicly available nor
maintained, preventing direct runtime comparisons; thus, all values are taken
from Table 6 of \citet{Fu:2015:BEA}. 

\subsubsection{Evaluation Against Theoretical Worst-Case Bounds}
\Cref{tab:benchStd} presents results for several benchmark problems with known
backward error bounds from the literature. Each benchmark was run on inputs of
increasing size (given in Input Size), with the total number of
floating-point operations listed in the Ops column. The Std. column provides
the worst-case theoretical backward error bound reported in the literature
assuming double-precision and round-to-nearest; the relevant references are 
\cite[p.63, p.94, p.82]{Higham:2002:Accuracy}. 
For simplicity, the \bea{} programs are written with a single
linear variable, while the remaining inputs are treated as discrete variables.
The maximum elementwise backward bound is computed with respect to the linear
input. The \bea{} programs emulate the following analyses for input size $N$: 
\begin{itemize}
\item \stexttt{DotProd} computes the dot product
of two vectors in $\R^N$, assigning backward error to a single vector.
\item \stexttt{Horner}  evaluates an $N$-degree polynomial using Horner's
scheme, assigning backward error onto the vector of coefficients.
\item \stexttt{PolyVal}  naively evaluates an $N$-degree polynomial, 
assigning backward error onto the vector of coefficients.
\item \stexttt{MatVecMul} computes the product of a matrix in $\R^{N\times
N}$ and a vector in $\R^N$, assigning backward error onto the matrix.
\item \stexttt{Sum} sums the elements of a vector
in $\R^N$, assigning backward error onto the vector.
\end{itemize}
Since we report the backward error bounds from the literature under the
assumption of double-precision and round-to-nearest, we instantiated \bea{}
with a unit roundoff of $u=2^{-53}$.

\subsubsection{Using Forward Error as a Proxy}
We can compare the quality of the backward error bounds derived by
\bea{} to existing tools using forward error as a proxy. Specifically, by using
known values of the relative componentwise condition number $\kappa_{rel}$, we
can compute relative forward error bounds from relative backward error bounds
\cite[Definition 2.12]{hohmann:2003:numerical}: 

\begin{definition}[Relative Componentwise Condition Number]
\label{def:cnum}
The \emph{relative componentwise condition number} of a scalar function $f:
\R^n \rightarrow \R$  is the smallest number $\kappa_{rel} \ge 0$ such that,
for all $x\in\R^n$, 
\begin{align}\label{eq:krel}
d_{\R}(f(x),\tilde{f}(x)) \le \kappa_{rel} \max_{i} d_\R(x_i,\tilde{x}_i)
\end{align}
\end{definition}
\noindent where $\tilde{f}$ is the approximating program and $\tilde{x}$ is 
the perturbed input witnessing $f(\tilde{x})=\tilde{f}(x)$.
In \Cref{eq:krel}, $d_{\R}(f(x),\tilde{f}(x))$ is the \emph{relative forward
error}, and $\max_{i} d_\R(x_i,\tilde{x}_i)$ is the maximum \emph{relative
backward error}. Thus, for problems where the relative condition number is
known, we can compute relative forward error bounds from the relative backward
error bounds inferred by \bea{}. 

\Cref{tab:benchFz} presents the results for several benchmark problems with
$\kappa_{rel}=1$. For these problems, according to \Cref{eq:krel}, the maximum
relative backward error serves as an upper bound on the relative forward error.
As an example, the problem of summing $n$ values $(a_i)_{1\le i\le n}$ has a
relative condition number  $\kappa_{rel} = \sum_{i=1}^n |a_i|/ |\sum_{i=1}^n
a_i|$ \cite{MullerBook}, which clearly reduces to $\kappa_{rel}= 1$ when all
$a_i >0$. In fact, for each of the benchmarks listed in \Cref{tab:benchFz},
$\kappa_{rel}= 1$ is only guaranteed for strictly positive inputs. This 
assumption is already required for NumFuzz in order to guarantee the 
soundness of its forward error bounds. To enforce this in Gappa, we used an 
interval of $[0.1, 1000]$ for each input. Since NumFuzz assumes 
double-precision and round towards positive infinity, we instantiated \bea{} 
and Gappa with a unit roundoff of $u=2^{-52}$.

\subsubsection{Evaluation Summary} 
The main conclusions from our evaluation results are as follows.
\emph{\textbf{\bea{}'s backward error bounds are useful}}: In all of our
experiments, \bea{} produced competitive error bounds. Compared to the
backward error bounds reported by \citet{Fu:2015:BEA} for their dynamic
backward error analysis tool, \bea{} was able to derive \emph{sound} backward
error bounds that were close to or better than those produced by the dynamic
tool.  Furthermore, \bea{}'s sound bounds precisely match the worst-case
theoretical backward error bounds from the literature, demonstrating that our
approach guarantees soundness without being overly conservative.  Finally,
when using forward error as a proxy to assess the quality of \bea{}'s backward
error bounds, we find that \bea{}'s bounds again precisely match the bounds
produced by NumFuzz and Gappa.  \emph{\textbf{\bea{} performs well on large
programs}}: In our comparison to worst-case theoretical error bounds, we find
that \bea{} takes under a minute to infer backward error bounds on benchmarks
with fewer than 1000 floating-point operations. Overall, \bea{}'s performance
scales linearly with the number of floating-point operations in a benchmark.

\section{Related Work}\label{sec:bean:related}

\subsubsection{Automated Backward Error Analysis}
Existing automated methods for backward error analysis are based on automatic 
differentiation and optimization techniques. Unlike \bean{}, existing methods
do not provide a soundness guarantee and are based on heuristics.
Miller's algorithm \citep{Miller:78:BEA} first appeared in a 
FORTRAN package and used automatic differentiation to compute  
partial derivatives of function outputs with respect to function inputs as a 
proxy for backward error. The algorithm was later augmented to handle a 
broader range of program features (loops and conditional expressions)
in a MATLAB implementation \citep{Gati:2012:BEA}. 

The first optimization based tool for automated backward 
error analysis was introduced by \citet{Fu:2015:BEA}. The key idea of the 
approach is to separate the analysis into a \emph{local} error analysis and a 
\emph{global} error analysis. Given a program and specific
inputs, the local error simulates the ideal, continuous problem by lifting the 
program to a higher-precision version. Then, a generic minimizer is used to derive a 
backward error function that associates the input and output of the original 
program to an input of the higher-precision program that hits the same output.  
The global error analysis uses the backward error function as a black-box
function to heuristically estimate the maximal backward error for a range of
inputs by Markov Chain Monte Carlo techniques. 

\bea{} is similar to the optimization technique due to \citet{Fu:2015:BEA} 
by virtue of the direct construction of the backward function: in order to 
perform a backward error analysis, both \bea{} and the optimization technique 
require an ideal function, an approximating function, and an explicit backward 
function. However, unlike \bea{}, the existing optimization method must perform 
a sometimes costly analysis to construct the ideal and backward functions for 
every program. In \bea{}, it is not necessary to construct these functions for
typechecking since they are built into our semantic model.

\subsubsection{Residual Based Methods}
When a backward error bound does exist, it is possible to compute \emph{a
posteriori} estimates of the error dynamically using \emph{residual-based
methods} as described by \citet{Corless:2013:Analysis}.  These methods require
constructing a \emph{defining function}, which depends on both the input and
the output of the program---similar to the backward maps in \bea{}.  Whereas
residual-based methods require the manual construction of a defining function,
in \bea{}, the backward maps of complex programs are composed from their
individual components, enabling more automated reasoning. Moreover,
residual-based estimates are not \emph{bounds} and, unlike \bea{}, do not
provide a guaranteed sound overapproximation of the true error. In situations
where soundness is not required, residual-based methods can be manually
incorporated into programs.  Unfortunately, in some instances, computing these
estimates can be more computationally expensive than solving the original
problem. 

\subsubsection{Type Systems and Formal Methods}
A diverse set of tools for reasoning about forward rounding error bounds have 
been proposed in the formal methods literature; these tools are discussed in
\Cref{sec:nfuzz:related}. In comparison, for backward error analysis, the
formal methods literature is sparse. The LAProof library due to
\citet{Kellison:Arith:2023} provides formal proofs of backward error bounds for
basic linear algebra subprograms in the Coq proof assistant, and gives an
example of how these proofs can be used to verify real C-programs. 
These proofs are parametric in both the floating-point format and the size of
the underlying data structures.  In \bea{}, as with other type-based
approaches, we trade some of the expressivity offered by proof assistant-based
methods for a more lightweight and potentially more automated system. While
less expressive, valid typing derivations in \bea{} correspond to formal
proofs that a given program satisfies the backward error bound 
the type system assigns it. This guarantee is rigorously established by our
backward error soundness theorem, \Cref{thm:main}.

The only other type-based approach to rounding error analysis is \Lang{},
which, like \bean{}, also uses a linear type system and coeffects. However,
\Lang{} is specifically designed for forward error analysis, whereas \bean{}
focuses on backward error analysis.  While the syntactic similarities between
\Lang{} and \bean{} may suggest that \bean{} is simply a derivative of \Lang{}
modified for backward error analysis, this is not the case. We designed \bean{}
by first developing the category \Bel{} of backward error lenses, and then
developing the language described in \Cref{sec:bean:language} to fit this
category.  Indeed, the semantics of the two systems are entirely different:
\begin{itemize}
\item The primary semantic novelty in \Lang{} is the neighborhood monad, which
  tracks forward error but cannot be adapted for backward error. \bea{} does not
  use the neighborhood monad, or any monad at all.
\item While both \Lang{} and \bea{} use a graded comonad, their interpretations
  are different and are used for different purposes. The graded
  comonad in \Lang{} scales the metric to track function sensitivity, while
  the graded comonad in \bea{} shifts (translates) the metric to track backward error.
\item Similar to other \emph{Fuzz}-like languages, \Lang{} interprets programs
  in the category of metric spaces, which lacks the necessary structure for
  reasoning about backward error. To address this, we introduced the novel
  category of backward error lenses, offering a completely new semantic
  foundation that distinguishes \bean{} from all languages in the \emph{Fuzz} family.
\end{itemize}

\subsubsection{Linear Type Systems and Coeffects}
After \Lang{}, which we have already discussed, \bea{} is 
most closely related to coeffect-based type systems, like those described
by \citet{Petricek:2014:coeffects}; references can be found in the brief 
description of coeffects given in \Cref{background:coeffects}.
As discussed before, a notable difference between \bean{}'s type system 
and other coeffect systems is that our model does
not support contraction, and so our type system enforces strict linearity for
variables with coeffect annotations, with a separate context for variables 
that can be reused. Our dual context approach is similar to the Linear/Non-Linear
(LNL) calculus described by \citet{DBLP:conf/csl/Benton94}.

\subsubsection{Lenses and Bidirectional Programming Languages}
Our semantic model is inspired by work on lenses, proposed by
\citet{Foster:2007:trees}, as a tool to address the view-update problem in
databases. Basically, a lens is a pair of a forward transformation \emph{get} and a 
backward transformation \emph{put} which are used to synchronize  
related data. In general, lenses satisfy  
several \emph{lens laws}, which can be framed as equations that specify the 
relationship between the lens transformations and the data they operate on; 
the equations defining \emph{well-behaved lenses} are given in \Cref{eq:lens1}
and \Cref{eq:lens2}. These equations correspond closely to the properties of
backward error lenses (\Cref{def:error_lens}). The concept of a lens has been
rediscovered multiple times in different contexts, ranging from categorical
proof theory and G\"odel's Dialectica translation \citep{dialectica} to more
recent work on open games \citep{Ghani:2018:gamelenses}, and supervised
learning \citep{DBLP:conf/lics/FongST19}; the interested reader can see
\citet{LensBlog} for a good summary. While the formal similarity between our
backward error lenses and existing work on lenses is undeniable, we are not
aware of any existing notion of lens that includes ours.

\section{Conclusion}\label{sec:bean:conclusion}
\bea{} is a typed first-order programming language that guarantees backward
error bounds. Its type system is based on the combination of three elements: a
notion of distances for types, a coeffect system for tracking backward error,
and a linear type system for controlling how backward error can flow though
programs. Although the backward error analysis modeled by \bean{} is 
more general than the standard approach, we can capture the standard definition 
as a special case, as shown by our main theorem of backward error soundness
(\Cref{thm:main}). A major benefit of our proposed approach is that it is 
structured around the idea of composition: when backward error bounds exist,
the backward error bounds of complex programs are composed from the backward 
error bounds of their subprograms. The linear type system of bean correctly 
rejects programs that do not have bounded backward error, and is also flexible 
enough to capture the backward error analysis of well-known algorithms from 
the literature. \bea{} is the first demonstration of a static analysis framework for
reasoning about backward error, and can be extended in various ways. We conclude
this chapter with a discussion of promising directions for future development. 

\subsubsection{Implementation}
The linear fragment of \bea{} resembles a first-order version of \Lang{},
and fully automated type checking and grade inference algorithms
developed for the \emph{Fuzz} family of languages, such as those 
described by \citet{NUMFUZZ},\citet{DAntoni:2013:sens}, and 
\citet{Amorim:2014:typecheck} can be adapted for \bea{}.
The main difference between the inference algorithms designed for 
\emph{Fuzz}-like languages and \bean{} is strict linearity and dual 
contexts, which introduces a 
small but manageable complication to existing implementations.
Similar to the inference algorithm described in \Cref{sec:nfuzz:implementation}, 
the general idea is that given three inputs---a \bean{} term $e$, 
a linear context $x_1:\sigma_1,\dots,x_i:{\sigma_i}=\Gamma^{\circ}$ with all 
grade annotations erased, and a discrete context $\Phi$---the algorithm infers 
a type $\tau$ of $e$ and the backward error bounds $\delta_i$ such that 
$\Phi \mid x_1:_{\delta_1} \sigma_1,\dots,x_i :_{\delta_i}\sigma_i \vdash e : \tau$. 
While using \bea{} in practice would require users to understand a linear type 
system and select operations based on whether a variable is linear or discrete
(i.e., the choice of using \linec{dmul} or \linec{mul}), this is
standard practice in languages with linear types. From a numerical
standpoint, it is also often known which variables will have backward
error assigned to them during the analysis, and the primary concern is 
computing a backward error bound for compositions of programs; \bea{} 
is well-suited to this task.

\subsubsection{Additional Language Features}
\bean{} does not support higher-order functions, limiting code reuse.
Technically, we do not know if the \Bel{} category supports linear
exponentials, which would be needed to interpret function types. While most
lens categories do not support higher-order functions, there are some notable
situations where the lens category is symmetric monoidal closed
(e.g.,~\citet{dialectica}).  Connecting our work to these lens categories could
suggest how to support higher-order functions in our framework.

\chapter{Conclusion}

This chapter concludes our investigation of designing languages 
that unify the tasks of writing numerical programs and 
reasoning about their accuracy.  

In \Cref{chapter:nfuzz}, we demonstrated that it is possible to 
design a language for forward error analysis with \Lang{}, a
higher-order functional programming language with a linear type system that 
can express quantitative bounds on forward error. \Lang{} combines a 
sensitivity type system with a novel graded monad to track the forward relative 
error of programs. We proposed a metric semantics for \Lang{} and 
proved a central soundness theorem using this semantics, which 
guarantees that well-typed programs with graded monadic type adhere to 
the relative error bound specified by their type. A prototype implementation of 
\Lang{} illustrates that type-based approaches to rounding error analysis
can provide strong theoretical guarantees while being useful practice, 
narrowing the gap between rigorous analysis and realistic numerical computations.  

In \Cref{chapter:bean}, we presented \bean{}, a programming language for
\textsc{\scalefont{1.25}{\textbf{b}}}ackward
\textsc{\scalefont{1.25}{\textbf{e}}}rror
\textsc{\scalefont{1.25}{\textbf{an}}}alysis. As the first static analysis tool 
for backward error analysis, \bean{} demonstrates the feasibility of using 
static analysis techniques to automatically infer backward 
error bounds bounds and ensure the backward stability of programs. A key
insight from our development of \bean{} is that the composition of two backward 
stable programs remains backward stable \emph{provided they do not assign
backward error to shared variables}. Thus, to ensure 
that \bean{} programs satisfy a backward stability guarantee, 
\bean{} employs a \emph{strict} linear type system that restricts the
duplication of variables. 

We established the soundness of \bean{} by developing the \emph{category of
backward error lenses}---\Bel{}---and interpreting \bean{} programs 
as morphisms in this category. This semantic foundation 
distinguishes \bean{} from \emph{Fuzz}-like languages, including \Lang{},
which interpret programs in the category of metric spaces. Using \Bel{}, 
we formulated and proved a soundness theorem for \bean, which guarantees 
that well-typed programs have bounded backward error. A prototype implementation 
of \bean{} demonstrates that our approach can be used to infer accurate 
per-variable backward error bounds.

Topics for further work on \bean{} and \Lang{} are plentiful, and are discussed in 
\Cref{sec:nfuzz:conclusion} and \Cref{sec:bean:conclusion}.

\bibliographystyle{ACM-Reference-Format}
\bibliography{main}

\appendix
\chapter{Appendix for \Lang{}}\label{app:nfuzz}

\section{Termination (Strong Normalization)}\label{app:nfuzz:termination}
This appendix gives the proof of \Cref{thm:SN}, which verifies that \Lang{} 
is strongly normalizing using a logical relations argument. In addition to 
\Cref{lem:SN_aux1} and \Cref{thm:subsumption}, the proof relies on the following 
lemma. 
\begin{lemma}\label{lem:SN_aux2}
Let $\Gamma \vdash e : \monad{u}\tau$ be a well-typed term, and let $\vec{v}_1,
\vec{v}_2$ be well-typed substitutions of closed values such that $\vec{v}_1,
\vec{v}_2 \vDash \Gamma$. If $e[\vec{v}_1/dom(\Gamma)]\in
\mathcal{VR}^m_{\monad{u}\tau}$ for some $m\in\mathbb{N}$ and
$e[\vec{v}_2/dom(\Gamma)]\in \mathcal{R}_{\monad{u}\tau}$, then
$e[\vec{v}_2/dom(\Gamma)]\in \mathcal{VR}^m_{\monad{u}\tau}$.
\end{lemma}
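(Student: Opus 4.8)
The plan is to exploit the fact that the stratification index $m$ in $\mathcal{VR}^m_{\monad{u}\tau}$ records nothing more than the nesting of $\rnd$-headed monadic let-bindings inside a value, a feature that substituting closed values for free variables leaves untouched. First I would observe that, since $e[\vec{v}_1/dom(\Gamma)] \in \mathcal{VR}^m_{\monad{u}\tau}$ is a closed \emph{value} and substituting closed values for variables preserves value-hood in a call-by-value calculus, $e$ is itself a value; inverting the typing rules against the value grammar of \Cref{fig:syntax}, the only values that admit a monadic type are variables together with the shapes $\ret v$, $\rnd v$, and $\letm{y}{\rnd v}{g}$. The variable case does not arise in the use of this lemma inside the proof of \Cref{thm:SN} (there it is applied to the body of a monadic let-binding, under a context whose only relevant binding has numeric type), so I would set it aside and assume $e$ has one of the three remaining shapes, each of which is preserved under substitution of closed values. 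The argument then proceeds by strong induction on $m$, with a case split on the shape of $e$.

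The $\ret$ and $\rnd$ cases are immediate and also dispose of the strata $m \le 1$: a value whose outermost constructor is $\ret$ or $\rnd$ is never one of the values of the form $\letm{x}{\rnd v}{f}$ contributed by the $\mathcal{VR}^{n+1}$ clause of \Cref{fig:redpred}, so $\ret w \in \mathcal{VR}^\ell_{\monad{u}\tau} \iff w \in \mathcal{R}_\tau$ and $\rnd k' \in \mathcal{VR}^\ell_{\monad{u}\tau} \iff k' \in \mathcal{R}_\num$ uniformly in $\ell$. Since $e[\vec{v}_2/dom(\Gamma)]$ is a value, its hypothesised membership in $\mathcal{R}_{\monad{u}\tau}$ is membership in $\mathcal{VR}_{\monad{u}\tau}$, so it lies in $\mathcal{VR}^0_{\monad{u}\tau} \subseteq \mathcal{VR}^m_{\monad{u}\tau}$. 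The stratum $m = 1$ is handled the same way, since its let-binding clause is vacuous (it requires $n > j$ with $n = 0$).

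The substantive case is $e = \letm{y}{\rnd v}{g}$ with $m = n+1 \ge 2$. After the $\vec{v}_i$-substitutions, $v$ becomes a numeric constant $k_i \in R$ and $g$ retains only its bound variable $y : \num$; write $g_i$ for the substituted body. I would unfold the $\mathcal{VR}^{n+1}$-membership of $\letm{y}{\rnd k_1}{g_1}$: because $\rnd k_1$ inhabits $\mathcal{VR}^0_{\monad{u_1}\num}$ for any grade $u_1$, the existential index $j$ in that clause may be taken to be $0$, which yields a single grade $u_2 \le u$ with $g_1[w/y] \in \mathcal{VR}^n_{\monad{u_2}\tau}$ for all $w \in \mathcal{VR}_\num$. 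On the other side, $\letm{y}{\rnd k_2}{g_2}$ is a normal form, so its hypothesised $\mathcal{R}_{\monad{u}\tau}$-membership is $\mathcal{VR}_{\monad{u}\tau}$-membership; unfolding the clause that places it in its stratum yields, for every $w \in \mathcal{VR}_\num$, that $g_2[w/y]$ reduces to a value in $\mathcal{VR}$---so in particular $g_2[w/y] \in \mathcal{R}$---which is exactly the premise the induction hypothesis requires. Applying the induction hypothesis at depth $n < m$ to the subterm $g$, which by inverting (MLet) on the original derivation is typed in a context $\Theta, y :_s \num$, with the two substitutions obtained by extending the $\Theta$-parts of $\vec{v}_1$ and $\vec{v}_2$ by an arbitrary $w \in \mathcal{VR}_\num$, gives $g_2[w/y] \in \mathcal{VR}^n_{\monad{u_2}\tau}$ uniformly in $w$. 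Re-assembling this with $\rnd k_2 \in \mathcal{VR}^0$ and the same grade split reconstitutes $\letm{y}{\rnd k_2}{g_2} \in \mathcal{VR}^{n+1}_{\monad{u}\tau} = \mathcal{VR}^m_{\monad{u}\tau}$.

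I expect the main obstacle to be the grade bookkeeping threaded through the let-binding case. The body $g$ lands in $\mathcal{VR}^n$ at a grade $u_2$ that need not coincide with the grade $q$ carried by $g$'s own typing derivation, whereas the induction hypothesis is stated at a fixed monadic type; reconciling this requires invoking the hypothesis at the grade $\max(q, u_2)$---which (MSub) makes a legal type for $g$ and subsumption (\Cref{thm:subsumption}) makes a legal stratum for $g_1[w/y]$---and checking $\max(q, u_2) \le u$ from the inequalities $s\rnderr + q \le u$ and $u_2 \le u$ read off from (MLet) and (MSub). The remaining point that needs care, though routine, is the extraction of the per-$w$ reducibility facts on the $\vec{v}_2$ side from the bare $\mathcal{R}$-hypothesis, which rests on the observation that $\letm{y}{\rnd k_2}{g_2}$ cannot take a step.
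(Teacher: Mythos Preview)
The paper states this lemma in \Cref{app:nfuzz:termination} but supplies no proof, so I evaluate your argument directly. Your plan---induct on $m$ and case-split on the monadic value shapes, using that $m$ records a purely syntactic nesting depth---is the right one, but two points need repair.

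First, the step ``$e[\vec v_1]$ is a value, so $e$ is a value'' requires the converse of value-preservation, and that converse can fail precisely at monadic-typed variables: take $e=\letm{x}{z}{g}$ with $z$ a free variable of type $M_r\sigma$; if $\vec v_1$ sends $z$ to $\rnd\,k$ and $\vec v_2$ sends $z$ to $\ret\,k$, then $e[\vec v_1]$ is a value but $e[\vec v_2]$ is not, so it cannot lie in any $\mathcal{VR}^m$. Together with the plain $e=z$ case you already flag, this shows the lemma is \emph{false as written}. The correct hypothesis---implicit in the paper's only use of the lemma, where the two substitutions differ only at the numeric binder $x$---is that $\vec v_1$ and $\vec v_2$ agree on every variable of monadic type. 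Under that restriction the variable case is trivial at every level of the induction (the fresh binder $y$ introduced in the recursive step also has type $\num$), and your shape analysis is then sound.

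Second, the grade $\max(q,u_2)$ you propose for the inductive call does not cover the $\vec v_2$ side: unfolding $e[\vec v_2]\in\mathcal{VR}$ gives $g_2[w/y]\in\mathcal{VR}_{\monad{u_2''}\tau}$ for some $u_2''\le u$ unrelated to $u_2$, so \Cref{thm:subsumption} need not lift it to $\max(q,u_2)$. Work instead at the ambient grade $u$: by (MSub) $g$ has type $M_u\tau$, by \Cref{thm:subsumption} both $g_1[w/y]\in\mathcal{VR}^n_{\monad{u}\tau}$ and $g_2[w/y]\in\mathcal{R}_{\monad{u}\tau}$, the IH gives $g_2[w/y]\in\mathcal{VR}^n_{\monad{u}\tau}$, and the reassembly succeeds with $u_1'=0$, $u_2'=u$.
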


\begin{namedtheorem}[\Cref{thm:SN}]
If $\emptyset \vdash e : \tau$ then there exists $v \in CV(\tau)$ such that $e
\mapsto^* v$.  
\end{namedtheorem}

\begin{proof} 
We first prove the following stronger statement. Let $\Gamma \triangleq x_1
:_{s_1} \sigma_1, \dots, x_i :_{s_i} \sigma_i$ be a typing environment and let
$\vec{w}$ denote the values $\vec{w} \triangleq w_1, \dots, w_i$. If {$\Gamma
\vdash e : \tau$} and {$w_i \in \mathcal{R}_{\Gamma(x_i)}$} for every {$x_i \in
dom(\Gamma)$}, then {$e[\vec{w}/dom(\Gamma)] \in \mathcal{R}_\tau$}. The proof
follows by induction on the derivation $\Gamma \vdash e : \tau$. We consider
the monadic cases, as the non-monadic cases are standard. The base cases
(Const), (Ret), and (Rnd) follow by definition, and (MSub) follows by
\Cref{thm:subsumption}. The case for (MLet) requires some detail. The rule is
\begin{center} 
\vskip 1em \noLine 
\AXC{(MLet)} \kernHyps{-2em} \UIC{}
\noLine \UIC{$\Gamma \vdash v : \monad{r} \sigma$}
\AXC{$\Theta, x:_{s} \sigma \vdash
f: \monad{q} \tau$} 
\BIC{$s \cdot \Gamma + \Theta \vdash \letm{x}{v}{f}: \monad{s\cdot r+q}\tau$} 
\DisplayProof 
\vskip 1em 
\end{center} 
and so we are required to show ($\letm{x}{v}{f}) \in \mathcal{R}_{\monad{s\cdot
r+q}\tau}$.  We proceed by cases on $v$. 

\begin{description}
\item[\textbf{Subcase:} $v \equiv \rnd \ k$ with $k \in R$.]
Let $\Delta =
s \cdot \Gamma + \Phi$.
By \Cref{thm:substitution}, 
\[
(\letm{x}{\rnd \ k}{f})[\vec{w}/dom(\Delta)] \in CV(\monad{s\cdot r + q} \tau)
\]
and it remains to be shown that 
\[ 
(\letm{x}{\rnd \ k}{f})[\vec{w}/dom(\Delta)] \in 
	\mathcal{VR}_{\monad{s\cdot r+q}\tau}.
\]
By definition of the logical relation, 
$\rnd \ k \in \mathcal{VR}^0_{\monad{s}\sigma}$.  
  Given $\vec{w} \in \mathcal{R}_{\Delta(x_i)}$ for
every $x_i \in \Delta$, we have the vector of values $\vec{w}'$ such that $w'_i \in
\mathcal{R}_{\Phi(y_i)}$ for every $y_i \in dom(\Phi)$. 
By the induction hypothesis and \Cref{lem:SN_aux1} we have 
\[ 
f[\vec{w}'/dom(\Phi)][k/x] \in \mathcal{VR}^n_{\monad{q}\tau}
\]  
for some $n \in \mathbb{N}$. If $n = 0$, then the conclusion follows trivially.
Otherwise, $n > 0$, and we need to show
that, given some $t \in \mathcal{VR}_\num$, 
\[ 
f[\vec{w}'/dom(\Phi)][t/x] \in \mathcal{VR}^n_{\monad{q}\tau}
\]
This follows by the induction hypothesis and \Cref{lem:SN_aux2}.
\item[\textbf{Subcase:} $v \equiv \ret~ v'$ with $v' \in \mathcal{VR}_{\sigma}$.]
The conclusion follows by applying reasoning identical to that of the previous 
subcase.
\item[\textbf{Subcase:} $v \equiv \letm{y}{\rnd~ k}{g}$.] 
From the reduction rules we have
\[
	(\letm{x}{v}{f})[\vec{w}/dom(\Delta)] \mapsto 
	(\letm{y}{\rnd \ k}{(\letm{x}{g}{f})})[\vec{w}/dom(\Delta)]
\] with $y \notin FV(f)$. 
By \Cref{thm:substitution} we have 
\[
	\emptyset \vdash (\letm{x}{(\letm{y}{\rnd~ k}{g})}{f})[\vec{w}/dom(\Delta)]: 
	\monad{s \cdot r + q} \tau
\]
and, by \Cref{lem:SN_aux1}, it is therefore sufficient to show  
\[
	(\letm{y}{\rnd \ k}{(\letm{x}{g}{f})})[\vec{w}/dom(\Delta)] 
	\in \mathcal{R}_{\monad{s\cdot r+q}}\tau,
\]
which follows by applying reasoning identical to that of the previous 
two subcases.
\end{description}
\end{proof}

\begin{lemma} \label{thm:SN1}
If $\emptyset \vdash e : \tau$ then $e \in \mathcal{R}_\tau$.
\end{lemma}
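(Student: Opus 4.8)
The plan is to obtain \Cref{thm:SN1} as an immediate instance of the ``fundamental lemma'' of the logical relation that was already established in the course of proving \Cref{thm:SN}. Recall that the proof of \Cref{thm:SN} did not directly exhibit a value for a closed well-typed term; instead it proved the stronger, substitution-closed statement: for every well-typed term $\Gamma \vdash e : \tau$, writing $\Gamma = x_1 :_{s_1} \sigma_1, \dots, x_i :_{s_i} \sigma_i$, and for every vector $\vec{w} = w_1, \dots, w_i$ of closed values with $w_j \in \mathcal{R}_{\Gamma(x_j)}$ for each $x_j \in dom(\Gamma)$, we have $e[\vec{w}/dom(\Gamma)] \in \mathcal{R}_\tau$.

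To prove \Cref{thm:SN1}, I would simply instantiate this stronger statement with the empty context $\Gamma = \emptyset$. Since $dom(\emptyset) = \emptyset$, the hypothesis on $\vec{w}$ is vacuous and the simultaneous substitution $e[\vec{w}/dom(\emptyset)]$ is just $e$ itself (the empty substitution is the identity on terms). Hence the stronger statement directly yields $e \in \mathcal{R}_\tau$, as required.

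There is no real obstacle: all of the work --- the induction over the typing derivation, the non-monadic cases, and the delicate monadic (MLet) cases that appeal to \Cref{lem:SN_aux1}, \Cref{lem:SN_aux2}, and \Cref{thm:subsumption} --- has already been carried out inside the proof of \Cref{thm:SN}. The only point worth spelling out is the bookkeeping observation that $\Gamma = \emptyset$ is a legitimate instance of the statement (it is the base case in the treatment of contexts), so no separate argument is needed. Equivalently, one can regard \Cref{thm:SN1} as the core lemma and recover \Cref{thm:SN} from it by unfolding the definition of $\mathcal{R}_\tau$, which already bundles in the existence of a value $v$ with $e \mapsto^* v$; the logical content is identical.
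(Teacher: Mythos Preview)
Your proposal is correct and matches the paper's own treatment: the paper establishes the stronger, substitution-closed statement inside the proof of \Cref{thm:SN} and then states \Cref{thm:SN1} without a separate proof, since it is exactly the $\Gamma = \emptyset$ instance you describe.
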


The following lemma follows directly from the definition of the 
reducibility predicate.
\begin{lemma} 
If $e \in \mathcal{R}_\tau$ then there exists a $v \in CV(\tau)$ such that $e
\mapsto^* v$. \label{thm:SN2} 
\end{lemma}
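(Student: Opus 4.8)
The plan is to read the result off directly from the definition of the reducibility predicate $\mathcal{R}_\tau$ given in \Cref{def:redpred} (see \Cref{fig:redpred}). By that definition, the membership $e \in \mathcal{R}_\tau$ unfolds to the conjunction that $e \in CT(\tau)$, that there exists some $v \in CV(\tau)$ with $e \mapsto^* v$, and that moreover $v \in \mathcal{VR}_\tau$. The statement to be proved is then obtained simply by projecting onto the middle conjunct: the value $v$ supplied by the definition is exactly the required closed value, and the clause $v \in \mathcal{VR}_\tau$ is discarded. Concretely, I would unfold the hypothesis $e \in \mathcal{R}_\tau$, introduce the witness $v$ it provides together with the reduction $e \mapsto^* v$, and conclude.

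There is no genuine obstacle here — this is precisely the ``follows directly from the definition'' situation. The one point worth noting is that the definition of $\mathcal{R}_\tau$ is uniform in the type $\tau$: it is the outer ``reducibility'' layer wrapping the type-indexed value predicate $\mathcal{VR}_\tau$, so no induction on the structure of $\tau$ is needed, and a single appeal to the definition suffices. Combined with \Cref{thm:SN1}, which establishes $e \in \mathcal{R}_\tau$ for every closed well-typed term $\emptyset \vdash e : \tau$, this lemma immediately yields the termination theorem \Cref{thm:SN}.
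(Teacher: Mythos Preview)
Your proposal is correct and matches the paper's own treatment: the paper simply states that the lemma ``follows directly from the definition of the reducibility predicate,'' which is exactly the unfolding-and-projection argument you describe.
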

The proof of termination (\Cref{thm:SN}) then follows from \Cref{thm:SN2} and
\Cref{thm:SN1}.

\section{The Neighborhood Monad}\label{app:nfuzz:monad}
This appendix provides the details 
verifying that the neighborhood monad
defined in \Cref{sec:nfuzz:monad}
forms an $\mathcal{R}$-strong graded monad on $\Met$.  

\begin{namedtheorem}[\Cref{lem:monad-nat}]
  Let $q, r \in \mathcal{R}$. For any metric space $A$, the maps $(q \leq r)_A$,
  $\eta_A$, and $\mu_{q, r, A}$ are non-expansive maps and natural in $A$.
\end{namedtheorem}

\begin{proof}
  Non-expansiveness and naturality for the subeffecting maps $(q \leq r)_A : T_q
  A \to T_r A$ and the unit maps $\eta_A : A \to T_0 A$ are straightforward. We
  describe the checks for the multiplication map $\mu_{q, r, A}$. 
  
  First, we check that the multiplication map has the claimed domain and
  codomain.  Note that $d_A(x, x') = d_{T_r A} ( (x, y), (x', y') ) \leq q$
  because of the definition of $T_q$, and $d_A(x', y') \leq r$ because of the
  definition of $T_r$, so via the triangle inequality we have $d_A(x, y') \leq r
  + q$ as claimed.

  Second, we check non-expansiveness. Let $((x, y),(x', y'))$ and $((w, z),
  (w', z'))$ be two elements of $T_q (T_r A)$. Then:
  \begin{align}
    d_{T_{r + q} A}( \mu( (x, y), (x', y') ) , \mu( (w, z), (w', z') ) )
    &= d_{T_{r + q} A}( (x, y'), (w, z') ) \tag{def. $\mu$} \\
    &= d_A(x, w) \tag{def. $d_{T_{r + q} A}$} \\
    &= d_{T_r A} ( (x, y), (w, z) ) \tag{def. $d_{T_r} A$} \\
    &= d_{T_q (T_r A)} ( ((x, y), (x', y')), ((w, z), (w', z')) ) \tag{def. $d_{T_q (T_r A)}$}
  \end{align}
  
  Finally, we can check naturality. Let $f : A \to B$ be any non-expansive map.
  By unfolding definitions, it is straightforward to see that $\mu_{q, r, B}
  \circ T_q (T_r f) = T_{r + q} f \circ \mu_{q, r, A}$.
\end{proof}

\begin{namedtheorem}[\Cref{lemma:monad_strong}]
  The neighborhood monad (\Cref{def:nhd-monad}) together with 
  the \emph{tensorial strength maps} 
  $st_{r, A, B} : A \otimes T_r B \to T_r (A \otimes B)$ 
  defined as
  \begin{align*}
    st_{r, A, B}(a, (b, b')) &\triangleq ((a, b), (a, b'))
  \end{align*}
  for every $r \in \mathcal{R}$ form a 
  $\mathcal{R}$-strong graded monad on $\Met$.
\end{namedtheorem}

\begin{proof}
We must verify the non-expansiveness (\Cref{def:non_expansive}) and 
naturality (\Cref{def:nat_trans}) of the 
tensorial strength map. We check the non-expansiveness
with respect to the tensor product $\otimes$,
although it is also easy to show 
non-expansiveness
with respect to the product $\&$:
For $(a, (b, b'))$ and $(c, (d, d'))$ in $A
\otimes T_r B$, we have:
\begin{align*}
  d_{T_r (A \otimes B)} (st (a, (b, b')), st (c, (d, d')))
  &= d_{T_r (A \otimes B)} (((a, b), (a, b')), ((c, d), (c, d')))
  \tag{def.  $st$} \\
  &= d_{A \otimes B} ((a, b), (c, d))
  \tag{def. $d_{T_r (A \otimes B)}$}
  \\
  &= d_A (a, c) + d_B (b, d)
  \tag{def. $d_{A \otimes B}$} \\
  &= d_A (a, c) + d_{T_r B} ((b, b'), (d, d'))
  \tag{def. $d_{T_r B}$} \\
  &= d_{A \otimes T_r B}  ((a, (b, b')), (c, (d, d')))
  \tag{def. $d_{A \otimes T_r B}$}
\end{align*}

Finally, checking the naturality of the strength 
follows directly by unfolding 
definitions.
\end{proof}

\begin{namedtheorem}[\Cref{lem:distr}]
  Let $s \in \mathcal{S}$ and $r \in \mathcal{R}$ be grades, and let $A$ be a
  metric space. Then identity map on the carrier set $|A| \times |A|$ is a
  non-expansive map
  \[
    \lambda_{s, r, A} : D_s (T_r A) \to T_{s \cdot r} (D_s A)
  \]
  Moreover, these maps are natural in $A$.
\end{namedtheorem}

\begin{proof}
  We first check the domain and codomain. Let $x, y \in A$ be such that $(x, y)$
  is in the domain $D_s(T_r A)$ of the map. Thus $(x, y)$ must also be in $T_r
  A$, and satisfy $d_A(x, y) \leq r$ by definition of $T_r$.  To show that this
  element is also in the range, we need to show that $d_{D_s A} (x, y) \leq s
  \cdot r$, but this holds by definition of $D_s$.
  We can also check that this map is non-expansive:
  \begin{align*}
    d_{T_{s \cdot r} (D_s A)} ((x, y), (x', y'))
  &\triangleq d_{D_s A} (x, x')
  \tag{def. $T_{s \cdot r}$} \\
  &\triangleq s \cdot d_A(x, x')
  \tag{def. $D_s$} \\
  &\triangleq s \cdot d_{T_r A}((x, y), (x', y'))
  \tag{def. $T_r$} \\
  &\triangleq d_{D_s(T_r A)}((x, y), (x', y'))
  \tag{def. $D_s$}
  \end{align*}
  Since $\lambda_{s, r, A}$ is the identity map on the underlying set $|A|
  \times |A|$, it is evidently natural in $A$.
\end{proof}

\section{Interpreting \texorpdfstring{\Lang}{} Terms}\label{sec:app_interp_nfuzz}
This appendix provides the constructions of the interpretation of
\Lang{} terms for \Cref{def:interp-prog} that were not included
in \Cref{sec:nfuzz:semantics}.  

To reduce
notation, we elide the  
the unitors $\lambda_A : I \otimes A \to A$
and $\rho_A: A \otimes I \to I$; the associators $\alpha_{A, B, C}: (A \otimes
B) \otimes C \to A \otimes (B \otimes C)$; and the symmetries 
$\sigma_{A, B}: A \otimes B \to B \otimes A$. 
\begin{description}
\item[(\textbf{Unit}).]
Define $\denot{\Gamma \vdash (): \unit}$ as the map that sends all points in 
$\Gamma$ to $\star \in \denot{\unit}$.
\item[(\textbf{Var}).] We define $\denot{\Gamma \vdash x : \tau}$ to be the map
that maps $\denot{\Gamma}$ to the $x$-th component $\denot{\tau}$. All other
components are mapped to $I$ and then removed with the unitor.
\item[(\textbf{Abs}).] Let $f = \denot{\Gamma, x :_1 \sigma \vdash e :
\tau} : \denot{\Gamma} \otimes D_1 \denot{\sigma} \to \denot{\tau}$. 
Define 
\[
\denot{\Gamma \vdash \lambda x. e : \sigma \lin \tau} \triangleq \lambda(f)
\]
The map $\lambda(f) : \denot{\Gamma} \to (\denot{\sigma} \lin \denot{\tau})$
is guaranteed by the closed symmetric monoidal structure of $\Met$ 
(\Cref{thm:smcc}). The equality $D_1 \denot{\sigma} = \denot{\sigma}$ 
follows by definition of the comonad.
\item[(\textbf{App}).] Let $f = \denot{\Gamma \vdash v : \sigma
\lin \tau}$ and $g = \denot{\Theta \vdash w : \sigma}$. Define
\[
\denot{\Gamma + \Theta \vdash v w : \tau} \triangleq
c_{\denot{\Gamma}, \denot{\Theta}} ; (f \otimes g) ; ev
: \denot{\Gamma + \Theta} \to \denot{\tau}
\]
The map $ev: (A \lin B) \otimes A \to B$
is guaranteed by the closed symmetric monoidal structure of $\Met$ 
(\Cref{thm:smcc}). 
\item[($\&$ \textbf{I}).] Let 
$f = \denot{\Gamma \vdash v: \sigma}$ and 
$ g = \denot{\Gamma \vdash w: \tau}$. Then, define: 
\[
\denot{\Gamma \vdash \langle v,w\rangle : \sigma \tand \tau} \triangleq 
\langle f, g \rangle
\]
\item[($\&$ \textbf{E}).] Let 
$f = \denot{\Gamma \vdash v: \tau_1 \tand t_2}$ 
be the denotation of the premise. Define 
\[
\denot{\Gamma \vdash \pi_i ~ v : \tau_i} \triangleq  f;\pi_i
\]
\item[($\otimes$ \textbf{I}).] Let 
$f = \denot{\Gamma \vdash v : \sigma}$ and 
$g = \denot{\Theta \vdash w : \tau}$ be the denotations of the
premises. Then, define: 
\[
\denot{\Gamma + \Theta \vdash (v,w): \sigma \otimes \tau} \triangleq
c_{\denot{\Gamma}, \denot{\Theta}} ; (f \otimes g)
:\denot{\Gamma + \Theta} \to \denot{\sigma \otimes \tau}
\]
\item[($\otimes$ \textbf{E}).] Let the denotations for the premises be:
\begin{align*}
f &= \denot{\Gamma \vdash v : \sigma
\otimes \tau} : \denot{\Gamma} \to \denot{\sigma} \otimes \denot{\tau} \\
g &= \denot{\Theta, x :_s \sigma, y :_s \tau \vdash e : \rho} :
\denot{\Theta} \otimes D_s \denot{\sigma} \otimes D_s \denot{\tau} \to
\denot{\rho} 
\end{align*}
Then, define:  
\[
\denot{s \cdot \Gamma + \Theta \vdash \lett{(x,y)}{v}{e}: \rho} \triangleq
c_{\denot{s \cdot \Gamma}, \denot{\Theta}}; h ; g 
\]
The map $h: \denot{s \cdot \Gamma} \otimes \denot{\Theta}
\to \denot{\Theta} \otimes D_s \denot{\sigma} \otimes D_s \denot{\tau}$ 
is constructed as follows. Applying the functor $D_s$ to $f$ and 
pre-composing with $m$ yields
\[
m;D_s f: \denot{s \cdot \Gamma} \to D_s (\denot{\sigma} \otimes \denot{\tau})
\]
Since the map $m$ is the identity, we can post-compose by its
inverse to get:
\[
m;D_s f ; m^{-1}: \denot{s \cdot \Gamma} \to D_s (\denot{\sigma}) 
\otimes D_s (\denot{\tau})
\]
Composing in parallel with $id_{\denot{\Theta}}$, we get:
\[
h = id_{\denot{\Theta}} \otimes (m ; D_s f ; m^{-1}): 
\denot{\Theta} \otimes \denot{s \cdot \Gamma}
\to \denot{\Theta} \otimes D_s \denot{\sigma} \otimes D_s \denot{\tau}
\]
\item[($+$~\textbf{I$_L$}).] Let $f = \denot{\Gamma \vdash v: \sigma}$ 
be the denotation of the premise. Then, define: 
\[
\denot{\Gamma \vdash \inl v: \sigma + \tau} \triangleq f ; \iota_1
\]
where $\iota_1$ is the first injection into the coproduct.
\item[($+$~\textbf{I$_R$}).] Let $f = \denot{\Gamma \vdash v: \tau }$ 
be the denotation of the premise. Then, define: 
\[
\denot{\Gamma \vdash \inr v: \sigma + \tau} \triangleq f ; \iota_r
\]
where $\iota_2$ is the second injection into the coproduct.
\item[($+$~\textbf{E}).] This particular case requires as few additional 
facts about the structures in our category. First, when $s > 0$, there 
is an isomorphism 
\[
\text{dist}^D_{s} : D_s (A + B) \cong D_s A + D_s B
\]  
Second, there is a map 
\[
\text{dist}_{A,B,C} : A \times (B + C) \to A \times B + A \times C
\]
that pushes the first component into the disjoint union. 
This map is non-expansive, and in fact
$\Met$ is a distributive category.

Now, let $f = \denot{\Gamma \vdash v : \sigma + \tau}$ and 
$g_i = \denot{\Theta, x_i :_s \sigma \vdash e_i : \rho}$ for $i = 1, 2$
and $s > 0$.  
Then, define 
\[
\denot{s \cdot \Gamma + \Theta \vdash \case{v}{x.e}{y.f} : \rho}
\triangleq c_{\denot{s \cdot \Gamma}, \denot{\Theta}};h;[g_1, g_2]
\]
The map $h$ is constructed as follows. 
Since $s > 0$, $\text{dist}_{s}$ is an isomorphism. Using
the functor $D_{s}$ on $f$, composing in parallel with
$id_{\denot{\Theta}}$ and distributing, we have:
\[
h = (id_{\denot{\Theta}} \otimes (m; D_{s} f ; 
\text{dist}^D_{s})) ; \text{dist}_{\denot{\Theta},\denot{\sigma},\denot{\tau}}
: \denot{\Theta} \otimes \denot{s\cdot \Gamma}
\to \denot{\Theta} \otimes D_{s} \denot{\sigma}
+ \denot{\Theta} \otimes D_{s} \denot{\tau}
\]
By post-composing with the pairing map $[g_1, g_2]$ from the coproduct, and
pre-composing with $c_{\denot{s \cdot \Gamma}, \denot{\Theta}}$ (and symmetry maps),
we get a map $\denot{s \cdot \Gamma + \Theta} \to \denot{\rho}$ as desired.

\item[($!$~\textbf{I}).] Let $f = \denot{\Gamma \vdash v : \sigma}$ 
be the denotation of the premise. Then, define:  
\[ 
\denot{s \cdot \Gamma \vdash \boxx{v} : \bang{s} : \sigma} \triangleq m ; D_s f
\]
\item[($!$~\textbf{E}).] Let $f = \denot{\Gamma \vdash v : \bang{s}
\sigma}$ and $g = \denot{\Theta, x :_{m \cdot n} \sigma \vdash e : \tau}$.
Then, define:
\[ \denot{t \cdot \Gamma + \Theta \vdash \lett{\boxx{x}}{v}{e} : \tau}
\triangleq m ; D_m f ; \delta_{m, n, \denot{\sigma}}^{-1}: 
\denot{m \cdot \Gamma} \to D_{m \cdot n} \denot{\sigma}
\]
Here we use the fact that $\delta_{m, n, \denot{\sigma}}$ is an
isomorphism in our model. By composing in parallel with
$id_{\denot{\Theta}}$, we can then post-compose by $g$. Pre-composing with
$c_{\denot{s \cdot \Gamma}, \denot{\Theta}}$ gives a map $\denot{s
\cdot \Gamma + \Theta} \to \denot{\tau}$, as desired.
\item[(\textbf{Let}).] 
Let $f = \Gamma \vdash e : \tau$ and let $g = \denot{\Theta, x:_s \tau \vdash f
:\sigma}$. Then, define:
\[
\denot{s \cdot \Gamma + \Theta \vdash \lett{x}{e}{f}: \sigma} \triangleq
c_{\denot{s \cdot \Gamma},\denot{\Theta}};h;g
\] 
Similar to the other elimination cases, the map $h: \denot{s \cdot \Gamma}
\otimes \denot{\Theta} \to \denot{\Theta} \otimes D_s\denot{\tau}$ is
constructed as follows. Applying the functor $D_s$ to $f$, pre-composing
with $m$, and composing in parallel with $\id_{\denot{\Theta}}$ yields
\[
  h = (m;D_s f) \otimes \id_{\denot{\Theta}}: \denot{\Theta} \otimes \denot{s
  \cdot \Gamma} \to D_s (\denot{tau}) \otimes \denot{\Theta} 
\]
\end{description}

\section{Denotational Semantics}\label{app:nfuzz:semantics}
This appendix provides 
basic lemmas about the denotational semantics: weakening 
(\Cref{lem:weak-sem}), subsumption (\Cref{lem:subsump}), and  
substitution (\Cref{lem:subst-sem}). It also includes 
a computational soundness lemma (\Cref{lem:pres-sem}) showing that our 
metric interpretation of \Lang{} terms respect the operational 
semantics given in \Cref{fig:eval_rules}; these are the semantics defined
prior to the refinement into a ideal and floating-point step relations. 

\begin{lemma}[Weakening] \label{lem:weak-sem}
  Let $\Gamma, \Gamma' \vdash e : \tau$ be a well-typed term. Then for any
  context, there is a derivation of $\Gamma, \Delta, \Gamma' \vdash e : \tau$
  with semantics $\denot{\Gamma, \Gamma' \vdash e : \tau} \circ \pi$, where $\pi
  : \denot{\Gamma, \Delta, \Gamma'} \to \denot{\Gamma, \Gamma'}$ projects the
  components in $\Gamma$ and $\Gamma'$.
\end{lemma}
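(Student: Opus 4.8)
The plan is to prove \Cref{lem:weak-sem} by structural induction on the typing derivation $\Gamma, \Gamma' \vdash e : \tau$. The statement has two parts: first, that a derivation of $\Gamma, \Delta, \Gamma' \vdash e : \tau$ exists, and second, that its interpretation equals $\denot{\Gamma, \Gamma' \vdash e : \tau} \circ \pi$, where $\pi : \denot{\Gamma, \Delta, \Gamma'} \to \denot{\Gamma, \Gamma'}$ is the evident projection (built from the counit and weakening maps $(0 \leq s)_A ; w_A : D_s A \to I$ of the graded comonad, together with the unitors). The first part follows from the syntactic \Cref{thm:weakening} already established, but we need to track the specific derivation produced by that lemma's induction so that we can reason about its semantics; the cleanest approach is to re-run the induction here, simultaneously building the weakened derivation and verifying the semantic equation.

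First I would handle the leaf rules. For (Var) projecting out a variable $x$, both sides pick out the $x$-component of the context and send everything else to $I$; the equation follows since $\pi$ simply inserts the extra $\Delta$-components which are then immediately discarded via weakening. The cases (Unit) and (Const) are constant maps, so precomposition with $\pi$ is trivially absorbed. For the inductive cases, the key observation is that every typing rule interprets its conclusion by composing maps involving the context operations---contraction $c_{\Gamma, \Delta}$ (\Cref{lem:ctx-contr}), scaling (\Cref{lem:ctx-scale}), and the monoidal structure---applied to the interpretations of the premises. When we weaken the conclusion's context by $\Delta$, we correspondingly weaken each premise (invoking the induction hypothesis), and we must check that the projection $\pi$ commutes past each of these context-manipulating maps. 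Concretely: for the multiplicative rules like ($\otimes$ I) and ($\multimap$ E), the context splits as a sum $\Gamma_1 + \Gamma_2$, and we must decide how $\Delta$ is distributed---the natural choice is to route $\Delta$ entirely into one side (say, place it in $\Gamma_1$'s subderivation with all sensitivities set appropriately, or more simply carry $\Delta$ with zero sensitivities), and then naturality of $c_{\Gamma, \Delta}$ together with functoriality of $\otimes$ gives the result. For the scaling rules (($\otimes$ E), (Let), ($+$ E), ($!$ I), ($!$ E), (MLet)), we use that the identity map $\denot{s \cdot \Gamma} \to D_s \denot{\Gamma}$ is natural and that $D_s$ is a functor, so $D_s(\pi) = \pi'$ for the appropriate projection at scaled type. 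For the monadic rules (Ret), (Rnd), (MSub), (MLet), precomposition with $\pi$ slides through by naturality of $\eta$, $\mu$, the subeffecting maps, the strength, and the distributive law---all of which were verified in \Cref{app:nfuzz:monad}.

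The main obstacle I anticipate is bookkeeping discipline rather than any conceptual difficulty: in the binding rules the extra context $\Delta$ must be threaded consistently through both premises when the rule shares a context (as in ($\&$ I), where both components are typed in the same $\Gamma$) versus split between premises when the rule sums contexts, and one must be careful that the projection $\pi$ on the conclusion factors correctly as (a combination of) the projections $\pi_i$ used on the premises after applying the context operations. Writing out the commuting diagram for, say, the (MLet) case---which chains together contraction, $D_s$ applied to the first premise, the distributive law $\lambda_{s,r,\denot{\sigma}}$, the 2-monoidality map $m$, the strength $st$, $T_{s\cdot r}$ applied to the second premise, and the multiplication $\mu$---requires invoking naturality of essentially every structural map in the model, but each step is routine given the groundwork in \Cref{sec:nfuzz:semantics} and \Cref{app:nfuzz:monad}. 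I would present one representative multiplicative case ($\otimes$ I) and one representative scaling/monadic case (MLet) in detail, and remark that the remaining cases follow by the same pattern of invoking naturality and functoriality.
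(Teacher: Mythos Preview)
Your proposal is correct and takes essentially the same approach as the paper: the paper's proof consists of the single line ``By induction on the typing derivation of $\Gamma, \Gamma' \vdash e: \tau$,'' and your plan elaborates exactly this induction with the naturality and functoriality bookkeeping that the paper leaves implicit.
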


\begin{proof}
  By induction on the typing derivation of $\Gamma, \Gamma' \vdash e: \tau$.
\end{proof}

\begin{lemma}[Subsumption] \label{lem:subsump}
  Let $\Gamma \vdash e : M_r \tau$ be a well-typed program of monadic type,
  where the typing derivation concludes with the subsumption rule. Then either
  $e$ is of the form $\ret v$ or $\rnd~k$, or there is a
  derivation of $\Gamma \vdash e : M_r \tau$ with the same semantics that
  does not conclude with the subsumption rule.
\end{lemma}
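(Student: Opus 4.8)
The plan is to induct on the typing derivation $\mathcal{D}_0$ of the premise $\Gamma \vdash e : M_q\tau$ of the concluding (MSub) step; by inversion, the given derivation of $\Gamma \vdash e : M_r\tau$ must be (MSub)$(\mathcal{D}_0, r \ge q)$ for some $q \le r$, and its denotation is $\denot{\mathcal{D}_0}$ post-composed with the subeffecting map $(q \le r)_{\denot{\tau}}$. Since the conclusion type is monadic, the only rules that can end $\mathcal{D}_0$ are (Var), (Ret), (Rnd), (MSub), (MLet), (Let), and the eliminators ($\multimap$ E), ($\&$ E), ($\otimes$ E), ($+$ E), ($!$ E) --- every introduction rule and (Op) yields a non-monadic type. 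Because $e$ is a closed program, the (Var) case cannot occur at the root, so this case split is exhaustive.

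First I would clear the base and structural cases. If $\mathcal{D}_0$ ends with (Ret) then $e = \ret v$, and if it ends with (Rnd) then $e = \rnd k$ for a numeric constant $k$; these are exactly the excepted forms, so there is nothing further to prove. If $\mathcal{D}_0$ ends with (MSub), say $\mathcal{D}_0 = $ (MSub)$(\mathcal{D}_1, q \ge q')$, then since $r \ge q \ge q'$ the derivation (MSub)$(\mathcal{D}_1, r \ge q')$ also derives $\Gamma \vdash e : M_r\tau$, is strictly smaller, and --- by the functoriality of the subeffecting maps in \Cref{def:graded_monad}, i.e.\ $(q' \le q)_{\denot{\tau}}$ composed with $(q \le r)_{\denot{\tau}}$ equals $(q' \le r)_{\denot{\tau}}$ --- has the same denotation as (MSub)$(\mathcal{D}_0, r \ge q)$. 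Applying the induction hypothesis to it yields either an excepted form or a derivation of $\Gamma \vdash e : M_r\tau$ of the required kind with the required denotation.

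The remaining cases --- (MLet), (Let), and the five eliminators --- are treated uniformly. In each, the conclusion type $M_q\tau$ occurs covariantly in exactly one premise: the body of the let for (Let) and (MLet), the codomain of the applied function for ($\multimap$ E), the chosen component for ($\&$ E), and the continuation for ($\otimes$ E), ($+$ E), ($!$ E). I would loosen the grade $q$ to $r$ in that premise --- an instance of the subtyping $M_q\tau \sqsubseteq M_r\tau$ of \Cref{thm:sub_type} pushed into a covariant position --- and re-apply the same last rule, obtaining a derivation of $\Gamma \vdash e : M_r\tau$ that does not conclude with (MSub). For the denotational equality I would use a semantics-tracking refinement of \Cref{thm:sub_type}: loosening a monadic grade in a covariant position changes the interpretation only by inserting the subeffecting map $(q \le r)$ at the corresponding place, and since the interpretation of each rule above is built by post-composing with (or factoring through) that place, naturality of $(q \le r)$ --- together with the graded-monad coherences for $\eta$ and $\mu$, and in the (MLet) and ($!$ E) cases the interaction of $(q \le r)$ with the strength, the distributive law $\lambda_{s,r,A}$, and the comonad structure maps --- lets me pull the inserted map to the outside, recovering $\denot{\mathcal{D}_0}$ post-composed with $(q \le r)_{\denot{\tau}}$.

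I expect the main obstacle to be this last family of cases, and in particular the semantics-tracking version of \Cref{thm:sub_type}: since that lemma only asserts derivability, I would reprove it by the same induction while recording that each covariant loosening post-composes with a subeffecting map and each contravariant loosening pre-composes with one, and then carry out the handful of small diagram chases showing that $(q \le r)$ commutes past $\mu$, the strength, $\lambda_{s,r,A}$, and the comonad maps. None of these checks is deep, but there are several and they must be done one rule at a time; by contrast the case split itself and the (Ret), (Rnd), and (MSub) cases are immediate.
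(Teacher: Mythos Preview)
Your overall structure matches the paper's---induct on the derivation, collapse nested (MSub) by transitivity, and in the remaining cases push the grade loosening into a premise---but you are planning far more work than necessary on the semantic equality. The paper's proof is a single sentence because in this particular model the subeffecting map $(q \le r)_A : T_qA \to T_rA$ is \emph{literally the identity function} on the underlying carrier (\Cref{def:nhd-monad}); the same holds for every grade-indexed structure map you intend to commute past (the multiplication $\mu_{q,r,A}$, the strength, the distributive law $\lambda_{s,r,A}$, and the comonad maps $\delta$, $m$ all have underlying functions independent of their grade parameters). Thus applying (MSub) never changes the denotation, and rebuilding (MLet), (Let), or an eliminator with a loosened grade on one premise yields exactly the same underlying set map by construction---there is no diagram to chase. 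Your programme of proving a semantics-tracking refinement of \Cref{thm:sub_type} and then invoking naturality and graded-monad coherences would succeed, and would in fact establish the lemma for an arbitrary graded monad rather than just the neighborhood monad, but here it is substantial overkill.

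One small point: you assert that $e$ is closed in order to exclude (Var), but the lemma statement allows arbitrary $\Gamma$. If $x :_s M_q\tau \in \Gamma$ with $q < r$, then $e = x$ is neither $\ret v$ nor $\rnd k$, and there is no derivation of $\Gamma \vdash x : M_r\tau$ avoiding a terminal (MSub)---so the lemma as literally stated does not cover this case. In the paper the lemma is only ever invoked on closed terms (in \Cref{lem:pres-sem} and \Cref{lem:pres-id-fp}), where (Var) cannot arise, so this is harmless in context; but your ``closed'' hypothesis should be flagged as an additional assumption rather than asserted as a consequence of the statement.
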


\begin{proof}
  By straightforward induction on the typing derivation, using the fact that
  subsumption is transitive, and the semantics of the subsumption rule leaves
  the semantics of the premise unchanged since the subsumption map $(r \leq
  s)_A$ is the identity function.
\end{proof}

\begin{lemma}[Substitution] \label{lem:subst-sem}
Let $\Gamma, \Delta, \Gamma' \vdash e : \tau$ be a well-typed term, and let
$\vec{v} : \Delta$ be a well-typed substitution of closed values, i.e., we
have derivations $\emptyset \vdash v_x : \Delta(x)$. Then there is a derivation of
\[
\Gamma, \Gamma' \vdash e[\vec{v}/dom(\Delta)] : \tau
\]
with semantics
$
\denot{\Gamma, \Gamma' \vdash e[\vec{v}/dom(\Delta)] : \tau}
= (id_{\denot{\Gamma}} \otimes \denot{\emptyset \vdash \vec{v} : \Delta} 
\otimes id_{\denot{\Gamma'}})
; \denot{\Gamma, \Delta, \Gamma' \vdash e : \tau} .
$
\end{lemma}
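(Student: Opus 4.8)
The plan is to prove \Cref{lem:subst-sem} by induction on the typing derivation of $\Gamma, \Delta, \Gamma' \vdash e : \tau$. The syntactic half — that $\Gamma, \Gamma' \vdash e[\vec v/dom(\Delta)] : \tau$ is derivable — is exactly \Cref{thm:substitution}, so the only new content is the semantic equation. Before starting the induction I would record two auxiliary observations used repeatedly. First, since each $v_x$ is a closed value, its interpretation $\denot{\emptyset \vdash v_x : \sigma}$ is a global element $I \to \denot{\sigma}$; because $D_r I = I$ and the maps $m_{r,I}$ are identities, $\denot{\emptyset \vdash \vec v : \Delta}$ is the evident map $I \to \denot{\Delta}$ and is insensitive to any grade decoration on the bindings of $\Delta$. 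Second, substitution commutes with the two context operations appearing in the typing rules: the square relating the contraction map $c_{\Gamma_1 + \Delta_1,\, \Gamma_2 + \Delta_2}$ of \Cref{lem:ctx-contr} to $c_{\Gamma_1,\Gamma_2}$ via the substituted tuple commutes on the nose, and $D_s$ applied to a substitution is again a substitution, which together with the scaling map of \Cref{lem:ctx-scale} handles scaled contexts. These are immediate from the definitions but are the glue that makes the inductive cases go through.

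For the base cases: (Const) and (Unit) interpret both sides as constant maps that agree trivially. For (Var) the interesting subcase is $x \in dom(\Delta)$: then $e[\vec v/dom(\Delta)] = v_x$ is closed, $\denot{\Gamma,\Gamma' \vdash v_x : \tau}$ is the constant map picking out $\denot{v_x}$, while $(id_{\denot{\Gamma}} \otimes \denot{\vec v} \otimes id_{\denot{\Gamma'}}) ; \denot{\Gamma, \Delta, \Gamma' \vdash x : \tau}$ first supplies $\vec v$ in the $\Delta$-slot and then projects the $x$-component, which is exactly $\denot{v_x}$ — the remaining components are discarded by the weakening and unitor maps on both sides. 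When $x \notin dom(\Delta)$ substitution is the identity on the term, and the equation reduces to compatibility of the $x$-projection with $id \otimes \denot{\vec v} \otimes id$.

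For the inductive cases, each rule interprets its conclusion as a composite built from $;$, $\otimes$, the structural isomorphisms, and one or more of the natural transformations for the comonad $D_s$, the monad $T_r$, the strength $st$, and the graded distributive law $\lambda$. Applying the induction hypothesis to every premise prepends the appropriate restriction of $(id \otimes \denot{\vec v} \otimes id)$ to each premise's interpretation, and it then remains to slide this map to the front of the whole composite. This is pure diagram chasing: naturality of each structural map and of each graded (co)monad map lets the substitution map commute past it, and the two auxiliary observations handle the rules that sum or scale contexts — ($\otimes$ I), ($\otimes$ E), ($\multimap$ E), ($!$ I), ($!$ E), (Let), (MLet), and ($+$ E). I would present the (MLet) case in full as the representative example, since it simultaneously exercises $c$, $D_s$, $\lambda_{s,r,-}$, $m$, $st_{s\cdot r,-,-}$, $T_{s\cdot r}$, and $\mu_{s\cdot r,q,-}$, checking naturality of each in the substituted position (using \Cref{lem:monad-nat}, \Cref{lemma:monad_strong}, and \Cref{lem:distr}).

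The main obstacle I anticipate is purely bookkeeping: the lemma is stated with $\Delta$ as one contiguous block in the middle of the context, but in the binary and ternary rules the variables of $\Delta$ get partitioned among the premises, with the conclusion's context formed by summing and scaling the premises' contexts. I will therefore need to generalize the induction hypothesis slightly to permit $\Delta$ to be split across premises, and to be careful that the contraction map $c_{\Gamma,\Delta}$ used in the interpretation routes the substituted tuple to the right premise, with the grade-insensitivity observation absorbing the scalings. Once that generalization is in place every case is a short naturality argument, and the only load-bearing facts are \Cref{lem:ctx-contr}, \Cref{lem:ctx-scale}, \Cref{lem:monad-nat}, \Cref{lemma:monad_strong}, and \Cref{lem:distr}, all established earlier.
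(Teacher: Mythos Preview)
Your proposal is correct and follows essentially the same route as the paper: induction on the typing derivation, with the base cases immediate and the inductive cases handled by naturality of the structural maps, the graded (co)monad maps, strength, and the distributive law, taking (MLet) as the representative case where $\Delta$ must be split across the premises. The paper defers the non-monadic cases to the corresponding lemma for \emph{Fuzz} and details only (Rnd), (Ret), and (MLet), but the argument structure—including the need to partition $\Delta$ and pull the substitution map to the front via naturality of $split$ and $st$—is exactly what you describe.
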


\begin{proof}
By induction on the typing derivation of $\Gamma, \Delta, \Gamma' \vdash e :
\tau$. The base cases Unit and Const are obvious. The other base case Var
follows by unfolding the definition of the semantics. Most of the rest of the
cases follow from the substitution lemma for \emph{Fuzz}~\citep[Lemma
3.3]{Amorim:2017:metric}. We show the cases for \textbf{Rnd},
\textbf{Ret}, and \textbf{MLet}, which differ from \emph{Fuzz}. We omit the
bookkeeping morphisms.

\begin{description}
\item[Case \textbf{Rnd}.] Given a derivation $f = \denot{\Gamma, \Delta,
\Gamma' \vdash w : \num}$, by induction, there is a derivation
$\denot{\Gamma, \Gamma' \vdash w[\vec{v}/\Delta] : M_{\rnderr} \num} =
(id_{\denot{\Gamma}} \otimes \denot{\emptyset \vdash \vec{v} : \Delta} \otimes
id_{\denot{\Gamma'}}) ; f$. By applying rule \textbf{Round} and by
definition of the semantics of this rule, we have a derivation
\[
\denot{\Gamma, \Gamma' \vdash (\rnd w)[\vec{v}/\Delta] : M_{\rnderr} \num}
= (id_{\denot{\Gamma}} \otimes \denot{\emptyset \vdash \vec{v} : \Delta} \otimes id_{\denot{\Gamma'}}) ; f ; \langle id, \rho \rangle .
\]
We are done since $\denot{\Gamma, \Delta, \Gamma' \vdash \rnd~w :
M_{\rnderr} \num} = f ; \langle id, \rho \rangle$.
\item[Case \textbf{Ret}.] Same as previous, using the unit of the monad
$\eta_{\denot{\tau}}$ in place of $\langle id, \rho \rangle$.
\item[Case \textbf{MLet}.] Suppose that $\Gamma = \Gamma_1, \Delta_1,
\Gamma_2$ and $\Theta = \Theta_1, \Delta_2, \Theta_2$ such that $\Delta =
s \cdot \Delta_1 + \Delta_2$. By combining \Cref{lem:ctx-contr} and
\Cref{lem:ctx-scale}, there is a natural transformation $\sigma : \denot{s
\cdot \Gamma + \Theta} \to \denot{\Theta} \otimes D_s \denot{\Gamma}$.

Let $g_1 = \denot{\Gamma_1, \Delta_1, \Gamma_2 \vdash w : M_r \sigma}$ and
$g_2 = \denot{\Theta_11, \Delta_2, \Theta_2, x :_s \sigma \vdash f : M_{r'} \tau}$. By
induction, we have:
\begin{align*}
\tilde{g_1} &= \denot{\Gamma_1, \Gamma_2 \vdash w[\vec{v}/\Delta] : M_r \sigma}
= (id_{\denot{\Gamma_1}} \otimes \denot{\emptyset \vdash \vec{v} : \Delta}
\otimes id_{\denot{\Gamma_2}}) ; g_1 \\
\tilde{g_2} &= \denot{\Theta_1, \Theta_2, x :_s \sigma \vdash f[\vec{v}/\Delta] : M_{r'} \tau}
= (id_{\denot{\Theta_1}} \otimes \denot{\emptyset \vdash \vec{v} : \Delta}
\otimes id_{\denot{\Theta_2, x :_s \sigma}}) ; g_2
\end{align*}
Thus we have a derivation of the judgment $s \cdot (\Gamma_1, \Gamma_2) +
(\Theta_1, \Theta_2) \vdash \letm{x}{w}{f}[\vec{v} / \Delta] : M_{s
\cdot r + r'} \tau$, and by the definition of the semantics of
\textbf{MLet}, its semantics is:
\[
split
;
(id_{\denot{\Theta_1, \Theta_2}} \otimes (D_s \tilde{g_1} ; \lambda_{s, r, \denot{\sigma}}))
;
st_{\denot{\Theta_1, \Theta_2}, \denot{\sigma}}
;
T_{s \cdot r} \tilde{g_2}
;
\mu_{s \cdot r, r', \denot{\tau}}
\]
From here, we can conclude by showing that the first morphisms in
$\tilde{g_1}$ and $\tilde{g_2}$ can be pulled out to the front. For
instance,
\[
D_s \tilde{g_1} = (id_{\denot{s \cdot \Gamma_1}}
\otimes D_s \denot{\emptyset \vdash \vec{v} : \Delta}
\otimes id_{\denot{s \cdot \Gamma_2}})
; D_s g_1
\]
by functoriality. By naturality of $split$, the first morphism can be
pulled out in front of $split$.

Similarly, for $\tilde{g_2}$, we have:
\begin{multline*}
st_{\denot{\Theta_1, \Theta_2}, \denot{\sigma}} ;
T_{s \cdot r} (id_{\denot{\Theta_1}} \otimes \denot{\emptyset \vdash \vec{v} : \Delta} \otimes id_{\denot{\Theta_2, x :_s \sigma}})
\\
= (id_{\denot{\Theta_1}} \otimes \denot{\emptyset \vdash \vec{v} : \Delta}
\otimes id_{\denot{\Theta_2}} \otimes id_{T_{s \cdot r} D_s \denot{\sigma}})
;
st_{\denot{\Theta_1, \Delta_2, \Theta_2}, \denot{\sigma}}
\end{multline*}
by naturality of strength. By naturality, we can pull the first
morphism out in front of $split$.
\end{description}
\end{proof}

\begin{lemma}[Computational Soundness (Metric Semantics)]\label{lem:pres-sem}
Let $\emptyset \vdash e : \tau$ be a well-typed closed term, and suppose $e \mapsto e'$.
Then there is a derivation of $\emptyset \vdash e' : \tau$, and the semantics of both
derivations are equal: \[\denot{\vdash e : \tau} = \denot{\vdash e' : \tau}.\]
\end{lemma}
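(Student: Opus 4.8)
The plan is to prove \Cref{lem:pres-sem} by case analysis on the single-step relation $e \mapsto e'$, following the structure of \Cref{fig:eval_rules}. For each reduction rule, I would first verify type preservation—that $\emptyset \vdash e' : \tau$ is derivable—by inverting the typing derivation of $e$ and reassembling a derivation for $e'$, typically invoking the substitution lemma (\Cref{thm:substitution} syntactically, and \Cref{lem:subst-sem} for the semantic equation). Then I would verify that the two derivations have equal denotations by unfolding the interpretation of both sides (\Cref{def:interp-prog} and \Cref{sec:app_interp_nfuzz}) and showing they agree as morphisms in $\Met$.

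The beta-like cases—application $(\lambda x.e)\,v \mapsto e[v/x]$, projection $\pi_i \langle v_1, v_2 \rangle \mapsto v_i$, comonadic unpacking $\letc{x}{\boxx{v}}{e} \mapsto e[v/x]$, monadic unpacking $\letm{x}{\ret\,v}{e} \mapsto e[v/x]$, tensor unpacking, and case reduction—all follow the same pattern: invert to expose the premise derivations, apply \Cref{lem:subst-sem} to get the substituted derivation and its semantics, and observe that the extra categorical plumbing in the elimination rule (the copairing, the diagonal $c$, the comonad functor $D_s$, the unit $\eta$) collapses to the identity or is cancelled by a corresponding introduction. For the operation rule $\mathbf{op}(v) \mapsto op(v)$, the equality $\denot{\op}(\denot{v}) = \denot{op(v)}$ is exactly the assumption we placed on primitive operations in \Cref{def:interp-prog}. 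The congruence rule for $\lett{x}{e}{f}$ follows immediately from the induction hypothesis on $e \mapsto e'$ composed with the (unchanged) interpretation of the let-context.

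The one genuinely delicate case is the monadic commuting conversion $\letm{y}{(\letm{x}{\rnd\,v}{f})}{g} \mapsto \letm{x}{\rnd\,v}{(\letm{y}{f}{g})}$, since this rule rearranges nested $\letmx$ bindings rather than performing a substitution. Here I would need to check both that the side condition $x \notin FV(g)$ lets us retype the right-hand side with the same total grade—this is where the monoid structure on grades matters: $s_1 \cdot (s_2 \cdot r + q_2) + q_1 = s_1 s_2 r + (s_1 q_2 + q_1)$ must line up with the grade produced by the rearranged derivation via (MLet) applied in the new order—and that the denotations agree. The semantic equality amounts to an associativity-style diagram chase combining the graded multiplication law $\mu$ of the neighborhood monad, the graded distributive law $\lambda$, and the strength $st$; I expect it to follow from the coherence diagrams in \Cref{def:graded_monad} together with naturality of strength, but assembling the correct composite and matching it on both sides is the main obstacle.

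I would therefore organize the proof as: (1) the substitution-driven beta cases, dispatched uniformly with a remark that they reduce to \Cref{lem:subst-sem} and an introduction/elimination cancellation; (2) the (Op) case, citing the interpretation assumption; (3) the congruence case, citing the induction hypothesis; and (4) the $\letmx$ commuting conversion, worked out in detail with the relevant diagram. Since the lemma concerns the pre-refinement semantics (before splitting into $\mapsto_{id}$ and $\mapsto_{fp}$), no reasoning about the rounding function $\rho$ beyond its appearance inside $\denot{\rnd\,k} = f;\langle \id, \rho\rangle$ is needed, and in the $\letmx$ case $\rnd\,v$ is simply carried along unchanged on both sides.
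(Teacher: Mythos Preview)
Your proposal is correct and shares the paper's overall structure: case analysis on the step relation, with the non-monadic beta cases dispatched via the substitution lemma (the paper simply cites the corresponding \emph{Fuzz} result of \citet{Amorim:2017:metric}), leaving the two monadic rules---$\letm{x}{\ret\,v}{f}$ and the commuting conversion---to be handled explicitly. The difference is in how those two cases are discharged. You plan an abstract diagram chase invoking the graded-monad coherence laws, naturality of strength, and the distributive law. The paper instead exploits the concreteness of $\Met$: the subeffecting maps $(q \leq r)_A$, the distributive law $\lambda_{s,r,A}$, and the comonad structure maps are all identities on underlying sets, while $\mu$ and $st$ are simple projections and pairings, so equality of morphisms reduces to checking that the underlying set-maps agree, which is immediate once the two compositions are written out. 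The paper even acknowledges your route---``it is possible to show soundness by appealing to properties of the graded monad $T_r$''---but deliberately chooses the concrete one. Your approach is more robust to a change of model; the paper's shortcut makes the commuting-conversion case, which you rightly flag as the delicate one, essentially a one-line observation after the grade arithmetic $s \cdot t \cdot r_1 + s \cdot r_2 + q$ is verified to match on both sides.
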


\begin{proof}
By case analysis on the step rule, using the fact that $e$ is well-typed. For
the beta-reduction steps for programs of non-monadic type, preservation
follows by the soundness theorem; these cases are exactly the same as in \emph{\emph{\emph{Fuzz}}}
\citep{Amorim:2017:metric}. %

The two step rules for programs of monadic type are new. It is possible to
show soundness by appealing to properties of the graded monad $T_r$, but
we can also show soundness more concretely by unfolding definitions and
considering the underlying maps. 
\begin{description}
\item[MLet $q$.] Suppose that $e = \letm{x}{\ret v}{f}$ is a well-typed program with type $M_{s \cdot r + q}
\tau$. Since subsumption is admissible~(\Cref{lem:subsump}), we may assume
that the last rule is \textbf{MLet} and we have derivations $\emptyset \vdash
\ret v : M_r \sigma$ and $x :_s \sigma \vdash f : M_{q}
\tau$. By definition, the semantics of $\emptyset \vdash e : M_{s \cdot r +
q} \tau$ is given by the composition:
\[\begin{tikzcd}
I & {D_s I} && {D_s T_r \sigma} & {T_{s \cdot r} D_s \sigma} & {T_{s \cdot r} T_q \tau} & {T_{s \cdot r + q} \tau}
\arrow["{\lambda_{s, r, \sigma}}", from=1-4, to=1-5]
\arrow[from=1-1, to=1-2]
\arrow["{T_{s \cdot r} f}", from=1-5, to=1-6]
\arrow["{\mu_{s \cdot r, q, \tau}}", from=1-6, to=1-7]
\arrow["{D_s (v; \eta_\sigma; (0 \leq r)_\sigma)}", from=1-2, to=1-4]
\end{tikzcd}\]
By substitution~(\Cref{lem:subst-sem}), we have a derivation of $\cdot
\vdash f[v/x] : M_{q} \tau$. By applying the subsumption rule, we have a
derivation of $\emptyset \vdash f[v/x] : M_{s \cdot r + q} \tau$ with
semantics:
\[\begin{tikzcd}
I & {D_s I} & {D_s \sigma} & {T_q \tau} & {T_{s \cdot r + q} \tau}
\arrow["{\lambda_{q, s \cdot r + q, \tau}}", from=1-4, to=1-5]
\arrow["f", from=1-3, to=1-4]
\arrow["{D_s v}", from=1-2, to=1-3]
\arrow[from=1-1, to=1-2]
\end{tikzcd}\]
Noting that the underlying maps of $s$ and $\mu$ are the identity function,
both compositions have the same underlying maps, and hence are equal
morphisms.
\item[MLet Assoc.] Suppose that $e =
\letm{y}{ \letm{x}{\rnd~k}{f}}{g} $ is a
well-typed program with type $M_{s \cdot r + q} \tau$. Since
subsumption is admissible~(\Cref{lem:subsump}), we have derivations:
\[
\vdash \rnd~k : M_{r_1} \num
\qquad\qquad
x :_t \num \vdash f : M_{r_2} \sigma
\qquad\qquad
y :_s \sigma \vdash g : M_{q} \tau
\]
such that $t \cdot r_1 + r_2 = r$. By applying
\textbf{MLet} on the latter two derivations, we have:
\[
x :_{s \cdot t} \num \vdash \letm(f, y. g) : M_{s \cdot r_2 + q} \tau
\]
And by applying \textbf{MLet} again, we have:
\[
\vdash \letm(\rnd~k, x.  \letm(f, y. g))
: M_{s \cdot t \cdot r_1 + s \cdot r_2 + q} \tau
\]
This type is precisely $M_{s \cdot r + q} \tau$. The semantics of $e$ and
$e'$ have the same underlying maps, and hence are equal morphisms.
\qedhere
\end{description}
\end{proof}

\chapter{Appendix for \texorpdfstring{\bea}{}}\label{app:bean}

\section{The Category of Backward Error Lenses}\label{app:check_bel}
This appendix verifies that the composition in the  category \Bel{} of
backward error lenses (\Cref{def:lensC}) is well-defined.
We first restate the definition: the composition 
\[
  {{(f_2,\tilde{f}_2,b_2) \circ (f_1,\tilde{f}_1,b_1)}}
\]
of error lenses ${(f_1,\tilde{f}_1,b_1) : X \rightarrow Y}$
and ${(f_2,\tilde{f}_2,b_2) : Y \rightarrow Z}$ is the error lens
${(f,\tilde{f},b): X \rightarrow Z}$ defined by
\begin{itemize}
\item the forward map 
  \begin{equation} 
    f: x \mapsto (f_1;f_2) \ x (\tag*{\Cref{eq:fcomp}}) 
  \end{equation}
\item the approximation map
  \begin{equation}
    \tilde{f} : x \mapsto (\tilde{f}_1;\tilde{f}_2) \ x   
    (\tag*{\Cref{eq:acomp}})
  \end{equation}
\item the backward map
  \begin{equation} 
    b: (x,z) \mapsto b_1\bigl(x, b_2(\tilde{f}_1(x), z)\bigr) 
    (\tag*{\Cref{eq:bcomp}})
  \end{equation}
\end{itemize}

\begin{figure}
\[\begin{tikzcd}
	{X\times Y \times Z} && {X \times Y} \\
	\\
	{X \times Z} && X
	\arrow["{\langle id_X, \tilde{f}_{1} \rangle ~ \times ~ id_Z}", 
    from=3-1, to=1-1]
	\arrow["{id_X \times  ~b_2}", from=1-1, to=1-3]
	\arrow["{b}"', from=3-1, to=3-3]
	\arrow["{b_{1}}", from=1-3, to=3-3]
\end{tikzcd}\]
\caption{The backward map $b$ for the composition 
${{(f_2,\tilde{f}_2,b_2) \circ (f_1,\tilde{f}_1,b_1)}}$.}
\label{fig:compose}
\end{figure}
\noindent The diagram for the backward map for the composition of error lenses
is given in \Cref{fig:compose}.  

Let $L_1 = (f_1,\tilde{f}_1,b_1)$ and let $L_2 = (f_2,\tilde{f}_2,b_2)$.  We
first check the domain: for all $x \in X$ and $z \in Z$, and assuming
$d_Z\left(\tilde{f}(x),z\right) \neq \infty$, we must show  
\[
  {d_Y\left(\tilde{f}_1(x),b_2(\tilde{f}_1(x), z)\right) \neq \infty}.
\] 
This follows from Property 1 for $L_2$ and the assumption:
\begin{align}
  d_Y\left(\tilde{f}_1(x),b_2(\tilde{f}_1(x), z)\right) &\le 
  d_Z\left(\tilde{f}_2(\tilde{f}_1(x)),z\right) \\ 
  &= d_Z\left( \tilde{f}(x),z\right) 
  \neq \infty
\end{align}

Now, given that 
${d_Y\left(\tilde{f}_1(x),b_2(\tilde{f}_1(x), z)\right) \neq \infty}$ 
holds for all $x \in X$ and $z \in Z$ under the assumption of 
$d_Z\left(\tilde{f}(x),z\right) \neq \infty$, we can freely use 
Properties 1 and 2 of the lens $L_1$ to show that the 
lens properties hold for the composition:
\begin{enumerate}[align=left]
\item[{Property 1.}] 
  \begin{align*}
  d_X(x,b(x,z)) &= 
      d_X\left(x,b_1\left(x, b_2(\tilde{f}_1(x), z)\right)\right) 
      &&  \text{\cref{eq:bcomp}} \\
  &\le d_Y\left(\tilde{f}_1(x),b_2(\tilde{f}_1(x), z)\right) 
      && \text{Property 1 for  $L_1$} \\
  &\le d_Z\left(\tilde{f}_2(\tilde{f}_1(x)),z\right) 
      && \text{Property 1 for  $L_2$} \\
  &= d_Z(\tilde{f}(x),z) 
      && \text{\cref{eq:acomp}}
  \end{align*}
\item[{Property 2.}] 
\begin{align*}
f(b(x,z)) &= 
  f_2\left(f_1\left(b_1\left(x, b_2(\tilde{f}_1(x), z)\right) 
  \right)\right) && \text{\cref{eq:bcomp} \&  \cref{eq:fcomp}} \\
&= 
  f_2 \left( b_2(\tilde{f}_1(x), z)\right) &&  
  \text{Property 2 for  $L_1$} \\
&=z &&   \text{Property 2 for  $L_2$}
\end{align*}
\end{enumerate}

\section{Basic Constructions in {\Bel{}}}\label{app:check_bel_cons}
This appendix verifies that the basic constructions in \Bel{} from
\Cref{{sec:meta:basic}} are well-defined.
\subsection{Tensor Product}\label{app:products}
The tensor product given in
\Cref{eq:tensor_lens1,eq:tensor_lens2,eq:tensor_lens3} is only well-defined if
the domain of the backward map is well-defined, and if the error lens
properties hold. We check these properties below, and restate the definition of
the tensor product lens here for convenience: 

Given any two morphisms ${(f,\tilde{f},b) : A \rightarrow X}$ and
${(g,\tilde{g},b'): B \rightarrow Y}$, we have the morphism
\[
  {(f,\tilde{f},b) \otimes  (g,\tilde{g},b') : 
  A \otimes  B \rightarrow X \otimes  Y}
 \]
defined by
\begin{itemize}
\item the forward map
  \begin{equation*}
     (a_1, a_2) \mapsto (f(a_1),g(a_2))  
    \tag*{\Cref{eq:tensor_lens1}}
   \end{equation*}
\item the approximation map
  \begin{equation*}
    (a_1,a_2) \mapsto (\tilde{f}(a_1),\tilde{g}(a_2)) 
    \tag*{\Cref{eq:tensor_lens2}}
  \end{equation*}
\item the backward map
  \begin{equation*} 
    ((a_1, a_2),(x_1,x_2)) \mapsto (b(a_1,x_1),b'(a_2,x_2)) 
    \tag*{\Cref{eq:tensor_lens3}}
  \end{equation*}
\end{itemize}

We first check the domain: for all ${(a_1,a_2) \in A \otimes  B}$ and 
${(x_1,x_2) \in X \otimes  Y}$, we assume 
\begin{equation}\label{eq:prod_assum}
  d_{X\otimes Y}(\tilde{f}_{ \otimes }(a_1,a_2),(x_1,x_2)) 
  \neq \infty
\end{equation} 
and we are required to show  
\begin{equation}\label{eq:domten}
  d_{X}(\tilde{f}(a_1),x_1) \neq \infty \text{ and } 
  d_{Y}(\tilde{g}(a_2),x_2) \neq \infty
\end{equation}
which follows directly by assumption.

Given that \Cref{eq:domten} holds for all ${(a_1,a_2) \in A \otimes  B}$ 
and ${(x_1,x_2) \in X \otimes  Y}$ under the assumption given in 
\Cref{eq:prod_assum}, we can freely use Properties 1 and  2 of the lenses 
$(f,\tilde{f},b)$ and $(g,\tilde{g},b')$ to show that the lens properties 
hold for the product:
\begin{enumerate}[align=left]
  \item[Property 1.] 
  \begin{align*}
  d_{A\otimes B}\bigl((a_1,a_2) ,b_{ \otimes } 
    ((a_1,a_2),(x_1,x_2))\bigr)  
  &= \max\bigl(d_A(a_1,b(a_1,x_1)),d_B(a_2,b'(a_2,x_2))\bigr) 
  \tag*{\cref{eq:tensor_lens3}} \\
  &\le \max\bigl(d_X(\tilde{f}(a_1),x_1),d_Y(\tilde{g}(a_2),x_2)\bigr) 
   \tag*{(Property 1 of $(f,\tilde{f},b)$ \& $(g,\tilde{g},b')$)} 
  \end{align*}
  \item[{Property 2.}] As above, the property follows directly from 
    Property 2 of the component $(f,\tilde{f},b)$ and $(g,\tilde{g},b')$.
\end{enumerate}

\subsubsection{Tensor product as bifunctor}

\begin{namedtheorem}[\Cref{lem:bifun}]
  The tensor product operation on lenses induces a bifunctor on \Bel.
\end{namedtheorem}

\begin{proof}
The functoriality of the triple given in
\Cref{eq:tensor_lens1,eq:tensor_lens2,eq:tensor_lens3} follows by checking
conditions expressing preservation of composition and identities.
Specifically, for any error lenses 
$h : A \rightarrow B$
$h': A' \rightarrow B'$
$g : B \rightarrow C $ and 
$g': B' \rightarrow C'$ 
we must show  
\[
 (g \otimes g') \circ (h \otimes h') = (g\circ h) \otimes  (g' \circ h')
\]
We check the backward map:

Given any $(a_1,a_2) \in A \otimes A'$ and $(c_1,c_2) \in C \otimes C'$ we have 
\begin{align}
  b_{(g \otimes g') \circ (h \otimes h')}((a_1,a_2),(c_1,c_2)) &=
  b_{h \otimes h'}\bigl((a_1,a_2),b_{g \otimes g'}
  \bigl(\tilde{f}_{h \otimes h'}(a_1,a_2),(c_1,c_2)\bigr)\bigr)\\
  &= (b_h(a_1,b_g(\tilde{f}_h(a_1),c_1)),b_{h'}(a_2,b_{g'}(\tilde{f}_{h'}(a_2),c_2)))\\
  &= b_{(g\circ h) \otimes  (g' \circ h')}((a_1,a_2),(c_1,c_2))
\end{align}

Moreover, for any objects $X$ and $Y$ in $\Bel$, the identity lenses 
$id_X$ and $id_Y$ clearly satisfy 
\[ 
id_X \otimes  id_Y = id_{X \otimes  Y}
\]
\end{proof}

\subsubsection{{Associator}} 
We define the associator 
$\alpha_{X,Y,Z} : X \otimes  (Y  \otimes  Z) 
\rightarrow (X \otimes  Y) \otimes  Z$ as the following triple:
\begin{align}
  f_{\alpha}(x,(y,z)) &\triangleq ((x,y),z)  \\
  \tilde{f}_{\alpha}(x,(y,z)) &\triangleq ((x,y),z) 
      \label{eq:aassoc} \\
  b_{\alpha}((x,(y,z)),((a,b),c)) &\triangleq (a,(b,c)) 
      \label{eq:bassoc}.
\end{align}

It is straightforward to check that $\alpha_{X,Y,Z}$ is an error lens
satisfying Properties 1 and 2. To check that the associator is an isomorphism,
we are required to show the existence of the lens
\[
  \alpha' :  (X \otimes Y) \otimes Z \rightarrow 
  X \otimes  (Y \otimes Z) 
\] 
satisfying 
\[
  \alpha' \circ \alpha_{X,Y,Z} = id_{X\otimes (Y\otimes Z)}
\]
and 
\[
  \alpha_{X,Y,Z} \circ \alpha' = id_{(X\otimes Y)\otimes Z}
\]
where $id$ is the identity lens (see \Cref{def:lensC}).  Defining the forward
and approximation maps for $\alpha'$ is straightforward; for the forward map we
have
\begin{align*} 
  f_{\alpha'}((x,y),z) &\triangleq (x,(y,z))
\end{align*}
and the approximation map is defined identically. For the backward map we have
\begin{align*} 
  b_{\alpha'}\bigl(((x,y),z),(a,(b,c))\bigr) \triangleq ((a,b),c)
\end{align*}
It is straightforward to check that $\alpha'$ satisfies Properties 1 and 2 of 
an error lens. 

The naturality of the associator follows by checking that the following diagram
commutes. 
\[\begin{tikzcd}
	{X \otimes  (Y \otimes  Z)} && {X \otimes  (Y \otimes  Z)} \\ \\
	{(X \otimes  Y) \otimes  Z} && {(X \otimes  Y) \otimes  Z}
	\arrow["{\alpha_{X,Y,Z}}"', from=1-1, to=3-1]
	\arrow["{g_X \otimes   (g_Y  \otimes   g_Z)}", from=1-1, to=1-3]
	\arrow["{\alpha_{X,Y,Z}}", from=1-3, to=3-3]
	\arrow["{(g_X  \otimes   g_Y)  \otimes   g_Z}"', from=3-1, to=3-3]
\end{tikzcd}\]
That is, we check that 
\[
  ((g_X \otimes  g_Y) \otimes   g_Z) \circ \alpha_{X,Y,Z} = 
  \alpha_{X,Y,Z} \circ (g_X \otimes (g_Y \otimes  g_Z))
\] 
for the error lenses 
\begin{align*}
  g_X&: X \rightarrow X \triangleq (f_X,\tilde{f}_X,b_X) \\ 
  g_Y&: Y \rightarrow Y \triangleq (f_Y,\tilde{f}_Y,b_Y) \\
  g_Z&: Z \rightarrow Z \triangleq (f_Z,\tilde{f}_Z,b_Z)
\end{align*}
This follows from the definitions of lens composition (\Cref{def:error_lens})
and the tensor product on lenses
(\cref{eq:tensor_lens1,eq:tensor_lens2,eq:tensor_lens3}). 
We detail here the case of the backward map. 

Using the notation $b_g$ (resp. $\tilde{f}_g$) to refer to both of the 
backward maps (resp. approximation maps) of the tensor product lenses of 
the lenses $g_X$, $g_Y$, and $g_Z$, we are required to show that 
\begin{equation}
b_\alpha\left(xyz,b_{g}\left(\tilde{f}_{\alpha}(xyz),x'y'z'\right)\right) 
  = b_{g}\left(xyz,b_\alpha\left(\tilde{f}_g(xyz),x'y'z'\right)\right) 
\label{eq:verif_assoc1}
\end{equation}
for any $xyz \in X \ \otimes \ (Y\otimes  Z)$ and 
$x'y'z' \in (X\otimes  Y)\otimes  Z$:
\begin{align*}
b_\alpha\left(xyz,b_{g}\left(\tilde{f}_{\alpha}(xyz),x'y'z'\right)\right) 
  &= b_{g}\left(xyz,b_\alpha\left(\tilde{f}_g(xyz),x'y'z'\right)\right) \\
b_\alpha\left(xyz,b_{g}\left(\tilde{f}_{\alpha}(xyz),x'y'z'\right)\right)   
  &= b_{g}\left(xyz,(x',(y',z'))\right) & \text{by \Cref{eq:bassoc}}\\
b_{\alpha}\left(xyz,b_{g}\left(((x,y),z),x'y'z'\right)\right) 
  &= b_{g}\left(xyz,(x',(y',z'))\right) & \text{by \Cref{eq:aassoc}}\\
b_{\alpha}\left(xyz,((b_X(x,x'),b_Y(y,y')),b_Z(z,z'))\right) 
  &= (b_X(x,x'),(b_Y(y,y'),b_Z(z,z'))) & \text{by \Cref{eq:tensor_lens3}}\\
(b_X(x,x'),(b_Y(y,y'),b_Z(z,z')))
  &= (b_X(x,x'),(b_Y(y,y'),b_Z(z,z'))) & \text{by \Cref{eq:bassoc}}\\
\end{align*}

\subsubsection{{Unitors}} We define the left-unitor 
$\lambda_X: I \otimes  X \rightarrow X$ as
\begin{align*}
  f_{\lambda}(\star,x) &\triangleq x \\
  \tilde{f}_{\lambda}(\star,x) &\triangleq x \\
  b_{\lambda}((\star,x),x') &\triangleq (\star,x')
\end{align*}

The right-unitor is similarly defined. 

The fact that $d_I(\star,\star) = -\infty$ is essential in order for 
$\lambda_X$ to satisfy the first property of an error lens: 
\begin{align*}
d_{I\otimes X}(x,b_\lambda((\star,x),x')) 
  &\le d_X(\tilde{f}_\lambda(\star,x),x')\\
\max{(-\infty,d_X(x,x'))} &\le d_X(x,x')
\end{align*}

Checking the naturality of $\lambda_X$ amounts to checking that 
the following diagram commutes for all error lenses 
$g : X \rightarrow Y$. 
\[\begin{tikzcd}
	{I \otimes  X} && {I \otimes  Y} \\ \\
	{X} && {Y}
	\arrow["{\lambda_X}"', from=1-1, to=3-1]
	\arrow["{ id_I \otimes  g }", from=1-1, to=1-3]
	\arrow["{\lambda_Y}", from=1-3, to=3-3]
	\arrow["{g}"', from=3-1, to=3-3]
\end{tikzcd}\]

\subsubsection{{Symmetry}}
We define the symmetry map $\gamma_{X,Y} : X \otimes   Y \rightarrow Y \otimes
X$ as the following triple:
\begin{align*}
  f_{\gamma}(x,y) &\triangleq (y,x) \\
  \tilde{f}_{\gamma}(x,y) &\triangleq (y,x) \\
  b_{\gamma}((x,y),(y',x')) &\triangleq (x',y')
\end{align*} 

It is straightforward to check that  $\gamma_{X,Y}$ is an error 
lens. Checking the naturality of  $\gamma_{X,Y}$ amounts to 
checking that the following diagram commutes for any error 
lenses $g_1 : X \rightarrow Y$ and $g_2 : Y \rightarrow X$.

\[\begin{tikzcd}
	{X \otimes  Y} && {Y \otimes  X} \\ \\
	{Y \otimes  X} && {X \otimes  Y}
	\arrow["{\gamma_{X,Y}}"', from=1-1, to=3-1]
	\arrow["{ g_1 \otimes  g_2 }", from=1-1, to=1-3]
	\arrow["{\gamma_{Y,X}}", from=1-3, to=3-3]
	\arrow["{g_2 \otimes  g_1}"', from=3-1, to=3-3]
\end{tikzcd}\]

\subsection{Coproducts}\label{app:coproducts}
\begin{enumerate}[align=left]
\item[Property 1] For any $x \in X$ and $z \in X+Y$, 
$d_{X}(x,b_{in_1}(x,z)) \le d_{X+Y}(\tilde{f}_{in_1}(x),z)$ 
  supposing 
\[{d_{X+Y}(\tilde{f}_{in_1}(x),z) = {d_{X+Y}(inl \ x,z) 
  \neq \infty}}.\] 

From ${d_{X+Y}(inl \ x,z) \neq \infty}$ and \Cref{eq:prod_met}, we know  
$z = inl \ x_0$ for some $x_0 \in X$, and so we must show 
\[
d_{X}(x,x_0) \le d_{X+Y}(inl \ x, inl \ x_0)
\]
which follows from \Cref{eq:prod_met} and reflexivity.
\item[Property 2] For any $x \in X$ and $z \in X+Y$, 
  ${f_{in_1}(b_{in_1}(x,z)) = z}$ supposing 
\[
d_{X+Y}(\tilde{f}_{in_1}(x),z) = {d_{X+Y}(inl \ x,z) \neq \infty}
\]

From ${d_{X+Y}(inl \ x,z) \neq \infty}$ and \Cref{eq:prod_met}, we know 
$z = inl \ x_0$ for some $x_0 \in X$, and so we must show  
\[
f_{in_1}(x_0) = inl \ x_0
\]
which follows from \Cref{eq:inlf}.
\end{enumerate}

\begin{enumerate}[align=left]
\item[Property 1] For all $z \in X  + Y$ and $c \in C$, 
$d_{X+Y}\left(z,b_{[g,h]}(z,c)\right) \le 
  d_{C}\left(\tilde{f}_{[g,h]}(z),c\right)$ 
supposing 
\[d_C\left(\tilde{f}_{[g,h]}(z),c\right) \neq \infty.\]

This follows directly given that $g$ and $h$ are error lenses:

If $z=inl \ x$ for some $x \in X$ then 
$d_C\left(\tilde{f}_{g}(x), c\right) \neq \infty$ and we use Property 1 
for $g$ to satisfy the desired conclusion:  
${d_X(x, (b_g(x,c))) \le d_C(\tilde{f}_g(x),c)}$. Otherwise, $z=inr \ y$ 
for some $y \in Y$ then $d_C\left(\tilde{f}_{h}(y), c\right) \neq \infty$ 
and we use Property 1 for $h$.

\item[Property 2] For all $z  \in  X+ Y$ and $c \in C$, 
$f_{[g,h]}\left(b_{[g,h]}(z,c)\right) = c$ \\ 
supposing $d_C \left(\tilde{f}_{[g,h]}(z),c\right) \neq \infty$.
If $z=inl \ x$ for some $x \in X$ then 
$d_C\left(\tilde{f}_{g}(x), c\right) \neq \infty$ 
and we use Property 2 for $g$. Otherwise, $z=inr \ y$ for some $y \in Y$ 
then $d_C\left(\tilde{f}_{h}(y), c\right) \neq \infty$ and we use 
Property 2 for $h$.
\end{enumerate}

To show that $[g,h] \circ in_1 = g$ (resp. $[g,h]  \circ in_2= h$), 
we observe that the following diagrams, by definition, commute.

\begin{figure}[!htb]
\minipage{0.32\textwidth}
\[\begin{tikzcd}
	{X + Y} && C \\
	\\
	X
	\arrow["{f_{in_1}}", from=3-1, to=1-1]
	\arrow["{f_{[g,h]}}", from=1-1, to=1-3]
	\arrow["{f_g}"', from=3-1, to=1-3]
\end{tikzcd}\]
\endminipage\hfill
\minipage{0.32\textwidth}
\[\begin{tikzcd}
	{X + Y} && C \\
	\\
	X
	\arrow["{\tilde{f}_{in_1}}", from=3-1, to=1-1]
	\arrow["{\tilde{f}_{[g,h]}}", from=1-1, to=1-3]
	\arrow["{\tilde{f}_g}"', from=3-1, to=1-3]
\end{tikzcd}\]
\endminipage\hfill
\minipage{0.32\textwidth}%
\[\begin{tikzcd}
	{(X+Y)\times C} && {X + Y} \\
	\\
	{X \times C} && X
	\arrow["{\tilde{f}_{in_1}\times id_C}", from=3-1, to=1-1]
	\arrow["{b_{[g,h]}}", from=1-1, to=1-3]
	\arrow["{b_g}"', from=3-1, to=3-3]
	\arrow["{b_{in_1}}", from=1-3, to=3-3]
\end{tikzcd}\]
\endminipage
\end{figure}

\subsubsection{Uniqueness of the copairing}
We check the uniqueness of copairing by showing that for any two morphisms 
$g_1 : X \rightarrow C$ and $g_2 : Y \rightarrow C$, if 
$h \circ in_1 = g_1$ and $h \circ in_2 = g_2$ for any 
$h : X + Y \rightarrow C$, then $h = [g_1,g_2]$. 

We detail the cases for the forward and backward map; the case for the 
approximation map is identical to that of the forward map. 

\begin{enumerate}[align=left]
\item[\textbf{forward map}] We are required to show that 
$f_h(z) = f_{[g_1,g_2]}(z)$ for any ${z \in X + Y}$ assuming that 
$f_{in_1};f_{h} = f_{g_1}$ and $f_{in_2};f_{h} = f_{g_2}$. 
The desired conclusion follows by cases on $z$; i.e., $z = inl \ x$ 
for some $x \in X$ or $z = inr \ y$ for some $y \in Y$.
\item[\textbf{backward map}] We are required to show that 
$b_{h}(z,c) = b_{[g_1,g_2]}(z,c)$ for any $z \in X+Y$ and $c \in C$. 
Unfolding definitions in the assumptions $b_{h \circ in_1} = b_{g_1}$ 
and $b_{h \circ in_2} = b_{g_2}$, we have that 
$b_{in_1}(x,b_h\left( \tilde{f}_{in_1}(x),c_1 \right)) = b_{g_1}(x,c_1)$ 
for any $x \in X$ and $c_1 \in C$ and 
$b_{in_2}(y,b_h\left( \tilde{f}_{in_2}(y),c_2 \right)) = b_{g_2}(y,c_2)$ 
for any $y \in Y$ and $c_2 \in C$. We proceed by cases on $z$. 

If $z = inl \ x$ for some $x \in X$ then we are required to show that 
\[b_{h}(inl \ x,c) = inl \ (b_{g_1}(x,c)).\]  By definition of lens 
composition, we have that 
\[d_{X+Y}\left( inl \ x, b_h(inl \ x, c)\right) \neq \infty,\] so 
$b_h(inl \ x, c) = inl \ x_0$ for some $x_0 \in X$. By assumption, we 
then have that $b_{g_1}(x,c) = b_{in_1}(x,inl \ x_0) = x_0$, from which 
the desired conclusion follows. 

The case of $z = inr \ y$ for some $y \in Y$ is identical.
\end{enumerate}

\section{Interpreting \texorpdfstring{\bea}{} Terms}\label{sec:app_interp_bea}
In this section of the appendix, we detail the constructions for 
interpreting \bea{} terms (\Cref{def:interpL}). 

Applications of the symmetry map $s_{X,Y} : X \times Y \rightarrow Y \times X$
and 2-monoidality $m_{r,A,B} : D_R(A \otimes B) \xrightarrow{\sim} D_r(A \otimes B)$
are often elided for succinctness. Recall the discrete diagonal $t_X : X \rightarrow X
\times X$ (\Cref{lem:discr-diag}), which will be used frequently in the 
following constructions. 
\begin{description}
\item[Case (Var).] Suppose that $\Gamma = x_{0} :_{q_0} \sigma_0, \dots, x_{i-1}
  :_{q_{i-1}} \sigma_{i-1}$. Define the map   
$\denot{\Phi\mid \Gamma, x:_r \sigma \vdash x:\sigma}$ 
as the composition
\[
  \pi_i
  \circ (\varepsilon_{\denot{\sigma_0}}
    \otimes \cdots \otimes
    \varepsilon_{\denot{\sigma_{i - 1}}}
    \otimes \varepsilon_{\denot{\sigma}})
  \circ (m_{0\le r, \denot{\sigma_0}}
    \otimes \cdots \otimes
    m_{0\le r, \denot{\sigma_{i - 1}}}
    \otimes m_{0\le r, \denot{\sigma}}) ,
\]
where the lens $\pi_i$ is the $i$th projection. Note that all types $\sigma$ and
$\sigma_j$ are interpreted as metric spaces, i.e., satisfying reflexivity.

\item[Case (DVar).] Define the map   
$\denot{\Phi,z : \alpha \mid\Gamma \vdash x:\alpha}$ 
as the $i$th projection lens $\pi_i$, assuming
$\Phi = z_{0} : \alpha_0, \dots, z_{i-1} : \alpha_{i-1}$. Note that all discrete
types $\alpha_j$ and $\alpha$ are interpreted as discrete metric spaces, i.e.,
with self-distance zero.
\item[Case (Unit).] Define the map 
$
\denot{\Phi\mid \Gamma \vdash ( ) : \unit }
$
as the lens $\mathcal{L}_{unit}$ from a tuple 
$\bar{x} \in \denot{\Phi\mid \Gamma}$ to the singleton 
of the carrier in $I = (\{\star\},\underline{0})$ , defined as 
\begin{align*}
  f_{unit}(\bar{x}) &\triangleq \star\\
  \tilde{f}_{unit}(\bar{x}) &\triangleq \star \\ 
  b_{unit}(\bar{x},\star) &\triangleq \bar{x}.
\end{align*}
We verify that the triple $\mathcal{L}_{unit}$ is an error lens.  
\begin{enumerate}[align=left]
\item[Property 1.] For any $\bar{x} \in X_1 \otimes  \cdots \otimes  X_i$ we must show    
\begin{align*}
d_{X_1 \otimes  \cdots \otimes  X_i}\left(\bar{x},b_{c}(\bar{x},\star)\right) &\le
     d_I\left(\tilde{f}_{unit}(\bar{x}),\star\right) \\
\max(d_{X_1}(x_1,x_1),\cdots,d_{X_i}(x_i,x_i)) &\le 0,
\end{align*}
which holds under the assumption that all types are interpreted as metric spaces 
with negative self distance. 
\item[Property 2.] For any $\bar{x} \in X_1 \otimes  \cdots \otimes  X_i$ we have
\[f_{unit}\left(b_{unit}(\bar{x},\star) \right) = f_{unit}(\bar{x}) = \star.\]
\end{enumerate}
\item[Case ($\otimes $ I).] 
Given the maps  
\begin{align*}
  h_1 &= \denot{\Phi\mid \Gamma \vdash e : \sigma}: \denot{\Phi\mid \Gamma}
  \rightarrow \denot{\sigma}\\ 
  h_2 &=\denot{\Phi\mid \Delta \vdash f : \tau} : \denot{\Phi\mid \Delta} 
  \rightarrow \denot{\tau}
\end{align*}
define the map 
$\denot{\Phi\mid \Gamma,\Delta \vdash (e,f) : \sigma \otimes \tau}$
as the composition
\[ 
  (h_1 \otimes h_2) \circ (t_{\denot{\Phi}} \otimes id_{\denot{\Gamma,\Delta}}),
\]
where the map $t_{\denot{\Phi}}: \denot{\Phi} 
  \rightarrow \denot{\Phi} \otimes \denot{\Phi}$ is 
  the diagonal lens on discrete metric spaces (\Cref{lem:discr-diag}).
\item[Case ($\otimes $ E$^\sigma$).] Given the maps
\begin{align}
h_1 &= \denot{\Phi\mid \Gamma \vdash e : \tau_1 \otimes \tau_2} : 
  \denot{\Phi} \otimes \denot{\Gamma} \rightarrow {\denot{\tau_1 \otimes \tau_2}} \\
h_2 &= \denot{\Phi\mid \Delta, x :_r \tau_1, y :_r \tau_2 \vdash 
  f : \sigma} : \denot{\Phi} \otimes \denot{\Delta} \otimes 
  D_r\denot{\tau_1} \otimes D_r\denot{\tau_2} \rightarrow {\denot{\sigma}}
\end{align}
we must define a 
$\denot{\Phi\mid r + \Gamma, \Delta \vdash \slet {(x,y)} e f : \sigma}$. 
We first define a map 
\[h: D_r\denot{\Phi} \otimes D_r\denot{\Gamma} \otimes 
  \denot{\Delta} \rightarrow \denot{\sigma}\]
as the composition
\[
  h_2 \circ (
  (m^{-1}_{r,\denot{\tau_1},\denot{\tau_2}} \circ 
  D_r(h_1)) \otimes (\varepsilon_{\denot{\Phi}} \circ m_{0\le r, \denot{\Phi}})
  \otimes id_{\denot{\Delta}})
 \circ (t_{D_r\denot{\Phi}} \otimes id_{D_r\denot{\Gamma} \otimes  \denot{\Delta}}).
\]
Now, observing that $\denot{\Phi}$ is a discrete space, the set maps from the lens $h$ 
define the desired lens:
\[\denot{\Phi} \otimes D_r\denot{\Gamma} \otimes 
  \denot{\Delta} \rightarrow \denot{\sigma}\]
\item[Case ($\otimes $ E$^\alpha$).] Given the maps
\begin{align}
h_1 &= \denot{\Phi\mid \Gamma \vdash e : \alpha_1 \otimes \alpha_2} : 
  \denot{\Phi} \otimes \denot{\Gamma} \rightarrow 
  {\denot{\alpha_1 \otimes \alpha_2}} \\
h_2 &= \denot{\Phi, x : \alpha_1, y : \alpha_2; \Delta \vdash 
  f : \sigma} : \denot{\Phi} \otimes \denot{\alpha_1} \otimes \denot{\alpha_2} 
    \otimes \denot{\Delta} \rightarrow {\denot{\sigma}}
\end{align}
define the map 
$\denot{\Phi\mid \Gamma, \Delta \vdash \slet {(x,y)} e f : \sigma}$
as the composition
\[
h_2 \circ (h_1 \otimes id_{\denot{\Phi} \otimes  \denot{\Delta}})
 \circ (t_{\denot{\Phi}} \otimes id_{\denot{\Gamma} \otimes  \denot{\Delta}}).
\]
\item[Case ($+$ E).] Given the maps
\begin{align*}
h_1 &= \denot{\Phi\mid  \Gamma \vdash e' : \sigma + \tau} : 
  \denot{\Phi} \otimes \denot{\Gamma} \rightarrow 
  \denot{\sigma + \tau}\\
h_2 &= \denot{\Phi\mid \Delta,x :_q \sigma \vdash e : \rho} : 
  \denot{\Phi} \otimes \denot{\Delta} \otimes D_q\denot{\sigma} \rightarrow 
  \denot{\rho}\\
h_3 &= \denot{\Phi\mid \Delta, y :_q \tau \vdash f : \rho} : 
  \denot{\Phi} \otimes \denot{\Delta} \otimes D_q\denot{\tau}\rightarrow 
  \denot{\rho} 
\end{align*}
we require a lens
$\denot{\Phi\mid q+\Gamma,\Delta \vdash \case {e'} {x.e} {y.f}: \rho}$. 
We first define a lens 
\[h: D_q\denot{\Phi} \otimes D_q\denot{\Gamma} \otimes
  \denot{\Delta} \rightarrow \denot{\sigma}\]
as the composition
\[
  [h_2,h_3] \circ \Theta \circ
  ((\eta \circ D_q(h_1)) \otimes (\varepsilon_{\denot{\Phi}} 
  \circ m_{0\le r, \denot{\Phi}}) \otimes id_{\denot{\Delta}}) \circ 
  (t_{D_q{\denot{\Phi}}} \otimes id_{D_q\denot{\Gamma} \otimes  \denot{\Delta}})
\]
Now, observing that $\denot{\Phi}$ is a discrete space, the set maps from the lens $h$ 
define the desired lens. Above, the map 
$\eta : D_q \denot{\sigma + \tau} \rightarrow D_q\denot{\sigma} + D_q\denot{\tau}$
is the identity lens, and the map 
$
\Theta_{X,Y,Z} : X \otimes (Y + Z) \rightarrow (X \otimes Y) + (X \otimes Z)
$ 
is given by the triple 
\\
\
\begin{align*}
    f_{\Theta}(x,w) &\triangleq 
        \begin{cases}
          inl \ (x,y) & \text{if } w = inl \ y \\ 
          inr \ (x,z) & \text{if } w = inr \ z
        \end{cases}\\
    \tilde{f}_{\Theta}(x,w) &\triangleq  f_{\Theta}(x,w) \\
    b_{\Theta}((x,w),u) &\triangleq  
        \begin{cases}
          (\pi_1 a, inl \ (\pi_2 a)) & 
                \text{if } u = inl \ a \\ 
          (\pi_1 a, inr \ (\pi_2 a)) & 
                \text{if } u = inr \ a  \\
          (x,w) & \text{ otherwise.}     
        \end{cases} 
\end{align*}
\
We check that the triple 
\[
\Theta_{X,Y,Z} : X \otimes  (Y + Z) \rightarrow (X \otimes  Y) + (X \otimes  Z) 
     \triangleq (f_\Theta,\tilde{f}_{\Theta},b_\Theta)
\] is well-defined.  
\begin{enumerate}[align=left]
\item[Property 1.] For any $x \in X$, $w \in Y+Z$, and
     $u \in (X \otimes  Y) + (X \otimes  Z)$ we are required to show    
\begin{align}
d_{X \otimes  (Y+Z)}\left((x,w),b_{\Theta}((x,w),u)\right) \le
     d_{(X\otimes Y) + (X \otimes Z)}\left(\tilde{f}_{\Theta}(x,w),u\right)
\label{eq:prod_req1}
\end{align}
supposing 
\begin{equation}
d_{(X\otimes Y) + (X \otimes Z)}\left(\tilde{f}_{\Theta}(x,w),u\right) \neq \infty.
\label{eq:prop1_theta}
\end{equation}
From \Cref{eq:prop1_theta}, and by unfolding definitions, we have
\begin{enumerate}
\item 
if $w = inl \ y$ for some $y \in Y$, then $u = inl \ (x_1,y_1)$ 
     for some $(x_1,y_1) \in X \otimes  Y$
\item 
if $w = inr \ z$ for some $z \in Z$, then $u = inr \ (x_1,z_1)$ 
     for some $(x_1,z_1) \in X \otimes  Z$.
\end{enumerate}
In both cases, the \Cref{eq:prod_req1} is an equality.
\item[Property 2.] 
For any $x \in X$, $w \in Y+Z$, and
     $u \in (X \otimes  Y) + (X \otimes  Z)$ we are required to show 
\begin{equation}
{f_{\Theta}(b_{\Theta}((x,w),u)) = u} \label{eq:prop2_theta}
\end{equation} 
supposing \Cref{eq:prop1_theta} holds.

We consider the cases when $u = inl \ (x_1,y_1)$ for some 
$(x_1,y_1) \in X \otimes  Y$ and when 
$u = inr \  (x_1,z_1)$ for some $(x_1,z_1) \in X \otimes  Z$ 
as we did for Property 1

In the first case, we have 
\begin{align*}
f_{\Theta}(b_{\Theta}((x,w),u)) &= f_{\Theta}(x_1, inl \ y_1) \\
&= inl \ (x_1,y_1).
\end{align*}
In the second case we have 
\begin{align*}
f_{\Theta}(b_{\Theta}((x,w),u))&= f_{\Theta}(x_1, inr \ z_1) \\
&= inr \ (x_1,z_1).
\end{align*}
\end{enumerate}
\item[Case ($+ \ \text{I}_{L,R}$).] Given the maps
\begin{align*}
h_l &= \denot{\Phi\mid \Gamma \vdash e : \sigma} : 
  \denot{\Phi} \otimes \denot{\Gamma} \rightarrow \denot{\sigma} \\
h_r &= \denot{\Phi\mid \Gamma \vdash e : \sigma} : 
  \denot{\Phi} \otimes \denot{\Gamma} \rightarrow \denot{\tau}
\end{align*}
define the maps 
\begin{align*}
\denot{\Phi\mid \Gamma \vdash \inl e  : \sigma + \tau} &\triangleq in_1\circ h_l \\
\denot{\Phi\mid \Gamma \vdash \inr e  : \sigma + \tau} &\triangleq in_2\circ h_r .
\end{align*}
\item[Case (Let).] See \Cref{sec:bean:semantics}.
\item[Case (Disc).] Given the lens 
$\denot{\Phi \mid \Gamma \vdash e : \num}$ from the premise, 
we can define the map $\denot{\Phi \mid \Gamma \vdash e : \dnum}$
directly by verifying the lens conditions. 
\item[Case (DLet).] 
Given the maps
\begin{align*}
h_1 &= \denot{\Phi\mid \Gamma \vdash e : \alpha} : 
  \denot{\Phi} \otimes \denot{\Gamma} \rightarrow {\denot{\alpha}}\\
h_2 &= \denot{\Phi,x :\alpha \mid \Delta \vdash f : \sigma} : 
  \denot{\Phi} \otimes \denot{\alpha} \otimes \denot{\Delta} \rightarrow 
  {\denot{\sigma}}
\end{align*}
define the map 
$\denot{\Phi \mid \Gamma  \vdash \slet x e f  : \sigma } $ as the composition
\[
h_2 \circ (h_1 \otimes id_{\denot{\Phi} \otimes \denot{\Delta}} )\circ 
(t_{\denot{\Phi}} \otimes id_{\denot{\Gamma} \otimes \denot{\Delta}})
\]
\item[Case (Add).] See \Cref{sec:bean:semantics}.

\item[Case (Sub).] We proceed the same as the case for (Add). We define a lens
  $\mathcal{L}_{sub} : D_\varepsilon(R) \otimes D_\varepsilon(R) \rightarrow R$
  given by the triple
\begin{align*}
    f_{sub}(x_1,x_2) &\triangleq x_1 - x_2 \\
    \tilde{f}_{sub}(x_1,x_2)&\triangleq  (x_1 - x_2)e^\delta; \quad |\delta| 
        \le \varepsilon \\
    b_{sub}((x_1,x_2),x_3) &\triangleq  \left(\frac{x_3x_1}{x_1-x_2},
      \frac{x_3x_2}{x_1-x_2}\right).
\end{align*}

We check that $\mathcal{L}_{sub} : D_\varepsilon(R) \otimes D_\varepsilon(R)
\rightarrow R$ is well-defined. 

For any $x_1,x_2,x_3 \in R$ such that 
\begin{equation}
d_{R}\left(\tilde{f}_{sub}(x_1,x_2),x_3\right) \neq \infty.
\label{eq:sub_asum}
\end{equation}
holds, we need to check that $\mathcal{L}_{sub}$ satisfies the properties of an 
error lens. We take the distance function $d_R$ as the metric given in 
\Cref{eq:olver}, so \Cref{eq:sub_asum} implies that $(x_1-x_2)$ and $x_3$ 
are either both zero or are both non-zero and of the same sign. 

\begin{enumerate}[align=left]
\item[Property 1.] We are required to show that 
\begin{align*}
d_{R \otimes  R}\left((x_1,x_2),b_{sub}((x_1,x_2),x_3)\right) - \varepsilon &\le 
     d_{ R}\left(\tilde{f}_{sub}(x_1,x_2),x_3\right) \\
    &\le d_{R}\left((x_1-x_2)e^\delta,x_3\right).
\end{align*}
Without loss of generality, we consider the case when 
\[
d_{R \otimes  R}\left((x_1,x_2),b_{sub}((x_1,x_2),x_3)\right) =
    d_R\left(x_1,\frac{x_3x_1}{x_1-x_2}\right); 
\]
that is, 
\[
d_R\left(x_2,\frac{x_3x_2}{x_1-x_2}\right)  \le 
  d_R\left(x_1,\frac{x_3x_1}{x_1-x_2}\right).
\]
Unfolding the definition of the distance function given in \Cref{eq:olver}, 
we are required to show 
\begin{equation}
\left|{\ln\left(\frac{x_1-x_2}{x_3}\right)} \right| \le 
\left|{\ln\left(\frac{x_1-x_2}{x_3}\right)} + \delta\right| + \varepsilon.
\end{equation}
which holds under the assumptions of $|\delta| \le \varepsilon$ and 
$0 < \varepsilon$; the proof is identical to that given for the case of the 
Add rule. 
\item[Property 2.] 
\begin{align*}
f_{sub}\left(b_{sub}((x_1,x_2),x_3)\right) = 
     f_{sub}\left(\frac{x_3x_1}{x_1-x_2},\frac{x_3x_2}{x_1-x_2}\right) 
    = x_3.
\end{align*} 
\end{enumerate}
\item[Case (Mul).] See \Cref{sec:bean:semantics}.
\item[Case (Div).]
  We proceed the same as the case for (Add), with slightly different indices.
  We define a lens $\mathcal{L}_{div} : D_{\varepsilon/2} (R) \otimes
  D_{\varepsilon/2}(R) \rightarrow (R + \diamond)$ given by the triple
\\
\
\begin{align*}
    f_{div}(x_1,x_2) &\triangleq
      \begin{cases}
        {x_1}/{x_2} & \text{ if } x_2 \neq 0 \\
        \diamond    & \text{ otherwise }
      \end{cases}  \\
    \tilde{f}_{div}(x_1,x_2)&\triangleq        
        \begin{cases}
        {x_1}e^\delta/{x_2} & \text{ if } x_2 \neq 0 ; \quad |\delta| \le \varepsilon \\
        \diamond    & \text{ otherwise }
      \end{cases}  \\
    b_{div}((x_1,x_2),x) &\triangleq 
        \begin{cases}
         \left(\sqrt{x_1x_2x_3},\sqrt{{x_1x_2}/x_3}\right) & \text{ if } x = inl \ x_3 \\
         (x_1,x_2)    & \text{ otherwise }
      \end{cases}.
\end{align*}
\

We check that $\mathcal{L}_{div} : D_{\varepsilon/2} (R) \otimes
D_{\varepsilon/2}(R ) \rightarrow (R + \diamond)$ is well-defined.

For any $x_1,x_2 \in R$ and $x \in R + \diamond$ such that  
\begin{equation}\label{eq:div_asum}
d_{R+ \diamond}\left(\tilde{f}_{div}(x_1,x_2),x\right) \neq \infty.
\end{equation}
holds, we are required to show that $\mathcal{L}_{div}$ satisfies the 
properties of an error lens. From \Cref{eq:div_asum} and again assuming the 
distance function is given by \Cref{eq:olver}, we know $x = inl \ x_3$ for some 
$x_3 \in R$, $x_2 \neq 0$, and $x_1/x_2$ and $x_3$ are either both zero or both 
non-zero and of the same sign; this guarantees that the backward map 
(containing square roots) is indeed well defined. 
\begin{enumerate}[align=left]
\item[Property 1.] We need to show 
\begin{align}\label{eq:p1_div}
d_{R \otimes  R}\left((x_1,x_2),b_{div}((x_1,x_2),x)\right) -\varepsilon/2 &\le 
     d_{R + \diamond}\left(\tilde{f}_{div}(x_1,x_2),x\right) \nonumber \\
    &\le d_{R}\left(\frac{x_1}{x_2}e^\delta,x_3\right).
\end{align}
Unfolding the definition of the distance function (\Cref{eq:olver}), we have
\begin{align*}
d_{R \otimes  R}\left((x_1,x_2),b_{div}((x_1,x_2),x_3)\right)  
&= d_R\left(x_1,x_1\sqrt{{x_1}{x_2x_3}}\right) \\
&= d_R\left(x_2,x_2\sqrt{{x_1x_2}/{x_3}}\right) \\
&= \frac{1}{2}\left|\ln \left( \frac{x_1}{x_2x_3}\right)\right|,
\end{align*}
and so we are required to show 
\begin{align*}
\frac{1}{2}\left|\ln \left( \frac{x_1}{x_2x_3}\right)\right|\le
\left|\ln \left( \frac{x_1}{x_2x_3}\right) + \delta \right| + 
  \frac{1}{2}\varepsilon,
\end{align*}
which holds under the assumptions of $|\delta| \le \varepsilon$ and $0 < \varepsilon$; the proof is identical to that given for the case of the Mul rule. 
\item[Property 2.] 
\begin{align*}
f_{div}\left(b_{div}((x_1,x_2),x_3)\right) = 
     f_{div}\left(\sqrt{x_1x_2x_3},\sqrt{{x_1x_2}/x_3}\right) 
    = x_3.
\end{align*} 
\end{enumerate}
\item[Case (DMul).]
  We proceed similarly as for (Add). We define a
  lens $\mathcal{L}_{dmul} : (R^\alpha \otimes  D_{\varepsilon}R) \rightarrow R$
  given by the triple
\begin{align*}
    f_{dmul}(x_1,x_2) &\triangleq x_1  x_2 \\
    \tilde{f}_{dmul}(x_1,x_2)&\triangleq  x_1 x_2e^{\delta}; \quad |\delta| 
  \le \varepsilon \\
    b_{dmul}((x_1,x_2),x_3) &\triangleq (x_1,x_3/x_1).
\end{align*}

We check that 
$\mathcal{L}_{dmul} : R^\alpha \otimes  D_{\varepsilon}R \rightarrow  R$ is well-defined.

For any $x_1,x_2,x_3 \in R$ such that 
\begin{equation} 
d_{R}\left(\tilde{f}_{dmul}(x_1,x_2),x_3\right) \neq \infty.
\label{eq:dmul_asum}
\end{equation}
holds, we need to check the that $\mathcal{L}_{dmul}$ satisfies the properties 
of an error lens. We again take the distance function $d_R$ as the metric 
given in \Cref{eq:olver}, so \Cref{eq:dmul_asum} implies that $(x_1x_2)$ 
and $x_3$ are either both zero or are both non-zero and of the same sign; 
this guarantees that the backward map (containing square roots) is indeed 
well defined. 
\begin{enumerate}[align=left]
\item[Property 1.] We are required to show
\begin{align*} \label{eq:dp1_mul}
d_{R^\alpha \otimes  D_\varepsilon R}\left((x_1,x_2),b_{dmul}((x_1,x_2),x_3)\right)  &\le 
     d_R\left(\tilde{f}_{dmul}(x_1,x_2),x_3\right) \\
    &\le d_{R}\left(x_1 x_2e^{\delta},x_3\right) 
\end{align*}
Unfolding the definition of the distance function (\Cref{eq:olver}), we have
\begin{align*}
d_{R^\alpha \otimes  D_\varepsilon R}
  \left((x_1,x_2),b_{dmul}((x_1,x_2),x_3)\right)  
&= \max\left(d_\alpha(x_1,x_1),d_R(x_2,x_3/x_1) -\varepsilon \right) \\
&= \left|\ln \left( \frac{x_1x_2}{x_3}\right)\right| -\varepsilon,
\end{align*}
and so we are required to show 
\begin{equation}  \label{eq:dp1_mul_end}
\left|\ln \left( \frac{x_1x_2}{x_3}\right)\right| \le
\left|\ln \left( \frac{x_1x_2}{x_3}\right) + \delta \right| + 
  \varepsilon
\end{equation}
which holds under the assumptions of $|\delta| \le \varepsilon$ and $0 < \varepsilon$;
the proof is identical to that given in the Add rule. 

\item[Property 2.] 
\begin{align*}
f_{dmul}\left(b_{dmul}((x_1,x_2),x_3)\right) = 
     f_{dmul}\left(x_1,x_3/x_1\right)
    = x_3.
\end{align*} 
\end{enumerate}

\end{description}

\section{Interpreting \texorpdfstring{\LangS}{} Terms}\label{sec:app_interp_LS}
This appendix provides the detailed constructions of the interpretation of
\LangS{} terms for \Cref{def:interpS}.  The interpretation of terms is defined
over the typing derivations for \LangS{} given in
\Cref{fig:typing_rules_2_full}.  For each case, the ideal interpretation
$\pdenotid{-}$  is constructed explicitly, but the construction for
$\pdenotap{-}$ is nearly identical, requiring only that the forgetful functor
$\Uap$ is used in place of $\Uid$.  

Applications of the symmetry map $s_{X,Y} : X \times Y \rightarrow Y \times X$
are elided for succinctness. The diagonal map $d_X : X \rightarrow X \times X$
on $\Set$ is used frequently and is not elided.
\begin{description}
\item[\textbf{Case (Var).}] 
Define the maps 
$\pdenot{\Phi,\Gamma,x :\sigma, \Delta \vdash x : \sigma}_{id}$ and 
$\pdenot{\Phi,\Gamma,x :\sigma, \Delta \vdash x : \sigma}_{ap}$ 
in $\Set$ as the appropriate projection $\pi_i$.
\item[\textbf{Case (Unit).}] 
Define the set maps 
$\pdenot{\Phi,\Gamma \vdash () : \unit}_{id}$ and 
$\pdenot{\Phi,\Gamma \vdash () : \unit}_{ap}$ as the constant function 
returning the value $\star$. 
\item[\textbf{Case (Const).}] 
Define the maps 
$\pdenot{\Phi,\Gamma \vdash k : \num}_{id}$  and 
$\pdenot{\Phi,\Gamma \vdash k : \num}_{ap}$ 
in $\Set$ as the constant function taking points in 
$\pdenot{\Phi,\Gamma}$ to the value $k \in R$. 
\item[\textbf{Case ($\otimes $ I).}] 
Given the maps 
\begin{align*}
h_1 &= \pdenotid{\Phi,\Gamma \vdash e : \sigma}: 
  \pdenot{\Phi} \times \pdenot{\Gamma} 
  \rightarrow {\pdenot{\sigma}}\\
h_2 &= \pdenotid{\Phi,\Delta \vdash f : \tau}: 
  \pdenot{\Phi} \times\pdenot{\Delta} 
  \rightarrow {\pdenot{\tau}}
\end{align*}
in $\Set$, define the map $\pdenotid{\Phi,\Gamma, \Delta \vdash (e,f) : \sigma
    \otimes  \tau}$ as 
\[
  (d_{\pdenot{\Phi}},id_{\pdenot{\Gamma}},id_{\pdenot{\Delta}});(h_1, h_2)
\]
\item[\textbf{Case ($\otimes $ E).}] Given the maps 
\begin{align*}
h_1 &= \pdenot{\Phi,\Gamma \vdash e : \tau_1 \otimes  \tau_2}_{id}: 
  \pdenot{\Phi} \times \pdenot{\Gamma} 
  \rightarrow {\pdenot{\tau_1} \times  \pdenot{\tau_2}}\\
h_2 &= \pdenot{\Phi,\Delta, x : \tau_1, y: \tau_2 \vdash f : \sigma}_{id} : 
  \pdenot{\Phi} \times \pdenot{\Delta} \times \pdenot{\tau_1} \times
  \pdenot{\tau_2} \rightarrow {\pdenot{\sigma}}
\end{align*}
in $\Set$, define $\pdenotap{\Phi,\Gamma, \Delta \vdash \slet {(x,y)} e f :
    \sigma}$ as 
\[
  (d_{\pdenot{\Phi}},id_{\pdenot{\Gamma}},id_{\pdenot{\Delta}}); (h_1,
  id_{\pdenot{\Phi} \times \pdenot{\Delta}});h_2
\]
\item[Case ($+$ E).] 
Given the maps 
\begin{align*}
h_1 &= \pdenotid{\Phi,\Gamma \vdash e' : \sigma + \tau}: 
  \pdenot{\Phi} \times \pdenot{\Gamma} 
  \rightarrow {\pdenot{\sigma + \tau}}\\
h_2 &= \pdenotid{\Phi,\Delta, x : \sigma \vdash e : \rho} : 
  \pdenot{\Phi} \times\pdenot{\Delta} \times \pdenot{\sigma} 
  \rightarrow {\pdenot{\rho}}\\
h_3 &= \pdenotid{\Phi,\Delta, y : \tau \vdash f : \rho} : 
  \pdenot{\Phi} \times \pdenot{\Delta} \times \pdenot{\tau} 
  \rightarrow {\pdenot{\rho}}
\end{align*}
in $\Set$, define $\pdenotid{\Phi,\Gamma, \Delta \vdash \case {e'} {x.e} {y.f}
: \rho}$ as 
\[
  (d_{\pdenot{\Phi}},
  id_{\pdenot{\Gamma} \times \pdenot{\Delta}}); 
  (h_1, id_{\pdenot{\Phi} \times \pdenot{\Delta}});
  \Theta^S_{\pdenot{\Phi} \times \pdenot{\Delta},\pdenot{\sigma},
  \pdenot{\tau}};[h_2,h_3]
\]
where $\Theta^S_{X,Y,Z}$ is a map in $\Set$:
\[
  \Theta_{X,Y,Z} : X \times  (Y + Z) \rightarrow 
  (X \times  Y) + (X \times  Z)
\] 
\item[\textbf{Case ($+$ I$_L$).}] 
Given the map 
\[
  h = \pdenotid{\Phi,\Gamma \vdash e : \sigma}:
  \pdenot{\Phi} \times \pdenot{\Gamma} 
  \rightarrow {\pdenot{\sigma}}
\]
in $\Set$, define the map
\[
\pdenotid{\Phi,\Gamma \vdash \inl e: \sigma + \tau} 
\] 
as the composition 
\[
  h;in_1
\]
\item[\textbf{Case ($+$ $I_R$).}] 
Given the map 
\[
  h = \pdenotid{\Phi,\Gamma \vdash e : \sigma}: \pdenot{\Phi} \times
  \pdenot{\Gamma} \rightarrow {\pdenot{\sigma}}
\]
in $\Set$, define the map
\[
  \pdenotid{\Phi,\Gamma \vdash \inr e: \sigma + \tau} 
\] 
as the composition 
\[
  h;in_r
\]
\item[\textbf{Case (Let).}] 
Given the maps 
\begin{align*}
h_1 &= \pdenotid{\Phi,\Gamma \vdash e : \sigma} 
  : \pdenot{\Phi} \times \pdenot{\Gamma} \rightarrow \pdenot{\sigma} \\
h_2 &= \pdenotid{\Phi,\Delta, x: \sigma \vdash f : \tau}
  : \pdenot{\Phi} \times \pdenot{\Delta} \times \pdenot{\sigma} 
      \rightarrow \pdenot{\tau} 
\end{align*}
in $\Set$, define the map 
\[
  \pdenotid{\Phi,\Gamma, \Delta \vdash \slet x e f : \tau}
\]
as the composition 
\[
  (t_{\pdenot{\Phi}},
  id_{\pdenot{\Gamma} \times \pdenot{\Delta}}); 
  (h_1, id_{\pdenot{\Phi} \times \pdenot{\Delta}}); h_2
\]
\item[\textbf{Case (Op).}] 
Given the maps
\begin{align*}
h_1 &= \pdenotid{\Phi,\Gamma \vdash e : \num}: 
  \pdenot{\Phi} \times \pdenot{\Gamma} \rightarrow {\pdenot{\num}}\\
h_2 &= \pdenotid{\Phi,\Delta \vdash f : \num} :
  \pdenot{\Phi} \times \pdenot{\Delta} \rightarrow {\pdenot{\num}}
\end{align*}
in $\Set$, define the map
\[
  \pdenotid{\Phi,\Gamma,\Delta \vdash \op ~e ~f : \num} 
\]
as the composition 
\[
  d_{\pdenot{\Phi}};(h_1, h_2); U_{id} \mathcal{L}_{op} =
  d_{\pdenot{\Phi}};(h_1, h_2); f_{op}
\]
for $\op \in \{\mathbf{add},\mathbf{sub},\mathbf{mul},\mathbf{dmul}\}$.
\item[\textbf{Case (Div).}] Given the maps
\begin{align*}
h_1 &= \pdenotid{\Phi,\Gamma \vdash e : \num}: 
  \pdenot{\Phi} \times \pdenot{\Gamma} \rightarrow {\pdenot{\num}}\\
h_2 &= \pdenotid{\Phi,\Delta \vdash f : \num} :
  \pdenot{\Phi} \times \pdenot{\Delta} \rightarrow {\pdenot{\num}}
\end{align*}
in $\Set$, define the map
\[
  \pdenotid{\Phi,\Gamma,\Delta \vdash \textbf{div} ~ e ~f : \num + \err}
\]
as the composition
\[
  d_{\pdenot{\Phi}};(h_1, h_2); U_{id} \mathcal{L}_{div} =
  d_{\pdenot{\Phi}};(h_1, h_2); f_{div}
\]
\end{description}

\section{Details for Soundness}\label{sec:app_lem_sound}
This appendix provides details for the proofs in \Cref{sec:bean:sound}, which
presents the main backward error soundness theorem for \bea{}. A detailed 
proof of the main theorem is provided in \Cref{sec:app_soundness}.
In this appendix, we provide the details for auxiliary results. 
The full typing relation for \LangS{} is defined by the rules in
\Cref{fig:typing_rules_2_full}, and the operational semantics for \LangS{} are
given in \Cref{fig:op_semantics_full}.
\begin{namedtheorem}[\Cref{lem:derive_ls}]
Let $\Phi \mid \Gamma \vdash e : \tau$ be a well-typed term in \bea{}. 
Then there is a derivation of $\Phi,\Gamma^\circ \vdash e : \tau$ in \LangS. 
\end{namedtheorem}
\begin{proof}
The proof of \Cref{lem:derive_ls} follows by induction on the \bea{} derivation
$\Phi \mid \Gamma \vdash e : \tau$.  Most cases are immediate by application of
the corresponding \LangS{} rule.  The rules for primitive operations require
application of the \LangS{} (Var) rule.  We demonstrate the derivation for the
case of the (Add) rule:
\begin{description}
\item[\textbf{Case (Add).}]
Given a \bea{} derivation of 
\[
  \Phi \mid \Gamma, x :_{\varepsilon + r_1} \num,
    y:_{\varepsilon + r_2}\num \vdash \add x y : \num
\] 
we are required to show a \LangS{} derivation of 
\[
  \Phi,\Gamma, x:\num, y:\num \vdash \add x y : \num
\]
which follows by application of the Var rule for \LangS : \\
  \begin{center}
  \AXC{}
  \LeftLabel{\footnotesize (Var)}
  \UIC{$\Phi,\Gamma, x: \num \vdash x :  \num$}
  \AXC{}
  \LeftLabel{\footnotesize (Var)}
  \UIC{$y : \num \vdash y : \num$}
  \LeftLabel{\footnotesize (Add)}
  \BIC{ $\Phi,\Gamma, x : \num, y:\num \vdash \add x y : \num$ }
  \bottomAlignProof
  \DisplayProof
  \end{center}
\end{description}
\end{proof}

\begin{namedtheorem}[\Cref{lem:pairing}]
Let $\Phi \mid \Gamma \vdash e : \sigma$ be a \bea{} program. Then we have 
\[
\Uid\denot{\Phi \mid \Gamma \vdash e : \sigma} = 
    \pdenotid{\Phi,\Gamma^\circ \vdash e : \sigma} 
\quad \text{and} \quad 
\Uap\denot{\Phi \mid \Gamma \vdash e : \sigma} = 
    \pdenotap{\Phi,\Gamma^\circ \vdash e : \sigma}.
\]
\end{namedtheorem}
\begin{proof}
The proof of \Cref{lem:pairing} follows by induction on the structure of the 
\bea{} derivation $\Phi\mid \Gamma \vdash e : \sigma$. We detail here the 
cases of pairing for the ideal semantics. 
\[
  \Uid\denot{\Phi\mid \Gamma \vdash e : \sigma} = 
  \pdenotid{\Phi,\Gamma \vdash e : \sigma}
\] 
\begin{description}
\item[\textbf{Case (Var).}] 
\begin{align*}
  \Uid\denot{\Phi\mid \Gamma, x:_r \sigma \vdash x : \sigma} &= 
  \Uid(\varepsilon_{\denot{\sigma}} \circ m_{0\le r, \denot{\sigma}} \circ
  \pi_i) \tag*{(\Cref{def:interpL})} \\
  &= \pi_i \tag*{(Definition of $\Uid$)} \\
  &= \pdenotid{\Phi,\Gamma^\circ,x:\sigma \vdash x : \sigma }
                                                \tag*{(\Cref{def:interpS})} 
\end{align*}
\item[\textbf{Case (DVar).}] 
\begin{align*}
\Uid\denot{\Phi,z: \alpha\mid\Gamma \vdash z : \sigma} &= \pi_i \tag*{(\Cref{def:interpL})} \\
&= \pdenotid{\Phi,x:\sigma,\Gamma^\circ \vdash x : \sigma } \tag*{(\Cref{def:interpS})}
\end{align*}
\item[\textbf{Case (Unit).}] 
\begin{align*}
  \Uid\denot{\Phi\mid \Gamma \vdash () : \unit} &= f_{unit}
  \tag*{(\Cref{def:interpL})} \\ 
  &= \pdenotid{\Phi\mid \Gamma^\circ \vdash () : \unit} 
                            \tag*{(\Cref{def:interpS})}
\end{align*}
\item[\textbf{Case ($\otimes $ I).}] 
From the induction hypothesis we have
\begin{align*}
  \Uid{(h_1)} &= \pdenotid{\Phi,\Gamma \vdash e : \sigma} \\
  \Uid{(h_2)} &= \pdenotid{\Phi,\Delta \vdash f : \tau}
\end{align*}
We conclude as follows:
\begin{align*}
\Uid\denot{\Phi \mid \Gamma, \Delta \vdash (e,f) : \sigma \otimes  \tau} &=
  \Uid(t_{\denot{\Phi}} \otimes  id_{\denot{\Gamma \otimes \Delta}}); \Uid(h_1 \otimes
  h_2) \tag*{(\Cref{def:interpL})} \\ 
  &= (t_{\pdenot{\Phi}},id_{\pdenot{\Gamma} \times \pdenot{\Delta}});
  (\Uid(h_1), \Uid(h_2)) \tag*{(Definition of $\Uid$)} \\
  &= \pdenotid{\Phi,\Gamma, \Delta \vdash (e,f) : \sigma \otimes  \tau} 
  \tag*{(IH \& \Cref{def:interpS})}
\end{align*}
\item[\textbf{Case ($\otimes $ E$_\sigma$).}] 
From the induction hypothesis we have 
\begin{align*}
  \Uid{(h_1)} &= \pdenotid{\Phi,\Gamma^\circ \vdash e : \tau_1 \otimes  \tau_2} \\
  \Uid{(h_2)} &= \pdenotid{\Phi,\Delta^\circ,x : \tau_1, y:\tau_2 
  \vdash f : \sigma}
\end{align*}
We conclude with the following: 
\begin{align*}
  &\Uid\denot{\Phi\mid  r+ \Gamma, \Delta \vdash \slet {(x,y)} e f  :\sigma} 
  \nonumber \\ 
  &= \Uid\left(
  h_2 \circ (
  (m^{-1}_{r,\denot{\tau_1},\denot{\tau_2}} \circ 
  D_r(h_1)) \otimes (\varepsilon_{\denot{\Phi}} \circ m_{0\le r, \denot{\Phi}})
  \otimes id_{\denot{\Delta}})
  \circ (t_{D_r\denot{\Phi}} \otimes id_{D_r\denot{\Gamma} \otimes  \denot{\Delta}})
  \right) \tag*{(\Cref{def:interpL})} \\ 
  &= (t{\pdenot{\Phi}},id_{\pdenot{\Gamma} \times  \pdenot{\Delta}});
  (\Uid(h_1), id_{\pdenot{\Phi}}, id_{\pdenot{\Delta}}); \Uid(h_2) 
  \tag*{(Definition of $\Uid$)}\\
  &= (t{\pdenot{\Phi}},id_{\pdenot{\Gamma} \times  \pdenot{\Delta}});
  (\Uid (h_1), id_{\pdenot{\Phi} \times \pdenot{\Delta}}); \Uid(h_2)
  \tag*{(Definition of $\Uid$)} \\
  &= \pdenotid{\Phi,\Gamma^\circ,\Delta^\circ \vdash \slet {(x,y)} e f: \sigma} 
  \tag*{(IH \& \Cref{def:interpS})}
\end{align*} 
\item[\textbf{Case ($\otimes $ E$_\alpha$)}] 
From the induction hypothesis we have 
\begin{align*}
  \Uid{(h_1)} &= \pdenotid{\Phi,\Gamma^\circ \vdash e : \tau_1 \otimes  \tau_2} \\
  \Uid{(h_2)} &= \pdenotid{\Phi,\Delta^\circ,x : \tau_1, y:\tau_2  \vdash f : \sigma}
\end{align*}
We conclude with the following: 
\begin{align*}
  \denot{\Phi\mid  \Gamma, \Delta \vdash \slet {(x,y)} e f  :\sigma}
  &= h_2 \circ (h_1 \otimes  id_{\denot{\phi} \otimes  \denot{\Delta}})
  \circ (t_{\denot{\Phi}} \otimes  id_{\denot{\Gamma} \otimes  \denot{\Delta}})
  \tag*{(\Cref{def:interpL})}\\ 
  &= (\Uid (h_1), id_{\pdenot{\phi} \times \pdenot{\Delta}}); \Uid(h_2) 
  \tag*{(Definition of $\Uid$)} \\
  &= (\Uid (h_1), id_{\pdenot{\Delta}}); \Uid(h_2)
  \tag*{(Definition of $\Uid$)} \\
  &= \pdenotid{\Phi,\Gamma^\circ,\Delta^\circ \vdash 
  \slet {(x,y)} e f: \sigma} \tag*{(IH \&\Cref{def:interpS})}
\end{align*} 
\item[\textbf{Case (Let).}]
From the induction hypothesis we have
\begin{align*}
  \Uid{(h_1)} &= 
  \pdenotid{\Phi,\Gamma^\circ \vdash e : \tau} \\
  \Uid{(h_2)} &= 
  \pdenotid{\Phi,\Delta^\circ, x : \tau \vdash f : \sigma}.
\end{align*}

We conclude with the following: 
\begin{align*}
  &\Uid \denot{\Phi\mid r + \Gamma, \Delta \vdash \slet x e f: \sigma} \\
  &=\Uid \bigr( 
  h_2 \circ (D_r(h_1) \otimes (\varepsilon_{\denot{\Phi}} \circ m_{0 \leq r, 
  \denot{\Phi}})\otimes id_{\denot{\Delta}}) \\ &\qquad \qquad \qquad \circ 
  (m_{r,\denot{\Phi},\denot{\Gamma}} \otimes id_{D_r \denot{\Phi}} \otimes 
  id_{\denot{\Delta}}) \circ (t_{D_r \denot{\Phi}} \otimes id_{D_r\denot{\Gamma} 
  \otimes \denot{\Delta}}) \bigl)\\
  \hfill \tag*{(\Cref{def:interpL})} \\
  &=(t_{\pdenot{\Phi}}, id_{\pdenot{\Gamma} \times \pdenot{\Delta}});
  (\Uid(h_1),id_{\pdenot{\Phi}},id_{\pdenot{\Delta}}); \Uid(h_2) 
  \tag*{(Definition of $\Uid$)} \\
  &=\pdenot{\Phi^\circ, (r+\Gamma)^\circ, \Delta^\circ \vdash 
  \slet x e f: \sigma} \tag*{(IH \& \Cref{def:interpS})}
\end{align*}
\item[\textbf{Case ($+$ E).}]
From the induction hypothesis, we have 
\begin{align*}
  \Uid{(h_1)}
  &= \pdenotid{\Phi,\Gamma^\circ \vdash e' : \sigma + \tau}\\
  \Uid{(h_2)} 
  &= \pdenotid{\Phi,\Delta^\circ, x :_q \sigma \vdash e: \rho}\\
  \Uid{(h_3)} 
  &= \pdenotid{\Phi,\Delta^\circ, y:_q \tau \vdash f: \rho}
\end{align*}
We conclude with the following: 
\begin{align*}
&\Uid{\denot{\Phi\mid q+\Gamma, \Delta \vdash \case {e'} {x.e} {y.f}: \sigma}} \\
&= \Uid\bigl(  
  [h_2,h_3] \circ \Theta \circ
  ((\eta \circ D_q(h_1)) \otimes
  (\varepsilon_{\denot{\Phi}} \circ m_{0\le r, \denot{\Phi}}) 
  \otimes id_{\denot{\Delta}}) \circ 
  (t_{D_q{\denot{\Phi}}} \otimes id_{D_q\denot{\Gamma} \otimes  \denot{\Delta}})
        \bigr) \\
\hfill \tag*{(\Cref{def:interpL})} \\
&= (t{\pdenot{\Phi}}, id_{\pdenot{\Gamma} \times \pdenot{\Delta}});
(\Uid(h_1), id_{\pdenot{\Phi}},id_{ \pdenot{\Delta}});
\Uid(\Theta) ;[\Uid(h_2),\Uid(h_3)] \\
\hfill \tag*{(Definition of $\Uid$)} \\
&= \pdenot{\Phi,\Gamma^\circ, \Delta^\circ \vdash 
\case {e'} {x.e} {y.f}: \sigma} \tag*{(IH \& \Cref{def:interpS})}
\end{align*}
\item[\textbf{Case (+ I).}]  
From the induction hypothesis, we have
\[\Uid(h) = \pdenot{\Phi,\Gamma^\circ \vdash e : \sigma}\]
\begin{align}
  \Uid(\denot{\Phi\mid \Gamma \vdash \inl e: \sigma + \tau}) 
  &= \Uid\left(in_1\circ h \right)\tag*{(\Cref{def:interpL})} \\
  &= \Uid(h);\Uid(in_1) \tag*{(Definition of $\Uid$)} \\
  &= \pdenotid{\Phi, \Gamma \vdash \inl e : \sigma + \tau}  
  \tag*{(IH \& \Cref{def:interpS})}
\end{align}
\item[\textbf{Case (Add).}] 
From \Cref{def:interpL} we have 
\begin{align*}
&\denot{\Phi\mid \Gamma,x:_\varepsilon\num,
    y:_\varepsilon\num \vdash \add x y : \num}\\
& \qquad \qquad =
  \pi_i \circ \dots \circ
  (id_{\denot{\Phi} \otimes \denot{\Gamma}}
  \otimes {L}_{add})
  \circ
  (id_{\denot{\Phi} \otimes \denot{\Gamma}}
  \otimes m_{\varepsilon \le\varepsilon + q, \denot{\num}}
  \otimes m_{\varepsilon \le \varepsilon + r, \denot{\num}}),
\end{align*}
We conclude as follows:
\begin{align*}
  \Uid{\denot{\Phi\mid \Gamma,x:_{\varepsilon}\num,y:_{\varepsilon}\num 
  \vdash \add x y : \num}} 
  &= \pi_i; (id_{\pdenot{\Phi} \otimes \pdenot{\Gamma}}, f_{add}) 
  \tag*{(Definition of $\Uid$)} \\
  &=\pdenotid{\Phi,\Gamma^\circ,x:\num,y:\num 
    \vdash \add x y : \num}  \\ \hfill \tag*{(\Cref{def:interpS})}
\end{align*}
\end{description}
The cases for the remaining arithmetic operations are nearly identical 
to the case for \textbf{Add}.
\end{proof}

\subsection*{Substitution}
\begin{namedtheorem}[\Cref{{thm:subst}}]
Let $\Gamma \vdash e : \tau$ be a well-typed \LangS{} term. Then for any
well-typed substitution $\bar{\gamma} \vDash \Gamma$ of closed values, there
is a derivation $\emptyset \vdash e[\bar{\gamma}/dom(\Gamma)] : \tau.$
\end{namedtheorem}

\begin{proof}
By induction on the structure of the derivation $\Gamma \vdash e : \tau$. The
cases for (Var), (Unit), (Const), and (+ I) are trivial; \LangS{} is a simple
first-order language and the remaining cases are routine.
\begin{description}
\item[\textbf{Case ($\otimes$ I).}]
We have a well-typed substitution of closed values $\bar{\gamma} \vDash
dom(\Phi,\Gamma,\Delta)$ and it is straightforward to show that the induction
hypothesis yields the premises needed for applying the typing rule ($\otimes$
I). The desired conclusion then follows from the definition of substitution. 
\item[\textbf{Case ($\otimes$ E).}] 
We are required to show 
\[
\emptyset \vdash (\slet{(x,y)}{e}{f})[\bar{\gamma}/dom(\Phi,\Gamma,\Delta)] : \sigma
\]
given the well-typed substitution of closed values $\bar{\gamma} \vDash
dom(\Phi,\Gamma,\Delta)$.  From $\bar{\gamma}$ we derive a substitution
$\bar{\gamma}' \vDash \Phi,\Gamma$, and from the induction hypothesis on the
left premise we have $\emptyset \vdash e[\bar{\gamma}'/dom(\Phi,\Gamma)] :
\tau_1 \otimes \tau_2$; by inversion on this hypothesis, we derive a
substitution which allows us to use the induction hypothesis for the right
premise. This provides the premises needed to apply the typing rule ($\otimes$
E). The desired conclusion then follows from the definition of substitution. 
\item[\textbf{Case (+ E).}] 
We are required to show 
\[
\emptyset \vdash (\case{e'}{x.e}{y.e})
  [\bar{\gamma}/dom(\Phi,\Gamma,\Delta)] : \rho
\]
given the well-typed substitution of closed values $\bar{\gamma} \vDash
dom(\Phi,\Gamma,\Delta)$. From $\bar{\gamma}$ we derive a substitution
$\bar{\gamma}' \vDash \Phi,\Gamma$, and from the induction hypothesis on the
left premise we have $\emptyset \vdash e'[\bar{\gamma}'/dom(\Phi,\Gamma)] :
\sigma + \tau$; we first apply inversion to this hypothesis and then reason by
cases to derive a substitution which allows us to use the induction hypothesis
for the right premise. This provides the premises needed to apply the typing
rule (+ E). The desired conclusion then follows from the definition of
substitution.  
\item[\textbf{Case (Let).}] 
We are required to show 
\[
\emptyset \vdash (\slet{x}{e}{f})
  [\bar{\gamma}/dom(\Phi,\Gamma,\Delta)] : \sigma
\]
given a well-typed substitution of closed values $\bar{\gamma} \vdash
dom(\Phi,\Gamma,\Delta)$.  From $\bar{\gamma}$ we derive a substitution
$\bar{\gamma}' \vDash \Phi,\Gamma$, and from the induction hypothesis on the
left premise we have $\emptyset \vdash e[\bar{\gamma}'/dom(\Phi,\Gamma)]$; by
inversion on this hypothesis, we derive a substitution which allows us to
use the induction hypothesis for the right premise. This provides the premises
needed to apply the typing rule (Let).  The desired conclusion then follows
from the definition of substitution. 
\item[\textbf{Case (Op).}] 
We have a well-typed substitution of closed values $\bar{\gamma} \vDash
dom(\Phi,\Gamma,\Delta)$ and it is straightforward to show that the induction
hypothesis yields the premises needed for applying the typing rule (Op). 
The desired conclusion then follows from the definition of substitution. 
\item[\textbf{Case (Div).}] 
We have a well-typed substitution of closed values $\bar{\gamma} \vDash
dom(\Phi,\Gamma,\Delta)$ and it is straightforward to show that the induction
hypothesis yields the premises needed for applying the typing rule (Div). 
The desired conclusion then follows from the definition of substitution. 
\end{description}
\end{proof}

\subsection*{Soundness of $\pdenotid{-}$}
In this section of the appendix, we prove the soundness of our denotational
semantics.  Namely, we show that our interpretation of \LangS{}
(\Cref{def:interpS}) respects the operational semantics given in
\Cref{fig:op_semantics_full}. 

Applications of the symmetry map $s_{X,Y} : X \times Y \rightarrow Y \times X$
are elided for succinctness. Recall the diagonal map $d_X : X \rightarrow X
\times X$ on $\Set$, which is used frequently in the interpretation of
\LangS{}. 

\begin{namedtheorem}[\Cref{thm:soundid}]
Let $\Gamma \vdash e : \tau$ be a well-typed \LangS{} term.  Then for any
well-typed substitution of closed values $\bar{\gamma} \vDash \Gamma$, if
$e[\bar{\gamma}/dom(\Gamma)]\stepid v$ for some value $v$, then
${\pdenot{\Gamma \vdash e : \tau}_{id}\pdenot{\bar{\gamma}}_{id} =
\pdenot{v}_{id}}$ (and similarly for $\stepap$ and $\pdenotap{-}$).
\end{namedtheorem}

\begin{proof}
By induction on the structure of the \LangS{} derivations $\Gamma \vdash e :
\tau$. The cases for (Var), (Unit), (Const), and (+ I) are trivial. In each 
case we apply inversion on the step relation to obtain the premise for the 
induction hypothesis. 
\begin{description}
\item[\textbf{Case ($\otimes $ I).}] 
We are required to show
\[
\pdenotid{\Phi, \Gamma, \Delta \vdash (e,f) : \sigma \otimes  \tau}
\pdenotid{\bar{\gamma}} = \pdenotid{(u,v)}
\] 
for some well-typed closed substitution  
$\bar{\gamma} \vDash \Phi, \Gamma, \Delta$ 
and value $(u,v)$ such that  
\[
(e,f)[\bar{\gamma}'/dom(\Phi,\Gamma,\Delta)] \stepid (u,v)
\]
From $\bar{\gamma}$ we derive the substitutions $\bar{\gamma}' \vDash \Phi$,
$\bar{\gamma}_1 \vDash \Gamma$, and $\bar{\gamma}_2 \vDash \Delta$.
By inversion on the step relation we then have 
\begin{align*}
e[\bar{\gamma}',\bar{\gamma}_1/dom(\Phi,\Gamma)] &\stepid u \\
f[\bar{\gamma}',\bar{\gamma}_2/dom(\Phi,\Delta)] &\stepid v
\end{align*}
We conclude as follows: 
\begin{align*}
  \pdenotid{\Phi, \Gamma, \Delta \vdash (e,f) : \sigma \otimes  \tau}
  \pdenotid{\bar{\gamma}} 
  &= \bigl((d_{\pdenot{\Phi}},id_{\pdenot{\Gamma}},id_{\pdenot{\Delta}});
  (\pdenotid{\Phi,\Gamma \vdash e : \sigma}, 
  \pdenotid{\Phi,\Delta \vdash f : \tau})\bigr) \pdenotid{\bar{\gamma}} \\
  \hfill \tag*{(\Cref{def:interpS})}\\
  &= \bigl(\pdenotid{\Phi,\Gamma \vdash e : \sigma}, 
  \pdenotid{\Phi,\Delta \vdash f : \tau}\bigr) 
  \bigl(\pdenot{\bar{\gamma}'},\pdenot{\bar{\gamma}_1},
  \pdenot{\bar{\gamma}'},\pdenot{\bar{\gamma}_2}\bigr) \\
  \hfill \tag*{(Definition of $d_{\pdenot{\Phi}}$)} \\
  &= (\pdenotid{u},\pdenotid{v}) \tag*{(IH)}
\end{align*}
\item[\textbf{Case ($\otimes $ E).}] 
We are required to show 
\[
\pdenotid{\Phi,\Gamma,\Delta \vdash \slet {(x,y)} e f: \sigma}
\pdenotid{\bar{\gamma}} = \pdenotid{w}
\] 
for some well-typed closed substitution $\bar{\gamma} \vDash
\Phi,\Gamma,\Delta$ and value $w$ such that 
\[
(\slet {(x,y)} e f)[\bar{\gamma}/dom(\Phi,\Gamma,\Delta)] \stepid w
\]
From $\bar{\gamma}$ we derive the substitutions
$\bar{\gamma}' \vDash \Phi$, $\bar{\gamma}_1 \vDash \Gamma$,
and $\bar{\gamma}_2 \vDash \Delta$. By inversion on the step 
relation we then have 
\begin{align*}
e[\bar{\gamma}',\bar{\gamma}_1/dom(\Phi,\Gamma)] &\stepid (u,v) \\
f[\bar{\gamma}',\bar{\gamma}_2/dom(\Phi,\Delta)][u/x][v/y] &\stepid w
\end{align*}
We conclude as follows: 
\begin{align*}
  &\pdenotid{\Phi,\Gamma,\Delta \vdash \slet {(x,y)} e f: \sigma}
  \pdenotid{\bar{\gamma}} \\
  &= \bigl(  (d_{\pdenot{\Phi}},id_{\pdenot{\Gamma}},id_{\pdenot{\Delta}}); 
  (h_1, id_{\pdenot{\Phi} \times \pdenot{\Delta}});h_2\bigr) 
  \pdenotid{\bar{\gamma}} \tag*{(\Cref{def:interpS})}\\
  &= \bigl((h_1, id_{\pdenot{\Phi} \times \pdenot{\Delta}});h_2\bigr) 
  \bigl(\pdenot{\bar{\gamma}'},\pdenot{\bar{\gamma}_1},
  \pdenot{\bar{\gamma}'},\pdenot{\bar{\gamma}_2}\bigr) 
  \tag*{(Definition of $d_{\pdenot{\Phi}}$)} \\
  &= \bigl(\pdenotid{\Phi,\Delta,x:\tau_1,y:\tau_2 \vdash f : \sigma}\bigr) 
  \bigl(\pdenot{\bar{\gamma}'},\pdenot{\bar{\gamma}_2},\pdenot{u}, 
  \pdenot{v}\bigr) \tag*{(IH)} \\
  &= \pdenotid{w} \tag*{(IH)}
\end{align*}
\item[\textbf{Case ($+$ E).}] 
We are required to show 
\[
\pdenotid{\Phi,\Gamma, \Delta \vdash \case {e'} {x.e} {y.f} : \rho}
\pdenotid{\bar{\gamma}} = \pdenotid{w}
\]
for some well-typed closed substitution $\bar{\gamma} \vDash
\Phi,\Gamma,\Delta$ and value $w$ such that 
\[
(\case {e'} {x.e} {y.f})[\bar{\gamma}/dom(\Phi,\Gamma,\Delta)] \stepid w
\]

We consider the case when $e' = \inl e_1$ for some $e_1 : \sigma$. 
From $\bar{\gamma}$ we derive the substitutions
$\bar{\gamma}' \vDash \Phi$, $\bar{\gamma}_1 \vDash \Gamma$,
and $\bar{\gamma}_2 \vDash \Delta$. By inversion on the step 
relation we then have 
\begin{align*}
e'[\bar{\gamma}',\bar{\gamma}_1/dom(\Phi,\Gamma)] &\stepid \inl v \\
e[\bar{\gamma}',\bar{\gamma}_2/dom(\Phi,\Delta)][v/x] &\stepid w 
\end{align*}
We conclude as follows: 
\begin{align*}
  &\pdenotid{\Phi,\Gamma, \Delta \vdash \case {e'} {x.e} {y.f} : \rho}
  \pdenotid{\bar{\gamma}} \\
  &= \bigl((d_{\pdenot{\Phi}},
  id_{\pdenot{\Gamma} \times \pdenot{\Delta}}); 
  (h_1, id_{\pdenot{\Phi} \times \pdenot{\Delta}});
  \Theta^S_{\pdenot{\Phi} \times \pdenot{\Delta},\pdenot{\sigma},
  \pdenot{\tau}};[h_2,h_3]\bigr) 
  \pdenotid{\bar{\gamma}} \tag*{(\Cref{def:interpS})}\\
  &= \bigl((h_1, id_{\pdenot{\Phi} \times \pdenot{\Delta}});
  \Theta^S_{\pdenot{\Phi} \times \pdenot{\Delta},\pdenot{\sigma},
  \pdenot{\tau}};[h_2,h_3]\bigr) 
  \bigl(\pdenot{\bar{\gamma}'},\pdenot{\bar{\gamma}_1},
  \pdenot{\bar{\gamma}'},\pdenot{\bar{\gamma}_2}\bigr) 
  \tag*{(Definition of $d_{\pdenotid{\Phi}}$)} \\
  &= \bigl(\Theta^S_{\pdenot{\Phi} \times \pdenot{\Delta},\pdenot{\sigma},
  \pdenot{\tau}};[h_2,h_3]\bigr) 
  \bigl(\pdenot{\bar{\gamma}'},\pdenot{\bar{\gamma}_2},
  \pdenot{\inl v}\bigr) \tag*{(IH)} \\
  &= \pdenotid{w} \tag*{(IH)}
\end{align*}
\item[\textbf{Case (Let).}]
We are required to show 
\[
  \pdenotid{\Gamma, \Delta \vdash \slet x e f : \tau}
  \pdenotid{\bar{\gamma}} = 
  \pdenotid{v}
\]

for some well-typed closed substitution $\bar{\gamma} \vDash
\Phi,\Gamma,\Delta$ and value $w$ such that 
\[
(\slet {x} e f)[\bar{\gamma}/dom(\Phi,\Gamma,\Delta)] \stepid v
\]
From $\bar{\gamma}$ we derive the substitutions
$\bar{\gamma}' \vDash \Phi$, $\bar{\gamma}_1 \vDash \Gamma$,
and $\bar{\gamma}_2 \vDash \Delta$. By inversion on the step 
relation we then have 
\begin{align*}
e[\bar{\gamma}',\bar{\gamma}_1/dom(\Phi,\Gamma)] &\stepid u \\
f[\bar{\gamma}',\bar{\gamma}_2/dom(\Phi,\Delta)][u/x] &\stepid v
\end{align*}
We conclude as follows: 
\begin{align*}
  &\pdenotid{\Phi,\Gamma,\Delta \vdash \slet {x}{e}{f}: \sigma}
  \pdenotid{\bar{\gamma}} \\
  &= \bigl(  (d_{\pdenot{\Phi}},id_{\pdenot{\Gamma}},id_{\pdenot{\Delta}}); 
  (h_1, id_{\pdenot{\Phi} \times \pdenot{\Delta}});h_2\bigr) 
  \pdenotid{\bar{\gamma}} \tag*{(\Cref{def:interpS})}\\
  &= \bigl((h_1, id_{\pdenot{\Phi} \times \pdenot{\Delta}});h_2\bigr) 
  \bigl(\pdenot{\bar{\gamma}'},\pdenot{\bar{\gamma}_1},
  \pdenot{\bar{\gamma}'},\pdenot{\bar{\gamma}_2}\bigr) 
  \tag*{(Definition of $d_{\pdenot{\Phi}}$)} \\
  &= \bigl(\pdenotid{\Phi,\Delta,x:\tau_1 \vdash f : \sigma}\bigr) 
  \bigl(\pdenot{\bar{\gamma}'},\pdenot{\bar{\gamma}_2},\pdenot{u} \bigr) 
  \tag*{(IH)} \\
  &= \pdenotid{v} \tag*{(IH)}
\end{align*}
\item[\textbf{Case (Op).}] 
We are required to show 
\[
\pdenotid{\Phi, \Gamma, \Delta \vdash \textbf{Op} ~e ~f : \num }
  \pdenotid{\bar{\gamma}} = \pdenotid{f_{op}(k_1,k_2)}
\]
for some well-typed closed substitution $\gamma \vDash \Phi, \Gamma, \Delta$
and value $f_{op}(k_1,k_2)$ such that 
\[
(\Phi, \Gamma, \Delta \vdash \textbf{Op} ~e ~f)
  [\bar{\gamma}/dom(\Phi,\Gamma,\Delta)]\stepid ~ f_{op}(k_1,k_2)
\]
From $\bar{\gamma}$ we derive the substitutions
$\bar{\gamma}' \vDash \Phi$, $\bar{\gamma}_1 \vDash \Gamma$,
and $\bar{\gamma}_2 \vDash \Delta$. By inversion on the step 
relation we then have 
\begin{align*}
e[\bar{\gamma}',\bar{\gamma}_1/dom(\Phi,\Gamma)] &\stepid k_1 \\
f[\bar{\gamma}',\bar{\gamma}_2/dom(\Phi,\Delta)][u/x] &\stepid k_2
\end{align*}
We conclude as follows: 
\begin{align*}
  &\pdenotid{\Phi,\Gamma,\Delta \vdash \textbf{Op} ~ e ~f: \num}
  \pdenotid{\bar{\gamma}} \\
  &= \bigl(d_{\pdenot{\Phi}};(h_1, h_2); f_{op}\bigr) 
  \pdenotid{\bar{\gamma}} \tag*{(\Cref{def:interpS})}\\
  &= \bigl((h_1, h_2);f_{op}\bigr) 
  \bigl(\pdenot{\bar{\gamma}'},\pdenot{\bar{\gamma}_1},
  \pdenot{\bar{\gamma}'},\pdenot{\bar{\gamma}_2}\bigr) 
  \tag*{(Definition of $d_{\pdenot{\Phi}}$)} \\
  &= \pdenotid{f_{op}(k_1,k_2)} \tag*{(IH)}
\end{align*}
\item[\textbf{Case (Div).}] Identical to the proof for (Op).
\end{description}
\end{proof}

\subsection*{Adequacy of $\pdenotid{-}$}
In this section of the appendix, we prove the computational adequacy of our
denotational semantics.  Namely, we show that if two \LangS{} terms are equal
under our denotational semantics (\Cref{def:interpS}), then they will evaluate
to the same value under our operational semantics given in
\Cref{fig:op_semantics_full}. 

\begin{namedtheorem}[\Cref{thm:adequacy}]
Let $\Gamma \vdash e : \tau$ be a well-typed \LangS{} term. Then for any
well-typed substitution of closed values $\bar{\gamma} \vDash \Gamma$, if
$\pdenot{\Gamma \vdash e : \tau}_{id}\pdenot{\bar{\gamma}}_{id} =
\pdenot{v}_{id}$ for some value $v$, then $e[\bar{\gamma}/dom(\Gamma)]\stepid
v$ (and similarly for $\stepap$ and $\pdenotap{-}$).
\end{namedtheorem}

\begin{proof}
The proof follows directly by cases on $e$. Many cases are immediate and the
remaining cases, given that \LangS{} is deterministic, follow by substitution
(\Cref{thm:subst}) and normalization (\Cref{thm:normalizing}).  We show two
representative cases. 
\begin{description}
\item[Case.] 
Given $\Gamma, x : \sigma, \Delta \vdash x : \sigma$ and $\pdenotid{\Gamma, x :
\sigma, \Delta \vdash x : \sigma} \pdenotid{\bar{\gamma}} = \pdenotid{v}$ for
some value $v$ and some well-typed substitution $\bar{\gamma} \vDash \Gamma,
x:\sigma, \Delta$ we are required to show 
\[
x[\bar{\gamma}/dom(\Gamma, x : \sigma, \Delta)] \stepid v
\] 
which follows by substitution (\Cref{thm:subst}) and normalization
(\Cref{thm:normalizing}). 
\item[Case.] 
Given $\Gamma, \Delta \vdash \slet x e f :  \tau$ and $\pdenotid{\Gamma, \Delta
\vdash \slet x e f : \tau} \pdenotid{\bar{\gamma}} = \pdenotid{w}$ for some
value $w$ and some well-typed derivation $\bar{\gamma} \vDash \Gamma, \Delta$
we are required to show 
\[
(\slet x e f)[\bar{\gamma}/dom(\Gamma,\Delta)] \stepid w
\] 
which follows by substitution (\Cref{thm:subst}) and normalization
(\Cref{thm:normalizing}).  
\end{description}
\end{proof}

\section{Proof of Backward Error Soundness}\label{sec:app_soundness}
This appendix provides a detailed proof of the main backward error soundness
theorem for \bea{} (\Cref{thm:main}).

\begin{namedtheorem}[\Cref{thm:main}]
Let $ \Phi\mid  x_1:_{r_1}\sigma_1,\cdots,x_n:_{r_n}\sigma_n = 
    \Gamma \vdash e : \sigma$ be a well-typed \bea{} term. Then for 
any well-typed substitutions 
$\bar{p} \vDash \Phi$ and $\bar{k} \vDash \Gamma^\circ$, if 
\[e[\bar{p}/dom(\Phi)][\bar{k}/dom(\Gamma)] \stepap v\] 
for some value $v$, then the well-typed substitution $\bar{l} \vDash \Gamma^\circ$ 
exists such that \[e[\bar{p}/dom(\Phi)][\bar{l}/dom(\Gamma)] \stepid v,\] and 
$d_{\denot{\sigma_i}}({k}_i,{l}_i) \le r_i$ for each $k_i \in \bar{k}$ 
and $l_i \in \bar{l}$.
\end{namedtheorem}

\begin{proof}
From the lens semantics (\Cref{def:interpL}) of \bea{} we have the triple 
\[ \denot{\Phi\mid \Gamma \vdash e : \sigma} = (f,\tilde{f},b) : 
   \denot{\Phi} \otimes  \denot{\Gamma} \rightarrow \denot{\sigma}. \]
Then, using the backward map $b$, we can define the tuple of vectors 
of values 
$\label{eq:bmapdef}
	(\bar{s},\bar{l}) \triangleq b((\bar{p},\bar{k}),v)
$
such that $\bar{s} \vDash \Phi$ and $\bar{l} \vDash \Gamma$. 

From the second property of backward error lenses we then have  
$$f\pdenotid{(\bar{s},\bar{l})} = f\pdenotid{b((\bar{p},\bar{k}),v)} 
  = v.$$

We can now show a backward error result, i.e.,
$\tilde{f}\pdenotap{(\bar{p},\bar{k})} = f\pdenotid{(\bar{s},\bar{l})}$:
\begin{align}
\pdenotap{\Phi,\Gamma \vdash e : \sigma} \pdenotap{(\bar{p},\bar{k})}
	&= \Uap{\denot{\Phi \mid \Gamma \vdash e : \sigma}} 
		\tag*{(\Cref{lem:pairing})}\pdenotap{(\bar{p},\bar{k})} \\
	&=\tilde{f}\pdenotap{(\bar{p},\bar{k})}\tag*{(\Cref{def:interpL})}\\
	&= v \tag*{(\Cref{thm:soundid})} \\ 
	&= f\pdenotid{(\bar{s},\bar{l})} \tag*{} 
\end{align}

From the first property of error lenses we have 
$d_{\denot{\Phi}\otimes \denot{\Gamma}}\left( (\bar{p},\bar{k}), 
    b((\bar{p},\bar{k}),v)\right) 
  \le d_{\denot{\sigma}}\left( \tilde{f}(\bar{p},\bar{k}), v\right)
$
so long as  
\begin{align}
d_{\denot{\sigma}}\left( \tilde{f}(\bar{p},\bar{k}), v\right) 
 =d_{\denot{\sigma}}\left( v, v\right) \neq \infty 
  \label{eq:error_assum}.
\end{align}
If the base numeric type is interpreted as a metric
space with a standard distance function, then 
$d_{\denot{\sigma}}\left( v, v\right) \neq \infty$ for any type 
$\sigma$, and so \Cref{eq:error_assum} is satisfied.

Unfolding definitions, and using the fact that 
$\tilde{f}\pdenotap{(\bar{p},\bar{k})} = v$ 
from above, we have 
\begin{align}
\max\left(d_{\denot{\Phi}}(\bar{p},\bar{s}), 
    d_{\denot{\Gamma}}({\bar{k},\bar{l}})\right) &\le 
    d_{\denot{\sigma}}\left( v, v\right) \label{eq:error_ineq}
\end{align}
From \Cref{eq:error_ineq} we can conclude two things. First,
using the definition of the distance function on discrete metric 
spaces, we can conclude 
$\bar{p}=\bar{s}$: the discrete variables carry no backward error. 
Second, for linear variables we can derive the required 
backward error bound:
\begin{align*} 
    \max\left( d_{\denot{\sigma_1}}(k_1,l_1)-r_1,\dots,
     d_{\denot{\sigma_n}}(k_n,l_n)-r_n\right) &\le 0.
\end{align*}
\end{proof}

\section{Type Checking Algorithm}\label{app:algorithm}
\begin{figure}
\begin{center}
\AXC{ }
\RightLabel{(Var)}
\UIC{$\Phi\mid\Gamma^\bullet,x:\sigma;x\Rightarrow  \{x:_0\sigma\};\sigma$}
\bottomAlignProof
\DisplayProof
\hskip 0.5em
\AXC{}
\RightLabel{(DVar)}
\UIC{$\Phi,z:\alpha\mid\Gamma^\bullet;z\Rightarrow \emptyset; \alpha$}
\bottomAlignProof
\DisplayProof
\vskip 1em
\AXC{$\Phi\mid\Gamma^\bullet;e\Rightarrow \Gamma_1; \sigma$}
\AXC{$\Phi\mid\Gamma^\bullet;f\Rightarrow \Gamma_2; \tau$}
\AXC{$\dom\Gamma_1\cap\dom\Gamma_2=\emptyset$}
\RightLabel{($\otimes $ I)}
\TrinaryInfC{$\Phi\mid\Gamma^\bullet; (e,f)\Rightarrow \Gamma_1,\Gamma_2; \sigma\otimes\tau$}
\bottomAlignProof
\DisplayProof
\vskip 1em
\AXC{}
\RightLabel{(Unit)}
\UIC{$\Phi\mid\Gamma^\bullet;()\Rightarrow \emptyset; \mathbf{unit}$}
\bottomAlignProof
\DisplayProof
\vskip 1em
\AXC{$\Phi\mid\Gamma^\bullet;e\Rightarrow \Gamma_1;\tau_1\otimes\tau_2$}
\AXC{$\Phi\mid\Gamma^\bullet,x:\tau_1, y:\tau_2; f\Rightarrow \Gamma_2 ; \sigma$}
\AXC{$\dom\Gamma_1\cap\dom\Gamma_2=\emptyset$}
\RightLabel{($\otimes $ E$_\sigma$)}
\TrinaryInfC{$\Phi\mid \Gamma^\bullet; \slet {(x,y)} e f \Rightarrow 
    (r+\Gamma_1),\Gamma_2\setminus \{x,y\}; \sigma$}
\bottomAlignProof
\DisplayProof
\vskip 0.5em
where $x,y\not\in\Gamma^\bullet$ and $r=\max\{r_1,r_2\}$ if at least one of
$x:_{r_1}\tau_1,y:_{r_2}\tau_2\in\Gamma_2$ (else $r=0$)
\vskip 1em
\AXC{$\Phi\mid\Gamma^\bullet;e\Rightarrow  \Gamma_1;\alpha_1\otimes\alpha_2$}
\AXC{$\Phi,z_1:\alpha_1,z_2:\alpha_2\mid\Gamma^\bullet;f\Rightarrow \Gamma_2;\sigma$}
\AXC{$\dom\Gamma_1\cap\dom\Gamma_2=\emptyset$}
\RightLabel{($\otimes $ E$_\alpha$)}
\TrinaryInfC{$\Phi\mid \Gamma^\bullet; 
        \dlet {(z_1,z_2)} e f \Rightarrow \Gamma_1,\Gamma_2;\sigma$}
\bottomAlignProof
\DisplayProof
\vskip 0.5em
where $z_1,z_2\not\in \Phi$
\vskip 1em
\AXC{$\Phi\mid\Gamma^\bullet;e'\Rightarrow\Gamma_1;\sigma+\tau$}
\AXC{$\Phi\mid \Gamma^\bullet,x:\sigma;e\Rightarrow \Gamma_2;\rho$}
\AXC{$\Phi\mid\Gamma^\bullet,y:\tau;f\Rightarrow \Gamma_3;\rho$}
\AXC{$\dom\Gamma_1\cap\dom\Gamma_2$}
\noLine
\UnaryInfC{$=\dom\Gamma_1\cap\dom\Gamma_3$}
\noLine
\UnaryInfC{$=\emptyset$}
\RightLabel{($+$ E)}
\QuaternaryInfC{$\Phi\mid\Gamma^\bullet;\mathbf{case} \ e' \ \mathbf{of} \ (x.e \ | \ y.f) 
    \Rightarrow (q+\Gamma_1),\max\{\Gamma_2\setminus \{x\}, \Gamma_3\setminus\{y\}\};\rho$}
\bottomAlignProof
\DisplayProof
\vskip 0.5em
where $x,y\not\in \Gamma^\bullet$ and $q=\max\{q_1,q_2\}$ if at least one of $x:_{q_1}\sigma\in\Gamma_2$ 
or $y:_{q_2}\tau\in\Gamma_3$ (else $q=0$)
\vskip 1em
\AXC{$\Phi\mid \Gamma^\bullet;e\Rightarrow \Gamma;\sigma$}
\RightLabel{($+$ $\text{I}_L$)}
\UIC{$\Phi\mid\Gamma^\bullet; \inl_\tau \ e \Rightarrow\Gamma;\sigma + \tau$}
\bottomAlignProof
\DisplayProof
\hskip 0.5em
\AXC{$\Phi\mid \Gamma^\bullet;e\Rightarrow\Gamma;\tau$}
\RightLabel{($+$ $\text{I}_R$)}
\UIC{$\Phi\mid\Gamma^\bullet; \inr_\sigma \ e \Rightarrow \sigma + \tau$}
\bottomAlignProof
\DisplayProof
\vskip 1em
\AXC{$\Phi\mid\Gamma^\bullet;e\Rightarrow\Gamma_1;\tau$}
\AXC{$\Phi\mid\Gamma^\bullet,x:\tau;f\Rightarrow\Gamma_2;\sigma$}
\AXC{$\dom\Gamma_1\cap\dom\Gamma_2=\emptyset$}
\RightLabel{(Let)}
\TrinaryInfC{$\Phi\mid \Gamma^\bullet; \slet x e f
 \Rightarrow (r+\Gamma_1),\Gamma_2\setminus\{x\}; \sigma$}
\bottomAlignProof
\DisplayProof
\vskip 0.5em
where $x\not\in\Gamma^\bullet$ and $x:_r\sigma\in\Gamma_2$ (else $r=0$)
\vskip 1em
\AXC{$\Phi\mid\Gamma^\bullet;e\Rightarrow\Gamma;\num$}
\RightLabel{(Disc)}
\UIC{$\Phi\mid\Gamma^\bullet;!e\Rightarrow \Gamma;\dnum$}
\bottomAlignProof
\DisplayProof
\vskip 1em
\AXC{$\Phi\mid \Gamma^\bullet;e\Rightarrow\Gamma_1;\dnum$}
\AXC{$\Phi,z:\dnum\mid \Gamma^\bullet;f\Rightarrow\Gamma_2;\sigma$}
\AXC{$\dom\Gamma_1\cap\dom\Gamma_2=\emptyset$}
\RightLabel{(DLet)}
\TIC{$\Phi\mid\Gamma^\bullet; \dlet{z}{e}{f}\Rightarrow \Gamma_1,\Gamma_2;\sigma$}
\bottomAlignProof
\DisplayProof
\vskip 0.5em
where $z\not\in\Phi$
\vskip 1em
\AXC{}
\RightLabel{(Add, Sub)}
\UIC{$\Phi\mid \Gamma^\bullet, x: \num, y:\num;
    \{\mathbf{add}, \mathbf{sub}\} \ x \ y \Rightarrow 
    \{x:_{\varepsilon}\num,y:_{\varepsilon}\num\};\num$}
\bottomAlignProof
\DisplayProof
\vskip 1em
\AXC{ }
\RightLabel{(Mul)}
\UIC{$\Phi\mid \Gamma^\bullet, 
    x:\num, 
    y:\num; \mul  x  y \Rightarrow 
    \{x:_{\varepsilon/2}\num, y:_{\varepsilon/2}\num\};\num $}
\bottomAlignProof
\DisplayProof
\vskip 1em
\AXC{ }
\RightLabel{(Div)}
\UIC{$\Phi\mid \Gamma^\bullet, x: \num, y:\num;
    \{\mathbf{add}, \mathbf{sub}\} \ x \ y \Rightarrow 
    \{x:_{\varepsilon}\num,y:_{\varepsilon}\num\};\num + \err$}
\bottomAlignProof
\DisplayProof
\vskip 1em
\AXC{ }
\RightLabel{(DMul)}
\UIC{$\Phi,z: \dnum\mid 
        \Gamma^\bullet, 
    x:\num; \dmul  z  x \Rightarrow \{x:_\varepsilon \num\}; \num$}
\bottomAlignProof
\DisplayProof

\end{center}
    \caption{Type checking algorithm for \bea{}.}
    \label{fig:algorithm}
\end{figure}

This appendix, authored by Laura Zielinski, presents the type-checking algorithm for \bea{} 
as described in \Cref{sec:algorithm}, along with proofs of its soundness and completeness.
First, we give the full type checking algorithm in \Cref{fig:algorithm}.
Recall that algorithm calls are written as $\Phi\mid\Gamma^\bullet;e\Rightarrow \Gamma;\sigma$
where $\Gamma^\bullet$ is a linear context skeleton, $e$ is a \bea{} program,
$\Gamma$ is, intuitively, the \emph{minimal} linear context required to type $e$ such that
$\overline{\Gamma}\sqsubseteq\Gamma^\bullet$, and $\sigma$ is the type of $e$. 
Note that we only require $\Phi$ to contain the discrete variables used in the program 
and we do nothing more; thus, it is not returned by the algorithm.
We do require that discrete and linear contexts are always disjoint, and we will denote
linear variables by $x$ and $y$ and discrete variables by $z$. 
Finally, we define the \emph{max} of two linear contexts, $\max\{\Gamma,\Delta\}$, 
to have domain $\dom\Gamma\cup\dom\Delta$ and, if $x:_q\sigma\in\Gamma$ and $x:_r\sigma\in\Delta$, 
then $x:_{\max\{q,r\}}\sigma\in\max\{\Gamma,\Delta\}$. 

Before we give proofs of \Cref{thm:algo_sound} and \Cref{thm:algo_complete},
we must prove two lemmas about type system and algorithm weakening. 
Intuitively, type system weakening says that if we can derive the type of a program from a context $\Gamma$, 
then we can also derive the same program from a larger context $\Delta$ which subsumes $\Gamma$.
\begin{lemma}[Type System Weakening]\label{lem:type_weak}
  If $\Phi\mid\Gamma\vdash e:\sigma$ and $\Gamma\sqsubseteq\Delta$, then $\Phi\mid\Delta\vdash e:\sigma$.
\end{lemma}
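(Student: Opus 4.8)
The plan is to prove Lemma~\ref{lem:type_weak} (Type System Weakening) by structural induction on the typing derivation $\Phi \mid \Gamma \vdash e : \sigma$. The statement to establish is: given such a derivation and a context $\Delta$ with $\Gamma \sqsubseteq \Delta$ (meaning $\dom \Gamma \subseteq \dom \Delta$ and each shared variable has a possibly larger grade in $\Delta$), we can build a derivation $\Phi \mid \Delta \vdash e : \sigma$. Note that $\Delta$ may contain extra variables not in $\Gamma$, as well as larger grades on the variables that \emph{are} shared; both phenomena need to be handled uniformly, so the $\sqsubseteq$ relation here should be read in the spirit of the subcontext relation defined in \Cref{sec:algorithm}, generalized to allow $\dom\Gamma \subsetneq \dom\Delta$.

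First I would dispatch the leaf rules. For (Var), the conclusion $\Phi \mid \Gamma, x:_r \sigma \vdash x : \sigma$ holds for \emph{any} ambient linear context and any grade on $x$, so if $x:_r\sigma \in \Gamma$ then $x:_{r'}\sigma \in \Delta$ for some $r'$ and the (Var) rule fires directly over $\Delta$. The (DVar), (Unit), and the arithmetic-operation leaves (Add, Sub, Mul, Div, DMul) are similar: inspecting their statements in \Cref{fig:bean_typing_rules}, each conclusion is parametric in the surrounding linear context $\Gamma$ and permits arbitrary slack grades $r_1, r_2$ on the distinguished numeric variables, so enlarging the context or enlarging those grades only requires re-instantiating the rule with the grades of $\Delta$. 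The key observation making this work is that \bean{}'s operation rules already bake in weakening of grades (the $\varepsilon + r_i$ form), which is exactly the slack $\sqsubseteq$ demands.

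Next I would handle the inductive cases, which split into two families. The rules whose conclusion uses a \emph{disjoint union} of subcontexts ($\otimes$ I, $\otimes$ E$_\alpha$, DLet) are straightforward: given $\Gamma, \Delta_0$ disjoint in the premises, I would partition the larger target context accordingly (assigning each extra variable of $\Delta$ to, say, the first branch, since weakening in a single branch suffices by IH), apply the induction hypothesis to each premise, and reassemble. The rules that perform \emph{grade scaling via the sum operation} $r + \Gamma$ on a subcontext (Let, $\otimes$ E$_\sigma$, $+$ E) require a small extra fact: if $\Gamma_1 \sqsubseteq \Delta_1$ then $r + \Gamma_1 \sqsubseteq r + \Delta_1$, which is immediate from monotonicity of $+$ in $\mathcal{M} = (\mathbb{R}^{\ge 0}, +, 0)$. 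With that, I would split the target context into the piece feeding the scaled subderivation and the piece feeding the unscaled one, push weakening through both premises by IH, and note that the extra variables land in one of the two pieces with appropriately lifted grades. The remaining cases ($+$ $\text{I}_{L,R}$, Disc) are single-premise and transparent.

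The main obstacle I anticipate is the \emph{bookkeeping around the binders} introduced in the elimination rules: in Let, $\otimes$ E$_\sigma$, $\otimes$ E$_\alpha$, and $+$ E, the body is typed in a context extended with fresh variables ($x$, $(x,y)$, $(z_1,z_2)$, or $x$/$y$), and I must check that extending $\Gamma$ to $\Delta$ does not collide with these binders (handled by the usual freshness convention, renaming if necessary) and that the grade chosen for the bound variable in the conclusion is compatible. Since the bound-variable grades in the conclusion are themselves universally quantified in the rule statements ($r$, $q$ range over $\mathcal{M}$), this compatibility is automatic, but stating it cleanly — particularly tracking which variables of $\Delta \setminus \Gamma$ go where when the context is split — is where the argument needs care rather than cleverness. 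I expect no genuinely hard step; the lemma is a routine induction, and I would present it at that level of detail, spelling out (Var), one disjoint-union case, and one scaling case as representative, and noting the rest follow the same pattern.
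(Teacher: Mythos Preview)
Your proposal is correct and follows essentially the same approach as the paper: structural induction on the typing derivation, with leaf rules handled by the built-in grade slack in the arithmetic primitives, disjoint-union rules handled by partitioning the larger context, and scaled-context rules handled via monotonicity of the grade translation. The only minor sharpening is that in the scaled cases (Let, $\otimes$~E$_\sigma$, $+$~E) the extra variables of $\Delta$ must be routed into the \emph{unscaled} subcontext (the paper enforces $\dom\Gamma = \dom\Gamma_1$ and defines $-r+\Gamma_1$ to recover the premise context), since routing them into the scaled piece would inflate their grades by $r$; you gesture at this with ``appropriately lifted grades'' but should fix the choice explicitly.
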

\begin{proof}
  Suppose $\Phi\mid\Gamma\vdash e:\sigma$. We proceed by induction on the final typing rule applied
  and show some representative cases below. 
  \begin{description}
    \item[\textbf{Case (Var).}] Suppose the last rule applied was 
    \[
      \Phi\mid\Gamma,x:_r\sigma\vdash x:\sigma.
    \]
    Let $\Delta$ be a context such that $(\Gamma,x:_r\sigma)\sqsubseteq\Delta$. Thus, $x:_q\sigma\in\Delta$ 
    where $r\leq q$. By the same rule, $\Phi\mid\Delta\vdash x:\sigma$. 

    \item[\textbf{Case ($\otimes $ I).}] Suppose the last rule applied was
    \[
      \Phi\mid\Gamma,\Delta\vdash (e,f):\sigma\otimes\tau
    \]
    and thus, we also have that 
    \[
      \Phi\mid\Gamma\vdash e:\sigma\text{ and }\Phi\mid\Delta\vdash f:\tau.
    \]
    Let $\Lambda$ be a context such that $(\Gamma,\Delta)\sqsubseteq\Lambda$.
    As $\Gamma$ and $\Delta$ are disjoint, we can split $\Lambda$ into the contexts 
    $\Gamma_1$ and $\Delta_1$ such that $\Gamma\sqsubseteq\Gamma_1$ and $\Delta\sqsubseteq\Delta_1$.
    By out inductive hypothesis, it follows that 
    \[
      \Phi\mid\Gamma_1\vdash e:\sigma\text{ and }\Phi\mid\Delta_1\vdash f:\tau.
    \]
    By the same rule, we conclude that 
    \[
      \Phi\mid\Gamma_1,\Delta_1\vdash (e,f):\sigma\otimes\tau.
    \]

    \item[\textbf{Case ($\otimes $ E$_\sigma$).}] Suppose the last rule applied 
    was 
    \[
      \Phi\mid r+\Gamma,\Delta\vdash\slet{(x, y)}e f:\sigma.
    \]
    Let $\Lambda$ be a context such that $(r+\Gamma,\Delta)\sqsubseteq\Lambda$ and 
    $x,y\not\in\dom\Lambda$. As before, split $\Lambda$ into contexts $\Gamma_1$ and $\Delta_1$
    such that $(r+\Gamma)\sqsubseteq\Gamma_1$ and $\Delta\sqsubseteq\Delta_1$ but where 
    $\dom\Gamma=\dom\Gamma_1$. Now, for each $x\in\dom\Gamma_1$, we have that $x:_q\sigma\in\Gamma_1$
    where $r\leq q$. Therefore, we can define the context $-r+\Gamma_1$ which subtracts $r$ from the 
    error bound of every variable in $\Gamma_1$, and hence $\Gamma\sqsubseteq (-r+\Gamma_1)$. 
    Finally, use our inductive hypothesis to get that
    \[
      \Phi\mid(-r+\Gamma_1)\vdash e:\tau_1\otimes\tau_2\text{ and }
      \Phi\mid\Delta_1,x:_r\tau_1,y:_r\tau_2\vdash f:\sigma
    \]
    and we can apply the same rule to get our conclusion.

    \item[\textbf{Case (Add).}] Suppose the last rule applied was 
    \[
      \Phi\mid\Gamma,x:_{\varepsilon+r_1}\num,y:_{\varepsilon+r_2}\num\vdash\add{x}{y}:\num
    \]
    Let $\Delta$ be a context such that
    $(\Gamma,x:_{\varepsilon+r_1}\num,y:_{\varepsilon+r_2}\num)\sqsubseteq\Delta$.
    Hence, $x:_{q_1}\num,y:_{q_2}\num\in\Delta$ where $\varepsilon+r_1\leq q_1$ and $\varepsilon+r_2\leq q_2$.
    Rewrite $q_1=\varepsilon + (q_1-\varepsilon)$ and $q_2=\varepsilon+(q_2-\varepsilon)$ and apply the same rule.
  \end{description}
\end{proof}
Similarly, algorithm weakening says that if we pass a context skeleton $\Gamma^\bullet$ into the algorithm and it
infers context $\Gamma$, then if we pass in a larger skeleton $\Delta^\bullet$, the algorithm will 
still infer context $\Gamma$. (Here, we extend the notion of subcontexts to context skeletons, where
$\Gamma^\bullet \sqsubseteq\Delta^\bullet$ if $\Gamma^\bullet\subseteq\Delta^\bullet$.)
This is because the algorithm discards unused variables from the context.
\begin{lemma}[Type Checking Algorithm Weakening]\label{lem:algo_weak}
  If $\Phi\mid\Gamma^\bullet;e\Rightarrow\Gamma;\sigma$ and $\Gamma^\bullet\sqsubseteq\Delta^\bullet$, 
  then $\Phi\mid\Delta^\bullet;e\Rightarrow \Gamma;\sigma$.
\end{lemma}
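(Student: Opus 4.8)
The plan is to prove the lemma by induction on the derivation of $\Phi\mid\Gamma^\bullet;e\Rightarrow\Gamma;\sigma$, exploiting the fact that the inferred context $\Gamma$ is built bottom-up from $e$ alone: the skeleton $\Gamma^\bullet$ serves only to supply types for the free variables actually used, and any unused bindings are silently discarded. Hence enlarging the skeleton from $\Gamma^\bullet$ to $\Delta^\bullet$ cannot change the output. Since $\sqsubseteq$ on skeletons is just subset inclusion, the proof is largely a matter of threading this inclusion through each rule.

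For the base cases --- (Var), (DVar), (Unit), (Add), (Sub), (Mul), (Div), (DMul) --- the inferred context and type depend only on the term. In (Var), for instance, the rule requires $x\in\Gamma^\bullet$; since $\Gamma^\bullet\subseteq\Delta^\bullet$ we also have $x\in\Delta^\bullet$, and the algorithm returns the same $\{x:_0\sigma\}$ and $\sigma$ from the enlarged skeleton. The arithmetic rules and (Unit), (DVar) are identical in this respect.

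For the inductive cases there are two situations. For rules whose premises recurse on the same linear skeleton --- namely ($\otimes$ I), ($+$ $\text{I}_L$), ($+$ $\text{I}_R$), (Disc), ($\otimes$ E$_\alpha$), (DLet) --- we apply the induction hypothesis to each premise with the single inclusion $\Gamma^\bullet\sqsubseteq\Delta^\bullet$ (for ($\otimes$ E$_\alpha$) and (DLet) only the discrete context grows, so the linear skeleton passed down is still $\Gamma^\bullet$ and the IH applies verbatim), obtaining the same inferred subcontexts $\Gamma_i$; the side conditions $\dom\Gamma_1\cap\dom\Gamma_2=\emptyset$ refer only to these inferred contexts, so they continue to hold, and reassembling them yields the same $\Gamma$ and $\sigma$. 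For rules that extend the linear skeleton with fresh bound variables --- ($\otimes$ E$_\sigma$), ($+$ E), (Let) --- we use that subcontext-ness of skeletons is stable under adding the same binding, i.e. $\Gamma^\bullet\sqsubseteq\Delta^\bullet$ implies $(\Gamma^\bullet,x:\tau)\sqsubseteq(\Delta^\bullet,x:\tau)$; after $\alpha$-renaming the bound variable if necessary so that it is fresh for $\Delta^\bullet$, the induction hypothesis gives the same inferred contexts on the premises, the accumulated grade $r$ (a $\max$ over whether the bound variable occurs in an inferred premise context, which is unchanged) is identical, and the output is again unchanged.

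The main obstacle --- more bookkeeping than conceptual difficulty --- is keeping the freshness side conditions consistent when passing to $\Delta^\bullet$ in the binder cases, and verifying that quantities like the accumulated grade $r$ in ($\otimes$ E$_\sigma$), ($+$ E), and (Let) are functions of the \emph{inferred premise contexts} (which the induction hypothesis pins down) rather than of the skeleton itself; once this is checked, every case closes routinely. This lemma will then feed directly into the proofs of algorithmic soundness (\Cref{thm:algo_sound}) and completeness (\Cref{thm:algo_complete}).
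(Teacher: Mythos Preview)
Your proposal is correct and follows essentially the same approach as the paper: induction on the algorithmic derivation, with the key observation that the inferred context depends only on the term and the types of the variables actually used, not on the full skeleton. The paper's proof is terser---showing only (Var) and ($\otimes$ E$_\sigma$) as representative cases---but your more explicit treatment of the freshness side conditions via $\alpha$-renaming and your verification that the accumulated grade $r$ depends only on the inferred premise contexts are exactly the right details to check.
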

\begin{proof}
  Suppose that $\Phi\mid\Gamma^\bullet;e\Rightarrow\Gamma;\sigma$. 
  We proceed by induction on the final algorithmic step applied and 
  show some representative cases below.
  \begin{description}
    \item[\textbf{Case (Var).}] Suppose the last step applied was 
    \[
      \Phi\mid\Gamma^\bullet,x:\sigma;x\Rightarrow\{x:_0\sigma\};\sigma.
    \]
    Let $\Delta^\bullet$ be a context skeleton such that 
    $(\Gamma^\bullet,x:\sigma)\sqsubseteq\Delta^\bullet$. Thus, $x:\sigma\in\Delta^\bullet$
    so we can apply the same rule.

    \item[\textbf{Case ($\otimes $ E$_\sigma$).}] Suppose the last step applied 
    was 
    \[
      \Phi\mid\Gamma^\bullet;\slet{(x, y)} e f\Rightarrow 
      (r+\Gamma_1),\Gamma_2\setminus\{x,y\};\sigma.
    \]
    Let $\Delta^\bullet$ be a context skeleton such that $\Gamma^\bullet\sqsubseteq\Delta^\bullet$
    and $x,y\not\in\Delta^\bullet$. By induction, we have that 
    \[
      \Phi\mid\Delta^\bullet;e\Rightarrow\Gamma_1;\tau_1\otimes\tau_2\text{ and }
      \Phi\mid\Delta^\bullet,x:\tau_1,y:\tau_2;f\Rightarrow\Gamma_2;\sigma
    \]
    and $\dom\Gamma_1\cap\dom\Gamma_2=\emptyset$. By the same rule, we conclude that
    \[
      \Phi\mid\Delta^\bullet;\slet{(x,y)}{e}{f}\Rightarrow (r+\Gamma_1),\Gamma_2\setminus\{x,y\};\sigma.
    \]
  \end{description}
\end{proof}
Finally, we give proofs of algorithmic soundness and completeness. 
Soundness states that if the algorithm returns a linear context $\Gamma$, then we can use $\Gamma$
to derive the program using \bea{}'s type system.
\begin{namedtheorem}[\Cref{thm:algo_sound}]
  If $\Phi\mid\Gamma^\bullet;e\Rightarrow \Gamma;\sigma$, then 
  $\overline{\Gamma}\sqsubseteq\Gamma^\bullet$ and the derivation 
  $\Phi\mid\Gamma\vdash e:\sigma$ exists.
\end{namedtheorem}
\begin{proof}
  Suppose that $\Phi\mid\Gamma^\bullet;e\Rightarrow\Gamma;\sigma$. 
  We proceed by induction on the final algorithmic step applied and show some 
  representative cases below. We use the fact that if $\Gamma,\Delta$ are disjoint,
  then $\overline{\Gamma,\Delta}=\overline\Gamma,\overline\Delta$. 
  \begin{description}
    \item[\textbf{Case (Var).}] Suppose the last step applied was 
    \[
      \Phi\mid\Gamma^\bullet,x:\sigma;x\Rightarrow\{x:_0\sigma\};\sigma.
    \]
    By the typing rule (Var$_\sigma$), we have that $\Phi\mid\{x:_0\sigma\}\vdash x:\sigma$.
    Moreover,
    $\overline{\{x:_0\sigma\}}\sqsubseteq(\Gamma^\bullet,x:\sigma)$.

    \item[\textbf{Case ($\otimes $ I).}] Suppose the last step applied was
    \[
      \Phi\mid\Gamma^\bullet;(e,f)\Rightarrow\Gamma_1,\Gamma_2;\sigma\otimes\tau
    \]
    where 
    \[
      \Phi\mid\Gamma^\bullet;e\Rightarrow\Gamma_1;\sigma\text{ and }
      \Phi\mid\Gamma^\bullet;f\Rightarrow\Gamma_2;\sigma
    \]
    and $\dom\Gamma_1\cap\dom\Gamma_2=\emptyset$. By our inductive hypothesis, 
    we have that $\Phi\mid\Gamma_1\vdash e:\sigma$ and $\Phi\mid\Gamma_2\vdash f:\tau$.
    Therefore, we can apply the typing rule ($\otimes$ I) to get that 
    \[
      \Phi\mid\Gamma_1,\Gamma_2\vdash (e,f):\sigma\otimes\tau.
    \]
    Finally, as $\overline{\Gamma_1}\sqsubseteq\Gamma^\bullet$ and 
    $\overline{\Gamma_2}\sqsubseteq\Gamma^\bullet$,
    we have that $\overline{\Gamma_1,\Gamma_2}\sqsubseteq\Gamma^\bullet$.

    \item[\textbf{Case ($\otimes $ E$_\sigma$).}] Suppose the last step applied 
    was 
    \[
      \Phi\mid\Gamma^\bullet;\slet{(x, y)} e f\Rightarrow 
      (r+\Gamma_1),\Gamma_2\setminus\{x,y\};\sigma
    \]
    By induction, we have that
    \[
      \Phi\mid\Gamma_1\vdash e:\tau_1\otimes\tau_2\text{ and }\Phi\mid\Gamma_2\vdash f:\sigma,
    \]  
    where $x,y$ may be in $\dom\Gamma_2$.
    Let $\Delta=\Gamma_2\setminus\{x,y\}$. 
    Since $r$ is defined to be the maximum of the bounds on $x,y$ if they exist in 
    $\Gamma_2$, we have that $\Gamma_2\sqsubseteq(\Delta,x:_r\tau_1,y:_r\tau_2)$. 
    From \Cref{lem:type_weak}, it follows that
    \[
      \Phi\mid\Delta,x:_r\tau_1,y:_r\tau_2\vdash f:\sigma.
    \]
    Thus, we can apply the typing rule ($\otimes$ E$_\sigma$) to conclude that 
    \[
      \Phi\mid r+\Gamma_1,\Delta\vdash \slet{(x, y)}e f:\sigma.
    \]
    Finally, as $\overline{\Gamma_1}\sqsubseteq\Gamma^\bullet$ and 
    $\overline{\Gamma_2}\sqsubseteq(\Gamma^\bullet,x:\tau_1,y:\tau_2)$, we have that
    \[
      \overline{r+\Gamma_1,\Delta}=\overline{\Gamma_1,\Gamma_2\setminus\{x,y\}}
      =\overline{\Gamma_1},\overline{\Gamma_2\setminus\{x,y\}}\sqsubseteq\Gamma^\bullet.
    \]

    \item[\textbf{Case (+ E).}] Suppose the last step applied was 
    \[
      \Phi\mid\Gamma^\bullet;\mathbf{case} \ e' \ \mathbf{of} \ (x.e \ | \ y.f) 
    \Rightarrow (q+\Gamma_1),\max\{\Gamma_2\setminus \{x\}, \Gamma_3\setminus\{y\}\};\rho.
    \]
    By induction, we have that 
    \[
      \Phi\mid\Gamma_1\vdash e':\sigma+\tau\text{ and }
      \Phi\mid\Gamma_2\vdash e:\rho\text{ and }\Phi\mid\Gamma_3\vdash f:\rho.
    \]
    Let $\Delta=\max\{\Gamma_2\setminus\{x\}, \Gamma_3\setminus\{y\}\}$, and we
    still have that $\dom\Gamma_1\cap\dom\Delta=\emptyset$. By \Cref{lem:type_weak}, 
    it follows that 
    \[
      \Phi\mid\Delta,x:_q\sigma\vdash e:\rho\text{ and }\Phi\mid\Delta,y:_q\tau\vdash f:\rho
    \]
    by weakening the bounds on $x$ and $y$ to $q$. Thus, we can apply typing rule (+ E) to 
    conclude that 
    \[
      \Phi\mid q+\Gamma_1,\Delta\vdash \case{e'}{x.e}{y.f}:\rho.
    \]
    Moreover, as $\overline{\Gamma_1}\sqsubseteq\Gamma^\bullet$ and 
    $\overline{\Gamma_2}\sqsubseteq(\Gamma^\bullet,x:\sigma)$ and 
    $\overline{\Gamma_3}\sqsubseteq(\Gamma^\bullet,y:\tau)$, we have that 
    \[
      \overline{q+\Gamma_1,\Delta}=\overline{\Gamma_1},
      \overline{\max\{\Gamma_2\setminus\{x\},\Gamma_3\setminus\{y\}\}}\sqsubseteq\Gamma^\bullet.
    \]

    \item[\textbf{Case (Add).}] Suppose the last step applied was 
    \[
      \Phi\mid \Gamma^\bullet, x: \num, y:\num;
      \mathbf{add} \ x \ y \Rightarrow 
      \{x:_{\varepsilon}\num,y:_{\varepsilon}\num\};\num.
    \]
    By the typing rule (Add) we have that 
    \[
      \Phi\mid\{x:_\varepsilon\num,y:_\varepsilon\num\}\vdash\mathbf{add}~x~y:\num.
    \]
  \end{description}
\end{proof}
Conversely, completeness says that if from $\Gamma$ we can derive the type of a program $e$, then
inputting $\overline{\Gamma}$ and $e$ into the algorithm will yield a valid output. 
\begin{namedtheorem}[\Cref{thm:algo_complete}]
  If $\Phi\mid\Gamma\vdash e:\sigma$ is a valid derivation in \bea{}, then there
  exists a context $\Delta\sqsubseteq\Gamma$ such that 
  $\Phi\mid\overline{\Gamma}; e\Rightarrow \Delta;\sigma$.
\end{namedtheorem}
\begin{proof}
  Suppose that $\Phi\mid\Gamma\vdash e:\sigma$. We proceed by induction on the final typing rule applied
  and show some representative cases below.
  \begin{description}
    \item[\textbf{Case (Var$_\sigma$).}] Suppose the last rule applied was 
    \[
      \Phi\mid\Gamma,x:_r\sigma\vdash x:\sigma.
    \]
    By algorithm step (Var), we have that
    \[
      \Phi\mid\overline{\Gamma},x:\sigma;x\Rightarrow\{x:_0\sigma\};\sigma
    \]
    and $\{x:_0\sigma\}\sqsubseteq(\Gamma,x:_r\sigma)$ as $0\leq r$. 

    \item[\textbf{Case ($\otimes $ I).}] Suppose the last rule applied was
    \[
      \Phi\mid\Gamma,\Delta\vdash (e,f):\sigma\otimes\tau.
    \]
    From this, we deduce that $\dom\Gamma\cap\dom\Delta=\emptyset$. By induction, there exist
    $\Gamma_1\sqsubseteq\Gamma$ and $\Delta_1\sqsubseteq\Delta$ such that 
    \[
      \Phi\mid\overline\Gamma;e\Rightarrow\Gamma_1;\sigma\text{ and }
      \Phi\mid\overline\Delta;f\Rightarrow\Delta_1;\tau.
    \]
    Moreover, $\dom\Gamma_1\cap\dom\Delta_1=\emptyset$ as well. 
    By \Cref{lem:algo_weak}, we also have that 
    \[
      \Phi\mid\overline{\Gamma,\Delta};e\Rightarrow\Gamma_1;\sigma\text{ and }
      \Phi\mid\overline{\Gamma,\Delta};f\Rightarrow\Delta_1;\tau.
    \]
    Thus, we can apply algorithm step
    ($\otimes$ I) to conclude
    \[
      \Phi\mid\overline{\Gamma,\Delta};(e,f)\Rightarrow \Gamma_1,\Delta_1;\sigma\otimes\tau
    \] 
    where we know $(\Gamma_1,\Delta_1)\sqsubseteq(\Gamma,\Delta)$. 

    \item[\textbf{Case ($\otimes $ E$_\sigma$).}] Suppose the last rule applied 
    was 
    \[
      \Phi\mid r+\Gamma,\Delta\vdash\slet{(x, y)}e f:\sigma.
    \]
    By induction, there exist $\Gamma_1\sqsubseteq\Gamma$ and 
    $\Delta_1\sqsubseteq(\Delta,x:_r\tau_1,y:_r\tau_2)$ such that 
    \[
      \Phi\mid \overline{\Gamma};e\Rightarrow \Gamma_1;\tau_1\otimes\tau_2\text{ and }
      \Phi\mid \overline{\Delta},x:\tau_1,y:\tau_2;f\Rightarrow\Delta_1;\sigma.
    \]
    If $x:_{r_1}\tau_1, y:_{r_2}\tau_2\in\Delta_1$, let $r'=\max\{r_1,r_2\}$. As 
    $\Delta_1\sqsubseteq(\Delta,x:_r\tau_1,y:_r\tau_2)$, we know $r'\leq r$. 

    Using \Cref{lem:algo_weak}, we can apply algorithm step ($\otimes$ E$_\sigma$) of
    \[
      \Phi\mid\overline{\Gamma,\Delta};\slet{(x, y)}e f \Rightarrow (r'+\Gamma_1),\Delta_1\setminus\{x,y\};\sigma.
    \]
    Moreover, $(r'+\Gamma_1)\sqsubseteq(r+\Gamma)$ and $(\Delta_1\setminus\{x,y\})\sqsubseteq\Delta$.

    \item[\textbf{Case (+ E).}] Suppose the last rule applied was 
    \[
      \Phi\mid q+\Gamma,\Delta\vdash \case{e'}{x.e}{y.f}:\rho.
    \]
    By induction, there exist $\Gamma_1\sqsubseteq\Gamma$, $\Delta_1\sqsubseteq(\Delta,x:_q\sigma)$, and 
    $\Delta_2\sqsubseteq(\Delta,y:_q\tau)$ such that 
    \[
      \Phi\mid\overline{\Gamma};e'\Rightarrow \Gamma_1;\sigma+\tau\text{ and }
      \Phi\mid\overline{\Delta},x:\sigma;e\Rightarrow \Delta_1;\rho\text{ and }
      \Phi\mid\overline{\Delta}, y:\tau;f\Rightarrow\Delta_2;\rho.
    \]
    If $x:_{q_1}\sigma\in\Delta_1$ and $y:_{q_2}\tau\in\Delta_2$, let $q'=\max\{q_1,q_2\}$,
    and we know $q'\leq q$. Using \Cref{lem:algo_weak}, we can apply algorithm step (+ E) of
    \[  
      \Phi\mid\overline{\Gamma,\Delta};\case{e'}{x.e}{y.f}\Rightarrow (q'+\Gamma_1),
      \max\{\Delta_1\setminus\{x\},\Delta_2\setminus\{y\}\};\rho.
    \]
    Furthermore, we know $(q'+\Gamma_1)\sqsubseteq(q+\Gamma)$ and 
    $\max\{\Delta_1\setminus\{x\},\Delta_2\setminus\{y\}\}\sqsubseteq \Delta$.

    \item[\textbf{Case (Add).}] Suppose the last rule applied was 
    \[
      \Phi\mid\Gamma,x:_{\varepsilon+r_1}\num,y:_{\varepsilon+r_2}\num\vdash\add{x}{y}:\num
    \]
    where $r_1,r_2\geq 0$. We can apply algorithm step (Add) of
    \[
      \Phi\mid\overline{\Gamma},x:\num,y:\num;\add{x}{y}\Rightarrow \{x:_\varepsilon\num,y:_\varepsilon\num\};\num
    \]
    and we have $\{x:_\varepsilon\num,y:_\varepsilon\num\}\sqsubseteq
    (\Gamma,x:_{\varepsilon+r_1}\num,y:_{\varepsilon+r_2}\num)$.
  \end{description}
\end{proof}

\end{document}